\begin{document}

\renewcommand{\ref}[1]{{\color{red}[use \texttt{cref} not \texttt{ref}]}}

\title{Cyclic quantum causal modelling with a graph separation theorem}

\author{Carla Ferradini$^1$}
\author{Victor Gitton$^1$}
\author{V.\ Vilasini$^{1,2}$}
\affiliation{$^1$Institute for Theoretical Physics, ETH Zurich, 8093 Zürich, Switzerland\\ $^2$Université Grenoble Alpes, Inria, 38000 Grenoble, France}
\maketitle

\begin{abstract}
\small
  Causal modelling frameworks link observable correlations to causal explanations, which is a crucial aspect of science. These models represent causal relationships through directed graphs, with vertices and edges denoting systems and transformations within a theory. Most studies focus on acyclic causal graphs, where well-defined probability rules and powerful graph-theoretic properties like the $d$-separation theorem apply. However, understanding complex feedback processes and exotic fundamental scenarios with causal loops requires cyclic causal models, where such results do not generally hold. While progress has been made in classical cyclic causal models, challenges remain in uniquely fixing probability distributions and identifying graph-separation properties applicable in general cyclic models. In cyclic quantum scenarios, existing frameworks have focussed on a subset of possible cyclic causal scenarios, with graph-separation properties yet unexplored. This work proposes a framework applicable to all consistent quantum and classical cyclic causal models on finite-dimensional systems. We address these challenges by introducing a robust probability rule and a novel graph-separation property, $p$-separation, which we prove to be sound and complete for all such models. Our approach maps cyclic causal models to acyclic ones with post-selection, leveraging the post-selected quantum teleportation protocol. We characterize these protocols and their success probabilities along the way. We also establish connections between this formalism and other classical and quantum frameworks to inform a more unified perspective on causality. This provides a foundation for more general cyclic causal discovery algorithms and to systematically extend open problems and techniques from acyclic informational networks (e.g., certification of non-classicality) to cyclic causal structures and networks.
\end{abstract}

\newpage
\tableofcontents
\newpage

\section{Introduction}
\label{sec: intro}
Bridging causal explanations with observable correlations lies at the heart of scientific inquiry. Causal models provide a rigorous framework for describing correlations arising from a causal structure and identifying which structures are compatible with observed data. Classical causal modelling~\cite{Pearl_2009, Spirtes1993}, captures causal relationships among random variables and has been widely applied across fields such as machine learning, economics, and clinical trials~\cite{Raita_2021,Kleinberg_2011,Pearl_2009,Spirtes_2005,Petersen_2014,Arti_2020,Liu_2021}. However, as demonstrated by Bell’s theorem~\cite{Bell_1964}, this classical framework cannot account for quantum correlations without invoking fine-tuned mechanisms or modifications to the causal structure naturally associated with a Bell scenario~\cite{Wood_2015}. This limitation has spurred the development of non-classical causal modelling frameworks that encompass quantum and broader operational theories~\cite{Henson_2014, Barrett_2019}, enabling the causal explanation of quantum correlations without invoking fine-tuning or adjustments to the operational causal structure.

Causal models are typically represented as directed graphs, with vertices and edges being associated with systems and transformations within a theory that embody its causal mechanisms. The mechanisms themselves are theory-dependent: functional dependencies in classical models, and quantum channels in quantum models. Broadly, causal modelling is an umbrella term that spans different approaches, in particular (i) approaches which focus on observable correlations generated by causal mechanisms of a theory, which are often referred to as Bayesian/causal networks (e.g., \cite{Geiger1990, Henson_2014}) and (ii) approaches such as structural equation models (e.g., \cite{Forre_2017, Bongers_2021}) or split-node quantum causal models (e.g., \cite{Barrett_2019, Barrett_2021}), which examine properties of the causal mechanisms directly, such as solvability of functions or causal influences between quantum systems in unitary channels. In this work, we use “causal models” to encompass both approaches, distinguishing them where necessary.

The classical and non-classical causal modelling literature have predominantly focused on \emph{acyclic} graphs, where there exists a well-defined probability rule for deriving correlations from causal mechanisms~\cite{Pearl_2009,Henson_2014, Barrett_2019}. Moreover, a foundational and powerful result in acyclic causal models (both classical and non-classical) is the $d$-separation theorem~\cite{Verma1990, Geiger1990, Henson_2014}, proving the soundness and completeness of a central graph-theoretic notion, $d$-separation~\cite{Pearl_2009, Spirtes_2005}. This theorem enables to read off conditional independences in correlations purely from the structure of the graph (causal structure), and is central to how we explain correlations (conditional dependencies) in terms of causal connections. Moreover, the $d$-separation theorem plays an integral role in causal discovery algorithms and inference across data-driven disciplines (see \cite{Spirtes2016}), in causal compatibility problems as well as certification of non-classical correlations in causal structures \cite{Wood_2015, Henson_2014}.

Recent years have witnessed growing interest in causal modelling for \emph{cyclic} graphs, which can represent physical processes with feedback~\cite{Forre_2017, Bongers_2021} as well as provide information-theoretic models for describing exotic and fundamental causal phenomena, such as solutions to general relativity admitting closed timelike curves~\cite{Godel1949, vanStockum_1938,Matzner_1967, Deutsch1991, Lloyd_2011, Lloyd_2011_2}. However, cyclic models introduce challenges, including potential pathologies like the “grandfather paradox”. Two significant open problems arise in this context. First, for general cyclic causal models both in the classical and quantum case\footnote{For example, in the classical literature, the observed distribution is only considered uniquely defined by the causal mechanisms (functional dependencies) in models that admit unique solutions to the functional dependencies \cite{Forre_2017, forre_2018}, but this is not the case for non-uniquely solvable classical causal models, see also \cite{Sister_paper} for details.}, no well-defined and robust method to compute observable probabilities from the causal mechanisms exists.
Secondly, certain cyclic causal models, even in classical settings and where the probabilities are well-defined, violate the soundness of $d$-separation, i.e., there exist classical causal models on cyclic graphs leading to correlations between vertices that are $d$-separated in the graph \cite{Neal_2000}. This complicates causal reasoning, raising concerns whether observations can be causally explained in a systematic manner in scenarios with cyclic causal structures. 

 Existing frameworks for cyclic causal modelling often impose restrictions on causal mechanisms to ensure consistency and define probabilities. For instance, classical cyclic models require global consistency conditions (e.g., unique solvability of functional dependencies~\cite{Forre_2017, Bongers_2021}), while quantum models enforce conditions like factorization of unitary channels (corresponding to requiring valid process operators~\cite{Barrett_2021}). Although these approaches capture meaningful subclasses of cyclic causal models and provide valuable techniques for studying them, the soundness of $d$-separation already fails within such classes of models (in particular, even in uniquely solvable models as shown in \cite{Neal_2000}). Moreover, there exist more general types of consistent cyclic causal models which cannot be captured in these approaches and constructing robust probability rules for arbitrary cyclic causal models has remained unresolved.

A general framework for cyclic causal modelling, applicable to non-classical theories, was proposed in \cite{VilasiniColbeckPRA, VilasiniColbeckPRL}. This top-down approach accommodates any theory, causal mechanism, and probability rule on a given cyclic graph, provided that the resulting probability distribution adheres to the soundness of $d$-separation (i.e., $d$-separations in the graph imply corresponding conditional independence in the distribution). Specifically, \cite{VilasiniColbeckPRA} constructed examples of non-classical cyclic causal models respecting $d$-separation soundness and \cite{VilasiniColbeckPRL} introduced a new method for computing probabilities for a specific classical cyclic model that is not uniquely solvable. However, models violating the soundness of $d$-separation fall outside this framework, and the general applicability of the probability method, beyond the particular examples, was not explored.

\begin{sloppypar}
Finding alternative graph-separation properties to $d$-separation that are sound and complete in the cyclic case is an active research area in classical causality community~\cite{Bongers_2021}. Notably, the concept of $\sigma$-separation was proposed as a sound and complete criterion for a specific subclass of classical models admitting unique solutions~\cite{Forre_2017}. However, to our knowledge, no graph-separation property applicable to all models is known even in the classical case. The problem remains unexplored for quantum and non-classical cyclic models, to the best of our knowledge.
\end{sloppypar}

\paragraph{Contributions.} In this work, we present a general framework for cyclic causal modelling involving finite-dimensional quantum systems. This framework extends beyond existing approaches to include cyclic causal models which are not necessarily associated with valid quantum process operators \cite{Barrett_2019, Barrett_2021} and it embeds classical causal models, including not uniquely solvable ones. We define a robust probability rule for all models in this framework, recovering existing results under the relevant restrictions. A central contribution is the introduction of a novel graph-separation property, $p$-separation, which we prove to be sound and complete for all quantum cyclic causal models within our framework, and which reduces to $d$-separation for directed acyclic graphs. 

Our causal modelling approach aligns more closely with Bayesian networks in terms of its primary focus on questions related to observable correlations and their relationships to underlying graph-separations. However, we also discuss connections to other causality frameworks, emphasizing the broader relevance of our results. Our methods are based on a general mapping of cyclic models to acyclic ones using a post-selected quantum teleportation protocol. This entails new results on quantum teleportation \cite{Bennett1993}, which might be of independent interest to the quantum information community. Thus, our framework equips post-selected closed timelike curves (P-CTCs) \cite{Lloyd_2011, Lloyd_2011_2}, which are a quantum informational model for causal loops motivated by exotic solutions of general relativity, with a causal modelling and graph-separation semantic.

Furthermore, in a companion paper \cite{Sister_paper}, we develop the framework and results for the classical causal modelling community by introducing the concept of classical post-selected teleportation. This includes an alternative formulation of $p$-separation for classical functional causal models which we prove to be equivalent to the quantum version presented herer when restricted to classical scenarios. In~\cite{Sister_paper}, we establish connections between unique solvability properties and post-selection success probabilities in classical models. This enables our results to be directly accessible and applicable within the classical causality community, without requiring familiarity with quantum formalism.

\subsection{Structure of the paper}
\label{sec: structure}
In~\cref{sec: overview}, we provide a high-level overview of the framework, the new graph-separation property of $p$-separation and its soundness and completeness, offering intuition through illustrative examples, without delving into technical details or proofs.  
\begin{itemize}
    \item \textbf{Framework and probability rule:}  
In~\cref{sec: framework}, we introduce a general framework for cyclic quantum causal modelling, starting with the known probability rule for the acyclic case illustrated through examples. In~\cref{sec: cyclic_to_acyc_v3}, we define a probability rule for general models on cyclic graphs, which is achieved by constructing a  mapping from a given (possibly cyclic) causal model to a family of acyclic models with post-selection. A key ingredient, post-selected quantum teleportation protocols, is described in~\cref{sec:post-selected teleportation}. Using this,~\cref{sec:family of acyclic CM} constructs the relevant family of acyclic models, and~\cref{sec:probability rule CM} formalizes the probability rule for cyclic models. By establishing relevant results on post-selected quantum teleportation in \cref{app:ps teleportation}, we prove the robustness of our probability rule, namely, that it is independent of the choice of teleportation protocol and of the particular acyclic model in the family. Finally,~\cref{sec:examples cyclic causal graphs} illustrates these methods with examples of cyclic causal models.  
 \item \textbf{Soundness and completeness of $p$-separation:}  
In~\cref{sec: introducing pseparation}, we discuss graph separation properties, beginning with a review of $d$-separation in~\cref{sec: rev dseparation} and its failure for cyclic graphs in~\cref{sec:failure_dsep}. In~\cref{sec:psep_introduced}, we introduce $p$-separation, the new graph-separation property generalizing $d$-separation, proving its soundness and completeness for all causal models within our framework. Examples of $p$-separation are given in~\cref{sec:examples psep}, including a discussion on how $p$-separation reduces to $d$-separation in the acyclic case. 
\item \textbf{Further results:} In~\cref{sec: quantum_Markov}, we generalize the concept of probabilistic Markovianity to cyclic graphs and derive conditions for this property in terms of post-selection success probabilities. In~\cref{sec: frameworks} (and further in \cref{sec:quantum aspects}), we situate our work within broader causality frameworks (summarised in \cref{fig:frameworks}) and discuss its connections with the causal modelling approaches of  Barrett-Lorenz-Oreshkov (BLO) for cyclic models  and Costa-Shrapnel (CS) for acyclic models, to the causal decomposition problem, as well as links to closed time-like curves and indefinite causal order quantum processes. In particular, in~\cref{sec:quantum aspects} we show that a class of BLO causal models, which we call tensor-restricted BLO models, can be faithfully mapped into our formalism, which implies the same for all CS causal models. In~\cref{sec:classical aspects}, we focus on classical functional models, showing how to faithfully map them to our causal models and proving consistency with the classical formulation of the cyclic probability rule and $p$-separation given in~\cite{Sister_paper}.
\end{itemize}
 
Finally,~\cref{sec: conclusions} summarizes the main contributions and discusses directions for future research.

\subsection{Notation}

We denote with $\hilmap$ a finite-dimensional Hilbert space, i.e., a finite-dimensional complex vector space equipped with an inner product. 
We denote with $\linops(\hilmap)$ the complex vector space of linear operators acting on $\hilmap$. 
The trace is denoted as $\Tr[\cdot]: \linops(\hilmap) \mapsto \mathbb{C}$.
The identity map on $\hilmap$ is denoted $\id \in \linops(\hilmap)$. 

We use the following acronyms: CP map stands for completely positive map, CPTP map stands for completely positive and trace preserving map, and POVM stands for positive operator valued measurement.

We associate to a finite set $\outcomemap$ a finite dimensional Hilbert space $\hilmapsetarg{\outcomemap}$ such that $\dim(\hilmapsetarg{\outcomemap})=|\outcomemap|$. 
The Hilbert space $\hilmapsetarg{\outcomemap}$ is equipped with a preferred basis, labelled $\{\ket{\outcome}\}_{\outcome\in\outcomemap}$ and referred to as the computational basis, so that
\begin{equation}
\label{eq:Hxspace}
    \hilmapsetarg{\outcomemap} = \text{\normalfont{span}}\biglset{ \ket{\outcome} }{ \outcome\in\outcomemap }.
\end{equation}

We denote with $\graphname = \graphexpl$ a directed graph where each edge $\edgename = \edgearg{\vertname_1}{\vertname_2} \in \edgeset$ is an ordered pair of two vertices $\vertname_1,\vertname_2\in\vertset$. We shall always assume, for convenience, that the set of vertices $\vertset$ is equipped with a preferred order, such that we can write $\vertset = \{\vertname_1,\dots,\vertname_\vertcount\}$ where $\vertcount = |\vertset|$.
The incoming and outgoing edges to a vertex $\vertname\in\vertset$ are denoted
\begin{subequations}
\begin{align}
    \inedges{\vertname} &= \biglset
    { e\in\edgeset }{ \exists \vertname'\in\vertset \st e = \edgearg{\vertname'}{\vertname} }, \\
    \outedges{\vertname} &= \biglset
    { e\in\edgeset }{ \exists \vertname'\in\vertset \st e = \edgearg{\vertname}{\vertname'} },
\end{align}
\end{subequations}
while the parents and children of a vertex $\vertname\in\vertset$ are denoted
\begin{subequations}
\begin{align}
    \parnodes{\vertname} &= \biglset{ \vertname' \in \vertset }{ (\vertname',\vertname) \in \edgeset}, \\
    \childnodes{\vertname} &= \biglset{ \vertname' \in \vertset }{ (\vertname,\vertname') \in \edgeset }.
\end{align}
\end{subequations}
Given a $\graphname=\graphexpl$, we say that a vertex $\vertname\in\vertset$ is \hypertarget{exogenous}{exogenous} if the set $\parnodes{\vertname}$ is empty, \hypertarget{endogenous}{endogenous} otherwise and we define the sets
\begin{equation}
    \exnodes = \{\vertname\in\vertset \; : \; \parnodes{\vertname}=\emptyset\} \quad \text{and} \quad  \nexnodes = \vertset \setminus \exnodes,
\end{equation}
of exogenous and endogenous vertices of a graph.

\section{Overview of the framework and main results through examples}
\label{sec: overview}
In this section, we present the results in an informal manner with the scope of providing an intuitive picture of their use. 
Complete definitions and proofs are provided in the following sections and referenced here.

\paragraph{Causal graphs and causal models.}
We present our framework using the following directed graph as example
\begin{equation}
\label{eq:example_cycle_overview}
    \graphname=\centertikz{
        \node[unode,minimum size =22pt] (v1) at (-1, 0) {$v_3$};
        \node[unode, minimum size =22pt] (v2) at (1, 0) {$v_4$};
        \node[onode, minimum size =22pt] (v4) at (2.5, 0) {$v_2$};
        \node[onode, minimum size =22pt] (v3) at (-2.5, 0) {$v_1$};
        \draw[qleg] (v1.north) to[in = 120, out = 60] (v2.north);
        \draw[qleg] (v2.south) to[in = 300, out = 240] (v1.south);
        \draw[cleg] (v2) -- (v4);
        \draw[cleg] (v1) -- (v3);
    }.
\end{equation}
The notation allows to distinguish between observed and unobserved processes and classical or quantum systems. Directed graphs that are decorated as follows are called \textit{causal graphs} (\cref{def: causal graph}). Specifically,
\begin{myitem}
    \item \textit{quantum} edges are represented as $\centertikz{
        \draw[qleg] (0,0) -- (0.6,0);
        }$ and carry a finite-dimensional Hilbert space $\hilmap$;
    \item \textit{classical} edges are represented as $\centertikz{
        \draw[cleg] (0,0) -- (0.6,0);
        }$ and carry a finite set $\outcomemap$;
    \item \textit{unobserved} vertices are represented as circles $\centertikz{
            \node[unode] (q) at (0,0) {$\vertname$};
        }$ and carry a CPTP map, $\chanmap$;
    \item \textit{observed} vertices are represented as a rectangles $\centertikz{
            \node[onode] (q) at (0,0) {$\vertname$};
            }$ and carry a finite-cardinality random variable taking values from a finite set $\outcomemap$ and a POVM, $\povm = \{\povmel{\outcome}\}_{\outcome\in\outcomemap}$.
\end{myitem}
Observed vertices correspond to the operation of performing a measurement, described by the POVM, $\povm$, on the incoming edges. The outcome of the measurement, $\outcome\in\outcomemap$, is carried by the vertex itself and broadcast to its children vertices.

Given the causal graph of~\cref{eq:example_cycle_overview}, one can define a causal model (\cref{def:causal model}) by associating finite-dimensional Hilbert spaces, $\hilmap_{(3,4)}$ and $\hilmap_{(4,3)}$ respectively to the edges $(v_3,v_4)$ and  $(v_4,v_3)$, and finite sets to the edges $(v_3,v_1)$ and $(v_4,v_2)$, respectively $\outcomemaparg{(3,1)}$ and $\outcomemaparg{(4,2)}$. The finite sets are associated with finite dimensional Hilbert spaces as in~\cref{eq:Hxspace}, namely 
\begin{equation}
    \hilmapsetarg{\outcomemaparg{(3,1)}} = \textup{span}\{\ket{x}\}_{x\in\outcomemaparg{(3,1)}} \textup{ and } \hilmapsetarg{\outcomemaparg{(4,2)}} = \textup{span}\{\ket{x}\}_{x\in\outcomemaparg{(4,2)}}
\end{equation}
Then, we associate CPTP maps, 
\begin{equation*}
\chanmap_{3}:\linops\left(\hilmap_{(4,3)}\right)\mapsto \linops\left(\hilmap_{(3,4)}\otimes \hilmapsetarg{\outcomemaparg{(3,1)}}\right) \textup{ and } \chanmap_{4}:\linops\left(\hilmap_{(3,4)}\right)\mapsto \linops\left(\hilmap_{(4,3)}\otimes \hilmapsetarg{\outcomemaparg{(4,2)}}\right)
\end{equation*}
respectively to the vertices $v_3$ and $v_4$. To the observed vertices, $\vertname_1$ and $\vertname_2$ we associate the finite sets $\outcomemaparg{1}$ and $\outcomemaparg{2}$ and POVMs 
\begin{equation}
    \povmarg{(1)}=\left\{\povmelarg{\outcome_1}{(1)}\in\linops\left(\hilmapsetarg{\outcomemaparg{(3,1)}}\right)\right\}_{\outcome_1\in\outcomemaparg{1}} \textup{ and } \povmarg{(2)}=\left\{\povmelarg{\outcome_2}{(2)}\in\linops\left(\hilmapsetarg{\outcomemaparg{(4,2)}}\right)\right\}_{\outcome_2\in\outcomemaparg{2}}.
\end{equation} 

In the special case where the causal graph is \textit{acyclic}, the probability distribution over observed vertices is defined as is standard in the literature (\cref{def: acyclic probability})~\cite{Henson_2014}. Concretely, one first takes the tensor product of the maps associated to each exogenous vertex with input the only trace-one and positive linear operator in the trivial input space $\mathbb{C}$, i.e., $1$. Then, one applies the composition of the CPTP or CP maps associated to each endogenous vertex with the composition order given by the directed edges of the graph $\graphname$. Eventually, because the graph is finite and acyclic (where the childless vertices are associated with maps having a trivial output space), and because all maps are completely positive\footnote{Complete positivity implies that the outcome that resulting from composing all these maps is still a positive operator. Since the outcome is a value in $\mathbb{C}$, it implies such value is real and positive. In addition, one can easily see that the maps we associate are also trace non-increasing, thus the final output is smaller than $1$.}, the output of this composition operation will be a real number between $0$ and $1$. 

\paragraph{Mapping cyclic causal models to acyclic with post-selection and the probability rule.}
Firstly, given a well-defined causal model on a causal graph, we aim to define a rule for computing a single, well-defined probability distribution over the observed vertices' random variables. We achieve this by constructing a mapping from cyclic causal models to acyclic ones with post-selection (\cref{sec: cyclic_to_acyc_v3}) using a post-selected teleportation protocol (from the general class of such protocols as per \cref{def:ps teleportation,def:bell tele}) to replace directed edges. Specifically, since directed edges act as identity channels which connect the maps associated to the vertices of the causal model and the post-selected teleportation protocol allows us to simulate identity channels, the latter can be used to replace a directed edge. Let us clarify this statement with an example.

In a typical teleportation protocol~\cite{Bennett1993}, Alice and Bob share a bipartite entangled state, whose preparation is represented by $\prevertname$ and the two subsystems by the edges $(\prevertname,Y)$ and $(\prevertname,\postvertname)$. Alice performs a Bell-state measurement, represented by $\postvertname$, on her half of the entangled state, i.e., the edge $(\prevertname,\postvertname)$, and the state to be teleported, i.e., the edge $(X,\postvertname)$. If Alice's measurement results in a specific outcome, $p=\bellstate$, Bob's system is automatically in the correct quantum state, and Bob does not need to perform any further operations.
Thus, by post-selecting on the successful instances where Alice measures $p=\bellstate$, such a protocol simulates an identity channel and it can be used to simulate a direct edge between $X$ and $Y$, i.e.,
\begin{align}
    \centertikz{
            \begin{scope}[xscale=2.6,yscale=1.3]
                \node (in) at (0,0) {$X$};
                \node (out) at (1.5,1) {$Y$};
                \node[prenode] (pre) at (1,0) {$\prevertname$};
                \node[psnode] (post) at (0.5,1) {$\postvertname$};
                \draw[qleg] (pre) -- node[pos=0.7,right] {\small$\edgearg{\prevertname}{\postvertname}$} (post);
                \draw[qleg] (in) -- node[pos=0.4,right=0pt] {\small$\edgearg{X}{\postvertname}$} (post);
                \draw[qleg] (pre) -- node[pos=0.4,right] {\small$\edgearg{\prevertname}{Y}$} (out);
            \end{scope}
            }
    =
    \centertikz{
    \begin{scope}[xscale=1.2]
        \node (in) at (0,0) {$X$};
        \node (out) at (1.2,1) {$Y$};
        \draw[qleg] (in) -- (out);
    \end{scope}
    }.
\end{align}
With this in mind, a causal model on a cyclic causal graph $\graphname$ is mapped to an acyclic causal model with post-selection through the following steps:
\begin{myitem}
    \item Consider an acyclic subgraph, $\graphname'$, of $\graphname$ obtained only by removing edges, e.g.,
    \begin{equation}
        \graphname'=\centertikz{
        \node[unode,minimum size =22pt] (v1) at (-1, 0) {$v_3$};
        \node[unode, minimum size =22pt] (v2) at (1, 0) {$v_4$};
        \node[onode, minimum size =22pt] (v4) at (2.5, 0) {$v_2$};
        \node[onode, minimum size =22pt] (v3) at (-2.5, 0) {$v_1$};
        \draw[qleg] (v1) -- (v2);
        \draw[cleg] (v2) -- (v4);
        \draw[cleg] (v1) -- (v3);
    }.
    \end{equation}
    \item Construct an acyclic graph, $\graphtele$, by replacing each edge in $\graphname$ that is missing in $\graphname'$ with the edges and vertices of the post-selected teleportation protocol (\cref{def:ps teleportation,def: graph_family_v3}), e.g., since $(\vertname_4,\vertname_3)$ is missing in $\graphname'$ we get 
    \begin{equation}
    \graphtele=
    \centertikz{
        \node[prenode, minimum size =18pt] (pre) at (0,-2.5) {$\prevertname$};
        \node[onode, minimum size =22pt] (v3) at (-1.5, 0.5) {$v_1$};
        \node[unode, minimum size =22pt] (v1) at (-1, -1) {$v_3$};
        \node[unode, minimum size =22pt] (v2) at (1, 1) {$v_4$};
        \node[onode, minimum size =22pt] (v4) at (1.5, 2.5) {$v_2$};
        \node[psnode, minimum size =18pt] (ps) at (0.3,2.5) {$\postvertname$};
        \draw[cleg] (v1.120) -- (v3.270);
        \draw[qleg] (pre.140) -- (v1.300);
        \draw[qleg] (pre.60) -- (ps.240);
        \draw[qleg] (v2.90) -- (ps.300);
        \draw[qleg] (v1.north) to[in=270, out = 60] (v2.240);
        \draw[cleg] (v2.60) -- (v4.270);
    }.
    \end{equation}
    \item Define a causal model on the acyclic graph $\graphtele$ by keeping the same associations of the original causal model to all vertices and edges that are preserved from $\graphname$ in $\graphtele$ and associating a post-selected teleportation protocol on the added vertices and edges  (\cref{def:causal model of graphfamily_v3}), e.g, to $R$ associate the state $\rho_R = \ketbra{\bellstate}$ and to $T$ the binary variable $t\in\{\ok, \fail\}$ given by the measurement $\{\ketbra{\bellstate}, \id-\ketbra{\bellstate}\}$.
    \item Evaluate the probability of the casual model on the acyclic graph $\graphtele$ using the known acyclic probability rule (e.g., \cite{Henson_2014}) and consider the conditioned probability where $t=\ok$, obtaining
    \begin{equation*}
        \probacyc(x_1,x_2, t=\ok)_{\graphtele} \quad \text{and} \quad \probacyc(x_1,x_2| t=\ok)_{\graphtele} = \frac{\probacyc(x_1,x_2, t=\ok)_{\graphtele}}{\probacyc(t=\ok)_{\graphtele}}.
    \end{equation*}
    The conditional probability defines a valid probability distribution over the variables $x_1$ and $x_2$ associated with the observed vertices $v_1$ and $v_2$, unless the probability of successful post-selection, $\probacyc(t=\ok)_{\graphtele}$, vanishes.
    \item If $\probacyc(t=\ok)_{\graphtele}>0$, i.e., the post-selection succeeds with non-zero probability, define the probability distribution over observed vertices in the causal model on the cyclic graph $\graphname$ as the conditional probability distribution $\probacyc(x_1,x_2| t=\ok)_{\graphtele}$, i.e., (\cref{def: probability distribution v3})
    \begin{equation*}
        \prob(x_1,x_2)_{\graphname} := \probacyc(x_1,x_2| t=\ok)_{\graphtele}
    \end{equation*}
    If $\probacyc(t=\ok)_{\graphtele}=0$, we say that the model is inconsistent and the probabilities are undefined. 
\end{myitem}
The construction defines a family of acyclic graphs (\cref{def: graph_family_v3}), $\graphfamily{\graphname}$, that can be obtained from $\graphname$ by performing steps $1$-$2$ starting from different choices of acyclic subgraphs $\graphname'$ of $\graphname$. The probability rule obtained in the last step is independent of which graph in this family is used to define it (\cref{lemma: acyclic_prob_same_v3}).
The choice of causal mechanisms on the pre- and post-selection vertices is also not constrained to the Bell teleportation protocol (\cref{corollary:probs indep of tele implementation v3}). Indeed, any pair of state and measurement that allows to simulate an identity channel can be used to define a causal model in step $3$ and leads to the same probability rule (see also~\cref{app:ps teleportation} for more details on post-selected teleportation protocols).

\paragraph{Cyclic graph separation property: $p$-separation.}
A powerful tool in causal modelling is given by theorems relating graph properties, independent of the model's mechanisms, to conditional independencies of the observed probability that would arise for any choice of mechanisms in a class of theories. Specifically, $d$-separation is a graph-theoretic notion which defines whether two vertices are $d$-separated or $d$-connected conditioned on a third vertex~\cite{Pearl_2009, Spirtes_2005} (\cref{def: d-sep}). For example, in the collider graph \[\centertikz{\node[onode] (v1) at (-1, 0) {$A$};
        \node[onode] (v2) at (0.5, 0) {$C$};
        \node[onode] (v4) at (2, 0) {$B$};
        \draw[cleg] (v1) -- (v2);
        \draw[cleg] (v4) -- (v2);
        },\] 
$A$ and $B$ are $d$-separated and become $d$-connected conditioned on $C$, i.e., once we post-select on $C$.
In the special case of a causal model on an acyclic graph $\graphname$, the $d$-separation theorem states that if two vertices, $A$ and $B$, are $d$-separated conditioned on a third, $C$, then probability distribution $\probacyc_{\graphname}$ over the outcomes of these vertices also presents the same conditional independence (\cref{theorem: dsep theorem}). Specifically, if we denote $d$-separation with $\perp^d$ and conditional independence\footnote{Here we write $A\indep B|C$ in $\undefprob_\graphname$ to denote conditional independence (see \cref{def:conditional independence}) i.e., $\undefprob(a,b|c)_\graphname=\undefprob(a|c)_\graphname\undefprob(b|c)_\graphname$ for all values $a,b,c$ of the variables $A,B,C$, where $\undefprob$ denotes an arbitrary probability distribution over the random variables $A,B$ and $C$.} with $\indep$, it holds:
\begin{itemize}
    \item[]\textbf{Soundness:} if $A\perp^dB|C$ in $\graphname$, then \underline{for all} causal models $A\indep B|C$ in $\probacyc_\graphname$,
    \item[]\textbf{Completeness:} If $A\not\perp^dB|C$ in $\graphname$, then \underline{there exists} a causal model on $\graphname$ where $A\not\indep B|C$ in $\probacyc_\graphname$,
\end{itemize}

This $d$-separation theorem was shown in the classical case in \cite{Verma1990, Geiger1990} and in the non-classical case in \cite{Henson_2014}, both for acyclic graphs. One could hope that an analogous theorem holds for cyclic graphs with some definition for probability distributions compatible with that graph.
However, this typically fails \cite{Pearl_2009,Neal_2000} (\cref{sec:failure_dsep}). For example, consider the following graph:
\begin{equation}
\label{eq:graph_failure_dsep}
    \centertikz{
        \node[onode] (v1) at (0,0) {$\vertname_3$};
            \node[onode] (v2) at (2,0) {$\vertname_4$};
            \node[onode] (v3) [below left  =\chanvspace of v1] {$\vertname_1$};
            \node[onode] (v4) [below right  =\chanvspace of v2] {$\vertname_2$};
            \draw[cleg] (v1) to [out=45,in=135]  (v2);
            \draw[cleg] (v2) to [out=225,in=-45] (v1);
            \draw[cleg] (v3) -- (v1);  
            \draw[cleg] (v4) -- (v2);
     }.
\end{equation}
Here, $v_1$ and $v_2$ are $d$-separated, but there exist cyclic causal models on this graph where the two variables must be correlated (\cref{sec:failure_dsep}). Hence, $d$-separation in the graph no longer implies (conditional) independence in the probabilities.
Therefore, there has been interest in finding new graph separation properties and proving their soundness and completeness at least for a subclass of cyclic models.
In the classical causal modelling literature, the notion of $\sigma$-separation is defined as a graph separation property~\cite{Forre_2017}. This was shown to be sound and complete for a subclass of uniquely solvable classical cyclic causal models called modular structural equation models~\cite{Forre_2017, forre_2018}. However, to our knowledge, no general sound and complete graph-separation property is known to hold even for all finite-cardinality classical functional models (including non-uniquely solvable ones).

We propose a new graph-separation property, $p$-separation (\cref{def: p-separation}), based on the correspondence between cyclic causal models and acyclic ones with post-selection. We say that two vertices $A$ and $B$ are $p$-separated conditioned on a third $C$, if there exists an acyclic causal model in the family $\graphfamily{\graphname}$ such that $A$ and $B$ are $d$-separated conditioned on $C$ and on all the post-selection vertices. For the graph in~\cref{eq:graph_failure_dsep}, one can easily see that the following graph, $\graphtele$, is in $\graphfamily{\graphname}$:
\begin{equation}
    \graphtele=
    \centertikz{
        \node[prenode] (pre) at (0,-2) {$\prevertname$};
        \node[onode] (v3) at (-1.5, -2) {$v_1$};
        \node[onode] (v1) at (-1, -1) {$v_3$};
        \node[onode] (v2) at (1, 1) {$v_4$};
        \node[onode] (v4) at (1.5, 0) {$v_2$};
        \node[psnode] (ps) at (0.3,2) {$\postvertname$};
        \draw[cleg] (v3.90) -- (v1.240);
        \draw[qleg] (pre.140) -- (v1.300);
        \draw[qleg] (pre.60) -- (ps.240);
        \draw[cleg] (v2.120) -- (ps.300);
        \draw[cleg] (v1.60) to[in=270, out = 60] (v2.240);
        \draw[cleg] (v4.90) -- (v2.300);
    }.
\end{equation}
 In $\graphtele$, $\vertname_1$ and $\vertname_2$ are $d$-connected after post-selecting on $\postvertname$. One can show that in all graphs in $\graphfamily{\graphname}$, $\vertname_1$ and $\vertname_2$ become $d$-connected through conditioning on post-selection vertices, thus $\vertname_1$ and $\vertname_2$ are $p$-connected in $\graphname$. 

Similarly to the $d$-separation theorem, we prove the $p$-separation theorem connecting $p$-separation, which is purely a graph property, to conditional independencies of any probability distribution arising from causal models on the graph. Specifically, given a graph $\graphname$, possibly cyclic, and considering the probability rule $\prob_{\graphname}$, defined above (\cref{sec:probability rule CM}), the following hold (\cref{theorem: psep_theorem})
\begin{itemize}
    \item[]\textbf{Soundness:} If $A\perp^pB|C$ in $\graphname$, then \underline{for all} causal models $A\indep B|C$ in $\prob_{\graphname}$,
    \item[]\textbf{Completeness:} If $A\not\perp^pB|C$ in $\graphname$, then \underline{there exists} a causal model on $\graphname$ where $A\not\indep B|C$ in $\prob_{\graphname}$,
\end{itemize}
where we denoted $p$-separation relations with $\perp^p$.
The soundness part of the theorem states that if two vertices are $p$-separated conditioned on a third, then the probability distribution over the outcomes of these vertices also presents the same conditional independence in any (possibly cyclic) quantum or classical causal model involving finite dimensional systems. Completeness ensures that there cannot be a stronger graph-separation property, in the sense of one which entails strictly more graph-separations than $p$-separation, which is also sound.\footnote{If such a property existed, then there would be at least one $p$-connection $A\not\perp^pB|C$ where we would have separation relative to the new property, and if that property were sound, $A\indep B|C$ in all causal models. This is disallowed by completeness.} In particular, this explains by means of a sound and complete graph-separation criterion why the variables associated with the vertices $\vertname_1$ and $\vertname_2$ can become correlated in certain causal models on the cyclic graph in our above example. 

\section{Causal modelling framework}
\label{sec: framework}

In this section, we introduce our causal modelling framework which is based on a graphical representation of cause and effect relations though so-called \textit{causal graphs}. 
We will then describe how to equip causal graphs with a causal model, and define a probability distribution over the observed variables in the special case of \textit{acyclic} causal graphs (deferring the discussion of probabilities in the cyclic case to the next section). 
Finally, we provide examples of causal models.

\subsection{Causal models and acyclic probabilities}
\label{sec: causal models}
A causal graph is a decorated directed graph which is allowed to be cyclic.
The graph specifies causation relations between systems or variables, and the decorations specify whether the variables are observed or unobserved, and whether systems are classical or quantum.

\begin{definition}[\hypertarget{causalgraph}{Causal graph}]
\label{def: causal graph}
    A causal graph is defined to be a graph $G = (V,E)$ such that:
    \begin{myitem}
        \item The set of vertices $V$ can be partitioned into $\vertset = \overtset \cup \uvertset$, where the vertices $\vertname \in \overtset$ are called the \textup{observed} vertices, denoted as $\centertikz{\node[onode] {$\vertname$};}$, while the vertices $\vertname \in \uvertset$ are called the \textup{unobserved} vertices and are denoted as $\centertikz{\node[unode] {$\vertname$};}$.
        \item The set of edges can be partitioned into $\edgeset = \cledgeset \cup \qedgeset$, where the edges $\edgename \in \cledgeset$ are called the \textup{classical} edges, denoted as $\centertikz{\draw[cleg] (0,0) -- (14pt,0);}$, while the edges $\edgename \in \qedgeset$ are called the \textup{quantum} edges and are denoted as $\centertikz{\draw[qleg] (0,0) -- (14pt,0);}$.
        \item The outgoing edges of an observed vertex $\vertname \in \overtset$ are classical, i.e., $\outedges{\vertname} \subseteq \cledgeset$.
    \end{myitem}
\end{definition}

Given a causal graph $\graphname$, the incoming and outgoing classical or quantum edges to a vertex $\vertname\in\vertset$ are denoted
\begin{subequations}
\begin{align}
    \incledges{\vertname} &= \inedges{\vertname}\cap \cledgeset, & \inqedges{\vertname}&= \inedges{\vertname}\cap \qedgeset,\\
    \outcledges{\vertname} &= \outedges{\vertname}\cap \cledgeset, & \outqedges{\vertname}&= \outedges{\vertname}\cap \qedgeset.
\end{align}
\end{subequations}
A causal model assigns to a causal graph specific causal mechanisms for each vertex.
These causal mechanisms have to match the type of the vertex, e.g., if the vertex is observed, the causal mechanism has to specify how the observed outcome is obtained.

\begin{definition}[Causal model on causal graph]
\label{def:causal model}
    A causal model on a causal graph $\graphname=\graphexpl$, $\cm_{\graphname}$, is specified by the following items:
   \begin{myitem}   
        \item A finite-dimensional Hilbert space $\hilmaparg{\edgename}$ is associated to each edge $\edgename\in\edgeset$. We will use the notation where if $\edgeset'\subseteq\edgeset$ is a non-empty subset of edges,
        \begin{align}
            \label{eq:def h set of edges}
            \hilmaparg{\edgeset'} = \bigotimes_{\edgename \in \edgeset'} \hilmaparg{\edgename},
        \end{align}
        and if $\edgeset' = \emptyset$, then $\hilmaparg{\edgeset'} = \hilmaparg{\emptyset} = \mathbb{C}$.

        \item A random variable $X_\vertname$ taking values $x_\vertname$ 
        from a non-empty finite set $\outcomemaparg{\vertname}$ is associated to each observed vertex $\vertname\in\overtset$.
        
        \item A finite set $\outcomemaparg{\edgename}$ is associated to each classical edge $\edgename\in\cledgeset$.
        The Hilbert space associated to a classical edge then has to take the form $\hilmaparg{\edgename} = \hilmapsetarg{\outcomemaparg\edgename}$ with associated computational basis $\{\ket{\outcome}\}_{\outcome\in\outcomemaparg\edgename}$ \textup{(see \cref{eq:Hxspace})}.
        For consistency, we require that for all observed vertices $\vertname\in\overtset$, for all outgoing edges $\edgename \in \outedges{\vertname}$ thereof, the set of outcomes match: $\outcomemaparg{\edgename} = \outcomemaparg{\vertname}$.
        \item A CPTP map is associated to each unobserved vertex $\vertname\in\uvertset$:
        \begin{equation}
            \centertikz{
                \node[unode] (q) at (0,0) {$\vertname$};
            }\mapsto \chanmaparg{\vertname}: 
            \linops\left(\hilmaparg{\inedges{v}}\right)
            \mapsto
            \linops\left(\hilmaparg{\outedges{v}}\right),
        \end{equation}
        where $\hilmaparg{\inedges{\vertname}}$ and $\hilmaparg{\outedges{\vertname}}$ refer to \cref{eq:def h set of edges}.
        If $\outcledges{\vertname}$ is non-empty, the CPTP map $\chanmaparg\vertname$ satisfies the decoherence condition
        \begin{equation}
            \chanmaparg{\vertname} = \mathcal{D}^{\vertname} \circ \chanmaparg{\vertname},
        \end{equation}
        where the channel $\mathcal{D}^\vertname$ is defined as $\mathcal{D}^\vertname = \bigotimes_{\edgename\in\outcledges{\vertname}} \mathcal{D}_\edgename$,
        and  $\mathcal{D}_\edgename : \linops(\hilmaparg{\edgename}) \mapsto \linops(\hilmaparg{\edgename})$ is a decohering channel acting as $\mathcal{D}_\edgename(\rho) = \sum_{\outcome \in \outcomemaparg\edgename} \ketbra{\outcome} \rho \ketbra{\outcome}$.
    \item A POVM is associated to each observed vertex $\vertname\in\overtset$:
    \begin{equation}
        \centertikz{
            \node[onode] (q) at (0,0) {$\vertname$};
        } \mapsto
        \povmarg{\vertname} = \{\povmelarg{\outcome}{\vertname} \in \linops(\hilmaparg{\inedges{\vertname}}) \}_{\outcome \in \outcomemaparg\vertname}.
    \end{equation}
    We furthermore associate to $\vertname$ a set of CP maps
    \begin{align}
        \left\{ 
        \measmaparg{\outcome}{\vertname} :  
        \linops\left(\hilmaparg{\inedges{v}}\right) \mapsto \linops\left(\hilmaparg{\outedges{v}}\right)
        \right\}_{\outcome\in\outcomemaparg{\vertname}}
    \end{align}
    defined as follows: for all $\rho \in \linops(\hilmaparg{\inedges{\vertname}})$,
    \begin{equation}
    \label{eq:def_onode}
        \measmaparg{\outcome}{\vertname}(\rho) = \Tr[\povmelarg{\outcome}{\vertname} \rho] \bigotimes_{\edgename\in\outedges{\vertname}} \ketbra{\outcome}_{\hilmaparg{\edgename}}\,.
    \end{equation}
    By definition of POVM, $\povmarg{\vertname} = \{\povmelarg{\outcome}{\vertname} \}_{\outcome \in \outcomemaparg\vertname}$ satisfies $\sum_{\outcome \in \outcomemaparg\vertname}\povmelarg{\outcome}{\vertname} = \id_{\hilmaparg{\inedges{\vertname}}}$ and the map $\sum_{\outcome \in \outcomemaparg\vertname}\measmaparg{\outcome}{\vertname}$ is CPTP.
    \end{myitem}
\end{definition}

Few comments on the definition are in order:
\begin{itemize}
    \item The decoherence condition required for CPTP maps associated to unobserved vertices ensures that the output of $\chanmaparg{\vertname}$ on systems defined over classical edges is classical, i.e., diagonal in the computational basis.  The analogous condition for classical input edges $\edgename = (\vertname',\vertname)\in\cledgeset$ is already satisfied due to the decoherence condition of the map associated to $\vertname'$. 
    \item If a vertex $\vertname$ is exogenous, the set $\inedges{\vertname}$ is empty. In this case, $\linops(\hilmaparg{\inedges{\vertname}}) = \linops\left(\hilmaparg{\emptyset}\right) = \linops\left(\mathbb{C}\right) \cong \mathbb{C}$ whose only positive and trace-one element is $1 \in \mathbb{C} \simeq \linops(\mathbb{C})$. Therefore, for exogenous unobserved vertices we will write
        $
            \statemaparg\vertname = \chanmaparg\vertname(1) \in \linops\left(\hilmaparg{\outedges\vertname}\right)
        $, which can be shown to be a valid density operator.
        For exogenous observed vertices, we will also denote $\probmaparg\outcome\vertname = \Tr[\povmelarg\outcome\vertname]$, which is a valid probability distribution over $\outcome \in \outcomemaparg{\vertname}$, and  $\substatemaparg{\outcome}{\vertname} = \measmaparg{\outcome}{\vertname}(1)$. It can be checked that $\{\substatemaparg\outcome\vertname\}_{\outcome\in\outcomemaparg\vertname}$ is a collection of sub-normalized density matrices that sum up to a normalized density matrix.
\end{itemize}

Our goal for the next section will be to define a probability distribution over the values of the observed vertices given a causal model on any causal graph.
An important special case of causal graphs happens when the graph is \emph{acyclic}. 
In that case, we can define a probability distribution over the observed vertices in the usual way (in particular, as given by the acyclic causal modelling formalism of Henson, Lal and Pusey for the quantum case ~\cite{Henson_2014}). This is equivalent to considering the quantum protocol specified by the acyclic causal model and applying the Born rule.
\begin{definition}[\hypertarget{probacyc}{Probabilities of acyclic causal graphs}]
\label{def: acyclic probability}
    Consider a causal model on an \emph{acylic} causal graph $\graphname = (\vertset,\edgeset)$ and a global observed event $\outcome:=\{\outcome_\vertname \in \outcomemaparg{\vertname}\}_{\vertname\in\overtset}$.
    The probability $\probacyc(\outcome)_{\graphname} \in [0,1]$ is obtained by composing all the channels according to the graph, and we denote this as follows:
    \begin{align}
    \label{eq:def acyclic composition}
        \probacyc(\outcome)_{\graphname}
        =
        \bigcomp_{\vertname \in \overtset}
        \measmaparg{\outcome_{\vertname}}{\vertname} 
        \bigcomp_{u \in \uvertset} \chanmaparg{u}.
    \end{align}

    Concretely, one first takes the tensor product of the states $\statemaparg u = \chanmaparg{u}(1)$ associated to each exogenous unobserved vertex $u$ and of the subnormalized states $\measmaparg{\outcome_\vertname}{\vertname}(1)$ associated to each exogenous observed vertex $\vertname$.
    Then, one applies the composition of the CPTP maps $\chanmaparg{u}$ associated to each endogenous unobserved vertex $u$ and the CP map $\measmaparg{\outcome_{\vertname}}{\vertname}$ associated to each endogenous observed vertex, with the composition rule\footnote{Meaning: the order in which to compose, and how to take the tensor product of each map with identity channels so that each map acts on the correct subsystems.} being obtained from the connectivity of the graph $\graphname$. 
    Eventually, because the graph is finite and acyclic, the output of this composition operation will be in $\linops(\mathbb{C}\otimes \cdots\otimes\mathbb{C})$, which we canonically identify with $\mathbb{C}$. Thus the composition operation $\bigcomp$ entails both parallel and sequential composition (in an order unambiguously specified by the acyclic causal graph).
\end{definition}

Notice that the maps associated to exogenous vertices in~\cref{def:causal model} are defined for all linear operators on $\linops(\mathbb{C})\cong \mathbb{C}$ but we choose $1$ as input state in the probability rule of~\cref{def: acyclic probability}. By linearity, the choice of any other $\lambda\in\mathbb{C}$ in~\cref{def: acyclic probability} would yield 
\begin{align}      \textup{Pr}_{\textup{acyc}}'(\outcome)_{\graphname}=\lambda\probacyc(\outcome)_{\graphname}.      
\end{align}
It is easy to verify that this defines a valid probability distribution only for $\lambda=1$.
 
\subsection{Examples of acyclic causal graphs}
\label{sec:example acyclic graphs}
In this section, we present few examples of causal models on acyclic graphs. These should clarify how to define causal models on causal graphs and evaluate the acyclic probability distribution over observed vertices given in~\cref{def: acyclic probability}.

\paragraph{Prepare-and-measure scenarios.}
Consider the following causal graphs:
\newcommand{\qgraph}{G_{\textup{q}}}
\newcommand{\cgraph}{G_{\textup{c}}}
\begin{align}
\label{eq:causal graph pm}
    \qgraph = 
    \centertikz{
        \node[onode] (x) {$A$};
        \node[unode] (m) [above=\chanvspace of x] {$L$};
        \node[onode] (y) [above=\chanvspace of m] {$B$};
        \draw[cleg] (x) -- (m);
        \draw[qleg] (m) -- (y);
    }
    \quad \text{and} \quad
    \cgraph =
    \centertikz{
        \node[onode] (x) {$A$};
        \node[unode] (m) [above=\chanvspace of x] {$L$};
        \node[onode] (y) [above=\chanvspace of m] {$B$};
        \draw[cleg] (x) -- (m);
        \draw[cleg] (m) -- (y);
    }.
\end{align}
Let us analyze a general causal model for $\qgraph$.
The vertex $A$ is observed, thus, it has associated a POVM $\{\povmelarg{a}{A} \in \linops(\hilmaparg{\inedges{A}})\}_{a\in \outcomemaparg{A}}$. 
Since the vertex $A$ has no incoming edges, we have by definition that $\hilmaparg{\inedges{A}} = \hilmaparg{\emptyset} = \mathbb{C}$, so that the POVM $\{\povmelarg{a}{A}\}_{a\in\outcomemaparg{A}}$ amounts to a probability distribution $\{\probmaparg{a}{A}\}_{a\in\outcomemaparg{A}}$.
$A$ only has one outgoing edge: $\outedges{A} = \{e = (A,L)\}$.
The corresponding set of CP maps, $\{\measmaparg{a}{A} : \mathbb{C} \mapsto \linops(\hilmaparg{e})\}_{a\in\outcomemaparg{A}}$, can be substituted for a set of sub-normalized density matrices $\substatemaparg a A = \measmaparg{a}{A}(1)$.
We have
\begin{align}
    \substatemaparg{a}{A} = \probmaparg a A \ketbra{a},
\end{align}
where $\{\ket a \}_{a\in\outcomemaparg A}$ is the computational basis of the Hilbert space $\hilmaparg e = \hilmapsetarg{\outcomemaparg A}$ associated to the classical edge $e = (A,L)$.
Then, the unobserved vertex $L$ (for ``latent'') has an associated CPTP map $\chanmaparg{L} : \linops(\hilmaparg e) \mapsto \linops(\hilmaparg{e'})$, where $e' = (L,B)$ is the edge going from $L$ to $B$.
Finally, the vertex $B$ has an associated POVM $\{\povmelarg{b}{B} \in \linops(\hilmaparg{e'})\}_{b\in\outcomemaparg{B}}$, and an associated set of CP maps $\{\measmaparg b B : \linops(\hilmaparg{e'}) \mapsto \mathbb{C}\}_{b \in \outcomemaparg B}$ that simply act as $\measmaparg b B(\rho) = \Tr[\povmelarg b B \rho]$.
Since the causal graph is acyclic, we can use \cref{def: acyclic probability} to obtain the following probability distribution over the observed vertices:
\begin{subequations}
\label{eq:q prob pm}
\begin{align}
    \probacyc(a,b)_{\graphname_q} 
    &=  \measmaparg b B \circ \chanmaparg L \circ \measmaparg a A (1) \\
    &=  \Tr[ \povmelarg b B \chanmaparg L (\substatemaparg a A) ] \\
    &= \probmaparg a A \Tr[ \povmelarg b B \chanmaparg L (\ketbra{a}) ].
\end{align}
\end{subequations}
This probability can be interpreted as follows: the value $a$ is first drawn with probability $\probmaparg a A$, the state $\chanmaparg L(\ketbra{a})$ is then prepared (this can be any state: the value $a$ merely encodes the classical label of a collection of states), and this state $\chanmaparg L(\ketbra{a})$ is then measured with the POVM $\{\povmelarg b B\}_{b\in\outcomemaparg{B}}$.

Let us now analyze the causal graph $\cgraph$ of \cref{eq:causal graph pm}.
Compared to the previous case of $\qgraph$, the difference is that the edge $e' = (L,B)$ is now classical.
In any causal model, it is thus equipped with a set $\outcomemaparg{e'}$.
The channel $\chanmaparg L$ associated to the vertex $L$ has a classical outgoing edge $e'$: it has to satisfy the decoherence condition $\mathcal D \circ \chanmaparg L = \chanmaparg L$.
In particular, for the input state $\ketbra a \in \linops(\hilmaparg e) = \linops(\hilmapsetarg A)$, this implies that
\begin{align}
    \chanmaparg L(\ketbra{a})
    &= \mathcal D \circ \chanmaparg L(\ketbra a) \\
    &= \sum_{l \in \outcomemaparg{e'}} \bra{l} \chanmaparg{L}(\ketbra{a}) \ket{l} \ketbra{l} \\
    &=: \sum_{l\in\outcomemaparg{e'}} \probmaparg{l|a}{L} \ketbra{l},
\end{align}
where it can be checked that $\probmaparg{l|a}{L}$ is a probability distribution on $l \in \outcomemaparg{e'}$.
The probability of \cref{eq:q prob pm} then reads:
\begin{align}
    \probacyc(a,b)_{\graphname_c}
    = \probmaparg a A \sum_{l\in\outcomemaparg{e'}} \probmaparg{l|a} L \bra{l} \povmelarg b B \ket{l}
    =: \probmaparg a A \sum_{l\in\outcomemaparg{e'}} \probmaparg{l|a} L \probmaparg{b|l} B,
\end{align}
where it can be checked again that $\probmaparg{b|l}B$ is a probability distribution on $b$.
This probability can be interpreted as follows: a value $a$ is generated with probability $\probmaparg a A$, followed by a value $l$ generated with probability $\probmaparg{l|a} A$, and finally the result $b$ is obtained with probability $\probmaparg{b|l} B$.

\paragraph{Bell scenario.}
\label{example: Bell scenario}
We now consider the following two causal graphs:
\begin{align}
\label{eq:causal graph bell}
    \qgraph = 
    \centertikz{
    \begin{scope}[xscale=2.2,yscale=1.5]
        \node[onode] (x) at (0,0) {$X$};
        \node[onode] (y) at (1,0) {$Y$};
        \node[onode] (a) at (0,1) {$A$};
        \node[onode] (b) at (1,1) {$B$};
        \node[unode] (l) at (0.5,0.2) {$L$};
        \draw[cleg] (x) -- (a);
        \draw[cleg] (y) -- (b);
        \draw[qleg] (l) -- (a);
        \draw[qleg] (l) -- (b);
    \end{scope}
    }
    \quad \text{and} \quad
    \cgraph = 
    \centertikz{
    \begin{scope}[xscale=2.2,yscale=1.5]
        \node[onode] (x) at (0,0) {$X$};
        \node[onode] (y) at (1,0) {$Y$};
        \node[onode] (a) at (0,1) {$A$};
        \node[onode] (b) at (1,1) {$B$};
        \node[unode] (l) at (0.5,0.2) {$L$};
        \draw[cleg] (x) -- (a);
        \draw[cleg] (y) -- (b);
        \draw[cleg] (l) -- (a);
        \draw[cleg] (l) -- (b);
    \end{scope}
    }.
\end{align}
In both cases, we will label the Hilbert spaces as follows: $\hilmaparg{(X,A)} = \hilmaparg{X}$, $\hilmaparg{(L,A)} = \hilmaparg{L_1}$, $\hilmaparg{(L,B)} = \hilmaparg{L_2}$, and $\hilmaparg{(Y,B)} = \hilmaparg{Y}$.
We start with the quantum Bell scenario, described by the causal graph $\qgraph$.
The unobserved vertex $L$ is associated a CPTP map $\chanmaparg L : \mathbb{C} \mapsto \linops(\hilmaparg{L_1} \otimes \hilmaparg{L_2})$, from which we obtain the state $\statemaparg L  = \chanmaparg L(1)$.
The treatment of the vertices $X,Y$ is analogous to the vertices $A$ and $B$ in the previous prepare-and-measure scenario. Thus, the probability distribution $p_\outcome^X$ is associated to $X$ and $p_\outcomealt^Y$ to $Y$. These vertices have one child, thus the associated CP maps are
\begin{equation}
    \measmaparg x X(1) = p_\outcome^X \ketbra{x}_{X} \textup{ and } \measmaparg y Y(1) = p_\outcomealt^Y \ketbra{y}_{Y}.
\end{equation}
The vertices $A$ and $B$ are observed with two incoming edges each, thus they are associated respectively with POVMs $\{\povmelarg{a}{A}\in\linops(\hilmaparg{X}\otimes\hilmaparg{L_1})\}_{a\in\outcomemaparg{A}}$
 and $\{\povmelarg{b}{B}\in\linops(\hilmaparg{L_2}\otimes\hilmaparg{Y})\}_{b\in\outcomemaparg{B}}$. Since these vertices have no children, the associated CP maps act on states of the form $\rho_{L_1}\otimes\ketbra{\outcome}_X$ for all $\rho_{L_1}\in\linops(\hilmaparg{L_1})$ and  $\rho_{L_2}\otimes\ketbra{\outcome}_Y$ for all $\rho_{L_2}\in\linops(\hilmaparg{L_2})$ as
\begin{equation}
    \measmaparg a A(\rho_{L_1}\otimes\ketbra{\outcome}_X) = \Tr_{L_1X}\left[\povmelarg{a}{A}\rho_{L_1}\otimes\ketbra{\outcome}_X\right] =: \Tr_{L_1}\left[\povmelarg{a|x}{A}\rho_{L_1}\right]
\end{equation}
and
\begin{equation}
   \measmaparg b B(\rho_{L_2}\otimes\ketbra{\outcomealt}_Y) = \Tr_{L_2Y}\left[\povmelarg{b}{B}\rho_{L_2}\otimes\ketbra{\outcomealt}_Y\right] =: \Tr_{L_2}\left[\povmelarg{b|y}{B}\rho_{L_2}\right]  
\end{equation}
where we defined $\povmelarg{a|x}{A}=\Tr_X[\povmelarg{a}{A}\ketbra{x}_X]\in\linops(\hilmaparg{L_1})$, and $\povmelarg{b|y}{B}=\Tr_Y[\povmelarg{b}{B}\ketbra{y}_Y]\in\linops(\hilmaparg{L_2})$
The acyclic probability rule yields in this case:
\begin{align}
    \probacyc(a,b,x,y)_{\graphname_q}
    &= (\measmaparg a A)_{X L_1} \otimes (\measmaparg b B)_{L_2 Y} \Big(\measmaparg x X(1) \otimes \chanmaparg L(1) \otimes \measmaparg y Y(1)\Big) \\
    &= (\measmaparg a A)_{X L_1} \otimes (\measmaparg b B)_{L_2 Y} \Big(\probmaparg x X \ketbra{x}_X \otimes \statemaparg L_{L_1 L_2} \otimes \probmaparg y Y \ketbra{y}_{Y}\Big) \\
    &= \probmaparg x X\probmaparg y Y \Tr_{XL_1L_2Y}\Big[ (\povmelarg a A)_{XL_1} (\povmelarg b B)_{L_2Y} \ketbra{x}_X \statemaparg L_{L_1 L_2}   \ketbra{y}_{Y}\Big] \\
    &= \probmaparg x X\probmaparg y Y \Tr_{L_1L_2}\Big[ (\povmelarg {a|x} A)_{L_1} (\povmelarg {b|y} B)_{L_2} \statemaparg L_{L_1 L_2} \Big]. \label{eq:q bell prob}
\end{align}
The expression in \cref{eq:q bell prob} is what we expect in the quantum Bell scenario: Alice and Bob sample a setting $x$ ($y$) with probability $\probmaparg x X$ ($\probmaparg y Y$), and then measure their shared state $\statemaparg L_{L_1L_2}$ with their local POVMs $\{\povmelarg{a|x}A\}_a$ ($\{\povmelarg{b|y}B\}_b$).

Let us now analyze the case of the classical Bell scenario, described by the causal graph $\cgraph$ in \cref{eq:causal graph bell}.
The edges $L_1 = (L,A)$ and $L_2 = (L,B)$ are now assigned a finite set $\outcomemaparg{L_1}$ and $\outcomemaparg{L_2}$, respectively.
The channel $\chanmaparg L$ satisfies the decoherence condition $\chanmaparg L = \mathcal D \circ \chanmaparg L$, which implies that
\begin{align}
    \statemaparg L
    &= \chanmaparg L(1)
    = \mathcal D \circ \chanmaparg L(1)
    = \mathcal D(\statemaparg L) \\
    &= \sum_{\substack{l_1 \in \outcomemaparg{L_1} \\ l_2 \in \outcomemaparg{L_2}}} 
    \left(\bra{l_1}_{L_1} \otimes \bra{l_2}_{L_2} (\statemaparg L)_{L_1L_2} \ket{l_1}_{L_1} \otimes \ket{l_2}_{L_2}\right) \ketbra{l_1}_{L_1} \otimes \ketbra{l_2}_{L_2} \\
    &=: \sum_{\substack{l_1 \in \outcomemaparg{L_1} \\ l_2 \in \outcomemaparg{L_2}}} \probmaparg{l_1,l_2}L \ketbra{l_1}_{L_1} \otimes \ketbra{l_2}_{L_2},
\end{align}
We then define the response functions of the parties as follows: 
\begin{align}
    \probmaparg{a|x,l_1} A &= \Tr\Big[(\povmelarg{a}{A})_{XL_1} \ketbra{x}_X \otimes \ketbra{l_1}_{L_1})\Big],\\
    \probmaparg{b|y,l_2} B &= \Tr\Big[(\povmelarg{b}{B})_{YL_2} \ketbra{y}_Y \otimes \ketbra{l_2}_{L_2})\Big].
\end{align}
It can be checked that the acyclic probability rule simplifies in this case to
\begin{align}
    \probacyc(a,b,x,y)_{\graphname_c} = \probmaparg x X \probmaparg y Y \sum_{\substack{l_1 \in \outcomemaparg{L_1}\\l_2\in\outcomemaparg{L_2}}}
    \probmaparg{l_1,l_2} L \probmaparg{a|x,l_1} A \probmaparg{b|y,l_2} B,
\end{align}
which satisfies Bell's local causality condition, and can be obtained in a local hidden variable model where Alice and Bob use classical response functions, and where the shared source distributes two systems described classically, i.e., by a probability distribution $\probmaparg{l_1,l_2}L$.

\section{Mapping cyclic to acyclic causal models with post-selection}
\label{sec: cyclic_to_acyc_v3}
Our aim is to define a probability distribution over the values of the observed vertices given a causal model on a causal graph. 
To this goal, we map a given causal model on a (possibly cyclic) causal graph to a causal model with post-selection on an acyclic causal graph. 
This allows us to define the probability rule of the original causal model using the acyclic probability rule in~\cref{def: acyclic probability}.
The mapping is constructed from post-selected teleportation protocols, introduced in the next section, which are known to simulate closed time-like curves~\cite{Lloyd_2011_2,Lloyd_2011}.

\subsection{Post-selected teleportation}
\label{sec:post-selected teleportation}

Teleportation is a protocol that proceeds on three systems $A$, $B$ and $C$. The systems $A$ and $C$ are identified as the same type, so that it makes sense to say that $A$ and $C$ are in the same state.
The teleportation protocol goes as follows: $A$ is prepared in some initial state, and $B$ and $C$ are prepared in some correlated state.
Then, a measurement is performed on the systems $A$ and $B$.
After this, a correction that depends on the measurement outcome is applied on the system $C$, so that the final state of $C$ is the same as the initial state on $A$.
The final state on $A$ is arbitrary.
We may represent this teleportation protocol as follows:
\begin{align}
    \centertikz{
    \begin{scope}[xscale=1.2]
        \node[prenode] (pre) at (1,0) {};
        \node[onode] (post) at (0.5,1) {};
        \draw[qleg] (pre) -- (post);
        \draw[qleg] (0,0) -- (post);
        \node[unode] (op) at (1.5,1) {};
        \draw[cleg] (post) -- (op);
        \draw[qleg] (pre) -- (op);
        \draw[qleg] (op) -- ++(0.5,1);
    \end{scope}
    }
    =
    \centertikz{
    \begin{scope}[xscale=1.2]
        \draw[qleg] (0,0) -- (0.5,1);
    \end{scope}
    }.
\end{align}
A teleportation protocol may have the following feature: it can be that for a specific outcome of the $AB$ measurement, there is no correction to apply on the system $C$.
If that is the case, this allows for a post-selected teleportation protocol: upon conditioning on this specific outcome of the $AB$ measurement (i.e., discarding all rounds where a different outcome occurred), the teleportation succeeds without corrections.
We will represent this post-selected teleportation protocol as follows:
\begin{align}
    \centertikz{
    \begin{scope}[xscale=1.2]
        \node (A) at (0,0) {$A$};
        \node (C) at (2.25,1.25) {$C$};
        \node[prenode] (pre) at (1.5,0) {};
        \node[psnode] (post) at (0.75,1.25) {};
        \draw[qleg] (pre) -- node[midway,right] {$B$} (post);
        \draw[qleg] (A) -- (post);
        \draw[qleg] (pre) -- (C);
    \end{scope}
    }
    =
    \centertikz{
    \begin{scope}[xscale=1.2]
        \node (A) at (0,0) {$A$};
        \node (C) at (0.75,1.5) {$C$};
        \draw[qleg] (A) -- (C);
    \end{scope}
    }.
\end{align}
We now restrict the discussion to quantum theory.
\begin{definition}[Quantum post-selected teleportation protocol]
\label{def:ps teleportation}
    Let $\hilmaparg A = \hilmaparg C$ be finite-dimensional Hilbert spaces.
    A quantum post-selected teleportation protocol consists of a finite-dimensional Hilbert space $\hilmaparg B$ and a pair $(\telepovm_{AB},\telestate_{BC})$ where $\telepovm_{AB} \in \linops(\hilmaparg A \otimes \hilmaparg B)$ is a POVM element\footnote{In the teleportation protocol, there is a POVM describing the measurement. $\telepovm_{AB}$ here is the POVM element that corresponds to the outcome for which there is no correction to apply.} and $\telestate_{BC} \in \linops(\hilmaparg B \otimes \hilmaparg C)$ is a density matrix, such that for all density matrices $\rho_A \in \linops(\hilmaparg A)$,
    \begin{align}
    \label{eq:ps teleportation condition}
        \Tr_{AB}[\telepovm_{AB} \rho_A \telestate_{BC}] = \teleprob \rho_C,
    \end{align}
    where $\teleprob \in (0,1]$ is the success probability of the post-selected teleportation protocol.
\end{definition}
As we show in \cref{app:ps teleportation}, \cref{eq:ps teleportation condition} together with the linearity of quantum theory implies that $\teleprob$ is independent of the state $\rho_A$ being teleported. 
However, it may depend on the pair $(\telepovm_{AB},\telestate_{BC})$.
We also show that for any post-selected teleportation protocol $\teleprob\leq 1/\dim(\hilmaparg{A})^2$. 
Out of all possible choices of $(\telepovm_{AB},\telestate_{BC})$ that satisfy~\cref{def:ps teleportation}, we consider the following as canonical choice~\cite{Bennett1993}, which defines a valid post-selected protocol with optimal success probability of $\teleprob = 1/\dim(\hilmaparg{A})^2$.

\begin{definition}[Bell post-selected teleportation protocol]
\label{def:bell tele}
    The canonical choice of quantum post-selected teleportation protocol consists of choosing $\hilmaparg B = \hilmaparg A$ and 
    \begin{subequations}
    \begin{align}
        \telepovm_{AB} = \ketbra{\bellstate}_{AB}, \quad \telestate_{BC} = \ketbra{\bellstate}_{BC}, \\
        \text{where } \ket{\bellstate}_{AB} = \frac{1}{\sqrt{\dim(\hilmaparg{A})}} \sum_{i=1}^{\dim(\hilmaparg{A})} \ket{i}_A \ket{i}_B,
    \end{align}
    \end{subequations}
    and where $\{\ket i \}_{i=1}^{\dim(\hilmaparg A)}$ is an orthonormal basis. We refer to this choice as Bell post-selected teleportation protocol.
\end{definition}

\subsection{Family of acyclic causal models from a cyclic model}
\label{sec:family of acyclic CM}

In this section, we construct a family of acyclic causal models from a given, possibly cyclic model. 
The construction involves replacing a subset of edges of the graph with post-selected teleportation protocols. 
We first construct a family of acyclic causal graphs from a given causal graph.

\begin{restatable}[Family of acyclic causal graphs $\graphfamily\graphname$]{definition}{graphfamq}
\label{def: graph_family_v3}
 Given a causal graph $\graphname = (\vertset,\edgeset)$, we define an associated family $\graphfamily\graphname$ of directed acyclic causal graphs, where each element $\graphtele \in \graphfamily\graphname$ is obtained from the causal graph $\graphname$ as follows.
 \begin{myitem}
     \item Choose any subgraph $\graphname':=(\vertset',\edgeset')$ of $\graphname=(\vertset,\edgeset)$ with $\vertset'=\vertset$ and $\edgeset'\subseteq \edgeset$, such that $\graphname'$ is acyclic. 
 
     \item Include in $\graphtele$ all the vertices and edges of the subgraph $\graphname'$ associated with the same vertex types (observed or unobserved) and edge types (classical or quantum) as the original causal graph $\graphname$. 
     \item Denoting the set of so-called split edges $\splitedges{\graphtele}:=E\backslash E'$, for each edge $\edgearg{\vertname_i}{\vertname_i'} \in \splitedges{\graphtele}$, include in $\graphtele$, two vertices $\postvertname_i$ and $\prevertname_i$ and three edges  $\edgearg{\vertname_i}{\postvertname_i}$, $\edgearg{\prevertname_i}{\postvertname_i}$ and $\edgearg{\prevertname_i}{\vertname_i'}$. 
     \item The vertex $\postvertname_i$ is observed and $\prevertname_i$ unobserved. Outgoing edges from $\prevertname_i$, $(\prevertname_i,\postvertname_i)$ and $(\prevertname_i,\vertname_i')$, are quantum edges. The edge $(\vertname_i,\postvertname_i)$ is of the same type of the edge $(\vertname_i,\vertname_i')$ in the original causal graph $\graphname$.       
 \end{myitem}
 $\graphtele$ constructed in this manner is thus a causal graph. It will be useful to refer to $\postvertname_i$ and $\prevertname_i$ as pre and post-selection vertices respectively and depict them with distinct vertex styles  $\centertikz{\node[psnode] {$\postvertname_i$};}$ and $\centertikz{\node[prenode] {$\prevertname_i$};}$, as these will play a special role in our framework. This makes $\graphtele$ identical to $\graphname$ up to replacing each split edge $\edgearg{\vertname_i}{\vertname_i'} \in \splitedges{\graphtele}\subseteq \edgeset$ with the following structure:
     
        \begin{equation}
        \label{eq: acyc_graph_split_edge_v3}
            \centertikz{
            \begin{scope}[xscale=2.6,yscale=1.3]
                \node (in) at (0,0) {$\vertname_i$};
                \node (out) at (1.5,1) {$\vertname_i'$};
                \node[prenode] (pre) at (1,0) {$\prevertname_i$};
                \node[psnode] (post) at (0.5,1) {$\postvertname_i$};
                \draw[qleg] (pre) -- node[pos=0.7,right] {\small$\edgearg{\prevertname_i}{\postvertname_i}$} (post);
                \draw[qleg] (in) -- node[pos=0.4,right=0pt] {\small$\edgearg{\vertname_i}{\postvertname_i}$} (post);
                \draw[qleg] (pre) -- node[pos=0.4,right] {\small$\edgearg{\prevertname_i}{\vertname_i'}$} (out);
            \end{scope}
            }
        \end{equation}
We will refer to every $\graphtele \in \graphfamily\graphname$ as a \textup{teleportation graph}, as we will later associate teleportation protocols to such graphs. It will be useful to denote the set of all post-selection vertices and the set of all pre-selection vertices in $\graphtele$ as $\psvertset:=\{\postvertname_i\}_{\edgearg{\vertname_i}{\vertname_i'} \in \splitedges{\graphtele}}$ and $\prevertset:=\{\prevertname_i\}_{\edgearg{\vertname_i}{\vertname_i'} \in \splitedges{\graphtele}}$.
\end{restatable}

As an example, consider the cyclic causal graph
\begin{align}
\label{eq:selfcyclegraph1}
    \graphname = \centertikz{
        \begin{scope}[xscale=1.3,yscale=1.2]
            \node[unode] (rho) at (0,0) {$L$};
            \node[onode] (m) at (0.5,1) {$M$};
            \draw[qleg] (rho) -- (m);
            \draw[qleg,rounded corners=15pt] (rho) -- ++(0,0.8) -- ++(-0.8,-0.8) -- ++(0.8,-0.8) -- (rho);
        \end{scope}
    }.
\end{align}
Let us follow the steps of \cref{def: graph_family_v3}.
\begin{myitem}
    \item Choosing $\edgeset'= \{\edgearg{L}{M}\}$, we obtain the following acyclic subgraph $\graphname'$:
    $$
  \graphname' = \centertikz{
        \begin{scope}[xscale=1.3,yscale=1.2]
            \node[unode] (rho) at (0,0) {$L$};
            \node[onode] (m) at (0.5,1) {$M$};
            \draw[qleg] (rho) -- (m);
        \end{scope}
    }
    $$
    \item The only edge $\edgename$ which is absent in $\graphname'$ compared to $\graphname$ is $\edgename = \edgearg{L}{L}$, therefore we have $\splitedges{\graphtele} = \{\edgearg{L}{L}\}$.
    We have to add the post-selection vertex $\postvertname$ and pre-selection vertex $\prevertname$ to the graph, as well as the appropriate quantum edges.
    We obtain the following \ccdag:
    $$
    \graphtele = \centertikz{
        \begin{scope}[xscale=1.3,yscale=1.2]
            \node[unode] (rho) at (0,0) {$L$};
            \node[onode] (m) at (0.5,1) {$M$};
            \draw[qleg] (rho) -- (m);
            \node[psnode] (o) at (-0.5,1) {$\postvertname$};
            \node[prenode] (r) at (-0.5,-1) {$\prevertname$};
            \draw[qleg] (rho) -- (o);
            \draw[qleg] (r) -- (o);
            \draw[qleg] (r) -- (rho);
        \end{scope}
    }
    $$
\end{myitem}
Alternatively, we could remove all edges of the graph, corresponding to $\edgeset'=\emptyset$ and $\splitedges{\graphtele} = \{\edgearg{L}{L},\edgearg{L}{M}\}$, resulting in the following \ccdag:
$$
    \graphtele = \centertikz{
        \begin{scope}[xscale=1.5,yscale=1.2]
            \node[unode] (rho) at (0,0) {$L$};
            \node[onode] (m) at (1.5,1) {$M$};
            \node[psnode] (o) at (-0.5,1) {$\postvertname_1$};
            \node[prenode] (r) at (-0.5,-1) {$\prevertname_1$};
            \draw[qleg] (rho) -- (o);
            \draw[qleg] (r) -- (o);
            \draw[qleg] (r) -- (rho);
            \node[prenode] (r2) at (0.5,-1) {$\prevertname_2$};
            \node[psnode] (o2) at (0.5,1) {$\postvertname_2$};
            \draw[qleg] (r2) -- (o2);
            \draw[qleg] (rho) -- (o2);
            \draw[qleg] (r2) -- (m);
        \end{scope}
    }
$$
The following lemma, which is proven in~\cref{app:proofs_map}, confirms that the result of the above definition is indeed a family of \emph{acyclic} graphs.
\begin{restatable}[Acyclicity of teleportation graphs]{lemma}{acyclicitytele}
    \label{lemma: teleportation_graph_acyclic}
    Each directed graph $\graphtele$ obtained from a directed graph $\graphname$ as described in \cref{def: graph_family_v3} is acyclic. 
\end{restatable}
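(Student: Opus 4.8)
The plan is to show that none of the vertices introduced by the construction of \cref{def: graph_family_v3} can sit on a directed cycle, so that any hypothetical directed cycle of $\graphtele$ would be forced to live entirely inside the acyclic subgraph $\graphname'$, which is impossible. I would argue by contradiction: assume $\graphtele$ contains a directed cycle, and analyse which of its vertices could appear.

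First, I would read off from the construction that each post-selection vertex is a sink and each pre-selection vertex is a source of $\graphtele$. For a split edge $\edgearg{\vertname_i}{\vertname_i'} \in \splitedges{\graphtele}$, the only edges incident to $\postvertname_i$ are the two incoming edges $\edgearg{\vertname_i}{\postvertname_i}$ and $\edgearg{\prevertname_i}{\postvertname_i}$, so $\outedges{\postvertname_i} = \emptyset$; dually, the only edges incident to $\prevertname_i$ are the two outgoing edges $\edgearg{\prevertname_i}{\postvertname_i}$ and $\edgearg{\prevertname_i}{\vertname_i'}$, so $\inedges{\prevertname_i} = \emptyset$. Since every vertex lying on a directed cycle must have at least one incoming and one outgoing edge belonging to that cycle, no vertex of $\psvertset \cup \prevertset$ can appear on any directed cycle of $\graphtele$.

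Consequently, any directed cycle of $\graphtele$ must be supported entirely on the original vertex set $\vertset$. I would then observe that the only edges of $\graphtele$ having both endpoints in $\vertset$ are precisely those of $\edgeset'$: by construction each added edge $\edgearg{\vertname_i}{\postvertname_i}$, $\edgearg{\prevertname_i}{\postvertname_i}$ and $\edgearg{\prevertname_i}{\vertname_i'}$ has at least one endpoint in $\psvertset \cup \prevertset$, while the remaining edges are exactly $\edgeset'$. Hence such a cycle would be a directed cycle of $\graphname' = (\vertset,\edgeset')$, contradicting the acyclicity of $\graphname'$ imposed in the first step of \cref{def: graph_family_v3}. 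This yields the claim.

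The argument is essentially bookkeeping, so I do not expect a genuine obstacle; the only point requiring care is to verify directly from the construction that the added edges carry exactly the claimed orientations — so that the $\postvertname_i$ are really sinks and the $\prevertname_i$ really sources — and that no edge other than those of $\edgeset'$ joins two vertices of $\vertset$, which is precisely what allows the cycle to be pushed entirely into the acyclic subgraph $\graphname'$.
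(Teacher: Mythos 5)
Your proof is correct and rests on exactly the same two observations as the paper's: the added pre-selection vertices are sources, the added post-selection vertices are sinks, and all remaining edges form the acyclic subgraph $\graphname'$. The paper packages this as an explicit topological ordering (drawing $\prevertname_i$ at the bottom and $\postvertname_i$ at the top of a bottom-to-top diagram of $\graphname'$) whereas you argue by contradiction that no cycle can pass through a source or sink, but these are the same argument in different clothing.
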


We constructed a family $\graphfamily{\graphname}$ of acyclic causal graphs for any given causal graph $\graphname$. 
Given a causal model on $\graphname$, we can construct a corresponding family of causal models on $\graphfamily{\graphname}$, as in the following definition.

\begin{definition}[Teleportation causal models on the graph family $\graphfamily{\graphname}$]
\label{def:causal model of graphfamily_v3}
 Given a causal model, $\cmG$, associated with a causal graph $\graphname$, we can define a corresponding family of causal models, by associating a causal model $\cm_{\graphtele}$ to each teleportation graph $\graphtele\in \graphfamily\graphname$, as follows: 
 \begin{myitem}
     \item For every edge, as well as every vertex present in both $\graphname$ and $\graphtele$, the assigned Hilbert spaces, outcome sets, CPTP maps and POVMs (according to \cref{def:causal model}) are the same in the two causal models $\cmG$ and $\cmtele$.
     \item  Using the same notation as in \cref{def: graph_family_v3}, for the pre-selection vertex $\prevertname_i$, post-selection vertex $\postvertname_i$ and corresponding three edges  $\edgearg{\vertname_i}{\postvertname_i}$, $\edgearg{\prevertname_i}{\postvertname_i}$ and $\edgearg{\prevertname_i}{\vertname_i'}$ introduced in 
     $\graphtele$ for each edge $\edgearg{\vertname_i}{\vertname_i'} \in \splitedges{\graphtele}$ that was removed from $\graphname$, the causal model $\cmtele$ has the following specifications: 
     
     \begin{enumerate}
         \item  The Hilbert spaces associated to the edges are
        \begin{align}
        \label{eq:hilbert space identities_v3}
            \underbrace{
            \hilmaparg{\edgearg{\vertname_i}{\postvertname_i}}
            =
            \hilmaparg{\edgearg{\prevertname_i}{\vertname_i'}}
            }_{\textup{in $\cmtele$}}
            =
            \underbrace{\hilmaparg{\edgearg{\vertname_i}{\vertname_i'}}}_{\textup{in $\cm_{\graphname}$}}.
        \end{align}
          \item The outcome set associated to the post-selection vertex consists of $\outcomemaparg{\postvertname_i} = \{\ok,\fail\}$, the outcome taking values in this set will be denoted as $\postoutcome_i$.
        \item The POVM element $\povmelarg{\ok}{\postvertname_i}$ of the post-selection vertex and the state $\statemaparg{\prevertname_i}$ of the pre-selection vertex form a post-selected teleportation protocol (\cref{def:ps teleportation}).\footnote{As shown in \cref{lemma:self test}, this implies that $\dim(\hilmaparg{\edgearg{\prevertname_i}{\postvertname_i}}) \geq \dim(\hilmaparg{\edgename_i}) = \dim(\hilmaparg{\edgename_i'})$. The assigned Hilbert spaces in \cref{eq:hilbert space identities_v3} satisfy this condition.}
     \end{enumerate}
Furthermore, the labelling of the CPTP and CP maps of the causal models are such that the $\hilmaparg{\edgearg{\vertname_i}{\postvertname_i}}$ output of a map in $\cmtele$ is identified with the $\hilmaparg{\edgearg{\vertname_i}{\vertname_i'}}$ output of the same map in $\cmG$. Similarly, the $\hilmaparg{\edgearg{\prevertname_i}{\vertname_i'}}$ input of a map in $\cmtele$ is identified with the $\hilmaparg{\edgearg{\vertname_i}{\vertname_i'}}$ input of the same map in $\cmG$.

We will refer to each such $\cmtele$ as a \textup{teleportation causal model}.
 \end{myitem}
\end{definition}

As every causal model constructed according to \cref{def:causal model of graphfamily_v3} is an acyclic causal model, we can readily compute probabilities within such models using the acyclic probability rule of \cref{def: acyclic probability}. Specifically, note that all observed vertices $\overtset\subseteq \vertset$ of $\graphname$ are preserved in every $\graphtele\in \graphfamily\graphname$. For a given teleportation graph $\graphtele$, denoting $\psvertset$ as the set of all post-selection vertices of the graph (which are by construction also observed), $\overtset\cup \psvertset$ corresponds to the set of all observed vertices of $\graphtele$. 
Therefore, by applying the acyclic probability rule (\cref{def: acyclic probability}) to any causal model $\cmtele$ defined on $\graphtele$ (\cref{def:causal model of graphfamily_v3}), we can compute the probability distribution $   \probacyc\left(\outcome, \{\postoutcome_i = \ok\}_{\postvertname_i\in\psvertset}\right)_{\graphtele}$ on the observed vertices of $\graphtele$, where $\outcome:=\{\outcome_\vertname\}_{\vertname\in\overtset}$. We are interested in the event that the post-selections involved in all the teleportation protocols of the model succeed. 

\begin{definition}[Success probability in a teleportation causal model]
\label{def:success_prob_causal_model}
Let $\cmG$ be a causal model of a causal graph $\graphname$ and $\cmtele$ be a corresponding teleportation causal model defined on $\graphtele\in \graphfamily\graphname$. The post-selection success probability of $\cmtele$ is defined as follows
\begin{align}
      \successprob := \probacyc\left(\{\postoutcome_{i} = \ok\}_{\postvertname_i\in\psvertset}\right)_{\graphtele}
            = \sum_{\outcome} \probacyc\left(
           \outcome,
            \{\postoutcome_{i} = \ok\}_{\postvertname_i\in\psvertset}
            \right)_{\graphtele},
\end{align}
where the summation is over $\outcome:=\{\outcome_\vertname\}_{\vertname\in\overtset}$ and $\overtset$ is the set of all observed vertices of $\graphname$.
\end{definition}

This allows to obtain the probability associated with observed events in $\overtset$, conditioned on the success of the post-selections in the teleportation causal model, which will be the probability of interest in our framework,
\begin{align}
\label{eq: teleportationCM_prob}
\probacyc\left(\outcome \middle| \{\postoutcome_i = \ok\}_{\postvertname_i\in\psvertset}\right)_{\graphtele} 
            = \frac{\probacyc\left(
         \outcome,
            \{\postoutcome_i = \ok\}_{\postvertname_i\in\psvertset}
            \right)_{\graphtele}}{\successprob}.
\end{align}

In~\cref{app:proofs_map}, we prove the following proposition which shows that this conditional probability is independent of the choice of teleportation graph $\graphtele\in \graphfamily\graphname$.

\begin{restatable}[Equivalent probabilities from different teleportation graphs]{prop}{equivprobabilitiestele}
    \label{lemma: acyclic_prob_same_v3}
  Let $\cmG$ be a causal model on a causal graph $\graphname$, and let $\overtset \subseteq \vertset$ be the set of all observed vertices of $\graphname$.
  Consider any $\graphtele_{,1},\graphtele_{,2} \in \graphfamily\graphname$, and for $i\in\{1,2\}$, let $\cm_{\graphtele_{,i}}$ be a causal model on the teleportation graph $\graphtele_{,i}$ that is associated to $\cmG$ (\cref{def:causal model of graphfamily_v3}). Then, we have
  \begin{align}
      \probacyc\left(\outcome \middle| \{\postoutcome_i = \ok\}_{\postvertname_i\in\psvertset^1}\right)_{\graphtele_{,1}} =      \probacyc\left(\outcome \middle| \{\postoutcome_i = \ok\}_{\postvertname_i\in\psvertset^2}\right)_{\graphtele_{,2}},
  \end{align} 
  where we denoted the set of all post-selection vertices of $\graphtele_{,1},\graphtele_{,2}$ as $\psvertset^1$ and $\psvertset^2$ respectively, and the joint observed event of $\overtset$ in short as $\outcome := \{\outcome_\vertname \in \outcomemaparg{\vertname}\}_{\vertname\in\overtset}$. 
\end{restatable}

Thus, as long as we are only interested in the conditional probabilities of~\cref{eq: teleportationCM_prob}, all teleportation graphs $\graphtele\in\graphfamily{\graphname}$ are equivalent. We define the maximal teleportation graph as canonical choice.

\begin{definition}[Maximal teleportation graph]
\label{def:max tele graph}
    Given a causal graph $\graphname=\graphexpl$, consider the family of acyclic causal graphs $\graphfamily{\graphname}$. The canonical choice of teleportation graph $\graphname_{\textsc{tp}}^{\textup{max}}\in\graphfamily{\graphname}$ consists in choosing $\splitedges{\graphname_{\textsc{tp}}^{\textup{max}}} = \edgeset$. We refer to this choice as maximal teleportation graph.
\end{definition}

\subsection{General probability rule for cyclic quantum causal models}
\label{sec:probability rule CM}
In the previous section, we have constructed a family of acyclic causal models associated with a given causal model~(\cref{def:causal model of graphfamily_v3}). 
We have proven that the acyclic distribution conditioned on successful post-selection is independent of which teleportation causal model we choose in this family.
In this section, we use these results to define a probability rule for the cyclic causal model underlying the family.

\begin{definition}[\hypertarget{prob}{Probabilities in a general causal model}]
\label{def: probability distribution v3}
 Consider a causal model $\cm_{\graphname}$ on a causal graph $\graphname$ associated with a set $\overtset$ of observed vertices. Let $\cmtele$ be a teleportation causal model on $\graphtele\in\graphfamily\graphname$ obtained from $\cm_{\graphname}$ (\cref{def:causal model of graphfamily_v3}) and $\successprob$ be the success probability of post-selection in $\cmtele$ (\cref{def:success_prob_causal_model}). If $\successprob>0$, the probability associated with the joint observed event $\outcome := \{\outcome_\vertname \in \outcomemaparg{\vertname}\}_{\vertname\in\overtset}$ in $\cm_{\graphname}$ is defined as
 \begin{align}
 \label{eq: gen_prob_rule}
     \prob\left(\outcome \right )_{\graphname}:=\probacyc\left(\outcome \middle| \{\postoutcome_i = \ok\}_{\postvertname_i\in\psvertset}\right)_{\graphtele}=\frac{\probacyc\left(\outcome, \{\postoutcome_i = \ok\}_{\postvertname_i\in\psvertset}\right)_{\graphtele}}{\successprob}.
 \end{align}
 If $\successprob=0$, we say that the causal model $\cm_{\graphname}$ is inconsistent and the probabilities $ \prob\left(\outcome\right )_{\graphname}$ are undefined.
\end{definition}

In the above definition of $  \prob\left(\outcome \right )_{\graphname}$, there are two choices involved. Firstly the choice $\graphtele\in\graphfamily\graphname$ of the teleportation graph and secondly, once $\graphtele$ is fixed, there is still freedom in the specific implementation of post-selected teleportation protocol one picks for each pair of pre- and post-selection vertices in $\graphtele$. \Cref{lemma: acyclic_prob_same_v3} proves that~\cref{def: probability distribution v3} is independent of the first choice. In the following, we show that it is also independent of the second choice. 
In fact, we show a general formula that makes this fact explicit. This involves the following composition rule, which is a special case of loop composition introduced in~\cite{Portmann_2017}.

\begin{definition}[Self-cycle composition]
\label{def:selfcycle_v3}
    Let $\hilmaparg A \cong \hilmaparg C$ with $d = \dim(\hilmap_A)=\dim(\hilmap_C)$. Further, let $\{\ket{k}_A\}_{k=1}^{d}$ and $\{\ket{l}_A\}_{l=1}^{d}$ be any orthonormal bases of $\hilmaparg A$ and $\{\ket{k}_C\}_{k=1}^{d}$ and $\{\ket{l}_C\}_{l=1}^{d}$ be the corresponding bases of $\hilmaparg{C}$ i.e., $\ket{k}_A\cong \ket{k}_C$ and $\ket{l}_A\cong \ket{l}_C$ for all $k,l=1,\dots,d$.
    Then for any linear map $\mathcal{M}_{A|C} : \linops(\hilmaparg C) \mapsto \linops(\hilmaparg A)$, we define the self-cycle composition $\selfcycle(\mathcal{M}_{A|C}) \in \mathbb{C}$ as follows
    \begin{align}
        \selfcycle(\mathcal{M}_{A|C}) = \sum_{k,l=1}^{d} \bra{k}_A \mathcal M(\ketbraa{k}{l}_C) \ket{l}_A.
    \end{align}
\end{definition}
It is easy to check that $\selfcycle(\mathcal{M}_{A|C})$ is indeed independent of the choice of orthonormal basis one makes in the above definition. 

\begin{restatable}[General probability rule in terms of self-cycle composition]{prop}{selfcycleprob}
    \label{prop:probs as self cycles_v3}
    Consider a causal model $\cm_{\graphname}$ on any causal graph $\graphname$, associated with the sets $\overtset$ and $\uvertset$ of observed and unobserved vertices. Let $\graphtele\in \graphfamily\graphname$ be a teleportation graph and $\cmtele$ a corresponding teleportation causal model on $\graphtele$ constructed from $\cm_{\graphname}$. By construction, pre and post-selection vertices in $\graphtele$ come in pairs $\prevertname_i$ and $\postvertname_i$, and are associated in particular with edges $\edgename_i:=\edgearg{\vertname_i}{\postvertname_i}$ and $\edgename_i':=\edgearg{\prevertname_i}{\vertname_i'}$. Suppose that the cardinality of the split edges set $\splitedges{\graphtele}$ is $k$, then we have the edges $\{\edgename_1,\dots,\edgename_k\}$ and $\{\edgename_1',\dots,\edgename_k'\}$ defined as above.
    We let a joint observed event associated with $\overtset$ be denoted as $\outcome := \{\outcome_\vertname \in \outcomemaparg{\vertname}\}_{\vertname\in\overtset}$, and define the collection of CP maps
    \begin{align}
        \Big\{\etot_\outcome : \textstyle\linops\left(\bigotimes_{i=1}^k \hilmaparg{\edgename_i'}\right) \mapsto \linops\left(\bigotimes_{i=1}^k \hilmaparg{\edgename_i}\right) \Big\}_{\outcome}
    \end{align} 
    through
    \begin{align}
    \label{eq:def etot}
        \etot_\outcome = \bigcomp_{\vertname\in\overtset} \measmaparg{\outcome_\vertname}{\vertname} \bigcomp_{u\in\uvertset} \chanmaparg{u},
    \end{align}
    where $\measmaparg{\outcome_\vertname}{\vertname}$ and $\chanmaparg u$ refer to the maps of the causal model $\cm_{\graphname}$ (see \cref{def:causal model}),
    and where the composition rule is dictated by the acyclic subgraph $\graphname'$ of $\graphname$ which is used in~\cref{def: graph_family_v3} to construct $\graphtele$
    \footnote{This is similar to how the composition works in \cref{def: acyclic probability}, but here specified by the chosen graph $\graphname'$.}.
    It holds that the success probability $\successprob$ and the probabilities $\prob(x)_{\graphname}$ of \cref{def: probability distribution v3}, defined if and only if $\successprob \neq 0$, satisfy
    \begin{align}
    \label{eq:cyclic probs_v2}
        \successprob &= \left(\prod_{i=1}^k \teleprob^{(i)}\right) \sum_{x} \selfcycle(\etot_x), \qquad
        \prob(x)_{\graphname} = \frac{\selfcycle(\etot_x)}{\sum_x \selfcycle(\etot_x)},
    \end{align}
    where the $\selfcycle$ operation was introduced in \cref{def:selfcycle_v3} and $\teleprob^{(i)}$ is the teleportation probability associated to the post-selected teleportation protocol implemented by the vertices $\prevertname_i$ and $\postvertname_i$ (\cref{def:causal model of graphfamily_v3}). 
\end{restatable}

The proof of the proposition above is given in~\cref{app:proofs_map}. Notice that~\cref{prop:probs as self cycles_v3} holds for any teleportation graph $\graphtele\in\graphfamily{\graphname}$. In the special case of the maximal teleportation graph~(\cref{def:max tele graph}), $\graphname_{\textsc{tp}}^{\textup{max}}\in\graphfamily{\graphname}$, the collection of CP maps of~\cref{eq:def etot} reduces to the tensor product of the maps of the causal model $\cm_{\graphname}$, i.e., 
\begin{align}
        \etot_\outcome^{\textup{max}} = \bigotimes_{\vertname\in\overtset} \measmaparg{\outcome_\vertname}{\vertname} \bigotimes_{u\in\uvertset} \chanmaparg{u}.
\end{align}

In addition, we can prove that $\selfcycle(\etot_\outcome)$ is independent of the choice of teleportation graph $\graphtele\in \graphfamily\graphname$ used in its construction. This follows from the proof of the above proposition (specifically \cref{lemma:cyclic indep of tele implementation} used therein).
\begin{corollary}
\label{cor:selfcycle etot indepentend on telegraph}
  For any causal model on an arbitrary directed graph $\graphname$, the quantity $\selfcycle(\etot_\outcome)$, defined with respect to the causal mechanisms $\{\measmaparg{\outcome_\vertname}{\vertname}\}_{\vertname\in\overtset}$ and $\{\chanmaparg{\vertname}\}_{\vertname\in\uvertset}$ of the causal model as in \cref{prop:probs as self cycles_v3}, is the same independently of the choice of teleportation graph $\graphtele\in \graphfamily\graphname$ used in its construction. 
\end{corollary}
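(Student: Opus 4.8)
The plan is to show that for every teleportation graph $\graphtele \in \graphfamily\graphname$ the quantity $\selfcycle(\etot_\outcome)$ coincides with the one obtained from the \emph{maximal} teleportation graph $\graphname_{\textsc{tp}}^{\textup{max}}$ (\cref{def:max tele graph}), for which $\etot^{\max}_\outcome = \bigotimes_{\vertname\in\overtset} \measmaparg{\outcome_\vertname}{\vertname} \bigotimes_{u\in\uvertset}\chanmaparg u$ is simply the tensor product of all maps of $\cm_\graphname$ with no composition performed. Since $\etot^{\max}_\outcome$ is intrinsic to the causal model and references no choice of $\graphtele$, establishing $\selfcycle(\etot_\outcome) = \selfcycle(\etot^{\max}_\outcome)$ for all $\graphtele$ immediately yields the claim.

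The single identity driving the argument is that sequential composition along an edge equals self-cycle composition on that edge's legs. Concretely, for $\mathcal F : \linops(\hilmaparg{\textup{in}}) \mapsto \linops(\hilmaparg e)$ and $\mathcal G : \linops(\hilmaparg e) \mapsto \linops(\hilmaparg{\textup{out}})$, one checks directly from \cref{def:selfcycle_v3} that
\[
    \selfcycle_e(\mathcal G \otimes \mathcal F) = \mathcal G \circ \mathcal F,
\]
where $\selfcycle_e$ denotes the partial self-cycle looping the output leg of type $\hilmaparg e$ back to the input leg of type $\hilmaparg e$: writing the basis expansion $\mathcal F(\rho) = \sum_{k,l}\bra{k} \mathcal F(\rho)\ket{l}\, \ketbraa{k}{l}$ and applying $\mathcal G$ reproduces exactly the sum $\sum_{k,l}\bra{k} \mathcal F(\rho)\ket{l}\, \mathcal G(\ketbraa{k}{l})$ defining the partial self-cycle. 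I would additionally record two routine facts obtained by taking product bases in \cref{def:selfcycle_v3}: the full self-cycle factorizes as a composition of partial self-cycles over the individual split edges, and partial self-cycles acting on disjoint pairs of legs commute.

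With these facts, the statement becomes a contraction-order argument. Fix $\graphtele$ with acyclic subgraph $\graphname' = (\vertset,\edgeset')$, so that $\splitedges{\graphtele} = \edgeset\setminus\edgeset'$ and the non-split edges are precisely the edges of $\graphname'$. By \cref{prop:probs as self cycles_v3}, the map $\etot_\outcome$ of \cref{eq:def etot} is obtained from the tensor product $\etot^{\max}_\outcome$ by composing the maps along the non-split edges in the acyclic order dictated by $\graphname'$. Applying the composition-equals-self-cycle identity edge by edge — legitimate precisely because $\graphname'$ is acyclic, so closing the non-split edges introduces no loop and reconstructs the acyclic composition irrespective of order — gives $\etot_\outcome = \selfcycle_{\edgeset'}(\etot^{\max}_\outcome)$, the partial self-cycle over the non-split edges. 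Closing the remaining split edges then yields
\[
    \selfcycle(\etot_\outcome) = \selfcycle_{\splitedges{\graphtele}}\big(\selfcycle_{\edgeset'}(\etot^{\max}_\outcome)\big) = \selfcycle(\etot^{\max}_\outcome),
\]
using the factorization and commutation of partial self-cycles together with $\edgeset' \cup \splitedges{\graphtele} = \edgeset$. As the right-hand side is independent of $\graphtele$, the corollary follows.

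The step I expect to demand the most care is the passage $\etot_\outcome = \selfcycle_{\edgeset'}(\etot^{\max}_\outcome)$: one must verify that iterating the single-edge identity faithfully reconstructs the graph-ordered composition of \cref{eq:def etot}, that acyclicity of $\graphname'$ guarantees a consistent ordering, and that the Hilbert-space identifications on split edges ($\hilmaparg{\edgename_i}\cong\hilmaparg{\edgename_i'}$, from \cref{eq:hilbert space identities_v3}) and on non-split edges line up so that each contraction is between matching legs. This is exactly the bookkeeping abstracted into \cref{lemma:cyclic indep of tele implementation}, on which the proof of \cref{prop:probs as self cycles_v3} already relies, so the corollary is essentially immediate once that lemma is in place.
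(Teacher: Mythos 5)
Your proof is correct, but it takes a genuinely different route from the paper's. The paper obtains the corollary as a by-product of the probability calculus: the proof of \cref{prop:probs as self cycles_v3} shows, via \cref{lemma:cyclic indep of tele implementation}, that for any $\graphtele$ the joint probability $\probacyc(\outcome,\{\postoutcome_i=\ok\}_i)_{\graphtele}$ equals $\bigl(\prod_i\teleprob^{(i)}\bigr)\selfcycle(\etot_\outcome)$, while the proof of \cref{lemma: acyclic_prob_same_v3} shows that splitting one additional edge multiplies this joint probability by exactly the extra teleportation factor; dividing out the $\teleprob^{(i)}$'s and chaining through the all-edges-split graph gives the claim. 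Your argument never invokes the teleportation protocols or the probabilities: it reduces every $\selfcycle(\etot_\outcome)$ directly to $\selfcycle(\etot_\outcome^{\textup{max}})$ using the identity that a partial self-cycle of a tensor product along one edge reproduces sequential composition. This is the ``loop composition equals sequential composition'' fact that the paper itself invokes later (\cref{sec: quantum_Markov}, citing \cite{Portmann_2017, VilasiniRennerPRA}) to write the acyclic probability as $\selfcycle(\etot_\outcome)$. Your route is more self-contained and arguably more illuminating---it exhibits $\selfcycle(\etot_\outcome)$ as the full edge-contraction of the bare tensor product of causal mechanisms, independent of how the contractions are grouped---at the cost of having to establish the iterated contraction rigorously.

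Two caveats on that delicate step. First, your single-edge identity $\selfcycle_e(\mathcal G\otimes\mathcal F)=\mathcal G\circ\mathcal F$ only directly covers a contraction between two \emph{distinct} tensor factors; for an acyclic subgraph such as a diamond ($A\to B$, $A\to C$, $B\to D$, $C\to D$), after contracting three edges the fourth has both endpoints inside the same merged map, and you need the more general statement that a partial self-cycle of a single map along a leg pair that creates no directed cycle equals plugging that output into that input. That is precisely the loop-composition result of \cite{Portmann_2017}, which is what you should cite for the passage $\etot_\outcome=\selfcycle_{\edgeset'}(\etot_\outcome^{\textup{max}})$. Second, that bookkeeping is \emph{not} what \cref{lemma:cyclic indep of tele implementation} abstracts: that lemma only shows that contracting a leg pair of the already-composed map $\etot_\outcome$ via an arbitrary post-selected teleportation implementation reproduces, up to the factor $\teleprob$, its self-cycle; it says nothing about reconstructing the acyclic composition from partial contractions of the tensor product. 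Neither point invalidates your argument, but the missing fact should be proved or cited rather than attributed to that lemma.
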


Finally, we obtain the following corollary showing that~\cref{def: probability distribution v3} does not depend on the implementation of teleportation protocol~(\cref{def:ps teleportation}).
This immediately follows from~\cref{prop:probs as self cycles_v3,cor:selfcycle etot indepentend on telegraph}, and observing that, by definition, the success probability of a post-selected teleportation protocol is strictly greater than zero.

\begin{corollary}
\label{corollary:probs indep of tele implementation v3}
Consider a causal model $\cm_{\graphname}$ on a causal graph $\graphname$ and any causal model $\cmtele$ in the family of teleportation causal models derived from $\cm_{\graphname}$ according to \cref{def:causal model of graphfamily_v3}. 
  It holds that whether or not the probabilities $\prob(\outcome)_{\graphname}$ of $\cm_{\graphname}$ (\cref{eq: gen_prob_rule}) are defined and the values they take (if they are defined) do not depend on the choice of implementation of post-selected teleportation protocol (\cref{def:ps teleportation}) one makes in the construction of $\cmtele$ from $\cm_{\graphname}$.
\end{corollary}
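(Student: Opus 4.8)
The plan is to read the result directly off the closed-form expressions established in \cref{prop:probs as self cycles_v3}. That proposition provides, for any teleportation graph $\graphtele \in \graphfamily\graphname$ with $k = |\splitedges{\graphtele}|$ split edges,
\begin{align*}
    \successprob = \left(\prod_{i=1}^k \teleprob^{(i)}\right) \sum_x \selfcycle(\etot_x), \qquad \prob(x)_{\graphname} = \frac{\selfcycle(\etot_x)}{\sum_x \selfcycle(\etot_x)}.
\end{align*}
The key structural observation is that the family of CP maps $\etot_x$ of \cref{eq:def etot} is built only from the causal mechanisms $\{\measmaparg{\outcome_\vertname}{\vertname}\}_{\vertname\in\overtset}$ and $\{\chanmaparg{u}\}_{u\in\uvertset}$ of the original model $\cm_{\graphname}$, composed according to the acyclic subgraph $\graphname'$ underlying $\graphtele$; it makes no reference whatsoever to the chosen teleportation protocol $(\telepovm_{AB},\telestate_{BC})$ placed on the pre- and post-selection vertices. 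Consequently $\selfcycle(\etot_x)$, and hence the sum $\sum_x \selfcycle(\etot_x)$, are manifestly independent of the implementation of the post-selected teleportation protocol.

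First I would treat the values of the probabilities. Since both the numerator $\selfcycle(\etot_x)$ and the denominator $\sum_x \selfcycle(\etot_x)$ of $\prob(x)_{\graphname}$ are protocol-independent by the observation above, their ratio is protocol-independent whenever it is defined. Next I would treat the question of definedness, i.e., whether $\successprob > 0$. Here the protocol enters only through the overall factor $\prod_{i=1}^k \teleprob^{(i)}$. By \cref{def:ps teleportation}, each success probability satisfies $\teleprob^{(i)} \in (0,1]$ and is therefore strictly positive, so the product $\prod_{i=1}^k \teleprob^{(i)}$ is strictly positive for every admissible choice of protocols. It follows that $\successprob > 0$ if and only if $\sum_x \selfcycle(\etot_x) > 0$, a condition that does not reference the protocol at all. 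Hence both whether $\prob(x)_{\graphname}$ is defined and, when defined, the value it takes, are unaffected by the choice of teleportation implementation.

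There is no real obstacle beyond invoking \cref{prop:probs as self cycles_v3}: all protocol dependence is confined to the strictly positive prefactor $\prod_{i=1}^k \teleprob^{(i)}$, which cancels in the conditional probability and cannot alter the sign of the success probability. For completeness, one may additionally invoke \cref{cor:selfcycle etot indepentend on telegraph} to note that $\selfcycle(\etot_x)$ is moreover independent of the choice of teleportation graph $\graphtele \in \graphfamily\graphname$, so that the conclusion holds uniformly across the whole family and the probability rule of \cref{def: probability distribution v3} is unambiguous in both of its free choices.
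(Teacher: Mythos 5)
Your proposal is correct and follows essentially the same route as the paper, which likewise derives the corollary directly from \cref{prop:probs as self cycles_v3} (and \cref{cor:selfcycle etot indepentend on telegraph}) together with the observation that the only protocol-dependent quantity is the strictly positive prefactor $\prod_{i=1}^k \teleprob^{(i)}$. Nothing is missing.
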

Thus, we have proven that~\cref{def: probability distribution v3} does not depend on which teleportation graph one picks in the family of~\cref{def: graph_family_v3} nor on the implementation of teleportation protocol~\cref{def:ps teleportation}.

\subsection{Examples of cyclic causal graphs}
\label{sec:examples cyclic causal graphs}
In this section, we present few examples of causal models on cyclic graphs. These should clarify how to construct a family of acyclic teleportation graphs (\cref{def: graph_family_v3}), define a teleportation graph on elements of this family (\cref{def:causal model of graphfamily_v3}) and evaluate the probability distribution over observed vertices~(\cref{def: probability distribution v3}).

\paragraph{The self-cycle graph.}

We start by considering the causal graph $\graphname$ given below (which is the same as the example of \cref{eq:selfcyclegraph1}), and we choose the following teleportation graph $\graphtele\in\graphfamily\graphname$:
\begin{align}
    \graphname = \centertikz{
        \begin{scope}[xscale=1.7,yscale=1.3]
            \node[unode] (rho) at (0,0) {$L$};
            \node[onode] (m) at (0.5,1) {$M$};
            \draw[qleg] (rho) -- node[right] {\small$D$} (m);
            \draw[qleg,rounded corners=15pt] (rho) -- ++(0,0.8) -- ++(-0.5,-0.8) node[left=-5pt] {\small$A$} -- ++(0.5,-0.8) -- (rho);
        \end{scope}
    }
    \qquad
    \graphtele = \centertikz{
        \begin{scope}[xscale=1.7,yscale=1.3]
            \node[unode] (rho) at (0,0) {$L$};
            \node[onode] (m) at (0.5,1) {$M$};
            \draw[qleg] (rho) -- node[right] {\small$D$} (m);
            \node[psnode] (o) at (-0.5,1) {$\postvertname$};
            \node[prenode] (r) at (-0.5,-1) {$\prevertname$};
            \draw[qleg] (rho) -- node[right] {\small$A$} (o);
            \draw[qleg] (r) -- node[left] {\small$B$} (o);
            \draw[qleg] (r) -- node[right] {\small$C$} (rho);
        \end{scope}
    }
\end{align}
Consider a generic causal model on $\graphname$, namely, a channel $\chanmaparg L_{AD|A}$ and a POVM $\{(\povmelarg m M)_D\}_{m\in\outcomemaparg M}$.
We want to compute the probability $\prob(m)_{\graphname}$ as in \cref{def: probability distribution v3}.
For that, we pick the above $\graphtele$, and we pick a causal model on $\graphtele$ associated to the causal model of $\graphname$, which simply means that $L$ and $M$ get the same causal mechanisms as in $\graphname$, but we also associate a post-selected teleportation protocol $(\telepovm_{AB},\telestate_{BC})$ with teleportation probability $\teleprob$ to the pre- and post-selection vertices $\prevertname$ and $\postvertname$ of $\graphtele$.
Notice also that we set $\hilmaparg A = \hilmaparg C$, as prescribed by \cref{def:causal model of graphfamily_v3}.
The success probability of the causal model in $\graphtele$ is given by
\begin{align}
    \successprob &= \sum_{m\in\outcomemaparg M} \Tr_{ABD}\left[\left(\telepovm_{AB} \otimes (\povmelarg m M)_D\right) \chanmaparg L_{AD|C}(\telestate_{BC}) \right].
\end{align}
For concreteness, we choose the post-selected teleportation protocol to be implemented with Bell states as in \cref{def:bell tele}, thus obtaining
\begin{align}
    \successprob &= \frac{1}{\dim(\hilmaparg A)^2} \sum_{i,j=1}^{\dim(\hilmaparg A)} \sum_{k=1}^{\dim(\hilmaparg D)} \bra{i}_A \otimes \bra{k}_D \chanmaparg L_{AD|C}( \ketbraa{i}{j}_C ) \ket{j}_A \otimes \ket{k}_D \\
    &= \frac{\selfcycle(\Tr_D \chanmaparg L_{AD|C})}{\dim(\hilmaparg A)^2}
\end{align}
as expected from \cref{prop:probs as self cycles_v3}, where we can see that $\etot_m = \Tr_D[\povmelarg m M \chanmaparg L_{AD|C}] : \linops(\hilmaparg C) \mapsto \linops(\hilmaparg A)$.
If this success probability is nonzero, the causal model on $\graphname$ is non-paradoxical, and we then obtain the probabilities
\begin{align}
    \prob(m)_{\graphname} = \frac{\selfcycle(\Tr_D[\povmelarg m M \chanmaparg L_{AD|C}])}{\selfcycle(\Tr_D[\chanmaparg L_{AD|C}])}.
\end{align}

\paragraph{A two-cycle with inputs.}

We now move on to a case that displays some quantum-classical interplay.
We let
\begin{align*}
    \graphname = \centertikz{
        \begin{scope}[xscale=1.7,yscale=2.1]
            \node[unode] (l1) at (0,0) {$L_1$};
            \node[unode] (l2) at (1,0) {$L_2$};
            \node[onode] (a) at (-0.5,0.5) {$M$};
            \node[onode] (b) at (1.5,0.5) {$N$};
            \node[onode] (x) at (-0.5,-0.5) {$X$};
            \node[onode] (y) at (1.5,-0.5) {$Y$};
            \draw[qleg] (l1) to[out=30,in=150] node[above] {\small$A$} (l2);
            \draw[qleg] (l2) to[out=210,in=330] node[below] {\small$D$} (l1);
            \draw[qleg] (l1) -- node[right,pos=0.8] {\small $E$} (a);
            \draw[qleg] (l2) -- node[left,pos=0.8]  {\small $F$} (b);
            \draw[cleg] (x) -- (l1);
            \draw[cleg] (y) -- (l2);
        \end{scope}
    }
    \qquad
    \graphtele = \centertikz{
        \begin{scope}[xscale=1.7,yscale=2]
            \node[unode] (l1) at (0,0) {$L_1$};
            \node[unode] (l2) at (1,-1) {$L_2$};
            \node[onode] (a) at (-0.5,0.5) {$M$};
            \node[onode] (b) at (1.5,-0.5) {$N$};
            \node[onode] (x) at (-0.5,-0.5) {$X$};
            \node[onode] (y) at (1.5,-1.5) {$Y$};
            \node[prenode] (pre) at (0.5,-1.5) {$\prevertname$};
            \node[psnode] (ps) at (0.5,0.5) {$\postvertname$};
            \draw[qleg] (l2) -- node[left,pos=0.7] {\small $D$} (l1);
            \draw[qleg] (l1) -- node[right,pos=0.8] {\small $E$} (a);
            \draw[qleg] (l2) -- node[left,pos=0.8]  {\small $F$} (b);
            \draw[cleg] (x) -- (l1);
            \draw[cleg] (y) -- (l2);
            \draw[qleg] (l1) -- node[left,pos=0.7] {\small $A$} (ps);
            \draw[qleg] (pre) -- node[right,pos=0.9] {\small $B$} (ps);
            \draw[qleg] (pre) -- node[right,pos=0.4] {\small $C$} (l2);
        \end{scope}
    }
\end{align*}
A causal model on $\graphname$ consists of distributions $\{\probmaparg x X\}_{x\in\outcomemaparg X}$ and $\{\probmaparg y Y\}_{y\in\outcomemaparg Y}$ for $X$ and $Y$, of POVMs $\{(\povmelarg m M)_E\}_{m\in\outcomemaparg M}$ and $\{(\povmelarg n N)_F\}_{n\in\outcomemaparg N}$ for $M$ and $N$, and collections of channels $\{\chanmaparg{L_1,x}_{AE|D}\}_{x\in\outcomemaparg X}$ and $\{\chanmaparg{L_2,y}_{DF|A}\}_{y\in\outcomemaparg Y}$ for $L_1$ and $L_2$.
In \cref{sec:example acyclic graphs}, we applied \cref{def:causal model} to examples of acyclic graphs, a similar procedure applies here noting that $\graphtele$ is also acyclic.
To define the causal model on $\graphtele$ associated to the above causal model on $\graphname$, we take $\hilmaparg A = \hilmaparg C$, and we let $(\telepovm_{AB},\telestate_{BC})$ be the post-selected teleportation protocol of $R$, $T$, with teleportation probability $\teleprob$.
The success probability can then be written as 
\begin{align}
    \successprob &= \sum_{\substack{x\in\outcomemaparg X\\ y\in\outcomemaparg Y}} \sum_{\substack{m\in\outcomemaparg M\\ n\in\outcomemaparg N}}  \probmaparg x X \probmaparg y Y \Tr_{ABEF}[(\povmelarg{m}{M})_{E}\otimes\telepovm_{AB}\otimes(\povmelarg{n}{N})_{F}(\chanmaparg{L_1,x}_{AE|D}\circ\chanmaparg{L_2,y}_{DF|C})(\telestate_{BC})] \\
    &= \sum_{\substack{x\in\outcomemaparg X\\ y\in\outcomemaparg Y}}   \probmaparg x X \probmaparg y Y \Tr_{AB}[\telepovm_{AB}(\chanmaparg{L_1,x}_{A|D}\circ\chanmaparg{L_2,y}_{D|C})(\telestate_{BC})] \\
    &= \sum_{x\in\outcomemaparg X}\sum_{y\in\outcomemaparg Y} \probmaparg x X \probmaparg y Y \selfcycle\left(\chanmaparg{L_1,x}_{A|D}\circ\chanmaparg{L_2,y}_{D|C}\right),
\end{align}
where we defined 
\begin{equation}
\begin{split}
    \chanmaparg{L_1,x}_{A|D} = \sum_{m\in\outcomemaparg M}\Tr_E\left( (\povmelarg{m}{M})_E\chanmaparg{L_1,x}_{AE|D}\right) = \Tr_E\left(\chanmaparg{L_1,x}_{AE|D}\right),\\
    \chanmaparg{L_2,y}_{D|C} = \sum_{n\in\outcomemaparg N}\Tr_F\left( (\povmelarg{n}{N})_{F}\chanmaparg{L_2,y}_{DF|C}\right)= \Tr_F\left(\chanmaparg{L_2,y}_{DF|C}\right),
\end{split} 
\end{equation}
and used that $\sum_{m\in\outcomemaparg M}\povmelarg{m}{M} = \id$ and $\sum_{n\in\outcomemaparg N} \povmelarg{n}{N} = \id$ by the definition of POVMs.

If the success probability is non-zero, the causal model is non-paradoxical, and we obtain the following probabilities:
\begin{align}
    \prob(m,n,x,y)_{\graphname} = \frac{
        \probmaparg x X \probmaparg y Y \selfcycle\left( \Tr_E[(\povmelarg m M)_E \chanmaparg{L_1,x}_{AE|D}] \circ \Tr_F[(\povmelarg n N)_F \chanmaparg{L_2,y}_{DF|C}] \right)
    }{
        \sum_{x\in\outcomemaparg X}\sum_{y\in\outcomemaparg Y} \probmaparg x X \probmaparg y Y \selfcycle\left(\chanmaparg{L_1,x}_{A|D}\circ\chanmaparg{L_2,y}_{D|C}\right)
    }
\end{align}
In particular, observe that the marginal $\prob(x)_{\graphname}$ is in general not equal to the distribution of the causal mechanism $\probmaparg x X$.
Furthermore, we have $\prob(x,y)_{\graphname} \neq \probmaparg x X \probmaparg y Y$, and in particular, $\prob(x,y)_{\graphname} \neq \prob(x)_{\graphname}\prob(y)_{\graphname}$.
This means that the causal model yields a distribution that features correlations between $X$ and $Y$.
As we will see in the next section, these correlations are generated by requiring logical consistency of the model, which effectively post-selects over the variables in the loop which are not paradoxical. This feature is relevant in understanding the failure of the soundness of existing graph-separation criteria like $d$-separation as well as the intuition behind the new property, $p$-separation that we propose in the following section.

\section{A sound and complete graph-separation property for cyclic QCMs}
\label{sec: introducing pseparation}
A powerful feature of the causal modelling approach is that by providing a graph-theoretic framework for describing information-theoretic circuits and networks, it allows us to infer properties of observable data from the topology of the underlying causal graph. Specifically, $d$-separation is an important graph separation property, originally introduced in the classical causal modelling literature for acyclic graphs and subsequently generalised to quantum and post-quantum theories. It is useful as the associated $d$-separation theorem shows that we can use the property to ``read off'' conditional independences in the outcome probabilities of any causal model on an acyclic graph, through the connectivity of the graph \cite{Verma1990, Geiger1990, Pearl_2009, Henson_2014}. However, this theorem has been known to fail even in classical causal models on cyclic graphs, and a general graph separation property applicable to all finite dimensional (or with finite-cardinality variables in the classical case) causal models on cyclic graphs has been lacking. 

In this section, we apply our cyclic causal modelling framework to propose such a graph separation property, which we call $p$-separation. We begin by discussing $d$-separation for acyclic graphs and examples indicating its failure in cyclic scenarios before introducing $p$-separation and discussing its applications.

\subsection{Previous results for acyclic case: $d$-separation}
\label{sec: rev dseparation}
We motivate the concept of $d$-separation and then provide the general definition. Consider the following three types of simple directed graphs. Although $d$-separation is a purely graph-theoretic property defined for any directed graph, for explaining the motivation behind this concept, it is useful to consider classical causal models and probabilities on the associated vertices. We therefore stylize all vertices as observed $\centertikz{\node[onode]{v};}$ and all edges as classical edges $\centertikz{
        \draw[cleg] (0,0) -- (0.6,0);
        }$ for now.

\begin{equation}
\label{eq: chain}
 \text{Chain: } \centertikz{  \node[onode] (a) at (-1.5,0) {$A$};  \node[onode] (c) at (0,0) {$C$};  \node[onode] (b) at (1.5,0) {$B$}; \draw[cleg] (a) --(c); \draw[cleg] (c)--(b);}
\end{equation}
\begin{equation}
\label{eq: fork}
 \text{Fork: }  \centertikz{  \node[onode] (a) at (-1.5,0) {$A$};  \node[onode] (c) at (0,0) {$C$};  \node[onode] (b) at (1.5,0) {$B$}; \draw[cleg] (c) --(a); \draw[cleg] (c)--(b);}
\end{equation}
\begin{equation}
\label{eq: collider}
 \text{Collider: } \centertikz{  \node[onode] (a) at (-1.5,0) {$A$};  \node[onode] (c) at (0,0) {$C$};  \node[onode] (b) at (1.5,0) {$B$}; \draw[cleg] (a) --(c); \draw[cleg] (b)--(c);}
\end{equation}
Suppose that the classical vertices $A$, $B$ and $C$ are each associated with a random variable of the same name.
For causal models on the chain and the fork, we would generally expect $A$ and $B$ to get correlated, through the (indirect) causal influence of $A$ on $B$ in the first case and through the common cause $C$ in the second case. However, we would expect that $A$ and $B$ would become independent conditioned on $C$ as $C$ mediates the correlations between $A$ and $B$ in both cases. In the chain and fork, we would say $A$ is $d$-connected to $B$, denoted $A\not\perp^d B$ while $A$ is $d$-separated from $B$ given $C$, denoted $A\perp^d B|C$. For the collider on the other hand, $A$ and $B$ have no prior causes and we would expect them to be uncorrelated and we would say $A\perp^d B$. However, conditioning on their common child $C$ amounts to post-selection and can correlate $A$ and $B$, i.e., they become $d$-connected conditioned on $C$, $A\not\perp^d B|C$. In the case of the collider, if we further had $C\rightarrow D$ as shown below, then we would also expect $A$ and $B$ to get correlated given $D$ (as it is in the future of both), and expect $A\not\perp^d B|D$ also for descendants $D$ of a collider $C$. 

\begin{equation}
\label{eq: collider_desc}
 \text{Collider with descendant: }  \centertikz{  \node[onode] (a) at (-1.5,0) {$A$};  \node[onode] (c) at (0,0) {$C$};  \node[onode] (b) at (1.5,0) {$B$};  \node[onode] (d) at (0,1.5) {$D$};\draw[cleg] (a) --(c); \draw[cleg] (c)--(b); \draw[cleg] (c)--(d);}
\end{equation}

This is the intuition behind the following definitions. Although the intuition comes from thinking of the correlations, the definition is purely graph-theoretic. A subsequent theorem links this definition to conditional independences between the classical outcomes in a quantum causal model, which recovers the above intuition for the simple classical examples. 

\begin{definition}[Blocked paths]
Let $\graphname$ be a directed graph in which $V_1$, $V_2$ and $V_3$ are disjoint sets of vertices with the former two being non-empty.  A path (not necessarily directed) from $V_1$ to $V_2$ is
said to be \emph{blocked} by $V_3$ if it contains, for some vertices $A$ and $B$ in the path, either $A\rightarrow W\rightarrow B$
with $W\in V_3$, $A\leftarrow W\rightarrow B$
with $W\in V_3$ or $A\rightarrow W\leftarrow B$ such that neither the vertex $W$ nor any descendant of $W$ belongs to $V_3$.
\end{definition}

\begin{definition}[$d$-separation]
\label{def: d-sep}
Let $\graphname$ be a directed graph in which $V_1$, $V_2$ and $V_3$ are disjoint
sets of vertices with $V_1$ and $V_2$ non-empty.  $V_1$ and $V_2$ are \emph{$d$-separated} by $V_3$ in
$\graphname$, denoted as $(V_1\perp^d V_2|V_3)_{\graphname}$ if for every pair of vertices in $V_1$ and $V_2$ there is no path between them, or if every path from a vertex in $V_1$ to a vertex in
$V_2$ is \emph{blocked} by $V_3$. Otherwise, $V_1$ is said to be \emph{$d$-connected} with $V_2$ given $V_3$, denoted as $(V_1\not\perp^d V_2|V_3)_{\graphname}$.
\end{definition}

\begin{definition}[Conditional independence]
\label{def:conditional independence}
    Let $V$ be a non-empty finite set, and
    let $\mathcal{P}(x)$ be a joint probability distribution over a set $X =
    \{X_v\}_{v \in V}$ of finite-cardinality random variables, whose values are
    denoted $x = \{x_v\}_{v \in V}$.
    Let $V_1$, $V_2$ and $V_3$ be three disjoint subsets of $V$, with $V_1$ and $V_2$ being non-empty. 
    We denote the corresponding sets of random variables as $X_i = \{ X_v \}_{v\in V_i}$ and the corresponding values as $x_i = \{ x_v \}_{v\in V_i}$ for $i \in \{1,2,3\}$.
    We say that $X_1$ is conditionally independent of $X_2$ given $X_3$ and denote it as $(X_1\indep X_2|X_3)_{\mathcal{P}}$ if, for all $x_1, x_2,x_3$, it holds that $\mathcal{P} (x_1,x_2|x_3)=\mathcal{P}(x_1|x_3)\mathcal{P}(x_2|x_3)$.
\end{definition}

Then the next theorem follows from the theory-independent $d$-separation theorem of~\cite{Henson_2014}, when restricted to the case of quantum theory.

\begin{theorem}[$d$-separation theorem for acyclic graphs]
\label{theorem: dsep theorem}
Consider a directed acyclic graph $\graphname$ and let $V_1$, $V_2$ and $V_3$ be any three disjoint sets of the vertices of $\graphname$ with $V_1$ and $V_2$ being non-empty. 
    Then, the following holds:
    \begin{itemize}
        \item[]\textbf{\textup{(Soundness)}} For any causal model $\cm_{\graphname}$ on $\graphname$ where the sets $V_i$ are observed, we have that $d$-separation between the vertex sets $V_i$ implies conditional independence for the corresponding sets of random variables $X_i:=\{X_{\vertname}\}_{\vertname\in V_i}$ where $i\in\{1,2,3\}$, i.e.,
        \begin{equation}
         (V_1\perp^d V_2|V_3)_{\graphname} \implies (X_1\indep X_2|X_3)_{\probfacyc_{\graphname} }.
        \end{equation} 
        \item[]\textbf{\textup{(Completeness)}} If the $d$-connection $(V_1\not\perp^d V_2|V_3)_{\graphname}$ holds in $\graphname$, then there exists a causal model  $\cm_{\graphname}$ such that the sets $V_i$ are observed and ${(X_1\not \indep X_2|X_3)_{\probacyc_{\graphname}}}$..
    \end{itemize}
    The above conditional (in)dependence statements are relative to the marginal $\probacyc(x_1,x_2,x_3)_\graphname$ on $X_1\cup X_2\cup X_3$, where $x_i = \{x_\vertname \in \outcomemaparg{\vertname}\}_{\vertname\in V_i}$, of the observed distribution $\probacyc(\outcome)_{\graphname}$, where $\outcome=\{\outcome_{\vertname}\}_{\vertname\in \vertset}$, in the causal model $\fcm_{\graphname}$.
\end{theorem}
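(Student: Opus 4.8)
The plan is to deduce both parts of \cref{theorem: dsep theorem} from already-established $d$-separation theorems, by exhibiting the causal models of \cref{def:causal model} (in the quantum case) as a special instance of the generalized Bayesian networks of~\cite{Henson_2014}, under which the acyclic probability rule of \cref{def: acyclic probability} reproduces the probability assignment used there. Since $d$-separation (\cref{def: d-sep}) is a purely graph-theoretic notion depending only on the directed graph underlying $\graphname$, the separation relations among $V_1,V_2,V_3$ are preserved verbatim under this identification. Hence it suffices to match the probability rules and then invoke the soundness (for the first implication) and the classical completeness result (for the second).

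\paragraph{Soundness.} First I would make the embedding explicit, treating every edge as carrying a Hilbert-space system, with classical edges being those on which the associated maps are constrained to be diagonal in the computational basis. Each unobserved vertex $u\in\uvertset$ with its CPTP map $\chanmaparg u$ becomes a quantum node; each observed vertex $\vertname\in\overtset$ becomes a classical node whose outcome is produced by $\povmarg{\vertname}$ and then broadcast to its children, which is precisely the content of the CP maps $\measmaparg{\outcome_\vertname}{\vertname}$ defined in \cref{eq:def_onode}. The structural constraint of \cref{def: causal graph} that observed vertices have only classical outgoing edges, together with the decoherence conditions, guarantees this is a well-formed quantum generalized Bayesian network. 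The key verification is that the composition prescribed in \cref{def: acyclic probability} yields the same joint distribution over $\overtset$ as the Born-rule assignment of~\cite{Henson_2014}; this holds because both compose the identical channels and measurements along the same directed acyclic graph. Granting this, the theory-independent soundness of $d$-separation proved in~\cite{Henson_2014}, specialized to quantum theory, gives $(V_1\perp^d V_2|V_3)_{\graphname}\implies(X_1\indep X_2|X_3)_{\probacyc_{\graphname}}$.

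\paragraph{Completeness.} Here I would exploit that a $d$-connection is already witnessed \emph{classically}. Assuming $(V_1\not\perp^d V_2|V_3)_{\graphname}$, the classical completeness results of~\cite{Verma1990, Geiger1990, Spirtes_2005} provide a classical causal model on $\graphname$ whose distribution violates $(X_1\indep X_2|X_3)$. I would then realize this model within the present framework by declaring all edges classical, encoding each classical system in the computational basis of a Hilbert space $\hilmapsetarg{\outcomemap}$, implementing each stochastic map as a decohering CPTP map on the unobserved vertices and each response function as a POVM diagonal in the computational basis on the observed vertices. One checks that the distribution computed via \cref{def: acyclic probability} coincides with the original classical one, so the conditional independence fails. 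Since classical models form a subclass of the quantum causal models of \cref{def:causal model}, exhibiting this single witness establishes completeness.

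\paragraph{Main obstacle.} The substantive step is the faithful identification of the two probability rules: verifying that \cref{def: acyclic probability} and the network rule of~\cite{Henson_2014} assign the same joint distribution to the observed vertices. The care required lies in the bookkeeping of classical edges as decohered quantum systems and in confirming that the measure-and-broadcast behaviour encoded by \cref{eq:def_onode} matches how~\cite{Henson_2014} treats a classical variable feeding several children. Once both constructions are seen to compose identical maps along the same directed acyclic graph, soundness and completeness transfer immediately, and no genuinely new graph-theoretic argument is needed beyond the existing theorems.
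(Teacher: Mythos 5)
Your proposal is correct and takes essentially the same route as the paper: the paper proves this theorem simply by noting that it "follows from the theory-independent $d$-separation theorem of \cite{Henson_2014} when restricted to quantum theory," having already identified its acyclic probability rule (\cref{def: acyclic probability}) with the Henson--Lal--Pusey assignment, and relying on the classical completeness witnesses of \cite{Verma1990, Geiger1990} embedded as a subclass. Your write-up merely makes explicit the bookkeeping (decohered classical edges, measure-and-broadcast at observed vertices) that the paper leaves implicit in that citation.
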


\paragraph{Example: Bell scenario.}
We illustrate these concepts and the conventions of our framework, through their application to the familiar example of a Bell scenario (see also~\cref{example: Bell scenario}). This is represented through the following graph

\begin{equation*}
\graphname = 
    \centertikz{
    \begin{scope}[xscale=2.2,yscale=1.5]
        \node[onode] (x) at (0,0) {$X$};
        \node[onode] (y) at (1,0) {$Y$};
        \node[onode] (a) at (0,1) {$A$};
        \node[onode] (b) at (1,1) {$B$};
        \node[unode] (l) at (0.5,0.2) {$L$};
        \draw[cleg] (x) -- (a);
        \draw[cleg] (y) -- (b);
        \draw[qleg] (l) -- (a);
        \draw[qleg] (l) -- (b);
    \end{scope}
    },
\end{equation*}
representing two parties, Alice and Bob, who perform measurements on sub-systems of a shared bipartite system $L$. The exogenous vertices $X$ and $Y$ model the classical settings of Alice and Bob and are associated with channels that prepare a prior distribution on the associated classical variables $x$ and $y$ (see~\cref{sec:example acyclic graphs} for more details). 

We have $\overtset=\{X,Y,A,B\}$ and the outcome distribution is $\probacyc(x,y,a,b)_{\graphname}$. We observe the following $d$-separations between observed vertices in $\graphname$: $(X\perp^d B|Y)_{\graphname}$ and $(Y\perp^d A|X)_{\graphname}$ due to the absence of collider-free unblocked paths between the setting of one party and the outcome of another. These imply corresponding conditional independences $(X\indep B|Y)_{\probacyc_{\graphname}}$ and $(Y\indep A|X)_{\probacyc_{\graphname}}$ in the distribution $\probacyc(x,y,a,b)$, which are precisely the two bipartite non-signalling conditions of the Bell scenario, i.e.,

$$(X\perp^d B|Y)_{\graphname} \quad\Rightarrow \quad \probacyc(a|x,y)_{\graphname}= \probacyc(a|x)_{\graphname},$$
$$(Y\perp^d A|X)_{\graphname} \quad\Rightarrow \quad  \probacyc(b|x,y)_{\graphname}= \probacyc(b|y)_{\graphname}.$$

\subsection{Failure of $d$-separation in cyclic graphs}
\label{sec:failure_dsep}

\Cref{theorem: dsep theorem} can fail already for classical causal models on cyclic graphs. We provide a simple example, based on what is already known in the classical causal modelling literature~\cite{Pearl_2009,Neal_2000}. As this is a classical example, we use a causal graph with observed vertices and classical edges. Consider the following cyclic causal graph:
\begin{equation}
\label{eq: dsep_cycle_example}
    \graphname=\centertikz{
    \node[onode] (v1) at (0,0) {$\vertname_1$};
    \node[onode] (v2) at (2,0) {$\vertname_2$};
    \node[onode] (v3) [below left  =\chanvspace of v1] {$\vertname_3$};
    \node[onode] (v4) [below right  =\chanvspace of v2] {$\vertname_4$};
        \draw[cleg] (v1) to [out=45,in=135]  (v2);
    \draw[cleg] (v2) to [out=-135,in=-45] (v1);
 \draw[cleg] (v3) -- (v1);  \draw[cleg] (v4) -- (v2);  
    }
\end{equation}
Clearly we have $(\vertname_3\perp^d \vertname_4)_{\graphname}$. However, it is possible to construct a simple classical causal model (more precisely, a functional model, see~\cref{sec:classical aspects} and section 2 of \cite{Sister_paper}) where the associated outcomes $\outcome_3$ and $\outcome_4$ are correlated in the resulting probability distribution. 

A functional model (see \cref{sec:classical aspects}) involves specifying a prior distribution for every exogenous vertex along with a function for each vertex that determines the value of the variable of the vertex given values of its parental variables.
Here, consider binary outcome variables, $\outcome_i$ associated to $\vertname_i$, for all four vertices and the following functional dependencies: the outcome associated to $\vertname_1$ depends on its parents as $\outcome_1=\outcome_2\oplus \outcome_3$ and $\outcome_2$, associated to $\vertname_2$, depends on its parents as $\outcome_2=\outcome_1\oplus \outcome_4$. Here $\oplus$ denotes modulo 2 addition. It is easy to see that $\outcome_3=\outcome_4$ is the only consistent solution to this causal model, independently of the priors $\probex{\vertname_3}(\outcome_3)$ and $\probex{\vertname_4}(\outcome_4)$ specified on these vertices in the functional model. Indeed, computing the probability through our rule, it can be checked that we would obtain $\prob(\outcome_3,\outcome_4)_{\graphname}=0$ whenever $\outcome_3\neq \outcome_4$ indicating the perfect correlation.

Notice that the probability for the variables associated to exogenous vertices (computed through~\cref{def: probability distribution v3}) does not equal the prior distribution, $$\prob(\outcome_3,\outcome_4)_{\graphname}\neq \probex{\vertname_3}(\outcome_3)\probex{\vertname_4}(\outcome_4).$$ As a consequence of the $d$-separation theorem (\cref{theorem: dsep theorem}) the equality is always true in acyclic causal models. Hence, this non-equivalence demonstrates a violation of the $d$-separation theorem (specifically, the soundness of $d$-separation) in this cyclic causal model as $\outcome_3$ and $\outcome_4$ are correlated despite being $d$-separated.

Looking carefully, one can identify the reason for the failure of the $d$-separation property: The loop between $\vertname_1$ and $\vertname_2$ effectively acts as a collider for the exogenous vertices $\vertname_3$ and $\vertname_4$. Recall that conditioning on a collider can $d$-connect previously $d$-separated vertices. Here, the collider is not explicitly conditioned upon in the original cyclic graph, however the consistency conditions of the model impose an effective post-selection on the values of the loop variables. 
This intuition is already incorporated in how probabilities are computed in our formalism (\cref{def: probability distribution v3}), which maps probability computations in cyclic graphs to corresponding computations in acyclic graphs with post-selection, namely teleportation graphs. In each of the acyclic teleportation graphs, $\vertname_3$ and $\vertname_4$ are in fact $d$-connected when additionally conditioning on the post-selection vertices. 
The correlation between $\outcome_3$ and $\outcome_4$ (without $d$-connection) in the cyclic graph here is due to the fact that in the corresponding acyclic model (on any of the associated teleportation graphs) the post-selection success probability $\successprob$ is zero whenever $\outcome_3\neq \outcome_4$. This forces $\outcome_3=\outcome_4$ in the post-selected distribution, which defines the probability of our original causal model. We illustrate this point more explicitly in the following section where we introduce a new graph separation property, $p$-separation, that generalises $d$-separation.

We note that the example provided here for the violation of the soundness of $d$-separation in cyclic causal models involves a scenario where the model is inconsistent (in the classical case here, the functional dependencies admit no solutions) for certain values of the exogenous variables $v_3$ and $v_4$. Such functional causal models are called non-uniquely solvable. It is important to note that there exists an example \cite{Neal_2000} of a uniquely solvable functional models defined on finite, binary variables where the soundness of $d$-separation fails, in that case, the associated graph is more complex than the simple one considered here. As we will see in the next section, the soundness and completeness of our new graph-separation property $p$-separation, will apply to such examples as well.
\subsection{Introducing $p$-separation: soundness and completeness}
\label{sec:psep_introduced}

The example from the previous sub-section provides an intuition for why $d$-separation fails in cyclic graphs. It also suggests that the properties of correlations in cyclic graphs are more naturally captured if, rather than looking at $d$-separation in the original cyclic graph $\graphname$, we consider $d$-separation in a corresponding acyclic graph $\graphtele\in \graphfamily\graphname$ while also conditioning on the graph's post-selection vertices. This motivates the introduction of a new graph separation property, called $p$-separation, where $p$ stands for post-selection. As with the case of $d$-separation, $p$-separation is also purely a property of a directed graph. 

\begin{restatable}[$p$-separation]{definition}{psep}
\label{def: p-separation}
Let $\graphname$ be a directed graph and $V_1$, $V_2$ and $V_3$ denote any three disjoint subsets of the vertices of $\graphname$ with $V_1$ and $V_2$ being non-empty. Then, we say that $V_1$ is $p$-separated from $V_2$ given $V_3$ in $\graphname$, denoted $(V_1\perp^p V_2|V_3)_{\graphname}$, if and only if there exists $\graphtele\in \graphfamily{\graphname}$ (\cref{def: graph_family_v3}) such that $(V_1\perp^d V_2|V_3\cup\psvertset)_{\graphtele}$, where $\perp^d$ denotes $d$-separation and $\psvertset$ denotes the set of all post-selection vertices in the teleportation graph $\graphtele\in \graphfamily{\graphname}$. 
 Otherwise, we say that $V_1$ is $p$-connected to $V_2$ given $V_3$ in $\graphname$, and we denote it $(V_1\not\perp^p V_2|V_3)_{\graphname}$.
 To summarize,
 \begin{equation}
 \begin{aligned}
     \text{$p$-separation: } (V_1\perp^pV_2|V_3)_\graphname \equiva \exists \graphtele \in \graphfamily{\graphname} \st (V_1\perp^d V_2|V_3 \cup \psvertset)_{\graphtele}, \\
     \text{$p$-connection: } (V_1\not\perp^pV_2|V_3)_\graphname \equiva \forall \graphtele \in \graphfamily{\graphname} \st (V_1\not\perp^d V_2|V_3 \cup \psvertset)_{\graphtele}.
 \end{aligned}
 \end{equation}
\end{restatable}
Notice that $p$-separation is defined as a property of a graph, not necessarily decorated as a causal graph.
The following theorem establishes that $p$-separation is a sound and complete graph separation property for cyclic quantum causal models within our framework.

\begin{restatable}[$p$-separation theorem]{theorem}{pseptheorem}
\label{theorem: psep_theorem}
Consider a directed graph $\graphname$ and let $V_1$, $V_2$ and $V_3$ be any three disjoint sets of the vertices of $\graphname$ with $V_1$ and $V_2$ being non-empty. 
    Then, the following holds:
    \begin{itemize}
        \item[]\textbf{\textup{(Soundness)}} For any causal model $\cm_{\graphname}$ on $\graphname$ where the sets $V_i$ are observed, we have that $p$-separation between the vertex sets $V_i$ implies conditional independence for the corresponding sets of random variables $X_i:=\{X_{\vertname}\}_{\vertname\in V_i}$ where $i\in\{1,2,3\}$, i.e.,
        \begin{equation}
         (V_1\perp^p V_2|V_3)_{\graphname} \implies (X_1\indep X_2|X_3)_{\prob_{\graphname} }.
        \end{equation} 
      
        \item[]\textbf{\textup{(Completeness)}} If the $p$-connection $(V_1\not\perp^p V_2|V_3)_{\graphname}$ holds in $\graphname$, then there exists a causal model  $\cm_{\graphname}$ such that the sets $V_i$ are observed and ${(X_1\not \indep X_2|X_3)_{\prob_{\graphname}}}$.
    \end{itemize}
    The above conditional (in)dependence statements are relative to the marginal $\prob(x_1,x_2,x_3)_\graphname$ on $X_1\cup X_2\cup X_3$, where $x_i = \{x_\vertname \in \outcomemaparg{\vertname}\}_{\vertname\in V_i}$ of the observed distribution $\prob(\outcome)_{\graphname}$, where $\outcome=\{\outcome_{\vertname}\}_{\vertname\in \vertset}$, in the causal model $\cm_{\graphname}$.
\end{restatable}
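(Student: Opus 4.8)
The plan is to reduce both halves of the theorem to the acyclic $d$-separation theorem (\cref{theorem: dsep theorem}), using the fact that the cyclic probability rule (\cref{def: probability distribution v3}) is, by construction, the acyclic distribution $\probacyc_{\graphtele}$ on a teleportation graph $\graphtele \in \graphfamily{\graphname}$ conditioned on the successful post-selection event $\{\postoutcome_i = \ok\}_{\postvertname_i \in \psvertset}$. Since $p$-separation (\cref{def: p-separation}) is phrased precisely as $d$-separation in some $\graphtele$ with the post-selection vertices adjoined to the conditioning set, the two notions are set up to match under this dictionary, and the main labour is to turn this matching into honest (in)dependence statements.

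First I would prove soundness. Assume $(V_1 \perp^p V_2 | V_3)_{\graphname}$; by \cref{def: p-separation} pick $\graphtele \in \graphfamily{\graphname}$ with $(V_1 \perp^d V_2 | V_3 \cup \psvertset)_{\graphtele}$, and fix an associated teleportation causal model $\cmtele$ (\cref{def:causal model of graphfamily_v3}). The post-selection vertices are observed by construction, so $V_3 \cup \psvertset$ is an admissible observed conditioning set and the soundness half of \cref{theorem: dsep theorem} applies to the acyclic model $\cmtele$, yielding conditional independence of $X_1$ and $X_2$ given $X_3$ and the post-selection outcomes $\{\postoutcome_i\}_{\postvertname_i\in\psvertset}$. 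Evaluated at the successful event this reads
\begin{equation}
\probacyc(x_1, x_2 \mid x_3, \{\postoutcome_i = \ok\})_{\graphtele} = \probacyc(x_1 \mid x_3, \{\postoutcome_i = \ok\})_{\graphtele}\,\probacyc(x_2 \mid x_3, \{\postoutcome_i = \ok\})_{\graphtele},
\end{equation}
which, via the identity $\prob(\cdot)_{\graphname} = \probacyc(\cdot \mid \{\postoutcome_i = \ok\})_{\graphtele}$ of \cref{def: probability distribution v3}, is exactly $(X_1 \indep X_2 | X_3)_{\prob_{\graphname}}$. The relevant conditionals are defined because consistency ($\successprob > 0$) is the standing hypothesis under which $\prob_{\graphname}$ exists, and by \cref{lemma: acyclic_prob_same_v3} the distribution $\prob_{\graphname}$ does not depend on the chosen $\graphtele$, so the conclusion is unambiguous.

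Next I would prove completeness. Assume $(V_1 \not\perp^p V_2 | V_3)_{\graphname}$, which by \cref{def: p-separation} means $(V_1 \not\perp^d V_2 | V_3 \cup \psvertset)_{\graphtele}$ for \emph{every} $\graphtele \in \graphfamily{\graphname}$; fix the maximal teleportation graph $\graphname_{\textsc{tp}}^{\textup{max}}$ (\cref{def:max tele graph}) and a $d$-active path $\gamma$ between representatives $a \in V_1$ and $b \in V_2$ given $V_3 \cup \psvertset$. The goal is to build a causal model on $\graphname$ whose post-selected distribution exhibits $(X_1 \not\indep X_2 | X_3)_{\prob_{\graphname}}$, by propagating a perfect correlation along $\gamma$. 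The crucial structural observation is that the teleportation constraint of \cref{def:causal model of graphfamily_v3} is an aid rather than an obstruction: with the Bell protocol (\cref{def:bell tele}), conditioning on $\postoutcome_i = \ok$ opens the collider at $\postvertname_i$ and teleports the incoming state faithfully to $\vertname_i'$, so each gadget traversed by $\gamma$ acts exactly as the correlation-carrying identity link that a standard completeness construction would insert. At the ordinary vertices of $\graphname$ along $\gamma$ I would place copy/identity channels and maximally correlating measurements, and trivial (e.g.\ uniform) mechanisms off the path.

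The hard part will be this completeness construction, and it is where I expect the bulk of the technical work to live (to be carried out in the appendix). Two difficulties distinguish it from the acyclic case. First, whereas acyclic completeness is free to choose all mechanisms, here the pre/post vertices are pinned to teleportation protocols; I would resolve this by exploiting that Bell teleportation on success is the identity and hence transparent to the propagated correlation, so the acyclic construction can be adapted to respect the constraint. Second, and more delicately, the correlation must survive conditioning on the \emph{fixed} event $\postoutcome_i = \ok$ rather than on a freely chosen value of $\psvertset$, and it must persist after marginalizing the off-path observed variables and conditioning on $V_3$. Making this precise requires verifying, for the explicit model, both that the post-selection succeeds with positive probability ($\successprob > 0$, so that $\prob_{\graphname}$ is defined) and that the induced conditional correlation between $X_a$ and $X_b$ does not vanish; establishing these two facts simultaneously is the crux of the argument.
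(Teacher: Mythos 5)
Your soundness argument is essentially identical to the paper's: pick the witnessing teleportation graph from \cref{def: p-separation}, apply the acyclic $d$-separation theorem (\cref{theorem: dsep theorem}) to the teleportation causal model with $V_3\cup\psvertset$ as the conditioning set, and translate back through \cref{def: probability distribution v3} and \cref{lemma: acyclic_prob_same_v3}. That half is correct and complete.

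The completeness half is where there is a genuine gap. You correctly identify the crux — one must exhibit an explicit model in which the correlation along a $d$-active path survives conditioning on the \emph{fixed} event $\{\postoutcome_i=\ok\}$, the post-selection succeeds with $\successprob>0$, and the dependence persists after marginalizing off-path variables — but you then defer exactly this to an unwritten appendix. As it stands, the completeness direction is a plan, not a proof. The difficulty is not cosmetic: the standard Geiger--Verma--Pearl single-active-path construction relies on freely conditioning colliders on \emph{arbitrary} values, whereas here the opened colliders include post-selection vertices pinned to the value $\ok$, and the same post-selection event must simultaneously keep $\successprob>0$ and not wash out the correlation. There is also a quantifier subtlety you gloss over: $p$-connection gives you an active path in \emph{every} $\graphtele$, but your construction must be carried out in one fixed representative while ensuring the resulting $\prob_{\graphname}$ (which is representative-independent) actually exhibits the dependence.

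For comparison, the paper does not attempt this construction directly. It proves completeness by reduction to the classical setting: \cref{lemma: pseparation between families} shows the quantum $p$-connection is equivalent to the classical $p$-connection $\perp^{p,c}$ of the companion paper \cite{Sister_paper}, whose Theorem~20 supplies a finite functional causal model on $\graphname$ witnessing the dependence; the embedding of \cref{def:CM of a fCM} together with the probability-equivalence results of \cref{sec:classical aspects} then transports that witness into the quantum framework with the same observed distribution. This route buys a complete proof at the cost of an external citation and the (nontrivial) equivalence lemmas; your route, if carried through, would be self-contained and genuinely quantum, but the construction you sketch is precisely the content that needs to be supplied before the theorem can be considered proved.
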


Soundness and completeness of $p$-separation are proven in~\cref{app:pseparation}.

\subsection{Examples: $p$-separation in acyclic and cyclic graphs} 
\label{sec:examples psep}

\paragraph{Recovering $d$-separation in acyclic graphs}
Notice that our definition of $p$-separation reduces to $d$-separation whenever the causal graph $\graphname$ is acyclic. This is because in this case the graph is a representative of its own acyclic family, $\graphname\in \graphfamily\graphname$ with an empty set of post-selection vertices $\psvertset=\emptyset$. Thus we immediately have that $d$-separation of $V_1$ from $V_2$ given $V_3$ implies $p$-separation of $V_1$ from $V_2$ given $V_3$ in $\graphname$, according to \cref{def: p-separation}. The other direction, namely that $p$-separation implies $d$-separation is also true from the contrapositive $d$-connection implies $p$-connection. Indeed, if $V_1$ and $V_2$ are $d$-connected conditioned on $V_3$ in a given graph, the same $d$-connection will hold in all its teleportation graphs once we additionally condition on the post-selection vertices. Thus, $V_1$ and $V_2$ are $p$-connected conditioned on $V_3$.

To further motivate the definition of $p$-separation (\cref{def: p-separation}), we now consider some examples. These highlight why it is natural to define $p$-separation through the $\exists$ $\graphtele\in \graphfamily\graphname$ rather than $\forall$ $\graphtele\in \graphfamily\graphname$, specifically for recovering $d$-separation in the acyclic case. For this, consider the chain, fork and collider graphs of \cref{eq: chain,eq: fork} and \cref{eq: collider} respectively, and consider the graph $\graphname_{\textsc{tp}}^{\textup{max}}\in \graphfamily\graphname$ obtained by adding pre and post-selection vertices to every edge (which leads to the maximal number of such vertices for the given graphs). Then, we obtain the following graphs:
\begin{equation}
\label{eq: chain_PS}
\graphname_{\textsc{tp}}^{\textup{max}} \text{ for the chain: } \centertikz{
        \node[onode] (A) at (0,0) {$A$};
        \node[psnode] (P1) at (1,1) {$\postvertname_1$};
        \node[prenode] (Q1) at (2,-1) {$\prevertname_1$};
        \node[onode] (C) at (3,0) {$C$};
        \node[psnode] (P2) at (4,1) {$\postvertname_2$};
        \node[prenode] (Q2) at (5,-1) {$\prevertname_2$};
         \node[onode] (B) at (6,0) {$B$};
        \draw[cleg] (A) -- (P1); \draw[cleg] (Q1) -- (P1); \draw[cleg] (Q1) -- (C);
        \draw[cleg] (C) -- (P2); \draw[cleg] (Q2) -- (P2); \draw[cleg] (Q2) -- (B);
    }
\end{equation}

\bigskip
\begin{equation}
\label{eq: fork_PS}
\graphname_{\textsc{tp}}^{\textup{max}} \text{ for the fork: } \centertikz{
        \node[onode] (A) at (0,0) {$A$};
        \node[prenode] (Q1) at (1,-1) {$\prevertname_1$};
        \node[psnode] (P1) at (2,1) {$\postvertname_1$};
        \node[onode] (C) at (3,0) {$C$};
        \node[psnode] (P2) at (4,1){$\postvertname_2$};
        \node[prenode] (Q2) at (5,-1) {$\prevertname_2$};
         \node[onode] (B) at (6,0) {$B$};
        \draw[cleg] (C) -- (P1); \draw[cleg] (Q1) -- (P1); \draw[cleg] (Q1) -- (A);
        \draw[cleg] (C) -- (P2); \draw[cleg] (Q2) -- (P2); \draw[cleg] (Q2) -- (B);
    }
\end{equation}

\bigskip
\begin{equation}
\label{eq: collider_PS}
\graphname_{\textsc{tp}}^{\textup{max}} \text{ for the collider: } \centertikz{
       \node[onode] (A) at (0,0) {$A$};
        \node[psnode] (P1) at (1,1) {$\postvertname_1$};
        \node[prenode] (Q1) at (2,-1) {$\prevertname_1$};
        \node[onode] (C) at (3,0) {$C$};
        \node[prenode] (Q2) at (4,-1) {$\prevertname_2$};
        \node[psnode] (P2) at (5,1) {$\postvertname_2$};
         \node[onode] (B) at (6,0) {$B$};
        \draw[cleg] (A) -- (P1); \draw[cleg] (Q1) -- (P1); \draw[cleg] (Q1) -- (C);
        \draw[cleg] (B) -- (P2); \draw[cleg] (Q2) -- (P2); \draw[cleg] (Q2) -- (C);
    }
\end{equation}
Recall that we had $(A\not\perp^d B)_{\graphname}$ and $(A\perp^d B|C)_{\graphname}$ in the original chain and fork. Denoting $\psvertset:=\{\postvertname_1,\postvertname_2\}$, we see that we have $(A\not\perp^d B|\psvertset)_{\graphname_{\textsc{tp}}^{\textup{max}}}$ and $(A\perp^d B|C,\psvertset)_{\graphname_{\textsc{tp}}^{\textup{max}}}$ in the above graphs for the chain and fork. For the collider, recall that we had $(A\perp^d B)_{\graphname}$ and $(A\not\perp^d B|C)_{\graphname}$, and we can see that we also have $(A\perp^d B|\psvertset)_{\graphname_{\textsc{tp}}^{\textup{max}}}$ and $(A\not\perp^d B|C,\psvertset)_{\graphname_{\textsc{tp}}^{\textup{max}}}$. That is, for the graphs of the chain, fork and collider without descendants, the $d$-separation in the original graph $\graphname$ matches $d$-separation with conditioning on the post-selection vertices in the representative graph $\graphname_{\textsc{tp}}^{\textup{max}}$. However, an interesting difference occurs when we consider colliders with descendants as in \cref{eq: collider_desc}. The graph $\graphname_{\textsc{tp}}^{\textup{max}}$ for the collider with descendant is 
\begin{equation}
\label{eq: collider_desc_PS}
\centertikz{
  \node[onode] (A) at (-1,0) {$A$};
        \node[psnode] (P1) at (0,1) {$\postvertname_1$};
        \node[prenode] (Q1) at (2,-1) {$\prevertname_1$};
        \node[onode] (C) at (3,0) {$C$};
        \node[prenode] (Q2)at (4,-1) {$\prevertname_2$};
        \node[psnode] (P2) at (6,1) {$\postvertname_2$};
         \node[onode] (B) at (7,0) {$B$};
 \node[psnode, rotate=90] (P3) at (2,1) {\phantom{$\postvertname_{33}$}};
 \node at (1.95,1) {$\postvertname_3$};
  \node[prenode, rotate=90] (Q3) at (4,2) {\phantom{$\prevertname_{33}$}};
   \node at (4.05,2) {$\prevertname_3$};
   \node[onode] (D)at (3,3) {$D$};
         
        \draw[cleg] (A) -- (P1); \draw[cleg] (Q1) -- (P1); \draw[cleg] (Q1) -- (C);
        \draw[cleg] (B) -- (P2); \draw[cleg] (Q2) -- (P2); \draw[cleg] (Q2) -- (C);
         \draw[cleg] (C) -- (P3); \draw[cleg] (Q3) -- (P3); \draw[cleg] (Q3) -- (D);
    }.
\end{equation}
The $d$-separation $(A\perp^d B)_{\graphname}$ still holds in the collider with descendant, even though $(A\perp^d B|\psvertset)_{\graphname_{\textsc{tp}}^{\textup{max}}}$ no longer holds here for $\psvertset:=\{\postvertname_1,\postvertname_2,\postvertname_3\}$ since $\postvertname_3$ is now a descendant of the collider which we condition on. However, notice that $\graphname_{\textsc{tp}}^{\textup{max}}$ is just one representative of $\graphfamily\graphname$. If we consider a teleportation graph $\graphtele\in\graphfamily\graphname$ where the edge $C\rightarrow D$ does not receive any post-selection vertices, then $(A\perp^d B|\psvertset)_{\graphname_{\textsc{tp}}^{\textup{max}}}$ will still hold. Therefore, by \cref{def: p-separation}, $(A\perp^p B)_{\graphname}$ holds in this case.

In other words, our definition only implies a $p$-connection in $\graphname$ when that connection is reflected in \emph{all} the graphs of the graph family $\graphfamily\graphname$. Indeed, we have shown (see \cref{lemma: acyclic_prob_same_v3} and \cref{def: probability distribution v3}) that the observed probabilities for the causal model are independent of the representative  of $\graphfamily\graphname$ chosen in computing them. It is easy to check that applying our probability rule, the outcomes $a$ and $b$ of the vertices $A$ and $B$ will be conditionally independent even when conditioned on the colliders $\psvertset:=\{\postvertname_1,\postvertname_2,\postvertname_3\}$ in \cref{eq: collider_desc_PS}, this is because the post-selection on the unblocking collider $\postvertname_3$ is fine-tuned such that it simulates a directed edge from $C$ to $D$ through an identity channel (without post-selecting a particular outcome value on $C$ or $D$). Thus our definition of $p$-separation serves to avoid such fine-tuning and enables us to construct a sound and complete graph separation property for general directed graphs that reduces to $d$-separation in the acyclic case.

\paragraph{Example of \cref{sec:failure_dsep}}
Let us now apply the results of this section to the cyclic graph of \cref{eq: dsep_cycle_example}, to see how $p$-separation captures the failure of $d$-separation discussed in the previous sub-section. We consider two representatives of $\graphtele_{,1}, \graphtele_{,2}\in \graphfamily\graphname$ for $\graphname$ given by \cref{eq: dsep_cycle_example}. These are given by introducing pre- and post-selection vertices in one or both edges in the cycle. 

\begin{equation}
 \graphtele_{,1}:   \centertikz{
    \node[onode] (v1) at (0,0) {$\vertname_1$};
 \node[psnode] (p1) [above right =2*\chanvspace of v1] {$\postvertname_1$};
  
    \node[onode] (v2) at (5,0) {$\vertname_2$};

     \node[prenode] (q1) [below left =2*\chanvspace of v2] {$\prevertname_1$};

     \node[onode] (v3) [below left  =\chanvspace of v1] {$\vertname_3$};
   \node[onode] (v4) [below right  =\chanvspace of v2] {$\vertname_4$};
        \draw[cleg] (v1) -- (p1);  \draw[cleg] (q1) -- (p1);  \draw[cleg] (q1) -- (v2);  
               \draw[cleg] (v2) -- (v1);

 \draw[cleg] (v3) -- (v1);  \draw[cleg] (v4) -- (v2);  
    }
\end{equation}

\begin{equation}
 \graphtele_{,2}:   \centertikz{
    \node[onode] (v1) at (0,0) {$\vertname_1$};
 \node[psnode] (p1) [above right =2*\chanvspace of v1] {$\postvertname_1$};
  
    \node[onode] (v2) at (5,0) {$\vertname_2$};

     \node[prenode] (q1) [below left =2*\chanvspace of v2] {$\prevertname_1$};  
 \node[psnode] (p2) [above left =2*\chanvspace of v2] {$\postvertname_2$};
 \node[prenode] (q2) [below right =2*\chanvspace of v1] {$\prevertname_2$};  
    
     \node[onode] (v3) [below left  =\chanvspace of v1] {$\vertname_3$};
   \node[onode] (v4) [below right  =\chanvspace of v2] {$\vertname_4$};
        \draw[cleg] (v1) -- (p1);  \draw[cleg] (q1) -- (p1);  \draw[cleg] (q1) -- (v2);  
               \draw[cleg] (v2) -- (p2);  \draw[cleg] (q2) -- (p2);  \draw[cleg] (q2) -- (v1);

 \draw[cleg] (v3) -- (v1);  \draw[cleg] (v4) -- (v2);  
    }.
\end{equation}
Notice that we have $(\vertname_3\not\perp^d \vertname_4|\postvertname_1)_{\graphtele_{,1}}$ and $(\vertname_3\not\perp^d \vertname_4|\postvertname_1,\postvertname_2)_{\graphtele_{,2}}$. It is easy to see that this $d$-connection would hold in all other teleportation graphs in this case, as all of them would involve at least one of the post-selection vertices $\postvertname_1$, $\postvertname_2$. Thus we have the $p$-connection, $(\vertname_3\perp^p \vertname_4)_\graphname$, which justifies why we would expect the outcomes of $\vertname_3$ and $\vertname_4$ to get correlated when we have a causal model on the original graph \cref{eq: dsep_cycle_example}, as was the case in the example of \cref{sec:failure_dsep}.
Indeed, probabilities of that model are nothing but conditional probabilities of a corresponding model on $\graphtele_{,1}$ with conditioning on $\postvertname_1$ or equivalently of a model on $\graphtele_{,2}$ with conditioning on $\{\postvertname_1,\postvertname_2\}$.

Notice that the same graph of \cref{eq: dsep_cycle_example} also has a non-trivial $p$-separation: $(\vertname_3\perp^p \vertname_4|\vertname_1,\vertname_2)_{\graphname}$.\footnote{This is because conditioning on $\vertname_1,\vertname_2$ blocks all paths from the remaining two vertices to the post-selection vertices within the loop which will ensure that $\vertname_3$ and $\vertname_4$ remain $d$-separated conditioned on $\vertname_1,\vertname_2$ in all teleportation graphs, even when additionally conditioning on the post-selection vertices.} Further, the following is an example of another cyclic graph with a non-trivial $p$-separation:
\begin{equation}
    \graphname=\centertikz{
    \node[onode] (v1) at (0,0) {$\vertname_1$};
    \node[onode] (v2) at (2,0) {$\vertname_2$};
    \node[onode] (v3) [below left  =1.5*\chanvspace of v1] {$\vertname_3$};
        \draw[cleg] (v1) to [out=45,in=135]  (v2);
    \draw[cleg] (v2) to [out=215,in=-45] (v1);
 \draw[cleg] (v3) -- (v1); 
 \node[onode] (v) [above right  =1.5*\chanvspace of v2] {$\vertname$};
  \node[onode] (v4) [below right  =1.5*\chanvspace of v] {$\vertname_4$};

 \draw[cleg] (v2) -- (v); \draw[cleg] (v4) -- (v);
    }.
\end{equation}
Here, we have the non-trivial $p$-separation $(\vertname_3\perp^p \vertname_4)_{\graphname}$. Here, the only (possibly undirected) path between $\vertname_3$ and $\vertname_4$ goes through $\vertname$. However,  $\vertname$ acts as an unconditioned collider on this path for any $\graphtele\in \graphfamily\graphname$, even when we condition on the set of post-selection vertices in $\graphtele$, i.e., $(\vertname_3\perp^p \vertname_4|\psvertset)_{\graphtele}$ holds for all $\graphtele\in \graphfamily\graphname$ implying the $p$-separation $(\vertname_3\perp^p \vertname_4)_{\graphname}$ by \cref{def: p-separation}.

\section{Links between Markovianity, linearity and success of post-selection}
\label{sec: quantum_Markov}
Consider a classical causal model (functional model or classical Bayesian network, see also~\cref{sec:classical aspects}) defined on a directed acyclic graph $\graphname$  where all vertices are observed, i.e., $\overtset=\vertset$. The joint distribution on the observed outcomes, which we denote with $\probacyc$ (see~\cref{def: distribution_functional_cm}), respects the Markov property, which corresponds to the following factorisation of this distribution
\begin{equation}
\label{eq: classical_Markov}
       \probfacyc(\{\outcome_\vertname\}_{\vertname\in\vertset})_{\graphname}=\prod_{\vertname\in\vertset} \probfacyc\left(\outcome_\vertname|\parnodes{\outcome_\vertname}\right)_{\graphname},
\end{equation}
where $\parnodes{\outcome_\vertname}$ denotes the outcome set of all parents of the vertex $\vertname$. If some of the vertices are unobserved, one simply marginalises over the outcomes of $\vertset\setminus\overtset$ on both sides of the above equation, to define the Markov property for $\probfacyc(\{\outcome_\vertname\}_{\vertname\in\overtset})_{\graphname}$.

In quantum causal models on directed acyclic graphs, we have the probability rule of \cref{eq:def acyclic composition} which expresses the joint probability of observed outcomes as a composition of completely positive linear maps, one for each vertex. We can simplify this by combining the maps $\measmaparg{\outcome_\vertname}{\vertname}$ and 
$\chanmaparg{\vertname}$ associated with observed and unobserved vertices, 
to a single map $\mathcal{T}^\vertname_{\outcome_\vertname}$ 
 associated with each vertex $\vertname$ with an empty outcome set $\outcome_\vertname=\emptyset$ (and the associated map being CPTP) whenever the vertex is unobserved, $\vertname\in \uvertset$. Observed vertices, $\vertname\in \overtset$, have non-trivial outcomes and the map associated to each outcome, $\mathcal{T}^\vertname_{\outcome_\vertname}$, is CP but not necessarily trace preserving. Then, we have 
\begin{equation}
\label{eq: acyclic prob simple}
    \probacyc(\{\outcome_\vertname\}_{\vertname\in\overtset})_{\graphname}
        =
        \bigcomp_{\vertname \in \vertset}
        \mathcal{T}_{\outcome_{\vertname}}^{\vertname}.
\end{equation}
One can check that the above expression is identical to that given by the Markov property in quantum Bayesian networks \cite{Henson_2014}, where the observed distribution is given by a generalised product of ``tests'', one for each vertex $\vertname$. Each test is a map from incoming quantum systems from unobserved parents to outgoing systems to its children, conditioned on the outcomes of its observed parents. When $\vertname$ is observed, the test is associated with a particular outcome on that vertex of which we compute the probability.\footnote{Here, the composition operation instantiates the generalised product in the quantum case, the tests correspond to the quantum CP(TP) maps $\mathcal{T}^{\vertname}$, and we do not have an explicit conditioning on outcomes of observed parents as these are encoded into a preferred basis of a quantum system and we do not need to differentiate incoming edges from observed vs unobserved parents, and can treat all of them as quantum inputs to our ``tests''. }

The expressions of \cref{eq:def acyclic composition} or equivalently \cref{eq: acyclic prob simple}, which can be applied to define the Markov property for $\probacyc(\outcome)_\graphname$ in the acyclic case,  are not defined for cyclic models. Indeed, in this case the composition operation $\bigcomp$, which captures parallel and sequential composition, is not defined. To understand how we can extend Markovianity to the cyclic case, we relate \cref{eq: acyclic prob simple} to the self-cycle composition of \cref{def:selfcycle_v3}, which is a special case of loop composition introduced in \cite{Portmann_2017}. Recall that in \cref{prop:probs as self cycles_v3}, we showed that the general probability rule for cyclic quantum causal models can be written in terms of self-cycle composition of a channel $\etot_\outcome$ constructed for a given choice of teleportation graph $\graphtele\in \graphfamily\graphname$ and the causal mechanisms of the original causal model. In addition, in~\cref{lemma: acyclic_prob_same_v3} we proved that the probability is independent of the choice of teleportation graph $\graphtele\in \graphfamily\graphname$. It follows from the results of \cite{Portmann_2017, VilasiniRennerPRA, VilasiniRennerPRL}, which express sequential composition in terms of loop composition\footnote{Self-cycle composition of \cref{def:selfcycle_v3} is a special case of loop composition \cite{Portmann_2017} for finite-dimensional systems, where the result is a number (a map with no in or outputs). Generally, loop composition for a map with inputs $A$ and $C$, and outputs $B$ and $D$ allows part of the output (say $C$) to be connected to part of the input (say $B$) to result in a new a map from $A$ to $D$. Then sequential composition of first applying $\mathcal{M}_1:\linops(\hilmaparg{A})\mapsto \linops(\hilmaparg{B})$ and then $\mathcal{M}_2:\linops(\hilmaparg{C})\mapsto \linops(\hilmaparg{D})$ (with $\hilmaparg{B}\cong \hilmaparg{C}$) is equivalent to loop composition of $C$ to $C$ in $\mathcal{M}_1\otimes \mathcal{M}_2: \linops(\hilmaparg{A})\otimes\linops(\hilmaparg{C})\mapsto \linops(\hilmaparg{B})\otimes \linops(\hilmaparg{D})$. Using this, it is easy to see that applying all the compositions of an acyclic graph, we will end up with a cycle composition once all maps are fully composed. }, that in acyclic causal models, the acyclic probability of \cref{eq:def acyclic composition} and \cref{eq: acyclic prob simple} is equivalent to
\begin{equation}
\label{eq: acyclic prob markov}
     \probacyc(\{\outcome_\vertname\}_{\vertname\in\overtset})_{\graphname}
    =\selfcycle(\etot_\outcome),
\end{equation}
where $\outcome$ in the self-cycle expression is short hand for $\{\outcome_\vertname\}_{\vertname\in\overtset}$, as we had in \cref{prop:probs as self cycles_v3}.

While \cref{eq:def acyclic composition} and \cref{eq: acyclic prob simple} are not defined in the cyclic case, the above equation is, since the $\selfcycle$ operation is defined generally. This expression is independent of the choice of teleportation graph used (\cref{cor:selfcycle etot indepentend on telegraph}), thus, without loss of generality one can choose the teleportation graph obtained by splitting all edges, in which case 
\begin{equation}
    \etot_\outcome=\bigotimes_{\vertname\in\vertset}\mathcal{T}_{\outcome_\vertname}^\vertname=\bigotimes_{\vertname\in\overtset} \measmaparg{\outcome_\vertname}{\vertname}\bigotimes_{\vertname\in \uvertset}\chanmaparg{\vertname},
\end{equation}
and 
\begin{equation}
   \label{eq: acyclic prob markov2}
     \prob(\{\outcome_\vertname\}_{\vertname\in\overtset})_{\graphname}
    =\selfcycle\left(\bigotimes_{\vertname\in\vertset}\mathcal{T}_{\outcome_\vertname}^\vertname\right)=\selfcycle\left(\bigotimes_{\vertname\in\overtset} \measmaparg{\outcome_\vertname}{\vertname}\bigotimes_{\vertname\in \uvertset}\chanmaparg{\vertname}\right). 
\end{equation}
The above expression gives the observed probability in terms of a composition operation on all the maps of the causal model where $\selfcycle$ links a pair of isomorphic in and output systems for each edge of the causal graph of the model. 

This suggests the following definition of Markovianity for the general cyclic case.

\begin{definition}[Markov property in cyclic quantum causal models]
\label{def: markov}
For a causal model defined on a directed graph $\graphname$, the associated observed probability distribution $\prob(\{\outcome_\vertname\}_{\vertname\in\overtset})_\graphname$ is said to be Markov with respect to $\graphname$ if it can be expressed as in \cref{eq: acyclic prob markov2}. If this is the case, we will say that the causal model satisfies the Markov property or is probabilistically Markovian. 
\end{definition}
Then the following corollary is implied by \cref{prop:probs as self cycles_v3}.

\begin{corollary}
Given a causal model on a directed graph $\graphname$, $\cm_{\graphname}$, the following statements are equivalent:
\begin{myitem}
    \item $\cm_{\graphname}$ is probabilistically Markovian;
    \item $\sum_\outcome\selfcycle(\etot_\outcome)=1$;
    \item For every choice of $\graphtele\in \graphfamily{\graphname}$ used in the definition of $\mathcal{C}_\outcome$ and $\successprob$, the post-selection success probability is 
\begin{equation}
     \successprob = \left(\prod_{i=1}^k \teleprob^{(i)}\right),
\end{equation}
where $k$ is the cardinality of the split edges set  of $\graphtele$, i.e., $k = |\splitedges{\graphtele}|$, and $\teleprob^{(i)}$ is the teleportation probability associated to the post-selected teleportation protocol implemented instead of the $i$-th split edge in $\splitedges{\graphtele}$ (see~\cref{prop:probs as self cycles_v3}).
\end{myitem} 
\end{corollary}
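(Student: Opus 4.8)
The plan is to derive all three equivalences directly from the closed-form expressions of \cref{prop:probs as self cycles_v3}, together with the graph-independence of $\selfcycle(\etot_\outcome)$ established in \cref{cor:selfcycle etot indepentend on telegraph}. I would fix an arbitrary teleportation graph $\graphtele\in\graphfamily\graphname$ with $k=|\splitedges{\graphtele}|$ split edges and associated teleportation probabilities $\teleprob^{(i)}$, and abbreviate $S:=\sum_\outcome\selfcycle(\etot_\outcome)$. \Cref{prop:probs as self cycles_v3} then supplies $\successprob=\big(\prod_{i=1}^k\teleprob^{(i)}\big)\,S$ and $\prob(\outcome)_\graphname=\selfcycle(\etot_\outcome)/S$, the latter being well-defined precisely when $S\neq 0$ (equivalently $\successprob\neq 0$, since each $\teleprob^{(i)}\in(0,1]$). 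By \cref{cor:selfcycle etot indepentend on telegraph}, both $\selfcycle(\etot_\outcome)$ and hence $S$ take the same value for every $\graphtele\in\graphfamily\graphname$, so $S$ is an invariant of the model, independent of the chosen teleportation graph.

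For the implication (i)$\Leftrightarrow$(ii) I would use \cref{def: markov}, which declares the model probabilistically Markovian exactly when $\prob(\outcome)_\graphname=\selfcycle(\etot_\outcome)$ for all $\outcome$. Summing this identity over $\outcome$ and using that $\prob(\cdot)_\graphname$ is a normalized distribution gives $1=\sum_\outcome\selfcycle(\etot_\outcome)=S$, which is statement (ii). Conversely, if $S=1$ then $\prob(\outcome)_\graphname=\selfcycle(\etot_\outcome)/S=\selfcycle(\etot_\outcome)$ holds for every $\outcome$, recovering Markovianity.

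For (ii)$\Leftrightarrow$(iii), I would read off from $\successprob=\big(\prod_{i=1}^k\teleprob^{(i)}\big)\,S$ that, for the fixed $\graphtele$, the identity $\successprob=\prod_{i=1}^k\teleprob^{(i)}$ holds if and only if $S=1$, since $\prod_{i=1}^k\teleprob^{(i)}>0$ can be cancelled. Because $S$ does not depend on the chosen teleportation graph (\cref{cor:selfcycle etot indepentend on telegraph}), the single condition $S=1$ is equivalent to asserting that $\successprob=\prod_{i=1}^k\teleprob^{(i)}$ holds simultaneously for \emph{every} $\graphtele\in\graphfamily\graphname$, which is exactly statement (iii). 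Chaining the two equivalences then closes the cycle (i)$\Leftrightarrow$(ii)$\Leftrightarrow$(iii).

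There is no genuine obstacle here, as every step is a substitution into already-proven formulas; the only point requiring care is the domain of validity. Markovianity is a statement about the observed distribution and therefore presupposes $S>0$, so I would make explicit that the manipulations in the (i)$\Leftrightarrow$(ii) step are carried out in this regime, and note that statement (iii), through $\successprob=\prod_{i=1}^k\teleprob^{(i)}>0$, is automatically consistent with $S=1>0$. No new positivity estimate or analytic bound is needed beyond those already contained in \cref{prop:probs as self cycles_v3} and \cref{cor:selfcycle etot indepentend on telegraph}.
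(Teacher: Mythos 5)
Your proposal is correct and follows the same route the paper intends: the paper offers no separate argument for this corollary beyond remarking that it is implied by \cref{prop:probs as self cycles_v3}, and your write-up simply makes the substitutions explicit, using the formulas $\successprob=\big(\prod_i\teleprob^{(i)}\big)\sum_\outcome\selfcycle(\etot_\outcome)$ and $\prob(\outcome)_\graphname=\selfcycle(\etot_\outcome)/\sum_\outcome\selfcycle(\etot_\outcome)$ together with the graph-independence from \cref{cor:selfcycle etot indepentend on telegraph}. Your care about the regime $\sum_\outcome\selfcycle(\etot_\outcome)>0$ is a sensible addition but does not change the argument.
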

This corollary has an interesting interpretation. Recall that in general,
we have \begin{equation}
     \successprob = \left(\prod_{i=1}^k \teleprob^{(i)}\right)\sum_\outcome\selfcycle(\etot_\outcome).
\end{equation}
The term $\sum_\outcome\selfcycle(\etot_\outcome)$ depends on the causal mechanisms of the causal model and not on the choice of teleportation protocols used on the split edges, while the term $\left(\prod_{i=1}^k \teleprob^{(i)}\right)$ is a product of the individual success probabilities of the teleportation protocols used on the split edges and does not depend on the causal mechanisms of the original causal model at all. When we use the maximally entangled post-selected teleportation protocol (involving an entangled state of two $d$-dimensional systems) on the $i$-th edge then $\teleprob^{(i)}=\frac{1}{d^2}$. This corollary tells us that Markovianity holds whenever the overall post-selection success probability of the causal model is entirely independent of its causal mechanisms. 

Moreover, notice that $\successprob$ and hence the $\sum_\outcome\selfcycle(\etot_\outcome)$ term appears in the denominator of our general probability expression (\cref{prop:probs as self cycles_v3}), this makes the probabilities generally non-linear in the causal mechanisms $\{\measmaparg{\outcome_\vertname}{\vertname}\}_{\vertname\in\overtset}$ and $\{\chanmaparg{\vertname}\}_{\vertname\in\uvertset}$ of the causal model. Causal models leading to Markovian distributions are therefore linear in the causal mechanisms of the model. Note, however, that for linearity, it suffices that $\sum_\outcome\selfcycle(\etot_\outcome)$ equals a constant value $c$ not necessarily equal to 1, while Markovianity requires $c=1$.

\paragraph{Link to linear P-CTCs.} The intuition behind the results of this section are similar in spirit to certain characterisation results known 
for process matrices or equivalently quantum supermaps \cite{Oreshkov_2012, Chiribella_2013}, which are formalisms for describing so-called indefinite causal order (ICO) quantum processes. These are known to be a linear subset \cite{Araujo_2017} of the general (possibly non-linear) post-selected closed timelike curves (P-CTCs) \cite{Lloyd_2011, Lloyd_2011_2}. In particular, P-CTCs are defined with respect to maximally entangled pre and post-selections and process matrices correspond to P-CTCs where the overall success probability equals the product of $\frac{1}{d^2}$ for each maximally entangled state of two $d$-dimensional systems used in the construction. This is precisely what we would get for the success probability $\successprob$ here by choosing the maximally entangled pre and post-selections on each split edge. Thus, process matrices, when mapped to our framework as discussed in \cref{sec:process_matrices}, would correspond to probabilistically Markov causal models. More generally, our framework can give a causal modelling semantic to general non-linear P-CTCs as well, which would not necessarily correspond to probabilistically Markov models.

\section{Overview of relationships to other causality frameworks}
\label{sec: frameworks}
\begin{figure}
    \centering
  \adjustbox{max width=\textwidth}{   \begin{tikzcd}[arrows=rightarrow, row sep=1.7cm, column sep=1.5cm]
  \begin{tikzpicture}\node[green!70!black] at (0,0.6) {$\sigma$-sep};
\node[fill=green] at (0,0) {\fbox{mfCM \cite{forre_2018}}};\end{tikzpicture}\arrow[shift left=1.5ex]{r}{}\arrow[red,degil, shift left=1.5ex]{d}{\text{continuous var.}}&\fbox{fCM \cite{Pearl_2009}}\arrow[red,degil, shift right=1.5ex, swap]{r}{\text{continuous var.}}\arrow[red,degil, shift left=1.5ex]{l}{\text{non-unique solv.}}\arrow[red,degil, shift left=1.5ex]{d}{\text{continuous var.}}&\fbox{Non-classical cyclic models \cite{VilasiniColbeckPRA, VilasiniColbeckPRL}}\arrow[red,degil, shift right=1.5ex, swap]{rd}{\text{cyclic}}\arrow[red,degil, shift right=1.5ex, swap]{l}{\text{non-classical}}\arrow[red,degil, shift left=1.5ex]{d}{\text{non-quantum}}& \\
\fbox{mfCM$_{\textup{finite}}$}\arrow[shift right=1.5ex]{r}{}\arrow[shift left=1.5ex]{u}{}&\fbox{fCM$_{\textup{finite}}$}\arrow[red, degil,shift right=1.5ex, swap]{r}{\text{no $d$-sep}}\arrow[red,degil, shift right=1.5ex, swap]{l}{\text{non-unique solv.}}\arrow[shift left=1.5ex]{u}{}\arrow[shift right=1.5ex, swap]{d}{\Cref{sec:fCM_to_CM}}&\fbox{This framework$_{d-\text{sep}}$}\arrow[red, degil, shift left=1.5ex]{r}{\text{cyclic}}\arrow[shift left=1.5ex]{ld}{}\arrow[red,degil,shift right=1.5ex, swap]{l}{\text{quantum}}\arrow[shift left=1.5ex]{u}{}&\begin{tikzpicture}\node[green!70!black] at (0,0.6) {$d$-sep};\node[fill=green] at (0,0) {\fbox{HLP \cite{Henson_2014}}};\end{tikzpicture}\arrow[shift left=-1.5ex]{lu}{}\arrow[red, degil, shift left=1.5ex]{d}{\text{non-quantum}}\arrow[red, degil, shift left=1.5ex]{l}{\text{non-quantum}}\\
\fbox{\parbox{6cm}{Cyclic q. networks \cite{VilasiniRennerPRA, VilasiniRennerPRL}\\ \centering String diagrams (quantum)\cite{Coecke_Kissinger_2017}}}\arrow[shift left=1.5ex]{d}{}\arrow[shift left=1.5ex]{r}{\text{\shortstack{up to modelling\\  open wires}}}&\begin{tikzpicture}\node[green!70!black] at (0,0.6) {$p$-sep};\node[fill=green] at (0,0) {\fbox{This framework}};\end{tikzpicture}\arrow[red,degil,shift right=1.5ex, swap]{rr}{\text{cyclic}}\arrow[red,degil,shift left=1.5ex]{ru}{\text{no $d$-sep}}\arrow[red,degil, shift left=-1.5ex, swap]{u}{\text{quantum}}\arrow[shift left=1.5ex]{l}{}\arrow[red,degil, shift left=1.5ex]{rd}{non-process}\arrow[red,degil, shift left=1.5ex]{dd}{non-process}&&\begin{tikzpicture}\node[green!70!black] at (0,0.6) {$d$-sep};\node[fill=green] at (0,0) {\fbox{HLP$_Q$}};\end{tikzpicture}\arrow[shift right=1.5ex]{ll}{}\arrow[shift left=1.5ex]{u}{}\arrow[shift left=1.5ex]{d}{}\\
\fbox{Post-selected CTCs \cite{Lloyd_2011}}\arrow[shift left=1.5ex]{u}{}\arrow[red,degil,shift left=1.5ex]{d}{\text{non-linear}}&&\fbox{BLO-QCM$_{\otimes}$}\arrow[red,degil,shift right=1.5ex, swap]{r}{\text{cyclic}}\arrow[shift left=1.5ex]{lu}{\Cref{sec: BLOtoOurs}}\arrow[shift left=-1.5ex, swap]{ld}{\Cref{def: BLOTensor}}&\begin{tikzpicture}\node[green!70!black] at (0,0.6) {$d$-sep};\node[fill=green] at (0,0) {\fbox{Costa-Shrapnel \cite{Costa_2016}}};\end{tikzpicture}\arrow[shift right=1.5ex]{l}{}\arrow[shift left=1.5ex]{u}{}\arrow[shift right=1.5ex]{d}{}\\
\fbox{Process matrices \cite{Oreshkov_2012}}\arrow[red,degil,shift left=1.5ex]{r}{\text{non-unitary}}\arrow[shift left=1.5ex]{u}{}&\fbox{BLO-QCM \cite{Barrett_2021}}\arrow[shift left=1.5ex]{l}{\text{coarse}}\arrow[red,degil,shift right=1.5ex, swap]{rr}{\text{cyclic}}\arrow[red,degil, shift right=1.5ex, swap]{ru}{non-factor}\arrow[blue,shift left=1.5ex]{uu}{\text{Causal decomp.?}}&&\begin{tikzpicture}\node[green!70!black] at (0,0.6) {$d$-sep};\node[fill=green] at (0,0) {\fbox{BLO-QCM$_{acyc}$ \cite{Barrett_2019}}};\end{tikzpicture}\arrow[shift right=1.5ex]{ll}{}\arrow[red, degil, shift right=1.5ex, swap]{u}{\text{non-factor}}\\
\end{tikzcd}}
    \caption{Relationships between different causality frameworks and ours. Each vertex represents a causality framework, a directed (black) arrow from vertex A to vertex B represents that all objects in framework A can be mapped to and studied within framework B in a manner that preserves the relevant causal information. A red arrow from A to B represents that not all scenarios in A can be captured within B, the annotation indicates the reason why. For instance, our framework can describe all classical functional causal models on discrete variables with finite cardinality (fCM$_{\textup{finite}}$) but the converse is not true as our framework contains quantum causal models that are not captured in fCM$_{\textup{finite}}$. Whether the Barrett-Lorenz-Oreshkov formalism for cyclic quantum causal models can be faithfully mapped to ours is an open question (represented by a blue arrow) that relies on another prominent open question in field relating to causal decompositions of quantum channels (see \cref{sec: BLOtoOurs} for further discussion).
    The green boxes highlight the graph-separation property (annotated above the box) whose soundness and completeness is known to hold within that formalism. For further explanation of this diagram, see \cref{sec: frameworks}.}
    \label{fig:frameworks}
\end{figure}

\Cref{fig:frameworks} illustrates relationships between our framework and several other causality frameworks proposed within the classical statistics and quantum information communities. Below, we highlight the key features distinguishing these frameworks and briefly compare them to ours to discuss the generality of our formalism. For further details, see also \cref{sec:quantum aspects,sec:classical aspects}. We note that this is not an exhaustive list of such frameworks, but refers to notable and commonly used ones from different types of approaches.

\begin{enumerate} \item {\bf Classical vs Non-classical}

\begin{enumerate}
    \item {\bf Classical: finite-cardinality vs continuous variables, unique vs non-uniquely solvable.}
    Functional causal models (fCM, reviewed in  \cref{sec:fCMs}, and also referred to as \emph{structural equation models} or \emph{structural causal models} in the classical literature) and their restricted variants, such as on finite-cardinality variables or modular functional causal models (fCM$_{\textup{finite}}$, mfCM, mfCM$_{\textup{finite}}$), are purely classical frameworks. 
    The relationships among these frameworks are evident from their specified restrictions. Modular fCMs (mfCMs) \cite{Forre_2017, forre_2018} correspond to cases where the functional dependencies or structural equations possess a particular solvability property, which is stricter than unique solvability\footnote{Unique solvability corresponds to having a unique solution of the functional dependences for every valuation of the exogenous vertices.}. 
    In \cite{Forre_2017}, a graph-separation property for cyclic fCMs, called $\sigma$-separation was proposed and proven to be sound and complete for all modular fCMs (referred there as modular structural equation models) which cover finite and infinite cardinality, discrete and continuous variables.
    However, the soundness of $\sigma$-separation (and $d$-separation) can fail in non-modular fCMs even in the discrete and finite-cardinality case: this includes uniquely solvable models (such as \cite{Neal_2000}) as well non-uniquely solvable models (see section 4 of~\cite{Sister_paper} for more details on $\sigma$-separation). Our framework, when restricted to the classical domain, can fully describe fCM$_{\textup{finite}}$ (\cref{sec:fCM_to_CM}) and, by extension, its subset mfCM$_{\textup{finite}}$ (see also the companion paper \cite{Sister_paper}). However, since we work with finite-dimensional Hilbert spaces, we cannot capture continuous-variable or infinite-cardinality fCMs.

    \item {\bf Non-classical: quantum vs post-quantum.} 
    The remaining frameworks in \cref{fig:frameworks} are inherently quantum, except for the Henson-Lal-Pusey (HLP) framework \cite{Henson_2014} and the non-classical cyclic causal modelling framework of \cite{VilasiniColbeckPRA, VilasiniColbeckPRL}, which apply to general operational probabilistic theories. These theories can include non-quantum and post-quantum models not captured by other frameworks, with the latter accommodating cyclic and post-quantum scenarios. Our framework can describe many of the quantum frameworks, as shown in the diagram, but it does not generically capture post-quantum causal models.
\end{enumerate}

\item {\bf Cyclic vs acyclic: satisfaction of $d$-separation property and correspondence with valid process operators.} 
The four frameworks in the bottom-most row are defined on directed acyclic graphs, where $d$-separation is known to be sound and complete. The other frameworks, by contrast, do not assume acyclicity and allow for certain types of cycles. The framework of \cite{VilasiniColbeckPRA, VilasiniColbeckPRL} defines a class of cyclic causal models applicable in classical, quantum and post-quantum theories but restricted to those respecting the soundness of $d$-separation, while other formalisms, like \cite{Barrett_2019,Barrett_2021}, focus on cyclic models associated with valid process operators. Process operators/process matrices/ quantum supermaps \cite{Oreshkov_2012, Chiribella_2013} correspond to a linear subset of post-selected closed timeline curves (P-CTCs) \cite{Lloyd_2011, Lloyd_2011_2}. Our framework allows for classical and quantum models that violate the soundness of $d$-separation (\cref{sec:failure_dsep}) as well as those with cycles that do not necessarily correspond to valid process operators. 

\item {\bf Unitary vs general quantum channels.} 
The Barrett-Lorenz-Oreshkov (BLO) approach \cite{Barrett_2019, Barrett_2021} focuses on causal influence within unitary quantum channels, while the quantum restriction of the HLP approach \cite{Henson_2014} (and our extension to the cyclic case) allows for general quantum channels. While there are scenarios that can be modelled in both approaches, there are notable differences in the definition of causal influence conditions and the associated concept of causal structure between these two approaches (see also \cref{sec: BLOtoOurs} for further discussion).

\item {\bf Existence/Non-existence of tensor factorization across multiple edges.} 
A related point to the above is whether or not a separate tensor factor is assigned to each quantum system for each edge. For instance, when $A$ is a cause of both $B$ and $C$, whether $B$ and $C$ receive distinct quantum systems in tensor product with each other. In the BLO formalism, non-factor constructions are allowed, meaning a single system that cannot be split into an $n$-fold tensor product can still be a common parent for $n$ vertices. All BLO models that involve such tensor factorization can be faithfully reproduced within our framework as shown in \cref{sec: BLOtoOurs}. Whether non-factor BLO models can be reproduced directly in our framework depends on solving the open problem of causal decompositions of unitary channels in terms of arbitrary quantum channels, as discussed in \cref{sec: BLOtoOurs}.

\item {\bf Modelling of open wires.} 
Many causal modelling approaches, including ours, do not include open in/output wires; wires are fully contracted with a state preparation, channel, or POVM. 
Causality frameworks that are not strictly about causal modelling, such as process matrices/quantum supermaps \cite{Oreshkov_2012, Chiribella_2013}, cyclic quantum networks \cite{VilasiniRennerPRA, VilasiniRennerPRL}, and category-theoretic approaches based on string diagrams (e.g., \cite{Coecke_Kissinger_2017, coecke2016categoricalquantummechanicsi}) allow for open in/output wires. These can still be modelled within our framework by introducing a “slot” for state preparation or POVM measurement on an open in/output wire. This slot can be filled with a variable preparation/measurement box, which, depending on an input variable, prepares an arbitrary state or measures an arbitrary POVM. Thus, there is no strict loss of generality in this case, though explicitly modelling open wires may be useful in certain scenarios.

\item {\bf Faithful vs unfaithful mappings and coarse-graining.} The BLO QCM formalism \cite{Barrett_2021} provides detailed causal semantics for several unitary process matrices including those with indefinite causal order, by decomposing the process operator into marginal channels. When mapping such processes back from the BLO framework to process matrices, the causal semantics given by the causal graph is not maintained and the process operator/matrix is treated as one (coarse-grained) entity that interacts through feedback loops with each local operation. As discussed in \cref{sec:process_matrices}, such a mapping between formalisms that involves coarse-graining no longer faithfully reflects the causal connectivity, although the operational predictions can be recovered in both pictures. \Cref{fig:frameworks} shows that although direct path from BLO-QCMs to our framework is an open question (blue arrow), there exists a directed path of black arrows from BLO-QCMs to our framework via process matrices. The former says that the existence of a faithful mapping (in the sense of \cref{def: faithful_mapping}) is open while the latter says that there exists an unfaithful mapping nevertheless, as shown in \cref{sec:process_matrices}. 

\item {\bf Choice of post-selected teleportation} The Post-selected Closed Timelike Curves (P-CTCs) framework \cite{Lloyd_2011, Lloyd_2011_2} defines cyclic causal structures through post-selected quantum teleportation. While P-CTCs are specifically associated with post-selected teleportation using maximally entangled states (\cref{def:bell tele}), our work generalizes this by allowing for arbitrary post-selected teleportation protocols that do not necessarily involve maximally entangled pre- and post-selections (\cref{def:ps teleportation}). Notably, we have demonstrated in \cref{sec:probability rule CM} that the operationally accessible probabilities can always be computed via a cyclic composition operation, irrespective of the chosen teleportation protocol. This establishes a direct connection between our framework and prior works based on cyclic composition, such as \cite{VilasiniRennerPRA, VilasiniColbeckPRL, Portmann_2017}. Additionally, similar ideas linking between cyclic composition and pre- and post-selection exist in category-theoretic approaches that employ string diagrams \cite{Coecke_Kissinger_2017, coecke2016categoricalquantummechanicsi}. In these approaches, special pairs of bipartite states and effects, called cups and caps, combine to simulate an identity channel used in cyclic compositions. Although maximally entangled states are frequently used as canonical examples in the literature, the string-diagrammatic framework accommodates the broader class of teleportation protocols in \cref{def:ps teleportation}. While we have focused on a more concise representation of our causal models via causal graphs, which is suitable for studying open problems relating to graph-separation properties in cyclic models, one can equivalently represent our models through string diagrams defined in a category instantiated by quantum channels.\footnote{Generally, the diagrammatic and category-theoretic approach can be applied to non-classical and non-quantum theories as well.}
\end{enumerate}

To summarize, our formalism exhibits generality by encompassing both quantum and classical causal models. It allows for non-unitary quantum channels, cyclic models that potentially violate the soundness of $d$-separation, and models that are not necessarily associated with valid process operators (but may involve non-linear post-selected closed timelike curves). The graph separation property $p$-separation, introduced here, is sound and complete for all causal models within this formalism.

Two potential limitations in its generality arise from the assumption of finite-dimensional Hilbert spaces and the requirement of associating a separate tensor factor for each edge. The former restricts the classical limit of the formalism to discrete and finite-cardinality variables. While no quantum causal modelling frameworks for infinite-dimensional Hilbert spaces are known, continuous-variable and infinite-cardinality classical models are have been studied formally even in cyclic scenarios (see e.g., \cite{Bongers_2021}). Whether the restriction to requiring a separate subsystem (tensor factor) to each edge leads to any loss of generality is an open question, and we are not aware of any counter examples that relax this restriction and cannot be modelled faithfully within our framework.

\section{Discussion and outlook}
\label{sec: conclusions}
The main contributions of our work are summarised in \cref{sec: intro}. There are a number of interesting future directions for building on this framework and results, and open questions that remain to be explored. We discuss these below.
\begin{itemize}
    \item {\bf Causal discovery and causal compatibility in cyclic causal structures.} In the case of acyclic models, the soundness and completeness of $d$-separation plays a central role in causal discovery algorithms which infer underlying causal explanations in terms of directed acyclic graphs from observed data. Such algorithms exist in the classical and quantum literature for acyclic models (e.g., \cite{Spirtes2016, Giarmatzi2018}). A natural question is whether $p$-separation can be applied to generalize such causal discovery algorithms in a useful way to cyclic classical and quantum causal models. Similarly, the $d$-separation criterion underpins causal compatibility questions and witnessing non-classical correlations, such as in Bell's inequalities and network non-locality, where one asks whether a given causal structure supports a certifiable gap between sets of correlations realizable via classical vs non-classical causal models (see e.g., \cite{Henson_2014}).
Our results create possibilities for exploring analogous questions in cyclic graphs, such as whether meaningful separations between classical and non-classical correlations can be identified, or whether there exist novel cyclic quantum causal models that violate classical compatibility constraints in a genuinely cyclic graph. More conceptually, this may bear insights for the question of whether in a world with causal loops, quantum theory can still be more powerful than classical.

\item {\bf Accounting for interventions.} We have focussed our attention on observed correlations. However, data resulting from active interventions is also central for identifying causal influences. We have shown how objects from the Barrett-Lorenz-Oreshkov \cite{Barrett_2021}, Costa-Shrapnel \cite{Costa_2016} and process matrix/quantum supermaps \cite{Oreshkov_2012, Chiribella_2013} frameworks can be described in our framework. These frameworks allow for arbitrary interventions, as their vertices correspond to empty ``slots’’ where any quantum channel can be plugged in. For every such choice of intervention, one obtains a causal model in our framework. Explicitly modelling interventions in our framework and studying the causal inference implications of allowing different classes of interventions is a subject of an upcoming work based on the master's thesis~\cite{Master_thesis}. Such problems have been considered in the acyclic quantum literature \cite{Friend_2023}. Beyond this, there is scope to extend these works to systematically investigate causal inference (inferring the causal graph from available data) and causal identification (identifying the causal mechanisms, such as channels, from available data) in cyclic graphs using correlations and interventions.

\item {\bf Infinite-dimensional, continuous case and $\sigma$-separation} In this work, we have focused on causal models for finite-dimensional quantum systems, which embed classical functional models on finite-cardinality variables (including non-uniquely solvable models). A direction for future work would be to identify whether and to what extent our results can be generalised to the infinite-dimensional case. This would require considering generalisations of the post-selected teleportation protocol to the infinite-dimensional case and a careful, measure-theoretic treatment of the associated success probabilities.

While cyclic quantum causal modelling frameworks for infinite-dimensional systems have not been considered to our knowledge, the classical literature has explored functional models or structural equation models (fCMs)\footnote{In~\cite{Forre_2017}, these are denoted as SEMs instead of fCMs. The two acronyms refer to the same object for our present purposes.} with continuous variables, including cyclic graphs. 
In this context, a cyclic generalization of $d$-separation, known as $\sigma$-separation, was proposed in \cite{Forre_2017} and shown to be sound and complete for a subclass called modular fCMs (mfCMs) that satisfy a unique solvability property. The relationship between $p$-separation (introduced here) and $\sigma$-separation is discussed in detail in our companion paper \cite{Sister_paper}, which focuses on classical causal models. Notably, we find that neither concept is strictly more general than the other, although both extend $d$-separation to cyclic settings\footnote{This is perhaps to be expected as they consider different domains, $\sigma$-separation was constructed for uniquely solvable but possibly continuous variable classical fCMs while $p$-separation for possibly non-uniquely solvable finite cardinality/dimensional classical and quantum models.}. Moreover, this motivates open questions regarding gaps between different graph separation properties, e.g., under what conditions (on the graph and causal model) can we have different conditional independence constraints imposed by the soundness of two graph separation properties (among $d$, $\sigma$ and $p$ separation), where a causal model violates one of the constraints and satisfies the other?
The results and further examples of \cite{Sister_paper} present relevant insights and techniques for this question. 

Additionally, our results also indicate the possibility of correlation gaps between all finite-dimensional causal models (classical and quantum) and continuous variable classical models. This is because there exist scenarios (see \cite{Sister_paper}) where there is $p$-separation but $\sigma$-connection. The $p$-separation would imply corresponding conditional independence in all finite-dimensional causal models (by \cref{theorem: dsep theorem}), but there exist continuous variable models leading to conditional dependence as allowed by the $\sigma$-connection in this case \cite{forre_2018, Forre_2017}. In the broader context of quantum foundations, this may bear insights on correlation gaps between measurement statistics from finite and infinite dimensional systems in the presence of consistent causal loops.

\item {\bf Characterizing ``indefinite causal order’’ processes from graph-separation.} While we have introduced a generally sound graph separation criterion, $p$-separation, for cyclic models, it remains a relevant question to understand which subclass of cyclic causal models, in both classical and quantum settings, respects the soundness of $d$-separation. In the classical case, $d$-separation soundness holds for a class of models called ancestrally uniquely solvable \cite{Forre_2017}, though it is unknown whether these are the most general models respecting $d$-separation. On the other hand, indefinite causal order (ICO) processes, particularly those described by process matrix and quantum supermap formalisms \cite{Oreshkov_2012, Chiribella_2013}, are a specific subclass of cyclic causal structures that can be understood in the post-selected closed timelike curve framework \cite{Araujo_2017}. Interesting subsets of such ICO processes, such as unitary, causal, causally non-separable processes have been studied \cite{Araujo_2015, Oreshkov_2016,Araujo_2017purif,Wechs_2019}. It would be valuable to explore concrete relations between the subclass of cyclic quantum causal models respecting $d$-separation soundness and subclasses of models corresponding to valid process matrices, or a subset of these such as unitary, causal or causally non-separable processes. This would address open questions in classical causal modelling while providing an alternative characterization of quantum processes. 

There is already some evidence for such connections between the two domains, but a clear characterization is lacking. For example, process matrix protocols satisfy the conditions for probabilistic Markovianity identified in our framework (\cref{sec: quantum_Markov}), and in acyclic causal models soundness of $d$-separation and Markov factorizations of probabilities share close links \cite{Lauritzen_1990,Geiger_1990}. Furthermore, $d$-separation is tied to unique solvability properties in the classical case \cite{Forre_2017}, while valid classical process matrices are characterised via having unique fixed points under all choices of local interventions \cite{Baumeler_2016_fixedpoints}. This motivates concrete future research at the intersection of cyclic causal models, graph separation, and indefinite causality. Further discussions on the link between our framework and ICO protocols, and related open questions can be found in \cref{sec:process_matrices}.

\item {\bf Understanding space-time structure from operational causality.} Causal models define causation through information-theoretic mechanisms, whereas causality in relativistic physics is tied to space-time geometry. Understanding the  general interface of information-theoretic and spatio-temporal causality notions houses a number of intriguing and unexplored problems. A set of recent works connect causal models to relativistic principles by embedding them in space-time and considering the compatibility between the two notions of causal order \cite{VilasiniColbeckPRA, VilasiniColbeckPRL, VilasiniRennerPRA, VilasiniRennerPRL}. Conversely, one may ask whether certain properties of spatio-temporal causation can ``emerge’’ from information-theoretic models of causation, and can we thus understand them from more basic operational principles. With this motivation, in a follow-up work based on the master's thesis~\cite{Master_thesis}, we link the present causal modelling framework to tensor networks, which are used in the quantum information and relativistic physics communities for studying questions relating to the emergence of space-time from quantum correlations, to enable a transfer of techniques between the two research communities. From a fundamental perspective, it would be interesting to develop the initial insights of this and follow-up works towards the larger goal of understanding the emergence of acyclic causal structures and thus an operational arrow of time from information-theoretic principles, or analogues of space-time properties such as distance and curvature through graph-separation criteria.
\end{itemize}

\bigskip
\paragraph{Acknowledgements} We are grateful to Elie Wolfe for valuable feedback on our framework and its relation to results in classical cyclic causal models. We are thankful to Y\'{i}l\'{e} Y\={\i}ng for suggesting the name $p$-separation for the new graph separation property that we propose here. VV also thanks Tein van der Lugt for insightful discussions on the causal decomposition problem, which is related to the question regarding the link between our framework and the quantum causal modelling framework of Barrett, Lorenz and Oreshkov. VV's research has been supported by an ETH Postdoctoral Fellowship. VV acknowledges support from ETH Zurich Quantum Center, the Swiss National Science Foundation via project No.\ 200021\_188541 and the QuantERA programme via project No.\ 20QT21\_187724. VG and CF acknowledge support from NCCR SwissMAP, the ETH Zurich Quantum Center and the Swiss National Science Foundation via project No.\ 20QU-1\_225171. 
CF acknowledges support from the ETH Foundation.

\bibliographystyle{mybibstyle}
\newcommand{\etalchar}[1]{$^{#1}$}

\newpage
\appendix

\section{Post-selected teleportation}
\label{app:ps teleportation}

Throughout this section, we let $(\telepovm_{AB},\telestate_{BC})$ be an implementation of a postelected teleportation protocol according to \cref{def:ps teleportation}.
We start by confirming that the success probability of the teleportation protocol does not depend on the input state.
This will be needed for the upcoming results.
\begin{lemma}
\label{lemma:teleprob indep input}
    The teleportation success probability $\teleprob$ of \cref{def:ps teleportation} cannot depend on the state $\rho_A$ to be teleported.
\end{lemma}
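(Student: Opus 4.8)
The plan is to show that, even though the teleportation condition \cref{eq:ps teleportation condition} is imposed one density matrix at a time, the scalar $\teleprob$ multiplying $\rho_C$ on the right-hand side is forced to be the same for every input. The strategy is to treat the left-hand side as a fixed linear map and to exploit that the set of density matrices both spans $\linops(\hilmaparg A)$ and is closed under convex combination. Concretely, I would define the linear map
\begin{align}
    \Lambda : \linops(\hilmaparg A) \to \linops(\hilmaparg C), \qquad
    \Lambda(\omega) = \Tr_{AB}[\telepovm_{AB}\,(\omega \otimes \telestate_{BC})],
\end{align}
together with the canonical linear isomorphism $\iota : \linops(\hilmaparg A) \to \linops(\hilmaparg C)$ implementing the identification $\hilmaparg A = \hilmaparg C$, so that $\iota(\rho_A) = \rho_C$. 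Writing the constant as an a priori state-dependent function, \cref{eq:ps teleportation condition} becomes $\Lambda(\rho_A) = \teleprob(\rho_A)\,\iota(\rho_A)$ for all density matrices $\rho_A$, and the goal is to prove that $\rho_A \mapsto \teleprob(\rho_A)$ is constant.

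The key step is a short convexity argument. I would take two density matrices $\rho_A$ and $\sigma_A$ that are linearly independent as operators; their mixture $\mu_A = \tfrac12(\rho_A + \sigma_A)$ is again a density matrix. Evaluating $\Lambda(\mu_A)$ first through the linearity of $\Lambda$ and $\iota$, and then through the defining condition applied to $\mu_A$, and subtracting the two expressions, gives
\begin{align}
    \bigl(\teleprob(\rho_A) - \teleprob(\mu_A)\bigr)\,\iota(\rho_A)
    + \bigl(\teleprob(\sigma_A) - \teleprob(\mu_A)\bigr)\,\iota(\sigma_A) = 0.
\end{align}
Because $\iota$ is an isomorphism, $\iota(\rho_A)$ and $\iota(\sigma_A)$ are linearly independent, so both coefficients must vanish and hence $\teleprob(\rho_A) = \teleprob(\sigma_A)$. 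Two density matrices that are \emph{linearly dependent} must be proportional, hence equal since both have unit trace, so the conclusion holds trivially in that case as well (and when $\dim(\hilmaparg A) = 1$ there is a unique state and nothing to prove). Conceptually, this is the statement that the superoperator $\iota^{-1}\circ\Lambda$ has every density matrix as an eigenvector, which forces it to be a scalar multiple of the identity.

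I do not anticipate a genuine obstacle: the result is a direct consequence of the linearity of $\Lambda$ combined with the convex and spanning structure of the state space, and the only care needed is the clean separation into the linearly independent and dependent cases, together with the observation that the identification $\hilmaparg A = \hilmaparg C$ preserves linear independence. A complementary remark worth recording, which also fixes the numerical value, is that taking the trace of the defining condition and using $\Tr[\rho_C] = 1$ yields $\teleprob(\rho_A) = \Tr[\Lambda(\rho_A)] = \Tr[\telepovm_{AB}\,(\rho_A \otimes \Tr_C[\telestate_{BC}])]$; this re-exhibits $\teleprob$ as a linear functional of $\rho_A$. Note, however, that linearity on its own does not give constancy — the convexity step above is what upgrades it — so I would present the convexity argument as the core of the proof and this trace identity only as a way to read off the common value of $\teleprob$.
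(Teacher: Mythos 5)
Your proof is correct and takes essentially the same approach as the paper: both exploit linearity of the left-hand side of \cref{eq:ps teleportation condition} applied to a convex mixture of two states. Your variant is slightly more streamlined — where the paper fixes $\teleprob$ of the mixture by taking a trace and then extracts a quadratic identity in the mixing parameter $\lambda$, you work at $\lambda=\tfrac12$ and conclude directly from linear independence of the two states (correctly noting that distinct density matrices are automatically linearly independent), so no gap to report.
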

\begin{proof}
    Suppose that $\teleprob$ is allowed to depend on the input state, so that 
    \begin{align}
        \Tr_{AB}[\telepovm_{AB} \rho_A \telestate_{BC}] = \teleprob(\rho) \rho_C.
    \end{align}
    Suppose that $\teleprob(\rho)$ was not constant, so that there exist $\rho \neq \rho'$ with $\teleprob(\rho) \neq \teleprob(\rho')$.
    Consider a mixture of $\rho$ and $\rho'$ to be teleported, for some $\lambda \in [0,1]$:
    \begin{align}
        \Tr_{AB}[\telepovm_{AB} (\lambda\rho_A + (1-\lambda)\rho'_A) \telestate_{BC}] 
        &= \teleprob\big( \lambda\rho + (1-\lambda)\rho' \big) \big( \lambda\rho_C + (1-\lambda)\rho'_C \big). \label{eq:proof tele prob 1}
    \end{align}
    By linearity, we also have
    \begin{align}
        \Tr_{AB}[\telepovm_{AB} (\lambda\rho_A + (1-\lambda)\rho'_A) \telestate_{BC}] 
        &= \lambda \Tr_{AB}[\telepovm_{AB} \rho_A \telestate_{BC}] + (1-\lambda) \Tr_{AB}[\telepovm_{AB} \rho'_A \telestate_{BC}] \nonumber\\
        &= \lambda \teleprob(\rho) \rho_C + (1-\lambda) \teleprob(\rho') \rho'_C. \label{eq:proof tele prob 2}
    \end{align}
    Equating \cref{eq:proof tele prob 1,eq:proof tele prob 2}, we get 
    \begin{align}
    \label{eq:proof tele prob 3}
        \teleprob\big( \lambda\rho + (1-\lambda)\rho' \big) \big( \lambda\rho_C + (1-\lambda)\rho'_C \big)
        &= \lambda \teleprob(\rho) \rho_C + (1-\lambda) \teleprob(\rho') \rho'_C.
    \end{align}
    Taking the trace of this equation yields
    \begin{align}
        \teleprob\big( \lambda\rho + (1-\lambda)\rho' \big)
        &= \lambda \teleprob(\rho) + (1-\lambda) \teleprob(\rho').
    \end{align}
    Inserting this identity back in \cref{eq:proof tele prob 3}, we obtain a quadratic expression in $\lambda$:
    \begin{align}
        (\teleprob(\rho)-\teleprob(\rho'))(\rho_C - \rho'_C)
        \lambda^2
        +
        ({\cdots}) \lambda + ({\cdots}) = 0.
    \end{align}
    This equation needs to hold for all $\lambda \in [0,1]$, which implies in particular that the term proportional to $\lambda^2$ must vanish.
    This is a contradiction to our assumption that $\rho \neq \rho'$ and $\teleprob(\rho) \neq \teleprob(\rho')$.
\end{proof}
We now prove a type of self-testing result: the pre- and post-selection implementing a post-selected teleportation protocol have to take a specific form.
\begin{lemma}[Self-testing of post-selected teleportation implementations]
\label{lemma:self test}
    Let $d_A = \dim(\hilmaparg A) = \dim(\hilmaparg C)$ and $d_B = \dim(\hilmaparg B)$.
    It holds that $(\telepovm_{AB},\telestate_{BC})$ is a post-selected teleportation protocol implementation as in \cref{def:ps teleportation}, with $\teleprob \in (0,1]$ the teleportation probability, if and only if $d_B \geq d_A$ and there exist coefficients $\{0\neq \telestate_k \in \mathbb{R}\}_{k=1}^{d_A}$ satisfying
    \begin{align}
        \label{eq:selftest coeffs}
        \sum_{k=1}^{d_A} \telestate_k^2 = 1, \qquad 
        \sum_{k=1}^{d_A} \frac{1}{\telestate_k^2} \leq \frac{1}{\teleprob},
    \end{align}
    and there exist an ancilla Hilbert space $\hilmaparg{B'}$, as well as orthonormal bases $\{\ket k_A\}_{k=1}^{d_A}$\footnote{This basis is also an orthonormal basis $\{\ket k_C\}_{k=1}^{d_A}$ for $\hilmaparg C = \hilmaparg A$.} for $\hilmaparg A$, $\{\ket k_{BB'}\}_{k=1}^{d_Bd_{B'}}$ for $\hilmaparg B \otimes \hilmaparg{B'}$ such that
    \begin{align}
        \telestate_{BC} = \Tr_{B'}[\ketbra{\telestate}_{BB'C}], \qquad
        \Pi_{BB'}\telepovm_{AB}\Pi_{BB'} = \ketbra{\telepovm}_{ABB'},
    \end{align}
    where we defined the pure state $\ket\telestate_{BB'C}$, the subnormalized state $\ket{\telepovm}_{ABB'}$ and the orthogonal projector $\Pi_{BB'}$ as follows:
    \begin{align}
        \ket{\telestate}_{BB'C} &= \sum_{k=1}^{d_A} \telestate_k \ket{k}_{BB'} \otimes \ket{k}_C, \\
        \ket{\telepovm}_{ABB'} &= \sqrt{\teleprob}\sum_{k=1}^{d_A} \frac{1}{\telestate_k} \ket{k}_{A} \otimes \ket{k}_{BB'}, \quad
        \Pi_{BB'} = \sum_{k=1}^{d_A} \ketbra{k}_{BB'}.
    \end{align}
\end{lemma}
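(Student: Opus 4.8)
The plan is to prove both directions by reducing the teleportation condition of \cref{def:ps teleportation} to a system of bilinear relations between the ``matrix elements'' of $\telepovm_{AB}$ and $\telestate_{BC}$, and then reading off the stated structure via a Schmidt decomposition of a purification of $\telestate_{BC}$. Throughout I treat $\teleprob$ as a constant, which is justified by \cref{lemma:teleprob indep input}, and I use the identification $\hilmaparg A = \hilmaparg C$ to regard $\rho_C$ as $\rho_A$ transported through the fixed bases.

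For the forward direction, I first introduce an ancilla $\hilmaparg{B'}$ and a purification $\ket\telestate_{BB'C}$ of $\telestate_{BC}$, and take its Schmidt decomposition across the $C$ versus $BB'$ cut, $\ket\telestate_{BB'C} = \sum_k s_k \ket{k}_{BB'}\ket{k}_C$, with $s_k\geq 0$ and $\{\ket{k}_{BB'}\}$ orthonormal; I extend the latter to a full basis of $\hilmaparg B\otimes\hilmaparg{B'}$ and identify $\{\ket{k}_C\}$ with a basis $\{\ket{k}_A\}$ of $\hilmaparg A$. Substituting $\telestate_{BC}=\Tr_{B'}[\ketbra\telestate_{BB'C}]$ into the teleportation condition and extracting the matrix element $\bra{m}_C(\cdot)\ket{n}_C$ on both sides yields, for every operator $\rho_A$, the scalar identity $s_m s_n\,\Tr_A[\bra{n}_{BB'}(\telepovm_{AB}\otimes\id_{B'})\ket{m}_{BB'}\,\rho_A] = \teleprob\,\bra{m}_A\rho_A\ket{n}_A$. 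Matching coefficients of $\rho_A$ gives $s_m s_n\,\bra{n}_{BB'}(\telepovm_{AB}\otimes\id_{B'})\ket{m}_{BB'} = \teleprob\,\ketbraa{n}{m}_A$. Taking $m=n$ and using $\teleprob>0$ forces every $s_k\neq 0$, so the Schmidt rank equals $d_A$ and $\sum_k s_k^2=1$ follows from normalisation of $\ket\telestate$; setting $\telestate_k:=s_k$ then puts $\telestate_{BC}=\Tr_{B'}[\ketbra\telestate_{BB'C}]$ in the stated form. Dividing by $s_m s_n$ and comparing with the matrix elements of $\ketbra\telepovm_{ABB'}$ for the claimed $\ket\telepovm$, which are $\teleprob(\telestate_m\telestate_n)^{-1}\ketbraa{n}{m}_A$, shows precisely $\Pi_{BB'}\telepovm_{AB}\Pi_{BB'}=\ketbra\telepovm_{ABB'}$, since both sides are supported on the range of $\Pi_{BB'}$.

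Two auxiliary facts complete this direction. The bound $d_B\geq d_A$ is cleanest without the ancilla: extracting matrix elements of the teleportation condition directly gives $\Tr_B[\telepovm^{ji}_B\,\telestate^{mn}_B]=\teleprob\,\delta_{mi}\delta_{nj}$, where $\telepovm^{ji}_B:=\bra{j}_A\telepovm_{AB}\ket{i}_A$ and $\telestate^{mn}_B:=\bra{m}_C\telestate_{BC}\ket{n}_C$ lie in $\linops(\hilmaparg B)$. Contracting any linear relation $\sum_{mn}c_{mn}\telestate^{mn}_B=0$ against $\telepovm^{ji}_B$ forces $\teleprob\,c_{ij}=0$, hence $c_{ij}=0$, so the $d_A^2$ operators $\{\telestate^{mn}_B\}$ are linearly independent in the $d_B^2$-dimensional space $\linops(\hilmaparg B)$, giving $d_A^2\leq d_B^2$. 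The inequality $\sum_k 1/\telestate_k^2\leq 1/\teleprob$ then follows from $\telepovm_{AB}\leq\id$: since $\Pi_{BB'}\telepovm_{AB}\Pi_{BB'}=\ketbra\telepovm\leq\id_A\otimes\Pi_{BB'}$, the unique nonzero eigenvalue $\langle\telepovm|\telepovm\rangle=\teleprob\sum_k 1/\telestate_k^2$ of $\ketbra\telepovm$ cannot exceed $1$.

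For the backward direction I verify the teleportation condition by direct computation. The key observation is that the $BB'$-support of $\ket\telestate$ lies in the range of $\Pi_{BB'}$, so $\Tr_{AB}[\telepovm_{AB}\,\rho_A\,\telestate_{BC}]=\Tr_{ABB'}[(\telepovm_{AB}\otimes\id_{B'})(\rho_A\otimes\ketbra\telestate_{BB'C})]$ depends on $\telepovm_{AB}$ only through $\Pi_{BB'}\telepovm_{AB}\Pi_{BB'}=\ketbra\telepovm$. Inserting the explicit forms of $\ket\telepovm$ and $\ket\telestate$ and contracting the shared $A,B,B'$ indices, the factors $1/\telestate_k$ and $\telestate_k$ cancel pairwise, leaving $\teleprob\sum_{k,k'}\bra{k'}_A\rho_A\ket{k}_A\,\ketbraa{k'}{k}_C=\teleprob\,\rho_C$, which is exactly the required condition. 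Here $\sum_k\telestate_k^2=1$ ensures $\telestate_{BC}$ is a valid density matrix, and $\sum_k 1/\telestate_k^2\leq 1/\teleprob$ is precisely what keeps $\ketbra\telepovm\leq\id_A\otimes\Pi_{BB'}$ consistent with $\telepovm_{AB}$ being a POVM element. I expect the main obstacle to be purely bookkeeping: carrying the ancilla $B'$, the projector $\Pi_{BB'}$, and the $\hilmaparg A\cong\hilmaparg C$ identification through the index contractions so that the Schmidt data of $\telestate_{BC}$ and the rank-one structure of $\Pi_{BB'}\telepovm_{AB}\Pi_{BB'}$ align exactly as claimed.
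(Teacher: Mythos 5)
Your proof is correct and follows essentially the same route as the paper's: purify $\telestate_{BC}$, Schmidt-decompose across the $BB'|C$ cut, extract matrix elements of the teleportation condition (using linearity to pass from density matrices to all of $\linops(\hilmaparg{A})$) to force full Schmidt rank and solve for $\Pi_{BB'}\telepovm_{AB}\Pi_{BB'}$, and obtain the inequality on $\sum_k 1/\telestate_k^2$ from $\telepovm_{AB}\leq\id$. The only cosmetic differences are that you establish $d_B\geq d_A$ via linear independence of the $d_A^2$ operators $\bra{m}_C\telestate_{BC}\ket{n}_C$ in $\linops(\hilmaparg{B})$ rather than the paper's rank argument for the composed map $R\circ O$, and that you spell out the converse direction which the paper dismisses as trivial.
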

\begin{proof}
    The ``if'' direction is trivial.
    We now prove the ``only if'' direction: suppose that $(\telepovm_{AB},\telestate_{BC})$ implements a post-selected teleportation protocol.

    \noindent We first prove that $d_B \geq d_A$.
    Since the teleportation probability is independent of the input state (as shown in \cref{lemma:teleprob indep input}), by linearity, we must have that $R\circ O = \teleprob \mathcal I$, where $\mathcal I : \linops(\hilmaparg A) \mapsto \linops(\hilmaparg C)$ is the identity map, and $O : \linops(\hilmaparg A) \mapsto \linops(\hilmaparg B)$, $R : \linops(\hilmaparg B) \mapsto \linops(\hilmaparg C)$ are defined as
    \begin{align}
        O(\rho_A) = \Tr_A[\telepovm_{AB}\rho_A], \quad
        R(\rho_B) = \Tr_B[\telestate_{BC}\rho_B].
    \end{align}
    Since the rank of the identity map $\mathcal I$ is equal to $d_A^2$ (the dimension of $\linops(\hilmaparg A)$, the complex vector space of linear operators on $\hilmaparg A)$), we must have that the rank of $R$ and $O$ is at least equal to $d_A^2$.
    However, the rank of a linear map is at most the minimum of the input and output vector space dimension.
    In particular, the rank of $R$ and $O$ are at most the vector space dimension of $\linops(\hilmaparg B)$, which is equal to $d_B^2$.
    We thus obtain $d_B \geq d_A$ as a necessary condition.

    \noindent We now prove the bulk of the self-testing result.
    We let $\hilmaparg{B'}$ be an ancilla Hilbert space so that there exists a pure state $\ket{\telestate}_{BB'C}$ that purifies $\telestate_{BC}$, i.e.,
    \begin{align}
        \telestate_{BC} = \Tr_{B'}[\ketbra{\telestate}_{BB'C}],
    \end{align}
    This implies the equality of \cref{eq:selftest coeffs}.
    
    We consider the Schmidt decomposition of the state $\ket\telestate_{BB'C}$ with respect to the $BB'|C$ partition: this implies that there exists an orthornormal basis of $\hilmaparg{B}\otimes\hilmaparg{B'}$ denoted $\{\ket{k}_{BB'}\}_{k=1}^{d_Bd_{B'}}$, an orthonormal basis of $\hilmaparg C$ denoted $\{\ket{k}_C\}_{k=1}^{d_A}$, and coefficients $\{0\neq \telestate_k \in \mathbb{R}\}_{k=1}^{d_\telestate}$, where $d_\telestate \leq \min(d_Bd_{B'},d_A)$, such that
    \begin{align}
    \label{eq:selftest 1}
        \ket\telestate_{BB'C} = \sum_{k=1}^{d_\telestate} \telestate_k \ket{k}_{BB'} \otimes \ket{k}_C.
    \end{align}
    The teleportation condition can be rewritten as follows, tracing out over the ancilla: for all density matrix $\rho_A \in \linops(\hilmaparg A)$,
    \begin{align}
    \label{eq:selftest 45}
        \Tr_{BB'}[\telepovm_{AB} \rho_A \telestate_{BB'C}] = \teleprob \rho_C.
    \end{align}
    Now, we define the orthogonal projector
    \begin{align}
        \Pi_{BB'} := \sum_{k=1}^{d_\telestate} \ketbra{k}_{BB'}.
    \end{align}
    This projector is such that $\Pi_{BB'} \ket{\telestate}_{BB'C} = \ket\telestate_{BB'C}$.
    We then define
    \begin{align}
        \telepovm_{ABB'} := (\id_A \otimes \Pi_{BB'}) (\telepovm_{AB} \otimes \id_{B'}) (\id_A \otimes \Pi_{BB'}),
    \end{align}
    which can be checked to be a valid POVM element.
    Using the ``partial cyclicity'' of the partial trace, \cref{eq:selftest 45} is equivalent to 
    \begin{align}
    \label{eq:selftest 98}
        \Tr_{BB'}[\telepovm_{ABB'} \rho_A \telestate_{BB'C}] = \teleprob \rho_C.
    \end{align}
    We let $\{\ket{k}_A\}_{k=1}^{d_A}$ be the orthonormal basis of $\hilmaparg A$ that corresponds to the orthonormal basis $\{\ket{k}_C\}_k$ of $\hilmaparg C$ through the identity that the teleportation protocol assumes, namely, $\hilmaparg A = \hilmaparg C$.
    Since the teleportation probability $\teleprob$ does not depend on $\rho$, we can use the linearity of \cref{eq:selftest 98}, together with the fact that the linear span of density matrices consists of all (possibly non-Hermitian) linear operators on the Hilbert space, to substitute $\ketbraa{i}{j}_{A}$ instead of $\rho_A$.
    Then, we have for all $i,j \in \{1,\dots,d_A\}$,
    \begin{align}
        \Tr_{BB'}[\telepovm_{ABB'} \ketbraa{i}{j}_A \telestate_{BB'C}] = \teleprob \ketbraa{i}{j}_C.
    \end{align}
    Introducing the Schmidt decomposition of \cref{eq:selftest 1}, we obtain
    \begin{align}
    \label{eq:selftest 2}
        \sum_{k,l=1}^{d_\telestate} \telestate_k \telestate_l \bra{j}_A \bra{l}_{BB'} \telepovm_{ABB'} \ket{i}_A \ket{k}_{BB'} \ketbraa{k}{l}_{C} = \teleprob \ketbraa{i}{j}_C.
    \end{align}
    This directly implies that $d_\telestate = d_A$, which in turn implies that $d_Bd_{B'} \geq d_A$.
    Furthemore, we can solve \cref{eq:selftest 2} for $\telepovm_{ABB'}$: for all $i,j \in \{1,\dots,d_A\}$, for all $l,k \in \{1,\dots,d_A\}$,
    \begin{align}
        \bra{j}_A \bra{l}_{BB'} \telepovm_{ABB'} \ket{i}_A \ket{k}_{BB'} = \frac{\teleprob}{\telestate_k\telestate_l} \delta_{ik}\delta_{jl}.
    \end{align}
    This yields
    \begin{align}
        \telepovm_{ABB'} = \Pi_{BB'} \telepovm_{ABB'} \Pi_{BB'} = \teleprob \sum_{k,l=1}^{d_A} \frac{1}{\telestate_k\telestate_l} \ketbraa{k}{l}_A \otimes \ketbraa{k}{l}_{BB'} =: \ketbra{\telepovm}_{ABB'},
    \end{align}
    where we defined the following sub-normalized state:
    \begin{align}
        \ket{\telepovm}_{ABB'} = \sqrt{\teleprob} \sum_{k=1}^{d_A} \frac{1}{\telestate_k} \ket{k}_A \ket{k}_{BB'}.
    \end{align}
    Note that for $\telepovm_{ABB'}$ to be a valid POVM element, we need $\telepovm_{ABB'} \leq \id_{ABB'}$, which implies that the norm of the above state must be at most one, implying the inequality of \cref{eq:selftest coeffs}.
\end{proof}
We can now prove that the optimal teleportation probability is given by $1/d_A^2$, which is achieved by the Bell state preparation and measurement of \cref{def:bell tele}.
\begin{lemma}[Maximal teleportation probability]
    Given $\hilmaparg A = \hilmaparg C$, the maximal teleportation probability $\teleprob$ that can be achieved within \cref{def:ps teleportation} is $1/d_A^2$.
\end{lemma}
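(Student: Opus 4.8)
The plan is to leverage the self-testing result of \cref{lemma:self test}, which already constrains the Schmidt coefficients of any valid post-selected teleportation implementation, and then reduce the maximisation to a single sharp inequality. First I would invoke that lemma to extract, for any implementation $(\telepovm_{AB},\telestate_{BC})$ achieving teleportation probability $\teleprob$, the existence of nonzero real coefficients $\{\telestate_k\}_{k=1}^{d_A}$ satisfying the two relations of \cref{eq:selftest coeffs}, namely the normalisation $\sum_{k=1}^{d_A}\telestate_k^2 = 1$ together with the inequality $\sum_{k=1}^{d_A}\telestate_k^{-2} \leq \teleprob^{-1}$.

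The core step is an application of the Cauchy--Schwarz inequality to the vectors with components $\telestate_k$ and $1/\telestate_k$:
\begin{align}
    d_A^2 = \left(\sum_{k=1}^{d_A} \telestate_k \cdot \frac{1}{\telestate_k}\right)^2 \leq \left(\sum_{k=1}^{d_A}\telestate_k^2\right)\left(\sum_{k=1}^{d_A}\frac{1}{\telestate_k^2}\right) = \sum_{k=1}^{d_A}\frac{1}{\telestate_k^2}.
\end{align}
Combining this with the bound from the self-testing lemma yields $d_A^2 \leq \sum_{k=1}^{d_A}\telestate_k^{-2} \leq \teleprob^{-1}$, hence $\teleprob \leq 1/d_A^2$. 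This establishes that $1/d_A^2$ is an upper bound on the achievable teleportation probability.

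To show that the bound is tight, I would observe that the Bell post-selected teleportation protocol of \cref{def:bell tele} is a valid implementation with success probability exactly $\teleprob = 1/\dim(\hilmaparg A)^2 = 1/d_A^2$; equivalently it corresponds to the uniform choice $\telestate_k = 1/\sqrt{d_A}$ for all $k$, which saturates the Cauchy--Schwarz step above, since equality there holds precisely when all $\telestate_k^2$ are equal. Therefore the upper bound is attained, and the maximal teleportation probability is $1/d_A^2$.

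I do not anticipate any genuine obstacle, as all the structural work has already been carried out in \cref{lemma:self test}, leaving only the single sharp inequality. The one point meriting a little care is checking that equality in Cauchy--Schwarz is compatible with the normalisation constraint, which it is via $\telestate_k^2 = 1/d_A$, so that the value $1/d_A^2$ is not merely an upper bound but is genuinely achieved.
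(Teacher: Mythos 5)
Your proof is correct and follows essentially the same route as the paper: both rely on the constraints $\sum_k \telestate_k^2 = 1$ and $\sum_k \telestate_k^{-2} \leq 1/\teleprob$ from \cref{lemma:self test}, establish the bound $\sum_k \telestate_k^{-2} \geq d_A^2$ (the paper phrases this as minimising the norm of an auxiliary vector $\vec\psi$ subject to $\vec\telestate\cdot\vec\psi = d_A$, which is the same Cauchy--Schwarz argument you apply directly to $(\telestate_k)$ and $(1/\telestate_k)$), and cite the Bell protocol of \cref{def:bell tele} for achievability. Your direct application of Cauchy--Schwarz is, if anything, a slightly cleaner packaging of the identical mathematical content.
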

\begin{proof}
    First, we have seen that the teleportation probability $\teleprob = 1/d_A^2$ is always achievable by the Bell state implementation of \cref{def:bell tele}.
    Indeed, we can put this implementation in the form of \cref{lemma:self test}: we have $\telestate_k = \frac{1}{\sqrt{d_A}}$, and the pure state $\ket{E}_{BB'C} = \ket{\bellstate}_{BB'C}$ defining the POVM element of the implementation can be written as
    \begin{align}
        \ket{E}_{BB'C} = \sqrt{\frac{1}{d_A^2}} \sum_{k=1}^{d_A} \sqrt{d_A} \ket{k}_{BB'} \otimes \ket k_{C},
    \end{align}
    thus showing that for this protocol, $\teleprob = 1/d_A^2$.

    \noindent We now wish to show that any teleportation probability is upper-bounded by $1/d_A^2$.
    For a vector $\vec\telestate = (\telestate_k)_k$, we denote with $\twonorm{\vec\telestate}^2 = \sum_k \telestate_k^2$ the Euclidean norm squared.
    First, we argue that the maximal teleportation probability is given by
    \begin{align}
        &\max\Biglset{\teleprob}{\exists\vec\telestate\in\mathbb{R}^{d_A} \st \telestate_k \neq 0, \twonorm{\vec\telestate}^2 = 1 \textup{ and } \textstyle\sum_{k=1}^{d_A} \frac{1}{\telestate_k^2} \leq \frac{1}{\teleprob}} \\
        = &\max\Biglset{\teleprob}{\exists\vec\telestate\in\mathbb{R}^{d_A} \st \telestate_k \neq 0, \twonorm{\vec\telestate}^2 = 1 \textup{ and } \textstyle\sum_{k=1}^{d_A} \frac{1}{\telestate_k^2} = \frac{1}{\teleprob}} \label{eq:max tele 1} \\
        \leq &\max\Biglset{\teleprob}{\exists\vec\telestate, \vec\psi\in\mathbb{R}^{d_A} \st \twonorm{\vec\telestate}^2 = 1 \textup{ and } \vec\telestate\cdot\vec\psi = d_A \textup{ and } \twonorm{\vec\psi}^2 = \textstyle\frac{1}{\teleprob}} \label{eq:max tele 2},
    \end{align}
    where the last inequality follows from seeing that any $\teleprob$ achievable for some $\vec\telestate$ in \cref{eq:max tele 1} can be achieved by using the same $\vec\telestate$ in \cref{eq:max tele 2} together with setting $\psi_k = 1/\telestate_k$.
    To maximize $\teleprob$, we need to minimize the norm of $\vec\psi$.
    However, subject to the constraints $\vec\telestate\cdot\vec\psi = d_A$ and $\twonorm{\vec\telestate} = 1$, this means that we need to pick $\vec\psi = d_A\vec\telestate$, so that $\twonorm{\vec\psi}^2 = d_A^2$, and the maximum teleportation probability is indeed $1/d_A^2$.
\end{proof}

We now prove the following lemma, which will allow us to prove that the probabilities associated to cyclic causal models according to our proposed probability rule do not depend on the choice of post-selected teleportation implementation.

\newcommand{\proofmap}{\mathcal M}

\begin{lemma}[Cyclic composition is independent of post-selected teleportation implementation]
\label{lemma:cyclic indep of tele implementation}
    It holds that for all post-selected teleportation implementation $(\telepovm_{AB},\telestate_{BC})$ with associated teleportation probability $\teleprob$ (see \cref{def:ps teleportation}),
    \begin{align}
        \Tr_{AB}[\telepovm_{AB} \proofmap_{A|C}(\telestate_{BC})] = \teleprob\,\selfcycle(\proofmap_{A|C}),
    \end{align}
    where $\selfcycle(\proofmap_{A|C})$ is defined in~\cref{def:selfcycle_v3}.
\end{lemma}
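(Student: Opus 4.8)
The plan is to reduce the statement to the defining property of a post-selected teleportation protocol, \cref{eq:ps teleportation condition}, by exploiting linearity. First I would observe that, since the success probability $\teleprob$ is independent of the teleported state (\cref{lemma:teleprob indep input}), both sides of the teleportation condition $\Tr_{AB}[\telepovm_{AB}\rho_A\telestate_{BC}] = \teleprob\rho_C$ are \emph{linear} in $\rho_A$. As density matrices span the whole space $\linops(\hilmaparg A)$, the identity extends to arbitrary operators; fixing an orthonormal basis and using the identification $\hilmaparg A = \hilmaparg C$ to match $\ketbraa{i}{j}_A$ with $\ketbraa{i}{j}_C$, this gives
\begin{align}
\Tr_{AB}[\telepovm_{AB}(\ketbraa{i}{j}_A\otimes\telestate_{BC})] = \teleprob\,\ketbraa{i}{j}_C \qquad \text{for all } i,j.
\end{align}

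Next I would expand $\telestate_{BC}$ in the basis of $C$, writing $\telestate_{BC} = \sum_{m,n}\tau^B_{mn}\otimes\ketbraa{m}{n}_C$ with $\tau^B_{mn} := \bra{m}_C\telestate_{BC}\ket{n}_C \in \linops(\hilmaparg B)$. Substituting this into the previous display and matching the coefficients of $\ketbraa{m}{n}_C$ (the trace leaves $C$ untouched) yields the key scalar identity
\begin{align}
\label{eq:planF}
\Tr_{AB}[\telepovm_{AB}(\ketbraa{i}{j}_A\otimes\tau^B_{mn})] = \teleprob\,\delta_{im}\delta_{jn}.
\end{align}
This is the heart of the argument: it encodes that the pre- and post-selection act exactly like a subnormalized identity channel from $C$ to $A$.

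For the left-hand side of the lemma I would use linearity of $\proofmap_{A|C}$ to write $\proofmap(\ketbraa{i}{j}_C) = \sum_{a,b}\proofmap^{ab}_{ij}\ketbraa{a}{b}_A$ with $\proofmap^{ab}_{ij} := \bra{a}_A\proofmap(\ketbraa{i}{j}_C)\ket{b}_A$, so that
\begin{align}
\proofmap_{A|C}(\telestate_{BC}) = \sum_{i,j}\tau^B_{ij}\otimes\proofmap(\ketbraa{i}{j}_C) = \sum_{i,j,a,b}\proofmap^{ab}_{ij}\,\tau^B_{ij}\otimes\ketbraa{a}{b}_A.
\end{align}
Tracing against $\telepovm_{AB}$ and invoking \cref{eq:planF} collapses the sum through the Kronecker deltas:
\begin{align}
\Tr_{AB}[\telepovm_{AB}\proofmap_{A|C}(\telestate_{BC})] = \sum_{i,j,a,b}\proofmap^{ab}_{ij}\,\teleprob\,\delta_{ai}\delta_{bj} = \teleprob\sum_{i,j}\proofmap^{ij}_{ij} = \teleprob\,\selfcycle(\proofmap_{A|C}),
\end{align}
where the last equality is exactly the definition of $\selfcycle$ (\cref{def:selfcycle_v3}).

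The one delicate point — and the step I would write out most carefully — is the linear extension of the teleportation condition from density matrices to the non-Hermitian operators $\ketbraa{i}{j}_A$. This is legitimate precisely because \cref{lemma:teleprob indep input} guarantees that $\teleprob$ is a constant, so that $\rho_A\mapsto\Tr_{AB}[\telepovm_{AB}\rho_A\telestate_{BC}]$ and $\rho\mapsto\teleprob\rho_C$ are both linear maps agreeing on the spanning set of density matrices. One should also keep track of the identification $\hilmaparg A=\hilmaparg C$ so that the basis in \cref{eq:planF} matches the one used in $\selfcycle$; since $\selfcycle$ is basis-independent, any choice works. Everything else is bookkeeping with matrix elements, with no further structural input required.
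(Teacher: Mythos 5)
Your proof is correct, but it takes a genuinely different route from the paper's. The paper proves this lemma by first invoking the self-testing result (\cref{lemma:self test}), which pins down the explicit Schmidt-decomposed form of any valid pair $(\telepovm_{AB},\telestate_{BC})$ — a purification $\ket\telestate_{BB'C}=\sum_k \telestate_k\ket k_{BB'}\ket k_C$ and a rank-one projected POVM element $\ketbra\telepovm_{ABB'}$ with $\ket\telepovm_{ABB'}=\sqrt\teleprob\sum_k \telestate_k^{-1}\ket k_A\ket k_{BB'}$ — and then computes $\Tr_{ABB'}[\ketbra\telepovm\,\proofmap_{A|C}(\ketbra\telestate)]$ directly, watching the $\telestate_k$ coefficients cancel. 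You instead bypass the structural characterization entirely: you linearize the defining condition \cref{eq:ps teleportation condition} (legitimate by \cref{lemma:teleprob indep input}, and the same extension to non-Hermitian $\ketbraa{i}{j}_A$ that the paper itself uses inside the proof of \cref{lemma:self test}), expand $\telestate_{BC}$ in matrix elements over $C$ to extract the scalar identity $\Tr_{AB}[\telepovm_{AB}(\ketbraa{i}{j}_A\otimes\tau^B_{mn})]=\teleprob\,\delta_{im}\delta_{jn}$, and then contract this against the matrix elements of $\proofmap$. The bookkeeping checks out and the final sum $\sum_{i,j}\bra i_A\proofmap(\ketbraa{i}{j}_C)\ket j_A$ is exactly $\selfcycle(\proofmap_{A|C})$ as defined in \cref{def:selfcycle_v3}. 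Your argument is more elementary and self-contained for this particular statement, since it needs only linearity and state-independence of $\teleprob$ rather than the full self-testing machinery; the paper's route is economical in context because \cref{lemma:self test} is needed elsewhere anyway (for the dimension bound $d_B\geq d_A$ and the optimality of $\teleprob=1/d_A^2$). One could also note that your derivation makes transparent that the lemma holds for an arbitrary linear map $\proofmap_{A|C}$, with no positivity assumption, which is indeed how the lemma is stated.
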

\begin{proof}
    Consider an arbitrary post-selected teleportation implementation $(\telepovm_{AB},\telestate_{BC})$ with associated teleportation probability $\teleprob$, and let $\hilmaparg{B'}$, $\{\ket{k}_A\}_{k=1}^{d}$, $\{\ket{k}_{BB'}\}_{k=1}^{d}$, $\ket\telestate_{BB'C}$ and $\ket\telepovm_{ABB'}$ be as in \cref{lemma:self test}.
    We then have
    \begin{align}
        &\Tr_{AB}[\telepovm_{AB}\proofmap_{A|C}(\telestate_{BC})] \nonumber\\
        &= 
        \Tr_{ABB'}[\ketbra\telepovm_{ABB'} \proofmap_{A|C}(\ketbra{\telestate}_{BB'C})] \nonumber\\
        &= 
        \sum_{k,l,m,n=1}^{d} 
        \Tr_{ABB'}\left[\frac{\teleprob}{\telestate_m\telestate_n} \ketbraa{m}{n}_A \otimes \ketbraa{m}{n}_{BB'} \telestate_k\telestate_l \ketbraa{k}{l}_{BB'} \otimes \proofmap_{A|C}(\ketbraa{k}{l}_{C})\right]
        \nonumber\\
        &= 
        \teleprob\,\selfcycle(\proofmap_{A|C}). \qedhere
    \end{align}
\end{proof}
\section{Links to other quantum causality frameworks}
\label{sec:quantum aspects}

\subsection{Barrett Lorenz Oreshkov cyclic causal models}

In this section, we outline relationships and distinctions between our framework and the cyclic quantum causal modelling formalism of Barrett, Lorenz and Oreshkov (BLO) \cite{Barrett_2021}. We show that whenever the output space of each party $A_i$ factorises into tensor factors, one for each child of the vertex $A_i$, then such causal models in the BLO framework can be recovered faithfully within our framework. This tensor factor restriction of the BLO framework can be seen as a natural cyclic generalization of the acyclic causal modelling framework of Costa and Shrapnel \cite{Costa_2016}, thus in particular, the latter is also recovered in our framework. We then discuss the prospects for faithfully mapping the more general case (without the tensor factor restriction) 
into our framework and its links to the causal decomposition problem. 

\subsubsection{Review of the BLO formalism}
We begin by outlining how cyclic causal models are defined in the BLO formalism, the main ingredients are the Choi-Jamiolkowski isomorphism~\cite{Choi_1975,Jamiolkowski_1972} and the concept of process operators. 
\paragraph{Choi-Jamiolkowski isomorphism.}
With every linear completely positive (CP) map, $\chanmap_{A\mapsto B}: \linops(\hilmap_A)\mapsto \linops(\hilmap_B)$, we can associate a corresponding Choi-Jamiolkowski (CJ) operator, 
\begin{equation}
  \rho_{B|A}=\sum_{i,j}\chanmap_{A\mapsto B}\left(\ket{i}\bra{j}_A\right)\otimes \ket{i}\bra{j}_{A^*}\in \linops(\hilmap_B)\otimes\linops(\hilmap_{A}^*),  
\end{equation} where $\{\ket{i}_A\}$ is an orthonormal basis of $\hilmap_A$ and  $\{\ket{i}_{A^*}\}$ is the corresponding dual basis (an orthonormal basis of the dual space $\hilmap_{A}^*$). When the map is completely positive and trace preserving (CPTP), we have $\Tr_B(\rho_{B|A})=\id_{A^*}$. The inverse isomorphism is given as follows and allows us to compute the action of the channel $\mathcal{E}_{A\mapsto B}$ on an input state $\rho_A$: $\mathcal{E}_{A\mapsto B}(\rho_A)=\Tr[\tau^{\textup{id}}_{AA^*}\rho_{B|A}\rho_A]$, where $\tau^{\textup{id}}_{AA^*}=\sum_{i,j}\ket{i}\bra{j}_A\otimes \ket{i}\bra{j}_{A^*}$ is known as the linking operator. This equation has a striking similarity to classical conditional probabilities (which describe the action of classical channels) $\undefprob(Y)=\sum_X\undefprob(Y|X)\undefprob(X)$ \cite{Leifer_2013}, where $\undefprob$ denotes an arbitrary probability distribution over two random variables $X$ and $Y$.

There are multiple versions of the CJ isomorphism (see for instance \cite{Frembs_2024}), the above version is chosen in the BLO formalism as it is basis independent and yields a positive CJ operator. In particular, the more commonly used version is 
\begin{equation}
\label{eq: choi1}
    \rho^{(1)}_{B|A}=\sum_{i,j}\chanmap_{A\mapsto B}(\ket{i}\bra{j}_A)\otimes \ket{i}\bra{j}_{A}\in \linops(\hilmap_B)\otimes\linops(\hilmap_{A}),
\end{equation} 
which is positive but basis dependent, and another version is given by 
\begin{equation}
\rho_{B|A}^{(2)}:=\sum_{i,j}\chanmap_{A\mapsto B}(\ket{i}\bra{j}_A)\otimes \ket{j}\bra{i}_{A}\in \linops(\hilmap_B)\otimes\linops(\hilmap_{A}),
\end{equation} 
which is basis-independent but not necessarily positive. The inverse isomorphisms in these cases are given as $\mathcal{E}_{A\mapsto B}(\rho_A)=\Tr[\rho^{(1)}_{B|A}(\id_B\otimes \rho_A^T)]=\Tr[\rho^{(2)}_{B|A}(\id_B\otimes \rho_A)]$.

\paragraph{Process operators and probabilities.}
The next ingredient is the concept of process operators, which was previously used in the process matrix~\cite{Oreshkov_2012}
and higher-order quantum process~\cite{Chiribella_2013} frameworks. The idea is to consider multiple laboratories $A_1,...,A_N$, each $A_i$ being associated with a pair of in and output spaces $A_i^{\textup{in}}:=\linops\left(\hilmap_{A_i^{\textup{in}}}\right)$, $A_i^{\textup{out}}:=\linops\left(\hilmap_{A_i^{\textup{out}}}\right)$. An agent may perform a local quantum operation within each such lab, which corresponds to a quantum instrument, i.e., a set of CP maps $\{\mathcal{M}_{\outcome_i}:A_i^{\textup{in}}\mapsto A_i^{\textup{out}}\}_{\outcome_i}$ such that $\mathcal{M}:=\sum_{\outcome_i}\mathcal{M}_{\outcome_i}$ is CPTP. Generally, the local operations can be parametrised by a classical setting choice $\setting_i$ and we have a corresponding quantum instrument $\{\mathcal{M}_{\outcome_i|\setting_i}\}_{\outcome_i}$. The process operator $\sigma^{A_1,...,A_N}\in \linops\left(\bigotimes_i\hilmap_{A_i^{\textup{in}}}\bigotimes_i\hilmap_{A_{i}^{\textup{out}}}^*\right)$ describes the environment of these labs, and encodes information on how they are connected, for instance whether or not there is a channel connecting the output $A_1^{\textup{out}}$ of lab $A_1$ to the input $A_2^{\textup{in}}$ of the next lab $A_2$. 
The outcome probabilities can be computed as follows
\begin{equation}   
\label{eq: BLO_probability}\textup{Pr}_{\textup{BLO}}(\outcome_1,...,\outcome_N|\setting_1,...,\setting_N)=\Tr\left[\sigma^{A_1,...,A_N}\Big(\bigotimes_{i=1}^n\rho_{A^{\textup{out}}_i|A^{\textup{in}}_i}^{\outcome_i|\setting_i}\Big)^T\right],
\end{equation}
where $\rho_{A^{\textup{out}}_i|A^{\textup{in}}_i}^{\outcome_i|\setting_i}$ denotes the CJ operator of the local operation $\mathcal{M}_{\outcome_i|a_i}$.

The conditions for $\sigma^{A_1,...,A_N}$ being a valid process operator\footnote{These include that $\sigma^{A_1,...,A_N}>0$ and $\Tr\Big[\sigma^{A_1,...,A_N}\Big(\bigotimes_{i=1}^n\rho_{A^{\textup{out}}_i|A^{\textup{in}}_i}\Big)^T\Big]=1$ when $\rho_{A^{\textup{out}}_i|A^{\textup{in}}_i}$ are CJ operators of CPTP maps.} ensure that the distribution defined above is a valid, normalised probability distribution. See \cite{Araujo_2015} and \cite{Oreshkov_2016} for further details on necessary and sufficient constraints on valid process operators. In the above, all output spaces appearing in the CJ operators $\rho_{A^{\textup{out}}_i|A^{\textup{in}}_i}$ and the process operator $\sigma^{A_1,...,A_N}$ are the dual spaces $\hilmap^*_{A_i^{\textup{out}}}$. 

We could alternatively use the CJ operators according to the variant $\rho^{(1)}_{A^{\textup{out}}_i|A^{\textup{in}}_i}$ of the CJ isomorphism where the dual spaces do not feature, and define the process operator on the joint in and output spaces of all parties. One can show that the probability expression \cref{eq: BLO_probability} will be identical. 
However, in the latter case, one must be careful to express all operators and take the transpose in the same basis. We are now ready to review the definition of causal models in this framework.

\begin{definition}[BLO-QCM \cite{Barrett_2021}]
\label{def:BLO-QCM}
  A BLO-QCM is given by
  \begin{myitem}
      \item A causal structure which corresponds to a directed graph $\graphname$ with vertices $A_1,...,A_N$ 
      \item For each $A_i$, a quantum channel (in CJ representation) $$\sigma_{A_i|\parnodes{A_i}}\in \linops\left(\hilmap_{A_i^{\textup{in}}}\otimes\left( \bigotimes_{A_k\in \parnodes{A_i}} \hilmap^*_{A_k^{\textup{out}}}\right)\right)$$ where $\parnodes{A_i}$ denotes the set of all parents of $A_i$ in $\graphname$, such that $$[\sigma_{A_i|\parnodes{A_i}},\sigma_{A_j|\parnodes{A_j}}]=0$$ for all $i,j$ and $\sigma^{A_1,...,A_N}=\prod_{i=1}^N\sigma_{A_i|\parnodes{A_i}}$ is a valid process operator.
  \end{myitem}
\end{definition}

Note that the original papers \cite{Barrett_2019, Barrett_2021} use $\rho_{A_i|\parnodes{A_i}}$ rather than $\sigma_{A_i|\parnodes{A_i}}$ for the channels mentioned in the above definition. We employ the latter notation in order to make it clear that these are internal channels of the process operator as opposed to the channels $\rho_{A_i^{\textup{out}}|A_i^{\textup{in}}}$ corresponding to the external operations performed by the parties.

The BLO framework focuses process operators $\sigma^{A_1,...,A_N}$ which correspond to the CJ operator of a unitary channel from the outputs of all parties to their inputs. Generally, given a channel $\chanmap$, the dependence of an output $O$ of a channel $\chanmap$ on an input $I$ is checked by whether a choice $\mathcal{M}_I$ of local operation on $I$ can lead to distinguishable states on $O$, i.e., we have signalling from $I$ to $O$ through $\chanmap$ when $\Tr_{\backslash O}\circ \chanmap\circ \mathcal{M}_I\neq \Tr_{\backslash O}\circ \chanmap$, where $\backslash O$ denotes all outputs of $\chanmap$ except $O$ (see \cite{Ormrod_2023} for equivalent definitions).
If the channel is unitary, the signalling relations between in and outputs form a directed acyclic graph where the in and output systems are the vertices \cite{Barrett_2019}. This is not the case for general CPTP maps, where we can have non-trivial signalling relations between subsets of systems without signalling between individual systems, e.g., an input $I$ of a (non-unitary) CPTP map $\chanmap$ can signal to outputs $O_1$ and $O_2$ jointly but not individually.  

Therefore, for unitary channels one can construct the edges of the graph through signalling relations, i.e., the parents of every output system $S$, $\parnodes{S}$, correspond to inputs that signal to $S$ through the unitary channel. 
If we apply such construction to the unitary channel of the process operator, the factorisation $$\sigma^{A_1,...,A_N}=\prod_{i=1}^N\sigma_{A_i|\parnodes{A_i}}$$ can be derived as a theorem (Theorem 4.3 of \cite{Barrett_2019}). 
The factors are marginals of the full channel and commute with each other as required by~\cref{def:BLO-QCM}. This motivates using such signalling relations to define causal relations in the case of unitary channels as processes --- which is the case in the BLO framework. For further details on causality in unitary channels, we refer the reader to \cite{Allen_2017, Barrett_2019, Barrett_2021}. For details on the inequivalence between causation and signalling in non-unitary channels, we refer the reader to the above and to \cite{VilasiniRennerPRA}.

\subsubsection{Tensor restriction of the BLO formalism and mapping to our framework}
\label{sec: BLOtoOurs}
At this point, we can observe an important distinction between causal models in ours and the BLO formalism. Consider a simple common cause graph where we have vertices $A$, $B$ and $C$ and directed edges $(A,B)$ and $(A,C)$. In our formalism, any causal model on this graph will have a channel $\chanmaparg{A}$ which has two outgoing wires, one for each outgoing edge and no incoming wires as $A$ is exogenous (i.e., a bipartite state preparation), while $B$ and $C$ will be associated with channels $\chanmaparg{B}$ and $\chanmaparg{C}$ acting on the sub-systems of the bipartite state sent along the respective edges, i.e., \begin{equation}
    \centertikz{
        \node (A) at (0, 0) {$A$};
        \node (B) at (-1, 0.5) {$B$};
        \node (C) at (1, 0.5) {$C$};
        \draw[->,-stealth] (A) -- (B);
        \draw[->,-stealth] (A) -- (C);
    }.
\end{equation}
In the BLO formalism, a common cause structure emanating from $A$ does not generally imply a channel at $A$ with multiple outgoing wires (each associated with a separate system). 
More specifically, considering the three vertices as three labs and using trivial in/output spaces for exogenous or childless vertices, we notice that the BLO framework associates a single global channel from the output space $A^{\textup{out}}$ of $A$ to the inputs $B^{\textup{in}}$ and $C^{\textup{in}}$ whose CJ operator is $\sigma_{BC|A}\in \linops(\hilmap_{B^{\textup{in}}}\otimes \hilmap_{C^{\textup{in}}}\otimes \hilmap^*_{A^{\textup{out}}} )$ (this would in fact be the process operator, if we ignore the trivial spaces), with commuting marginals $[\sigma_{B|A},\sigma_{C|A}]=0$ (keeping factors of identity implicit), $\sigma_{B|A}=\Tr_C(\sigma_{BC|A})$ and $\sigma_{C|A}=\Tr_B(\sigma_{BC|A})$. Generally, $A$ could be a qubit and need not factorise into independent tensor factors corresponding to two wires. 

In the BLO formalism, a vertex $A$ corresponds to (pairs of) systems, $\hilmap_{A^{\textup{in}}}$ and $\hilmap_{A^{\textup{out}}}$ and the edge structure specifies the channels $\sigma_{A|\parnodes{A}}$ that connect the in/output systems between vertices. By contrast, in our formalism vertices $\vertname$ are associated with channels $\chanmaparg\vertname$ while edges $\edgename$ are associated with systems $\hilmaparg{\edgename}$. How should we compare and map between the frameworks in a faithful manner that preserves the relevant information?

\paragraph{Faithful mappings and tensor-restriction of BLO models.} We now show that if we require a tensor factorisation of output spaces in the BLO framework in terms of the structure of outgoing edges, then there is a faithful mapping into out framework for every choice of local operations. More formally, we first define these concepts, starting with defining when a mapping between two quantum causality formalisms can be considered faithful. As different frameworks formalize causal models in distinct ways, we will use a more general terminology that captures the essential features of such models: a \emph{quantum causal description} consists of a directed graph together with an assignment of causal mechanisms (quantum channels) to that graph, and a rule for computing observed correlations arising from those mechanisms. 

\begin{definition}[Faithful mapping between quantum causal descriptions]
\label{def: faithful_mapping}
We say that a mapping from one quantum causal description to another is faithful if
\begin{enumerate}
    \item[(a)] all the vertices and associated causal mechanisms of the former are preserved in the image of the mapping,
    \item[(b)] if there is no directed path from $A_i$ to $A_j$ in the former, then there will be no directed path from $A_i$ to $A_j$ in the latter,
    \item[(c)]  the probability rules associated with both descriptions lead to the same observed correlations.
\end{enumerate}   
\end{definition}

This usage of the term faithful is also consistent with its formal usage in causal modelling and inference parlance. There, a causal model is said to be faithful if whenever there is connectivity in the graph, there is a corresponding conditional dependence in the causal model. If we mapped a BLO-QCM to our framework in a way that reproduces the operational predictions but there are additional directed paths in the causal graph of the image than the original model specifies, then the final causal model will generally not be faithful to the graph as it will contain additional observable independences not reflected in the connectivity (see \cref{sec:process_matrices} for further details). Note that any reasonable graph-separation notion would regard $A_i$ and $A_j$ as connected when there is a directed path from one to the other, i.e., when one is a cause of the other ($d$, $\sigma$ and $p$ separation satisfy this).

Next, we define the tensor-restriction of BLO quantum causal models as follows. 
\begin{definition}[BLO-QCM$_{\otimes}$]
\label{def: BLOTensor}We define a tensor product restricted BLO-QCM, denoted BLO-QCM$_{\otimes}$ as being specified by the following.
\begin{myitem}
 \item A causal structure which corresponds to directed graph $\graphname$ with vertices $A_1,...,A_N$.
 \item For each $A_i$, a tensor factorisation of its output space as $\hilmap_{A_i^{\textup{out}}}=\bigotimes_{A_k\in \childnodes{A_i}}\hilmap_{A_{ik}^{\textup{out}}}$ where $\childnodes{A_i}$ denotes the set of all children of $A_i$ in $\graphname$ and $\hilmap_{A_{ik}^{\textup{out}}}$ are arbitrary finite dimensional Hilbert spaces.
 \item For each $A_i$, a quantum channel $$\sigma_{A_i|\parnodes{A_i}}\in \linops\left(\hilmap_{A_i^{\textup{in}}}\otimes\left(\bigotimes_{A_k\in \parnodes{A_i}} \hilmap^*_{A_{ki}^{\textup{out}}}\right)\right)$$ such that $\sigma^{A_1,...,A_N}=\prod_{i=1}^N\sigma_{A_i|\parnodes{A_i}}$ is a valid process operator. 
 \end{myitem}
\end{definition}
Since each channel $\sigma_{A_i|\parnodes{A_i}}$ acts on a distinct Hilbert space (distinct tensor factors of the parental Hilbert space), the commutation condition $[\sigma_{A_i|\parnodes{A_i}},\sigma_{A_j|\parnodes{A_j}}]=0$ is satisfied for all $i,j$.
Hence, every BLO-QCM$_{\otimes}$ is indeed an instance of a general BLO-QCM.

\paragraph{Faithfully mapping a tensor restricted BLO-QCM to our formalism.} We now show that 
any BLO-QCM$_{\otimes}$ (\cref{def: BLOTensor}) can be faithfully mapped (\cref{def: faithful_mapping}) into our framework. Given a BLO-QCM$_{\otimes}$ together with a choice of quantum instrument $\mathcal{M}_{\setting_i}:=\{\mathcal{M}_{\outcome_i|\setting_i}: \linops(\hilmap_{A_i^{\textup{in}}})\mapsto \linops(\hilmap_{A_i^{\textup{out}}})\}_{\outcome_i}$, one for each agent $A_i$, we can obtain a causal model in our formalism (\cref{def:causal model}) through the mapping we describe below.
Therefore, each BLO-QCM$_{\otimes}$ maps to a family of QCMs in our framework, with each element in the family corresponding to a fixed choice of instruments for the $N$ parties, labelled by the settings \{$\setting_i\}_i$. A previous causal modelling framework by Costa and Shrapnel \cite{Costa_2016} can be seen as a further restriction of BLO-QCM$_{\otimes}$ above to the case of directed acyclic graphs $\graphname$. Hence, the mapping we give here shows that the Costa and Shrapnel formalism can also be faithfully recovered within ours.

\begin{itemize}
    \item {\bf Causal graph} If $\graphname$ is the graph associated with the given BLO-QCM$_{\otimes}$, by construction it has $N$ vertices $\{A_1,...,A_N\}$. Without loss of generality, let $\{A_1,...,A_k\}$ denote the vertices associated with a trivial (1-dimensional) input Hilbert space. 
    
    Define a new causal graph $\graphname'$ constructed from $\graphname$ as follows with $2N+k$ vertices: $\{A_1,...,A_N\}\cup \{\sigma_1,...,\sigma_k\}\cup\{X_1,...,X_N\}$, where $\{X_1,...,X_N\}$ are observed while the rest are unobserved vertices. Whenever $(A_i,A_j)$ is a directed edge in $\graphname$, $A_j$ cannot have a trivial input space, and we include a corresponding directed edge $(A_i,\sigma_j)$ in $\graphname'$. Further, include directed edges $(\sigma_j,A_j)$ for every $j\not\in\{1,...,k\}$ and $(A_j,X_j)$ for all $j\in\{1,...,N\}$ in $\graphname'$. This construction ensures that 
    absence of directed paths in $\graphname$ implies the same in $\graphname'$ as required by \cref{def: faithful_mapping}.
    \item {\bf State spaces} We now associate a state space to each edge on the causal graph $\graphname'$ as required by \cref{def:causal model}. Consider the edges $(A_i,\sigma_j)$ of $\graphname'$, which correspond to edges $(A_i,A_j)$ of $\graphname$. Since each BLO-QCM$_{\otimes}$ assigns a tensor factor $\hilmap_{A_{ij}^{\textup{out}}}$ of $A_i$'s output space $\hilmap_{A^{\textup{out}}_i}$ to every child $A_j$ of $A_i$, we can assign this space to the edge $(A_i,\sigma_j)$ of $\graphname'$. The edges $(\sigma_j,A_j)$ are incoming to the lab $A_j$ and will be associated with the input space $\hilmap_{A_j^{\textup{in}}}$. The edges $(A_j,X_j)$ model the influence of each party's action on their classical outcome, and will be associated a Hilbert space $\hilmap(\outcomemap_j)$ which encodes the outcome set of the $j$-th party $\outcomemap_j:=\{\outcome_j\}_{\outcome_j}$ in a preferred basis.
        \item {\bf Quantum channels} In this mapping, we will ignore the dual spaces of the BLO framework. This is not an issue, as discussed earlier, as long as we are consistent about taking transposes in the same basis in which the CJ operators are expressed. Each of the $\sigma_i$ vertices is assigned the quantum channel $\hat{\sigma}_{i}: \linops(\hilmap_{A_i^{\textup{in}}})\mapsto \linops(\bigotimes_{A_k\in \parnodes{A_i}}\hilmap_{A_{ki}^{\textup{out}}})$ whose CJ operator is the operator $\sigma_{A_i|\parnodes{A_i}}$ of the BLO framework\footnote{According to the basis-dependent CJ representation without the dual spaces (\cref{eq: choi1}). We referred to this with a superscript 1 before but drop this to avoid clutter, since the meaning is clear from context.}. Each of the $A_i$ vertices is assigned the corresponding local quantum instrument, $\mathcal{M}_{\setting_i}$ with a slight modification to make explicit the outcome as a separate system. The modified instrument is associated with a CPTP map $\tilde{\mathcal{M}}_{\setting_i}:\linops(\hilmap_{A_i^{\textup{in}}})\mapsto \linops(\hilmap_{A_i^{\textup{out}}}\otimes \hilmap(\outcomemap_i))$ defined as follows (where each term in the sum below is a CP map $\tilde{\mathcal{M}}_{\outcome_i|\setting_i}$, the set of which defines the instrument)
    \begin{equation}
        \tilde{\mathcal{M}}_{\setting_i}(\rho_{A^{\textup{in}}_i}):=\sum_{\outcome_i}\mathcal{M}_{\outcome_i|\setting_i}(\rho_{A_i^{\textup{in}}})\otimes \ket{\outcome_i}\bra{\outcome_i}.
    \end{equation}

    Finally, the $X_i$ vertices are associated with a set of maps $\{\mathcal{E}_{\outcome_i}\}_{\outcome_i}$ specified by a POVM which has elements $E_{\outcome_i}:=\ket{\outcome_i}\bra{\outcome_i}$, according to \cref{eq:def_onode} (we have used $\mathcal{E}$ for this map as opposed to $\mathcal{M}$ of the defining equation, in order to distinguish this from the maps assigned to the labs). Note that the in and output spaces of all channels in this step match with the state spaces assigned to the in/outgoing edges in the previous step.
    \item {\bf Probabilities} The above fully specifies a causal model according to \cref{def:causal model}, for every BLO-QCM$_{\otimes}$ and choice $\{\setting_i\}_{i=1}^N$ of local instruments. We can now show that the probabilities $\textup{Pr}_{\textup{BLO}}(\outcome_1,...,\outcome_N|\setting_1,...,\setting_N)$ as computed in the BLO formalism (\cref{eq: BLO_probability}) and using the probability rule of our formalism (\cref{sec:probability rule CM}) would be the same.
\end{itemize}

For simplicity and for sake of illustration, we make the argument for a particular example. However, the argument readily generalises to all graphs that support BLO-QCM$_{\otimes}$.
Consider the graph $\graphname$ in the BLO framework with 4 vertices $A$, $B$, $C$ and $D$ and edges $(D,B)$, $(A,B)$ and $(A,C)$ i.e., 
\begin{equation}
    \graphname=\centertikz{
        \node (D) at (-2,0) {$D$};
        \node (A) at (0,0) {$A$};
        \node (B) at (-1,1) {$B$};
        \node (C) at (1,1) {$C$};
        \draw[->,-stealth] (D)--(B); 
        \draw[->,-stealth] (A)--(B); 
        \draw[->,-stealth] (A)--(C);
    }.
\end{equation}
Let $D$ and $A$ have trivial input spaces. Then the corresponding causal graph $\graphname'$ in our formalism would be as follows (it has $10$ vertices as $N=4$, $k=2$):
\begin{equation}
    \graphname'=\centertikz{
    \node[unode] (D) at (0,0) {$D$};
     \node[unode] (C) at (3,0) {$A$};
      \node[unode] (sA) at (1,2) {$\sigma_B$};
       \node[unode] (sB) at (3,2) {$\sigma_C$};
        \node[unode] (A) at (0,4) {$B$};
         \node[unode] (B) at (3,4) {$C$};
    \node[onode] (xD) at (-1,1) {$X_D$};
       \node[onode] (xC) at (4,1) {$X_A$};
           \node[onode] (xA) at (-1,5) {$X_B$};
               \node[onode] (xB) at (4,5) {$X_C$};
        \draw[qleg] (D)--(sA); \draw[qleg] (C)--(sA); \draw[qleg] (C)--(sB); \draw[qleg] (sA)--(A); \draw[qleg] (sB)--(B);
\draw[qleg] (A)--(xA); \draw[qleg] (B)--(xB); \draw[qleg] (C)--(xC); \draw[qleg] (D)--(xD);
    }.
\end{equation}
Notice that the absence of directed paths from $D$ to $C$ in $\graphname$ is also reflected in $\graphname'$, in accordance with \cref{def: faithful_mapping}.
We now explain why the observed probabilities are the same under this mapping. 

Consider a causal model on the causal graph $\graphname'$ obtained from a BLO-QCM$_{\otimes}$ through the mapping that we have described above. We can construct an acyclic causal graph $\overline{\graphname'}\in \graphfamily{\graphname'}$ by replacing each of the edges $(A_i,\sigma_j)$ of $\graphname'$ (which arise from the edges $(A_i,A_j)$ of the BLO-QCM$_{\otimes}$) with a post-selected teleportation protocol. Suppose there are $k$ edges of the type $\centertikz{\node[unode] (a) at (0,0) {$A_i$}; \node[unode] (s) at (2,0) {$\sigma_j$}; \draw[qleg] (a)--(s);}$. We then have the pre and post-selection vertices $\{\prevertname_1,\dots,\prevertname_k\} = \prevertset$ and $\{\postvertname_1,\dots,\postvertname_k\} = \psvertset$, and each such edge is replaced with the following protocol for a pair $(\prevertname_l,\postvertname_l)$:

   \begin{align*}
        \centertikz{
        \begin{scope}[xscale=2.6,yscale=1.3]
        \node[unode] at (0,0) {$A_i$}; \node[unode] (s) at (1.5,1) {$\sigma_j$};
            \node[prenode] (pre) at (1,0) {$\prevertname_l$};
            \node[psnode] (post) at (0.5,1) {$\postvertname_l$};
            \draw[qleg] (pre) -- node[pos=0.7,right] {\small$\edgearg{\prevertname_l}{\postvertname_l}$} (post);
            \draw[qleg] (a) -- node[anchor=west]{$\edgename_l$} (post);
            \draw[qleg] (pre) -- node[anchor=west] {$\edgename_l'$} (s);
        \end{scope}
        }
    \end{align*}

    Then we can define a family of causal models on the acyclic causal graph $\overline{\graphname'}$ based on any given causal model on the original graph $\graphname'$: for all the original edges and vertices, all the causal mechanisms are the same and we assign a post-selection teleportation protocol specified by pre and post-selection mechanisms $(\telepovm^l,\telestate^l)$ to each pair of pre and post-selection vertices $(\prevertname_l,\postvertname_l)$ and associated edges. This is a family since we have a choice over the pre and post-selection mechanisms $(\telepovm^l,\telestate^l)$ that result in a post-selected teleportation protocol (\cref{def:ps teleportation}). 
Using the acyclic probability rule, the required probability $\prob(\outcome_1,...,\outcome_N|\setting_1,..,\setting_N)_{\graphname'}$ of the original causal model on $\graphname'$ is then given as follows.
    \begin{equation}
       \label{eq:prob_mapping_from_BLO}
       \begin{split}
           \prob(\outcome_1,...,\outcome_N|&\setting_1,..,\setting_N)_{\graphname'}=\\
       &\frac{\bigotimes_{l=1}^k\telepovm^l \bigcomp_{i=1}^N\mathcal{E}_{\outcome_i} \bigcomp_{i=1}^N\tilde{\mathcal{M}}_{\outcome_i|\setting_i}\bigcomp_{i=1}^N\hat{\sigma}_i(\bigotimes_{l=1}^k\telestate^l)}{\sum_{\outcomealt_1,...,\outcomealt_N}\bigotimes_{l=1}^k\telepovm^l \bigcomp_{i=1}^N\mathcal{E}_{\outcomealt_i} \bigcomp_{i=1}^N\tilde{\mathcal{M}}_{\outcomealt_i|\setting_i}\bigcomp_{i=1}^N\hat{\sigma}_i(\bigotimes_{l=1}^k\telestate^l)}.
       \end{split}
    \end{equation}

In the above, the composition rule is dictated by the acyclic graph $\overline{\graphname'}$ (\cref{def: acyclic probability}) and some factors of identity have been suppressed for brevity. The fraction form of the above expression is due to the usual conditional probability rule (Bayes rule). The numerator is the joint probability of obtaining the outcomes $\outcome_i$ and the post-selections given by $\telepovm^l$ succeeding, while the denominator is just the probability of the latter. The fraction thus gives us the probability of the outcomes conditioned on post-selection success, which is what yields the required probabilities of the cyclic model (\cref{def: probability distribution v3}). 

In \cref{prop:probs as self cycles_v3}, it is shown that this expression can be equivalently written in terms of self-cycle composition, which is independent of the choice of $\telepovm^l$ and $\telestate^l$. For this, define $\etot_{\outcome_1,...,\outcome_N|\setting_1,...,\setting_N}:=\bigcomp_{i=1}^N\mathcal{E}_{\outcome_i} \bigcomp_{i=1}^N\tilde{\mathcal{M}}_{\outcome_i|\setting_i}\bigcomp_{i=1}^N\hat{\sigma}_i$ and notice that
$\etot_{\outcome_1,..,\outcome_N} : \textstyle\linops(\bigotimes_{l=1}^k \hilmaparg{\edgename_l'}) \mapsto \linops(\bigotimes_{l=1}^k \hilmaparg{\edgename_l})$. Then it follows that the probabilities can be written as follows where $\selfcycle$ links the space of each edge $\edgename_l$ to that of $\edgename'_l$ through a self-cycle composition (\cref{def:selfcycle_v3}),
\begin{equation}
 \prob(\outcome_1,...,\outcome_N|\setting_1,..,\setting_N)_{\graphname'}=\frac{\selfcycle(\etot_{\outcome_1,...,\outcome_N|\setting_1,...,\setting_N})}{\sum_{\outcome_1,...,\outcome_N}\selfcycle(\etot_{\outcome_1,...,\outcome_N|\setting_1,...,\setting_N})}.  
\end{equation}

The composition operation $\selfcycle$ here is identical to loop composition as defined in \cite{Portmann_2017}, which takes on the form given in \cite{VilasiniRennerPRA} for finite dimensional Hilbert spaces.\footnote{Here, we use $\selfcycle$ only for compositions relevant for probability computations where the result of the composition is a number (a map with no in or outputs).} It follows from the results of \cite{VilasiniRennerPRA} (Lemma 2 of arXiv v4) that $\selfcycle(\etot_{\outcome_1,...,\outcome_N|\setting_1,...,\setting_N})$ equals the right hand side of \cref{eq: BLO_probability}, and it follows that the denominator above equals 1, by the normalisation condition for BLO probabilities. Although the particular example graph $\graphname$ considered here is acyclic, that the denominator equals 1 holds for BLO models on any (possibly cyclic) graph, as these models are associated with valid process operators or process matrices. The above-mentioned results of \cite{VilasiniRennerPRA} apply generically to cyclic graphs as well, and one can similarly establish the equivalence of the probabilities computed in the BLO framework and ours, in any BLO-QCM$_{\otimes}$. Finally, notice that all probability computations in our framework directly use the quantum channels and do not refer to the CJ operator, in contrast to the probability rule of the BLO formalism, \cref{eq: BLO_probability}.

\subsection{General BLO causal models and indefinite causal order processes}
\label{sec:process_matrices}

\paragraph{General BLO-QCMs and the causal decomposition problem.} Having discussed the mapping to our formalism for tensor restricted BLO causal models, let us consider the more general case. Consider the motivating example from the beginning of the section, where $A$ is a common cause of $B$ and $C$, i.e., 
\begin{equation}
    \centertikz{
        \node (A) at (0, 0) {$A$};
        \node (B) at (-1, 0.5) {$B$};
        \node (C) at (1, 0.5) {$C$};
        \draw[->,-stealth] (A) -- (B);
        \draw[->,-stealth] (A) -- (C);
    }.
\end{equation} The process operator corresponds to the CJ operator $\sigma_{BC|A}$ of a channel $\chanmap_{A\mapsto BC}$ from $A^{\textup{out}}$  to $B^{\textup{in}}$ and $C^{\textup{in}}$. The model specifies the commuting marginals $\sigma_{B|A}$ and $\sigma_{C|A}$, which are CJ operators of channels $\chanmap_{A\mapsto B}$ and $\chanmap_{A\mapsto C}$, which non-trivially overlap on $A$. Generally, $A^{\textup{out}}$ is not a classical system (i.e., states on $A$ can not be perfectly copied) and need not factorise into two factors, one for each of the marginal channels $\sigma_{B|A}$ and $\sigma_{C|A}$. It is also not immediate how the overall channel $\chanmap_{A\mapsto BC}$ can be written as a composition of the marginal channels $\chanmap_{A\mapsto B}$ and $\chanmap_{A\mapsto C}$.

In an earlier paper \cite{Allen_2017}, it is shown that whenever the overall channel factorises into a product of its (commuting) marginals $\sigma_{BC|A}=\sigma_{B|A}\sigma_{C|A}$, then  $\hilmap_{A^{\textup{out}}}=\bigoplus_i \hilmap_{A^{\textup{out},i}_L}\otimes \hilmap_{A_R^{\textup{out},i}}$ and $\rho_{BC|A}=\sum_i \sigma_{B|A_L^i}\otimes\sigma_{C|A^i_R}$ where the marginals $\sigma_{B|A_{L}^i}$ and $\sigma_{C|A_{R}^i}$ are CJ operators of channels  $\chanmap_{A_L^i\mapsto B}:\linops(\hilmap_{A^{\textup{out},i}_{L}}) \mapsto \linops(\hilmap_{B^{\textup{in}}})$
and $\chanmap_{A_R^i\mapsto C}:\linops(\hilmap_{A^{\textup{out},i}_{R}}) \mapsto \linops(\hilmap_{C^{\textup{in}}})$. This gives $\chanmap_{A\mapsto BC}$ a circuit decomposition in terms of physical operations: measure $A^{\textup{out}}$ through a von Neumann measurement $\mathcal{M}$ and depending on the outcome $i$, split $A^{\textup{out}}$ into two subspaces $A^{\textup{out},i}_L$ and $A^{\textup{out},i}_R$ sending one to $\chanmap_{A_L^i\mapsto B}$ and the other to $\chanmap_{A_R^i\mapsto C}$. This can be mapped to our framework using three vertices $A$, $B$ and $C$ for the local operations of the three parties and 3 more vertices $M$, $\sigma_{A}$ and $\sigma_{C}$ for the internal decomposition of the process channel into the measurement $\mathcal{M}$ (that can possibly involve an additional ancilla) and the two marginal channels $\chanmap_{A_L^i\mapsto B}$ and $\chanmap_{A_R^i\mapsto C}$. Further, if we add a fourth vertex $D$ with directed edge $(D,B)$ to the common cause graph above
i.e., 
\begin{equation}
    \centertikz{
        \node (A) at (0, 0) {$A$};
        \node (B) at (-1, 0.5) {$B$};
        \node (C) at (1, 0.5) {$C$};
        \node (D) at (-2, 0) {$D$};
        \draw[->,-stealth] (A) -- (B);
        \draw[->,-stealth] (A) -- (C);
        \draw[->,-stealth] (D) -- (B);
    },
\end{equation}
and apply a similar channel decomposition for the common cause edges from $A$ to $B$ and $C$, the total unitary channel from $AD$ to $BC$ given by the BLO formalism, would decompose into sub-channels such that the absence of a directed edge/path from $D$ to $C$ in the graph is reflected in the absence of a directed path of wires in the decomposition. This would allow a faithful mapping into our framework. It is shown in 
 \cite{Allen_2017} that such direct-sum decompositions of the Choi states exist also when $A$ is a common cause of any number $n$ of vertices, allowing this case also to be mapped faithfully to our framework by considering the corresponding decomposition at the level of the channels. Thus, it is in principle possible to construct faithful mappings to our framework for BLO causal models that do not belong to BLO-QCM$_\otimes$.

An interesting open question is whether such faithful decompositions (where the connectivity of the wires in the circuit decomposition matches the connectivity of the causal graph) exist for all valid causal models in the BLO formalism. This question of causal decomposition is an important one in the quantum causal modelling literature, and the question can be refined by specifying the subset of channels with respect to which one seeks a decomposition of the original channel. In particular, causal decompositions of unitary channels in terms of smaller unitary channels has been studied and has led to interesting possibility and potential impossibility results in a range of directed acyclic graphs (defined relative to the signalling structure of the original unitary, as in the BLO framework) \cite{Lorenz_2021}. For there to exist a faithful mapping of every BLO causal model to our formalism, it is sufficient if the overall unitary channel defined in the BLO formalism admits a faithful causal decomposition in terms of general (not necessarily restricted to unitary) quantum channels. This remains an open problem even in acyclic causal models, and we are not aware of any counter-examples to show that this is not possible.

We leave a further investigation of this for future work. We note that unfaithful mappings into our formalism still exist for all process matrices and therefore for BLO-QCMs (based on unitary process matrices), and discuss this in the next subsection.

\paragraph{Process matrix protocols within our framework.}
In usual quantum protocols, operations occur in a well-defined and acyclic order. Frameworks for modelling so-called indefinite causal order processes have been proposed (e.g., \cite{hardy2005probabilitytheoriesdynamiccausal,Hardy_2009,Oreshkov_2012, Chiribella_2013}), where the order in which quantum operations are applied can be subject to quantum uncertainty or may not be acyclic. This includes scenarios such as the quantum switch \cite{Chiribella_2013}, which involves a quantum superposition of Alice acting before Bob and Bob acting before Alice on a target quantum system (possibly depending on the state of a control system), as well as purely classical scenarios where there is a cyclic dependence between the operations of the parties (such as the Baumeler-Wolf process \cite{Baumeler_2016}). 
One such prominent framework is that of process matrices \cite{Oreshkov_2012}. Here one considers $N$ parties associated with labs $A_1,...,A_N$ where just as in the BLO framework each lab is associated with a pair of in and output spaces $\hilmap_{A_i^{\textup{in}}}$ and $\hilmap_{A_i^{\textup{out}}}$. The process matrix $\sigma\in \linops(\bigotimes_i\hilmap_{A_i^{\textup{in}}})\otimes \linops(\bigotimes_i\hilmap_{A_i^{\textup{out}}})$ models the environment of these labs and tells us how they can be connected. 
The process operator $\sigma^{A_1,...,A_N}$ of the BLO formalism, when expressed in the basis-dependent convention for CJ operators (without the duals) is precisely the process matrix $\sigma$. Given such a process operator or process matrix together with a set of local quantum channels $\mathcal{M}_{\outcome_i|\setting_i}$, and the same probability rule \cref{eq: BLO_probability} can be used to compute the joint probability $\textup{Pr}_{\textup{BLO}}(\outcome_1,...,\outcome_N|\setting_1,...,\setting_N)$. Recall that the form of this rule is independent of whether we use the basis-independent version of the BLO framework (with the dual spaces) or the standard basis-dependent version of the CJ isomorphism.

\begin{figure}
\begin{equation*}
    \centertikz{
    \begin{scope}[shift = {(-4,0)},scale=0.75, transform shape]
        \draw[thick,black, opacity=0.5,fill=lightgray] (-3,3.5)--(3,3.5)--(3,2)--(1,2)--(1,-2)--(3,-2)--(3,-3.5)--(-3,-3.5)--(-3,-2)--(-1,-2)--(-1,2)--(-3,2)--cycle; \draw[fill opacity=0.5,fill=lightgray, draw=none] (-3,3.5)--(-2.5,4)--(3.5,4)--(3,3.5)--cycle;
        \draw[thick,black, opacity=0.5] (-3,3.5)--(-2.5,4)--(3.5,4)--(3,3.5);
        \draw[thick,black, opacity=0.5] (3,2)--(3.5,2.5)--(3.5,4); \draw[fill opacity=0.5,fill=lightgray, draw=none] (3,2)--(3.5,2.5)--(3.5,4)--(3,3.5)--cycle;
        \draw[thick,black, opacity=0.5] (-3,-2)--(-2.5,-1.5)--(-1,-1.5);
        \draw[fill opacity=0.5,fill=lightgray, draw=none] (-3,-2)--(-2.5,-1.5)--(-1,-1.5)--(-1,-2)--cycle;
        \draw[thick,black, opacity=0.5] (1.5,-1.5)--(1.5,2); \draw[thick,black, opacity=0.5] (1.5,-1.5)--(1,-2);
        \draw[fill opacity=0.5,fill=lightgray, draw=none] (1.5,-1.5)--(1.5,2)--(1,2)--(1,-2)--cycle;
        \draw[thick,black, opacity=0.5] (3,-2)--(3.5,-1.5)--(1.5,-1.5);
        \draw[fill opacity=0.5,fill=lightgray, draw=none] (3,-2)--(3.5,-1.5)--(1.5,-1.5)--(1,-2)--cycle;
        \draw[thick,black, opacity=0.5] (3.5,-1.5)--(3.5,-3)--(3,-3.5);
        \draw[fill opacity=0.5,fill=lightgray, draw=none] (3.5,-1.5)--(3.5,-3)--(3,-3.5)--(3,-2)--cycle;
        \node[thick,opacity=0.5] at (0,0){\Huge{$\hat{\sigma}$}};
        \draw[thick,green!50!black,->,>=stealth] (-2,1)--node[anchor=east]{$\mathbf{A_O}$}(-2,2); 
        \draw[thick,green!50!black,->,>=stealth] (2.3,1)--node[anchor=west]{$\mathbf{B_O}$}(2.3,2); 
        \draw[thick,red!60!black,->,>=stealth] (-2,-1.75)--node[anchor=east]{$\mathbf{A_I}$}(-2,-0.75); 
        \draw[thick,red!60!black,->,>=stealth] (2.3,-1.75)--node[anchor=west]{$\mathbf{B_I}$}(2.3,-0.75); 
        \node[cube,thick,draw=blue!40!black,fill=blue!30!white, opacity=0.5, minimum width=1.6cm,minimum height=1.6cm] at (-2.2,0.05){$\mathcal{M}^A_{x|a}$};
        \node[cube,thick,draw=orange!60!black,fill=orange, opacity=0.5, minimum width=1.6cm,minimum height=1.6cm] at (2.5,0.05){$\mathcal{M}^B_{y|b}$};

    \end{scope}
   \begin{scope}[shift = {(4,0)},scale=0.75, transform shape]
        \node[cube,thick,draw=black, opacity=0.5,fill=lightgray, minimum width=4.6cm,minimum height=2.6cm] at (0,-2.5){\Huge{$\hat{\sigma}$}}; 
        \node[cube,thick,draw=blue!40!black,fill=blue!30!white, opacity=0.5, minimum width=1.6cm,minimum height=1.6cm] at (-1.7,1.7){$\mathcal{M}_\setting^A$};
        \node[cube,thick,draw=blue!40!black,fill=blue!30!white,opacity=0.4, minimum width=0.8cm,minimum height=0.8cm] at (-1.4,3.5){$\povmel{x}$};
        \node[cube,thick,draw=orange!60!black,fill=orange, opacity=0.5, minimum width=1.6cm,minimum height=1.6cm] at (2.,1.7){$\mathcal{M}_{b}^B$};
        \node[cube,thick,draw=orange!60!black,fill=orange, opacity=0.4, minimum width=0.8cm,minimum height=0.8cm] at (1.8,3.5){$\povmel{y}$};

        \draw[thick, green!60!black, bend left=50, distance=1.2cm] (2.7,2.6) .. controls (2.7,3.7) and (4.5,3.7) .. (4.5,0) node[anchor=west]{$\mathbf{B_O}$};
        \draw[ thick, green!60!black, bend left=50, distance=1.2cm, ->,>=stealth] (4.5,0) .. controls (4.5,-5) and (1.7,-5) .. (1.7,-3.8);
        \draw[thick,red!60!black,->,>=stealth] (-1.2,-1.1)--node[anchor=west]{$\mathbf{A_I}$}(-1.2,0.9); 
        \draw[thick,red!60!black,->,>=stealth] (1.5,-1.1)--node[anchor=east]{$\mathbf{B_I}$}(1.5,0.9); 
        \draw[thick, green!60!black, bend left=50, distance=1.2cm] (-2.2,2.6) .. controls (-2.2,3.7) and (-4.,3.7) .. (-4.,0) node[anchor=east]{$\mathbf{A_O}$};
        \draw[ thick, green!60!black, bend left=50, distance=1.2cm, ->,>=stealth] (-4.,0) .. controls (-4.,-5) and (-1.2,-5) .. (-1.2 ,-3.8) ; 

        \draw[thick,blue!60!black,->,>=stealth] (-1.4,2.6)--(-1.4,3.1); 
        \draw[thick,blue!60!black,->,>=stealth] (1.8,2.6)--(1.8,3.1);
        
     \end{scope}

     \begin{scope}[shift={(0,-8)}]
         \node[unode, minimum size=22pt] (W) at (0,0) {$\hat{\sigma}$};
         \node[unode] (A)at (-2,2) {$\mathcal{M}_{a}^A$};
         \node[unode] (B)at (2,2) {$\mathcal{M}_{b}^B$};
         \node[onode] (X)at (-2,3.5) {$\povmel{x}$};
         \node[onode] (Y)at (2,3.5) {$\povmel{y}$};
         \draw[qleg] (W.110) to [in=0, out=90] node[anchor=east]{$A_I$} (A);
         \draw[qleg] (A) to [in=180, out=270] node[anchor=east]{$A_O$} (W);
         \draw[qleg] (W.70) to [in=180, out=90] node[anchor=west]{$B_I$} (B);
         \draw[qleg] (B) to [in=0, out=270] node[anchor=west]{$B_O$} (W);
         \draw[qleg] (A) -- (X);
         \draw[qleg] (B) -- (Y);
     \end{scope}
    }   
\end{equation*}
    \caption{On the top left is an example of bipartite process matrix \cite{Oreshkov_2012, Chiribella_2013}. Here, the joint probabilities of two agents obtaining outcomes $x$ and $y$ conditioned on inputs $a$ and $b$ is associated with the local operations $\mathcal{M}_{x|a}^A$ and $\mathcal{M}_{y|b}^B$ of the two agents. The process $\hat{\sigma}$ describes the global behaviour of the protocol and defines the information-flow between the local operations. On the top right, the process is equivalently viewed in terms of a network formed by the loop composition of the process map $\hat{\sigma}$ with the local operations of the agents as shown on the top right, where we have modelled the classical outcomes explicitly in terms of an additional system on which the corresponding POVM acts. The top two figures are adapted from Figure 9 of~\cite{VilasiniRennerPRA}.
    The bottom figure shows the causal graph corresponding to the cyclic causal model (\cref{def:causal model}) specified by the top right cyclic network). Here, the process and each local operation correspond to unobserved nodes with quantum in and output edges while the outcomes correspond to observed classical nodes. }
    \label{fig:process matrix}
\end{figure}

In the BLO formalism, the process operator $\sigma^{A_1,...,A_N}$ is associated with a unitary channel on the parties outputs to their inputs, and this is linked to the process operator having further internal structure as a factorisation in terms of the channels $\sigma_{A_i|\parnodes{A_i}}$. While the BLO formalism includes certain classes of indefinite causal order process such as the quantum switch, it cannot model more general non-unitary processes (for instance the causal inequality violating process proposed by Oreshkov, Costa and Brukner \cite{Oreshkov_2012}).

In the earlier section, we discussed how to faithfully map tensor restricted BLO causal models to our framework. More generally, one
 can also formulate all process matrices within our formalism through a generic mapping, which is similar to mappings proposed from process matrices to other cyclic quantum causality frameworks  \cite{Lloyd_2011,Lloyd_2011_2} and \cite{VilasiniRennerPRA, VilasiniRennerPRL} (we will discuss such frameworks shortly). While this gives alternative and equivalent ways to study processes matrices (and in particular the resulting probabilities), it will generally not be a faithful mapping. The mapping is rather simple, it involves including $2N+1$ vertices $\{A_1$,...,$A_N\}\cup\{X_1$,...,$X_N\}\cup \{\sigma\}$ in the causal graph where all the $A_i$ vertices and $\sigma$ are unobserved and all $X_i$ vertices are observed. The vertex $\sigma$ associated with the channel whose CJ operator is the process matrix $\sigma$, while for the remaining vertices the channels associated with them are given by the chosen local operations in the lab through the same procedure as we used  the previous section for BLO-QCM$_\otimes$s. The generic graph for all processes would be the one where we have a directed cycle $(\sigma,A_i)$ and $(A_i,\sigma)$ for each $A_i$, along with an edge $(A_i,X_i)$ for every agent (see \cref{fig:process matrix} for an example).

 The generic graph constructed above has a high connectivity, a process where neither $A$ nor $B$ causally influence the other would also be mapped to the same graph where the lack of causal influence between the parties cannot be discerned from the graphical representation alone. This implies that the model is unfaithful: the graph shows connectivity although we will observe independences in the joint probability distribution. There is scope to employ additional procedures to fine-grain the $\sigma$ vertex of this graph into multiple vertices and thereby fine-grain the causal model (while preserving the overall behaviour of the channel $\sigma$) to discern more information on the causal structure. For instance such a notion of fine-graining for causal networks has been formally developed in \cite{VilasiniRennerPRA, VilasiniRennerPRL}. Fine-graining has the potential to transform cyclic into acyclic causal models \cite{VilasiniRennerPRA, VilasiniRennerPRL} as well as unfaithful into faithful causal models \cite{Grothus2024} while preserving the overall action of the channels involved. We leave an exploration of fine-graining in the context of causal inference in cyclic quantum causal models for future work.

\section{Functional causal models and connection with~\cite{Sister_paper}}
\label{sec:classical aspects}
In a related work~\cite{Sister_paper}, we define the probability rule for arbitrary cyclic functional causal models. 
There, we use the same ideas of mapping a functional models on a cyclic graph to a functional causal model on an acyclic graph with post-selection, making use of a classical analogue of post-selected teleportation. 
For further details on the classical framework see~\cite{Sister_paper}. 
In this section, we show that classical functional models can be embedded in our quantum causal modelling framework and prove that the probability rule given in definition 12 of~\cite{Sister_paper} is equivalent to~\cref{def: probability distribution v3} applied to the embedded casual model.
First, we briefly review functional causal models. Then, we define a mapping from functional causal models to causal models (\cref{def:causal model}). Finally, we prove consistency between the results of~\cite{Sister_paper} and the quantum formalism presented here. 

\subsection{Review of functional models}
\label{sec:fCMs}
Classical causal models~\cite{Pearl_2009}, also known as functional or structural causal models, describe how variables are related to each other through deterministic functions and are used for classical causal inference between classical random variables. 
The general definition of functional causal models allows for continuous variables. However, in what follows we restrict the discussion to finite random variables. 
Therefore, all functional causal models considered in this section belong to the subset fCM$_{\textup{finite}}$~(see \cref{fig:frameworks}). For brevity, we denote a finite functional causal model on a graph $\graphname$ as $\fcm_{\graphname}$.

\begin{definition}[Finite functional causal model]
    \label{def:functional_CM}
    Given a directed graph $\graphname=\graphexpl$, a finite functional model \textup{($\fcm_{\graphname}$)} is given by associating the following specifications to each vertex $\vertname\in\vertset$:
    \begin{myitem}
        \item A random variable $X_\vertname$ taking values $x_\vertname$ 
        from a non-empty finite set $\outcomemaparg{\vertname}$. We will use a notation where if $\vertset'\subseteq\vertset$ is a non-empty subset of vertices,
        \begin{equation}
            \outcomemaparg{\vertset'}=\prod_{\vertname\in\vertset'} \outcomemaparg{\vertname}
        \end{equation}
        where $\prod$ here denotes the Cartesian product.
        \item An error random variable $\errorrvarg\vertname$ taking values $u_\vertname$ 
        from a finite set $\errormaparg{\vertname}$, distributed as $\probex{\vertname}: \errormaparg{\vertname}\mapsto [0,1]$.
        \item A function 
            $\funcarg{\vertname}: \outcomemaparg{\parnodes{\vertname}}\times \errormaparg{\vertname} \mapsto \outcomemaparg{\vertname}$.
    \end{myitem} 
\end{definition}
In literature, the probability is considered to be well-defined for all functional models on \textit{acyclic} graphs~\cite{Pearl_2009} and for a restricted set of \textit{cyclic} functional models~\cite{Forre_2017, Bongers_2021}. 
In definition 12 of~\cite{Sister_paper}, we provided a probability rule over the vertices of any --- except for a handful of pathological cases --- cyclic functional models. In the special case of acyclic models, the probability is defined as follows in the literature (and recovered by the general rule of~\cite{Sister_paper}).

\begin{definition}[Probability distribution of an acyclic functional model] 
    \label{def: distribution_functional_cm}
    \hypertarget{probfacyc}{Consider} a functional model $\fcm_{\graphname}$ on an \textup{acyclic} graph $\graphname=\graphexpl$ and a global observed event $\outcome :=\{\outcome_{\vertname}\in\outcomemaparg{\vertname}\}_{\vertname\in\vertset}$. The probability $\probfacyc\left(\outcome\right)_{\graphname}\in [0,1]$ is defined as 

    \begin{equation}
        \probfacyc\left(\outcome\right)_{\graphname} = \sum_{u}\prod_{\vertname\in\vertset}\probex{\vertname}(u_\vertname) \delta_{\outcome_{\vertname}, \funcarg{\vertname}\left(\outcome_{\parnodes{\vertname}},u_{\vertname}\right)},
    \end{equation}
    where the sum $\sum_{u}$ runs over $u=\{u_\vertname\in\errormaparg{\vertname}\}_{\vertname\in\vertset}$.
\end{definition}
If a subset of vertices is unobserved, the distribution over the remaining observed vertices is obtained through marginalising~\cref{def: distribution_functional_cm} over the unobserved variables.

\subsection{Mapping a functional model to a causal model}
\label{sec:fCM_to_CM}
Classical systems are naturally embedded into quantum mechanical systems. Indeed, they correspond to systems whose state belongs to a subset of density operators, namely states which are diagonal in a given fixed basis, the so-called \textit{computational} basis.
Classical causal modelling, involving classical states and functional causal mechanisms is also naturally embedded into quantum causal modelling. 
In this section, we define a map from functional causal models to causal models which preserves the acyclic probability~\cref{def: distribution_functional_cm} through~\cref{def: acyclic probability}.

\paragraph{Causal graph.} Functional models are defined over graphs, not causal graphs. Hence, we first need to define causal graphs associated to functional models. Since these involve classical random variables, all edges of the causal graph are classical. Without loss of generality, we consider all vertices to be observed. If some vertices are unobserved, the probability distribution over the remaining ones is obtained though marginalising the joint distribution over the unobserved variables. Hence, it is not restrictive to consider all vertices as observed and eventually marginalise over unobserved ones.

\begin{definition}[Causal graph of a functional model]
\label{def:functional causal graph}
Given a functional model on a directed graph $\graphname= \graphexpl$, we define a causal graph by decorating $\graphname$ such that all vertices are observed, $\vertset=\overtset$, and all edges are classical, $\edgeset=\cledgeset$. Hence, all edges of the graph are represented as $\centertikz{
        \draw[cleg] (0,0) -- (0.6,0);
        }$, and all vertices are represented as $\centertikz{
            \node[onode] (q) at (0,0) {$\vertname$};
            }$.
\end{definition}
Notice that the outgoing edges of an observed vertex are necessarily classical, since all edges are classical, which proves that~\cref{def:functional causal graph} gives a well-defined causal graph.

\paragraph{Causal model.}
Mapping a functional causal model to a causal model as in~\cref{def:causal model}, involves describing classical objects, such as random variables and functional dependencies, as quantum ones.
Recall that, given a finite set $\outcomemap$, we associate to it a finite dimensional Hilbert space, such that $\dim(\hilmapsetarg{\outcomemap})=|\outcomemap|$ and label the elements of an orthonormal basis of $\hilmapsetarg{\outcomemap}$ as $\ket{\outcome}$ for $\outcome\in\outcomemap$, i.e., \[
        \hilmapsetarg{\outcomemap}= \text{\normalfont{span}}\left\{\; \ket{\outcome}\;  |  \; \outcome\in\outcomemap\right\}.
   \]
   This allows us to describe classical random variables.
In addition, functional dependencies can be described using quantum channels as follows.
\begin{definition}
    \label{def:POVM from functions}
    Let us consider a function $ \func: \outcomemapalt \mapsto \outcomemap$, where $\outcomemap$ and $\outcomemapalt$ are finite sets.
    We define a POVM $\povmarg{\func}=\{\povmelarg{\outcome}{\func} \}_{\outcome\in\outcomemap} \subset \linops\left(\hilmapsetarg{\outcomemapalt}\right)$ , as
    \begin{equation}
       \povmelarg{\outcome}{\func} = \sum_{\outcomealt\in\outcomemapalt} \delta_{\outcome, \func(\outcomealt)} \ketbra{\outcomealt}.
    \end{equation}
    \end{definition}
    
For all functions $\func: \outcomemapalt \mapsto \outcomemap$, we have
\begin{equation}
\label{eq:action povm func}
    \Tr\left[\povmelarg{\outcome}{\func} \ketbra{\outcomealt}\right] =\delta_{\outcome, \func(\outcomealt)}.
\end{equation}
In addition, $\{\povmelarg{\outcome}{\func}\}_{\outcome\in\outcomemap}$ is a POVM, since $\povmelarg{\outcome}{\func}$ are positive and it holds
    \begin{equation}
        \sum_{\outcome\in\outcomemap} E^\outcome_{\func} = \sum_{\substack{\outcome\in\outcomemap \\ \outcomealt \in\outcomemapalt}} \delta_{\outcome, \func(\outcomealt)} \ketbra{\outcomealt} =\sum_{\substack{\outcomealt\in\outcomemapalt \\ \exists \outcome\in \outcomemap : \func(\outcomealt)=\outcome}} \ketbra{\outcomealt}=\sum_{\outcomealt\in\outcomemapalt} \ketbra{\outcomealt}= \id.
    \end{equation}
With these results in mind, we proceed to describe functional models within our causal modelling framework.

\begin{definition}[Causal model associated to a functional model]
\label{def:CM of a fCM}
    Consider a functional model $\fcm_{\graphname}$ (\cref{def:functional_CM}) on a directed graph $\graphname= \graphexpl$. Using the same notation of~\cref{def:functional_CM}, we define a causal model $\cm(\fcm)_{\graphname}$ on the causal graph given by~\cref{def:functional causal graph}, through:
    \begin{myitem}
    \item To each edge $\edgename \in \edgeset = (\vertname,\vertname')$ associate the finite dimensional Hilbert space $\hilmaparg{\edgename}=\hilmapsetarg{\outcomemaparg{\vertname}}$, where $\hilmapsetarg{\outcomemaparg{\vertname}}$ is defined as in~\cref{eq:Hxspace}; 
    \item To each error variable $\errorrvarg{\vertname}$ of $\vertname\in\vertset$ associate the finite dimensional Hilbert space $\hilmapsetarg{\errormaparg{\vertname}}$, where $\hilmapsetarg{\errormaparg{\vertname}}$ is defined as in~\cref{eq:Hxspace}, and the state
    \begin{equation}
    \label{eq:error state}
        \sigma^{\vertname} = \sum_{u_{\vertname}\in\errormaparg{\vertname}}\probex{\vertname}(u_{\vertname}) \ketbra{u_\vertname}_{\hilmapsetarg{\errormaparg{\vertname}}};
    \end{equation}
     \item To each vertex $\vertname\in\vertset$ associate the finite set $\outcomemaparg{\vertname}$ as outcome set; 
    \item To each vertex $\vertname\in\vertset$ associate the set of CP maps
    \begin{equation}
        \left\{\measmaparg{\outcome}{\vertname}: \linops\left(\hilmaparg{\inedges{\vertname}}\right) \mapsto \linops\left(\hilmaparg{\outedges{\vertname}}\right)\right\}_{\outcome\in\outcomemaparg{\vertname}}
    \end{equation}
   defined as follows: for all $\rho\in\linops(\hilmaparg{\inedges{\vertname}})$,
    \begin{equation}
        \measmaparg{\outcome}{\vertname}(\rho) = \Tr\left[\povmelarg{\outcome}{\funcarg{\vertname}}(\rho\otimes\sigma^{\vertname})\right] \bigotimes_{\edgename\in\outedges{\vertname}} \ketbra{\outcome}_{\hilmaparg{\edgename}},
    \end{equation}
    where $\{\povmelarg{\outcome}{\funcarg{\vertname}}\}_{\outcome\in\outcomemaparg{\vertname}}$ is the POVM obtained through applying~\cref{def:POVM from functions} to the function $\funcarg{\vertname}: \outcomemaparg{\parnodes{\vertname}}\times \errormaparg{\vertname} \mapsto \outcomemaparg{\vertname}$ and $\sigma^{\vertname}$ is defined in~\cref{eq:error state}. 
\end{myitem}
\end{definition}

The following lemma proves that given an acyclic functional causal model $\fcm_{\graphname}$, the probability distribution evaluated using~\cref{def: distribution_functional_cm} equals the probability distribution of the image causal model~(\cref{def:CM of a fCM}), $\cm(\fcm)_{\graphname}$ evaluated using~\cref{def: acyclic probability}.

\begin{lemma}[Equivalence of ayclic probabilities under~\cref{def:CM of a fCM}]
\label{lem: equivalence acyclic probabilities fcm and cm(fcm)}
    Consider a functional model on a directed and \textit{acyclic} graph $\graphname=\graphexpl$, $\fcm_{\graphname}$, and its image under the mapping of~\cref{def:CM of a fCM}, $\cm(\fcm)_{\graphname}$.
    Let $\outcome := \{\outcome_\vertname \in \outcomemaparg{\vertname}\}_{\vertname\in\vertset}$ be a joint observed event, then we have:
    \begin{equation}
        \probfacyc(\outcome)_{\graphname}=\probacyc(\outcome)_{\graphname},
    \end{equation}
    where $\probfacyc_{\graphname}$ is the acyclic probability distribution of $\fcm_{\graphname}$, evaluated with~\cref{def: distribution_functional_cm}, and $\probacyc_{\graphname}$ the acyclic probability distribution of $\cm(\fcm)_{\graphname}$, evaluated using~\cref{def: acyclic probability}.
\end{lemma}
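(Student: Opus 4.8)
The plan is to expand both sides and show they agree term-by-term. I would start from the right-hand side by applying the acyclic probability rule of \cref{def: acyclic probability} to the image model $\cm(\fcm)_{\graphname}$. Since every edge carries a classical system diagonal in the computational basis, every intermediate state produced in the composition $\bigcomp_{\vertname\in\vertset}\measmaparg{\outcome_\vertname}{\vertname}$ is of the product form $\bigotimes_{\edgename}\ketbra{\outcome_{\vertname(\edgename)}}$, where $\vertname(\edgename)$ is the source vertex of the edge $\edgename$. The key observation is that the CP map $\measmaparg{\outcome}{\vertname}$ of \cref{def:CM of a fCM}, when fed such a computational-basis product input $\rho = \bigotimes_{\edgename\in\inedges{\vertname}}\ketbra{\outcome_{\parnodes{\vertname}}}$, produces a scalar weight times an output projector, namely
\begin{equation}
    \measmaparg{\outcome_\vertname}{\vertname}\!\left(\bigotimes_{\edgename\in\inedges{\vertname}}\ketbra{\outcome_{\parnodes{\vertname}}}\right) = \Tr\!\left[\povmelarg{\outcome_\vertname}{\funcarg{\vertname}}\!\left(\ketbra{\outcome_{\parnodes{\vertname}}}\otimes\sigma^\vertname\right)\right]\bigotimes_{\edgename\in\outedges{\vertname}}\ketbra{\outcome_\vertname}.
\end{equation}

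Next I would evaluate this scalar weight explicitly. Using the definition \cref{def:POVM from functions} of $\povmelarg{\outcome}{\func}$, the definition of the error state $\sigma^\vertname$ in \cref{eq:error state}, and the trace identity \cref{eq:action povm func}, this weight becomes $\sum_{u_\vertname}\probex{\vertname}(u_\vertname)\,\delta_{\outcome_\vertname,\funcarg{\vertname}(\outcome_{\parnodes{\vertname}},u_\vertname)}$. Because the graph is acyclic, the composition rule of \cref{def: acyclic probability} lets me compose the vertex maps in a topological order so that each vertex receives exactly the basis state $\ketbra{\outcome_{\parnodes{\vertname}}}$ determined by the already-processed parents; I would make this precise by an induction on the topological order, maintaining the invariant that after processing an up-set of vertices the global state is the product of the corresponding computational-basis projectors weighted by the product of the scalar weights accumulated so far. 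The final output in $\linops(\mathbb{C})$ is then the full product of weights,
\begin{equation}
    \probacyc(\outcome)_{\graphname} = \prod_{\vertname\in\vertset}\sum_{u_\vertname\in\errormaparg{\vertname}}\probex{\vertname}(u_\vertname)\,\delta_{\outcome_\vertname,\funcarg{\vertname}(\outcome_{\parnodes{\vertname}},u_\vertname)}.
\end{equation}

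Finally I would show this matches the left-hand side. The functional rule \cref{def: distribution_functional_cm} gives $\probfacyc(\outcome)_{\graphname}=\sum_{u}\prod_{\vertname}\probex{\vertname}(u_\vertname)\,\delta_{\outcome_\vertname,\funcarg{\vertname}(\outcome_{\parnodes{\vertname}},u_\vertname)}$, and since the joint sum over $u=\{u_\vertname\}_\vertname$ factorizes (each error variable appears in exactly one factor and the priors are independent), this equals the product of the per-vertex sums, i.e.\ exactly the expression above for $\probacyc(\outcome)_{\graphname}$. I expect the main obstacle to be bookkeeping rather than conceptual: carefully justifying that the acyclic composition operation $\bigcomp$ indeed feeds each vertex map precisely the computational-basis product state indexed by its parents' outcomes, with the correct tensor-factor placement, and that no cross terms survive. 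This relies on the decoherence/classicality of all edges (so that off-diagonal contributions never arise) together with a clean induction on a topological ordering of $\graphname$.
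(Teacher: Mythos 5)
Your proposal is correct and follows essentially the same route as the paper's own proof: both evaluate the action of each CP map $\measmaparg{\outcome_\vertname}{\vertname}$ on computational-basis product inputs to extract the scalar weight $\sum_{u_\vertname}\probex{\vertname}(u_\vertname)\,\delta_{\outcome_\vertname,\funcarg{\vertname}(\outcome_{\parnodes{\vertname}},u_\vertname)}$, propagate through the acyclic composition using the classicality/decoherence of all edges, and then identify the resulting product of per-vertex sums with the factorized form of $\sum_u\prod_\vertname\probex{\vertname}(u_\vertname)\,\delta_{\outcome_\vertname,\funcarg{\vertname}(\outcome_{\parnodes{\vertname}},u_\vertname)}$. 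The only difference is that you make the induction on a topological ordering explicit where the paper argues it more informally; this is a presentational refinement, not a different argument.
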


\begin{proof}
     For all $\vertname\in\vertset$, let us consider the action of $\measmaparg{\outcome_{\vertname}}{\vertname}$ on states of the form
     \begin{equation}
         \ketbra{\outcome_{\parnodes{\vertname}}}:=  \bigotimes_{\vertname'\in\parnodes{\vertname}}\ketbra{\outcome_{\vertname'}},
     \end{equation}
     where $\outcome_{\vertname'}\in\outcomemaparg{\vertname'}$ for all $\vertname'\in\parnodes{\vertname}$. Using~\cref{eq:action povm func} and linearity of the trace, we have
\begin{equation}
 \label{eq:action of cp maps}
        \measmaparg{\outcome_{\vertname}}{\vertname} \left(\ketbra{\outcome_{\parnodes{\vertname}}}\right) =\sum_{u_\vertname\in\errormaparg{\vertname}} \probex{\vertname}(u_\vertname)\delta_{\outcome_{\vertname}, \funcarg{\vertname}(\outcome_{\parnodes{\vertname}},u_{\vertname})} \bigotimes_{\edgename\in\outedges{\vertname}} \ketbra{\outcome_{\vertname}}_{\hilmaparg{\edgename}}.
 \end{equation} 
    The maps associated to vertices are composed according to the graph connectivity (see~\cref{def: acyclic probability} for more details) taking as input $1\in\mathbb{C}$. In addition, the output of any CP map $\measmaparg{\outcome}{\vertname}$ is always decohered in the basis specified by the values of $\outcome_\vertname$ (this is always the case for maps associated to observed vertices). Thus, the input of all maps is always of the form considered in~\cref{eq:action of cp maps}, up to a constant factor. 
    Since the graph is finite and acyclic, the output of the maps composition will eventually be in $\mathbb{C}$.
    By linearity and because of trace-preservation, such composition will equal the product of the proportionality factors mentioned above, i.e., $\sum_{u_\vertname} \probex{\vertname}(u_\vertname)\delta_{\outcome_{\vertname}, \funcarg{\vertname}(\outcome_{\parnodes{\vertname}},u_{\vertname})}$ for each $\vertname\in\vertset$.

 Thus, applying~\cref{def: acyclic probability} gives
 \begin{equation}
 \begin{split}
     \prob(\outcome)_{\graphname} = \bigcomp_{\vertname \in \vertset}
        \measmaparg{\outcome_{\vertname}}{\vertname} 
         =& \prod_{\vertname\in\vertset}\sum_{u_\vertname\in\errormaparg{\vertname}} \probex{\vertname}(u_\vertname)\delta_{\outcome_{\vertname}, \funcarg{\vertname}(\outcome_{\parnodes{\vertname}},u_{\vertname})}\\
         =&\sum_{u_\vertset}\prod_{\vertname\in\vertset}\probex{\vertname}(u_\vertname) \delta_{\outcome_{\vertname}, \funcarg{\vertname}\left(\outcome_{\parnodes{\vertname}},u_{\vertname}\right)}=\probfacyc\left(\outcome\right)_{\graphname},
 \end{split}
    \end{equation}
    where the sum $\sum_{u_\vertset}$ runs over $u_\vertset=\{u_\vertname\in\errormaparg{\vertname}\}_{\vertname\in\vertset}$ and $\outcome=\{\outcome_{\vertname}\}_{\vertname\in\vertset}$. 
 Thus proving the equivalence of this definition and~\cref{def: distribution_functional_cm}. 
\end{proof}

\paragraph{Remark.}\label{Remark:unobserved in fcm} We described functional models within our framework as causal models on a causal graph having only observed vertices. In the functional modelling literature, if a vertex is unobserved the probability distribution over the remaining ones is obtained through marginalising the joint distribution. Within our framework, unobserved variables can be naturally represented as unobserved vertices of the causal graph. Consider a functional model as in~\cref{def:functional_CM}, and assume that a subset of vertices are unobserved, $\uvertset\subseteq\vertset$. Then, we associate to this graph a causal graph where all edges are classical, $\edgeset=\cledgeset$, the unobserved vertices are $\uvertset$ and the observed are $\overtset=\vertset\setminus\uvertset$. We define a causal model on this causal graph by preserving the same associations of~\cref{def:CM of a fCM} to edges, observed vertices $\vertname\in\overtset$ and all error variables. 
However, to each unobserved vertex $\vertname\in\uvertset$, we associate the CPTP map such that, for all $\rho\in\linops(\hilmaparg{\inedges{\vertname}})$,
\begin{equation}
    \chanmaparg{\vertname}(\rho) = \sum_{\outcome\in\outcomemaparg{\vertname}} \measmaparg{\outcome}{\vertname}(\rho) = \sum_{\outcome\in\outcomemaparg{\vertname}} \Tr\left[\povmelarg{\outcome}{\funcarg{\vertname}}(\rho\otimes \sigma^\vertname)\right] \bigotimes_{\edgename\in\outedges{\vertname}} \ketbra{\outcome}_{\hilmaparg{\edgename}},
\end{equation}
where $\rho_{\outcome}^\vertname$ and $\measmaparg{\outcome}{\vertname}$ are as in~\cref{def:CM of a fCM}.

The probability distribution of this causal model, evaluated through~\cref{def: acyclic probability}, gives for all $\outcome=\{\outcome_\vertname\in\outcomemaparg{\vertname}\}_{\vertname\in\vertset}$
\begin{equation}
\begin{split}
    \probacyc\left(\{\outcome_\vertname\}_{\vertname\in\overtset}\right)_{\graphname} &=\bigcomp_{\vertname \in \overtset}
    \measmaparg{\outcome_{\vertname}}{\vertname}
    \bigcomp_{\vertname' \in \uvertset} \chanmaparg{\vertname'}=\sum_{\{\outcome_\vertname\}_{\vertname\in\uvertset}}\bigcomp_{\vertname \in \vertset}\measmaparg{\outcome_{\vertname}}{\vertname} \\
    &= \sum_{\{\outcome_\vertname\}_{\vertname\in\uvertset}} \probfacyc\left(\outcome\right)_{\graphname}= \probfacyc\left(\{\outcome_\vertname\}_{\vertname\in\overtset}\right)_{\graphname}.
\end{split}
\end{equation}
where the sum runs over $\{\outcome_{\vertname}\in\outcomemaparg{\vertname}\}_{\vertname\in\uvertset}$.
Therefore, the distribution of the causal model which accounts for unobserved vertices in the causal graph is equal to the distribution obtained through marginalising~\cref{def: distribution_functional_cm} over variables associated to unobserved vertices.

\subsection{Equivalence of $p$-separation definitions}
In a companion paper~\cite{Sister_paper}, we provide a causal modelling framework for cyclic classical causal models. This follows the same ideas of this work in mapping a given cyclic functional model to a family of acyclic models with post-selections. However, the mapping in section 3 of~\cite{Sister_paper} is constructed on a slightly different family of acyclic graphs which simplifies the mapping in the classical case without relying on the quantum notation and formalism used in this paper\footnote{The simplification is possible as classical variables can be copied and broadcast to multiple parties, i.e., the children vertices of the variable.}. 
The graph family of~\cref{def: graph_family_v3} plays a crucial role in defining $p$-separation~(\cref{def: p-separation}), while in~\cite{Sister_paper}, $p$-separation defined with respect to this slightly modified family. Here, we show that the two definitions of $p$-separation, which only depend on the graph, are equivalent. 
Let us first comment on the two different graph families.

\paragraph{One graph, two acyclic families.}
Given a (causal) graph $\graphname$, we constructed a family of acyclic causal graph as in~\cref{def: graph_family_v3}. This is denoted as $\graphfamily{\graphname}$, and defined above. For completeness, let us repeat the definition of $\graphfamily{\graphname}$.
\graphfamq*
From the same underlying graph $\graphname$, in definition 7 of~\cite{Sister_paper} we introduce a different family of acyclic graphs. We provide the definition of this family, which is denoted as $\graphfamilysn{\graphname}$, here. 

\begin{definition}[Family of acyclic classical teleportation graphs $\graphfamilysn{\graphname}$]
    \label{def:sn_graph_family}
    Given a directed graph $\graphname=\graphexpl$,
    we define an associated family $\graphfamilysn{\graphname}$ of acyclic graphs, where each element $\graphname\sn\in\graphfamilysn{\graphname}$ is obtained from the graph $\graphname$ as follows.
    \begin{myitem}
        \item Choose any subset of vertices $\splitvert{\graphname\sn}\subseteq\vertset$, such that the subgraph $\graphname'=(\vertset',\edgeset')$ of $\graphname$ with $\vertset'=\vertset$ and $\edgeset'=\edgeset\setminus\left(\bigcup_{\vertname\in\splitvert{\graphname\sn}}\outedges{\vertname}\right)$ is acyclic. We refer to $\splitvert{\graphname\sn}$ as the split vertices of $\graphname\sn$.
        \item The vertices $\vertset\sn$ of $\graphname\sn$ consist of the vertices $\vertset$ of the original graph $\graphname$ together with new vertices $\prevertname_\vertname, \postvertname_\vertname$ for each split vertex $\vertname \in \splitvert{\graphname\sn}$, i.e.,
        \begin{align}
            \vertset\sn = \vertset 
            \cup \{ \prevertname_\vertname \}_{\vertname\in\splitvert{\graphname\sn}} 
            \cup \{ \postvertname_\vertname \}_{\vertname\in\splitvert{\graphname\sn}}.
        \end{align}
        \item The edges $\edgeset\sn$ of $\graphname\sn$ consist of the edges $\edgeset'$ of the subgraph $\graphname'$ together with the following new edges:
        \begin{align}
            \edgeset\sn = \edgeset' 
            \cup \{\edgearg{\vertname}{\postvertname_\vertname}\}_{\vertname\in\splitvert{\graphname\sn}}
            &\cup \{\edgearg{\prevertname_\vertname}{\postvertname_\vertname}\}_{\vertname\in\splitvert{\graphname\sn}} \nonumber\\
            &\cup \{\edgearg{\prevertname_\vertname}{\vertname'}\}_{\vertname\in\splitvert{\graphname\sn}, \vertname' \in \childnodes{\vertname}_{\graphname}},
        \end{align}
        where $\childnodes{\vertname}_{\graphname}$ refers to the children vertices of $\vertname$ in the graph $\graphname$.     
    \end{myitem}
    For each $\vertname\in\splitvert{\graphname\sn}$, we refer to the vertices $\prevertname_\vertname$ and $\postvertname_\vertname$ as pre- and post-selection vertices respectively and depict them with distinct vertex styles  $\centertikz{\node[psnode] {$\postvertname_\vertname$};}$ and $\centertikz{\node[prenode] {$\prevertname_\vertname$};}$, as these will play a special role in our framework.
    This construction makes $\graphname\sn$ identical to $\graphname$ up to replacing all vertices $\vertname\in\splitvert{\graphname\sn}$ with the following structure
    \begin{equation}
\label{eq:split_node}
    \centertikz{
        \node[onode] (v) {$\vertname$};
        \draw[cleg] ($(v)-(0.5,0.8)$) -- (v.240);
        \draw[cleg] ($(v)-(-0.5,0.8)$) -- (v.300);
        \draw[cleg] (v.60) -- ($(v)+(0.5,0.8)$);
        \draw[cleg] (v.120) -- ($(v)+(-0.5,0.8)$);
        \node [above = 0.4*\chanvspace of v] {\indexstyle{\dots}};
        \node [below = 0.4*\chanvspace of v] {\indexstyle{\dots}};
        \draw [thick, decoration={brace},decorate]($(v)-(-0.5,1)$) -- ($(v)-(0.5,1)$) node [pos=0.5,anchor=north] {\indexstyle{\parnodes{\vertname}}};
        \draw [thick, decoration={brace},decorate]($(v)+(-0.5,1)$) -- ($(v)+(0.5,1)$) node [pos=0.5,anchor=south] {\indexstyle{\childnodes{\vertname}}}; 
    } \mapsto
    \centertikz{
        \node[onode] (vin) at (0,0) {$\vertname$};
        \draw[cleg] ($(vin)-(0.5,0.8)$) -- (vin.240);
        \draw[cleg] ($(vin)-(-0.5,0.8)$) -- (vin.300);
         \node[prenode] (vout) at (2.5,0) {$\prevertname_\vertname$};
        \draw[cleg] (vout.40) -- ($(vout)+(0.7,0.8)$);
        \draw[cleg] (vout.100) -- ($(vout)+(-0.3,0.8)$);
        \node [above = 0.4*\chanvspace of vout.60] {\indexstyle{\dots}};
        \node [below = 0.4*\chanvspace of vin] {\indexstyle{\dots}};
        \draw [thick, decoration={brace},decorate]($(vin)-(-0.5,1)$) -- ($(vin)-(0.5,1)$) node [pos=0.5,anchor=north] {\indexstyle{\parnodes{\vertname}_{\graphname}}};
        \draw [thick, decoration={brace},decorate]($(vout)+(-0.3,1)$) -- ($(vout)+(0.7,1)$) node [pos=0.5,anchor=south] {\indexstyle{\childnodes{\vertname}_{\graphname}}}; 
        \node[psnode] (ps) at (1,2) {$\postvertname_\vertname$};
        \draw[cleg] (vout.140) -- (ps.300);
        \draw[cleg] (vin.90) -- (ps.240);
    }.
\end{equation}
Each element of the family $\graphname\sn\in\graphfamilysn{\graphname}$ is called a classical \textup{teleportation graph}. The set of all pre- and post-selection vertices in $\graphname\sn$ are denoted as $\prevertset = \{\prevertname_\vertname\}_{\vertname\in\splitvert{\graphname\sn}}$ and $\psvertset = \{\postvertname_\vertname\}_{\vertname\in\splitvert{\graphname\sn}}$.
\end{definition}
\Cref{def: graph_family_v3} is formulated in terms of causal graphs while~\cref{def:sn_graph_family} is not. However, we can equivalently consider a causal graph where all vertices are observed in \cref{def:sn_graph_family} (consistently with the notation used in  \cref{eq:split_node})and compare the two definitions. 

With this in mind, the family of acyclic graphs of~\cref{def:sn_graph_family} is constructed in a similar way to the family of acyclic teleportation graphs in~\cref{def: graph_family_v3}. 
Specifically, both definitions involve considering an acyclic subgraph of $\graphname$. The subgraphs that are used to construct~\cref{def:sn_graph_family} are a subset of those considered for~\cref{def: graph_family_v3}. Indeed, in the latter, one can remove any subset of edges from $\graphname$ to make graph acyclic, while in~\cref{def:sn_graph_family} one removes all outgoing edges from a subset of vertices $\splitvert{\graphname\sn}$. 
In addition, in the classical case all outgoing edges are jointly substituted with one pair of pre- and post-selection vertices, while, in the teleportation graph family of~\cref{def: graph_family_v3}, each split edge is associated to a pair of pre- and post-selection vertices.

To understand the difference between these two families, consider the cyclic graph, $\graphname_{\textup{eg}}$, and its subgraphs, $\graphname_{\textup{eg}}'$ and $\graphname_{\textup{eg}}''$.
\begin{equation}
    \graphname_{\textup{eg}}=\centertikz{
        \node[onode] (A) at (0,0) {$A$};
        \node[onode] (B) at (1.5,1.25) {$B$};
        \node[onode] (C) at (1.5,-1.25) {$C$};
        \draw[cleg] (A.20)to[in=270,out=0](B.south);
        \draw[cleg] (B.180)to[in=90,out=180](A.120);
        \draw[cleg] (A.340)to[in=90,out=0](C.north);
        \draw[cleg] (C.180)to[in=270,out=180](A.240);
    }, \quad \graphname_{\textup{eg}}'= \centertikz{
        \node[onode] (A) at (0,0) {$A$};
        \node[onode] (B) at (1.5,1.25) {$B$};
        \node[onode] (C) at (1.5,-1.25) {$C$};
        \draw[cleg] (B.180)to[in=90,out=180](A.120);
        \draw[cleg] (C.180)to[in=270,out=180](A.240);
    }, \quad \graphname_{\textup{eg}}''= \centertikz{
        \node[onode] (A) at (0,0) {$A$};
        \node[onode] (B) at (1.5,1.25) {$B$};
        \node[onode] (C) at (1.5,-1.25) {$C$};
        \draw[cleg] (B.180)to[in=90,out=180](A.120);
        \draw[cleg] (A.340)to[in=90,out=0](C.north);
    }
\end{equation}
The acyclic sugraph $\graphname_{\textup{eg}}'$ is a valid acyclic subraph for both~\cref{def: graph_family_v3,def:sn_graph_family}, respectively with set of split edges $\splitedges{\graphname_{\textup{eg}}'}=\{(A,B),(A,C)\}$ and split vertices $\splitvert{\graphname_{\textup{eg}}}=\{A\}$. While, the acyclic subgraph $\graphname_{\textup{eg}}''$, which has set of split edges $\splitedges{\graphname_{\textup{eg}}''}=\{(A,B),(C,A)\}$, is not valid for~\cref{def:sn_graph_family}.

The member of the families in~\cref{def: graph_family_v3,def:sn_graph_family} constructed from $\graphname_{\textup{eg}}'$, are, respectively,
\begin{equation}
    \graphtele= \centertikz{
        \node[onode] (A) at (0,0) {$A$};
        \node[onode] (B) at (0.75,2) {$B$};
        \node[onode] (C) at (0.75,-2) {$C$};
        \node[prenode] (pre1) at (2.5,1.75) {$\prevertname_1$};
        \node[prenode] (pre2) at (2.5,-1.75) {$\prevertname_2$};
        \node[psnode] (ps1) at (1.5,0.75) {$\postvertname_1$};
        \node[psnode] (ps2) at (1.5,-0.75) {$\postvertname_2$};
        \draw[cleg] (A.20)to[in=180,out=0](ps1.180);
        \draw[cleg] (pre1.270)to[in=0,out=270](ps1.0);
        \draw[cleg] (pre1.160)to[in=0,out=180](B.350);
        \draw[cleg] (B.180)to[in=90,out=180](A.120);
        \draw[cleg] (A.340)to[in=180,out=0](ps2.180);
        \draw[cleg] (pre2.90)to[in=0,out=90](ps2.0);
        \draw[cleg] (pre2.180)to[in=0,out=180](C.0);
        
        \draw[cleg] (C.180)to[in=270,out=180](A.240);
        } \text{ and } \graphname\sn= \centertikz{
        \node[onode] (A) at (0,0) {$A$};
        \node[onode] (B) at (0.75,2) {$B$};
        \node[onode] (C) at (0.75,-2) {$C$};
        \node[prenode] (pre) at (2.5,0) {$\prevertname$};
        \node[psnode] (ps) at (1.25,0) {$\postvertname$};
        \draw[cleg] (pre.90)to[in=0,out=90](B.0);
        \draw[cleg] (pre.270)to[in=0,out=270](C.0);
        \draw[cleg] (B.180)to[in=90,out=180](A.120);
        \draw[cleg] (C.180)to[in=270,out=180](A.240);

        \draw[cleg] (A.0) to[in=180,out=0](ps.180);
        \draw[cleg] (pre.180) to[in=0,out=180](ps.0);
        }
\end{equation}
which differ in the amount of pre- and post-selection vertices. Such difference reflects the fact that each vertex of a functional model broadcasts its value to all its children, thus allowing for simplification compared to the quantum definition.

\paragraph{Equivalence of $p$-separation definitions.}
Let us now show that $p$-separation can be equivalently defined using either of these graph families. This result is necessary to establish consistency of the present work with the results of~\cite{Sister_paper}. First we recall the two definitions of $p$-separation by repeating the definition given in~\cref{sec: introducing pseparation} and in~\cite{Sister_paper}. 

\psep*

In section 4 of~\cite{Sister_paper}, $p$-separation was defined with respect to the graph family~\cref{def:sn_graph_family}.\footnote{In what follows we modify the notation of $p$-separation with respect to $\graphfamilysn{\graphname}$ compared to~\cite{Sister_paper}, which is denoted there as $(V_1\perp^{p} V_2|V_3)_{\graphname}$. Here, $(V_1\perp^{p} V_2|V_3)_{\graphname}$ denotes $p$-separation with respect to $\graphfamily{\graphname}$, while $(V_1\perp^{p,c} V_2|V_3)_{\graphname}$ is with respect to $\graphfamilysn{\graphname}$. This distinction is necessary to show equivalence of the two definitions. Once this is established the notation $(V_1\perp^{p} V_2|V_3)_{\graphname}$ will denote either definition.}

\begin{definition}[$p$-separation in \cite{Sister_paper}]
\label{def: cl_p-separation}
Let $\graphname$ be a directed graph and $V_1$, $V_2$ and $V_3$ denote any three disjoint subsets of the vertices of $\graphname$ with $V_1$ and $V_2$ being non-empty. Then, we say that $V_1$ is $p$-separated from $V_2$ given $V_3$ in $\graphname$, denoted $(V_1\perp^{p,c} V_2|V_3)_{\graphname}$, if and only if there exists $\graphname\sn\in \graphfamilysn{\graphname}$ (\cref{def:sn_graph_family}) such that $(V_1\perp^d V_2|V_3\cup\psvertset)_{\graphname\sn}$, where $\perp^d$ denotes $d$-separation and $\psvertset$ denotes the set of all post-selection vertices in the teleportation graph $\graphname\sn\in \graphfamilysn{\graphname}$. 
 Otherwise, we say that $V_1$ is $p$-connected to $V_2$ given $V_3$ in $\graphname$, and we denote it $(V_1\not\perp^{p,c} V_2|V_3)_{\graphname}$.
 To summarize,
 \begin{equation}
 \begin{aligned}
     \text{$p$-separation: } (V_1\perp^{p,c}V_2|V_3)_\graphname \equiva \exists \graphname\sn \in \graphfamilysn{\graphname} \st (V_1\perp^d V_2|V_3 \cup \psvertset)_{\graphname\sn}, \\
     \text{$p$-connection: } (V_1\not\perp^{p,c}V_2|V_3)_\graphname \equiva \forall \graphname\sn \in \graphfamilysn{\graphname} \st (V_1\not\perp^d V_2|V_3 \cup \psvertset)_{\graphname\sn}.
 \end{aligned}
 \end{equation}
\end{definition}

The following lemma shows equivalence of the two definitions of $p$-separation.

\begin{lemma}[Equivalence of $p$-separation in the two graph families]
\label{lemma: pseparation between families}
    Let us consider a directed graph $\graphname$ and three disjoint subsets of vertices $\vertset_1,\vertset_2,\vertset_3\subseteq\vertset$ with $\vertset_1,\vertset_2$ being not empty. Then, it holds
    \begin{equation}
        (V_1\perp^{p,c} V_2|V_3)_{\graphname} \iff (V_1\perp^{p} V_2|V_3)_{\graphname},
    \end{equation}
    where $(V_1\perp^{p,c} V_2|V_3)_{\graphname}$ denotes $p$-separation with respect to the graph family $\graphfamilysn{\graphname}$ (\cref{def: cl_p-separation}) and $(V_1\perp^{p} V_2|V_3)_{\graphname}$ denotes $p$-separation with respect to the graph family $\graphfamily{\graphname}$ (\cref{def: p-separation}).
\end{lemma}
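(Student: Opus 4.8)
The plan is to prove the two definitions of $p$-separation agree by showing that the relevant $d$-separation/connection facts transfer between the two graph families $\graphfamily{\graphname}$ and $\graphfamilysn{\graphname}$. Since both definitions quantify existentially over their respective families ($\exists \graphtele$ such that a $d$-separation holds after conditioning on post-selection vertices), it suffices to establish, in both directions, that a witnessing graph in one family can be converted into a witnessing graph in the other family. I would prove the biconditional by proving the two implications separately, and in each case it is cleanest to argue the contrapositive using $p$-connection (the $\forall$ form), or to directly exhibit the witness graph; I will describe the direct-witness approach.

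First I would prove $(V_1\perp^{p,c} V_2|V_3)_{\graphname} \implies (V_1\perp^{p} V_2|V_3)_{\graphname}$. Suppose there is a classical teleportation graph $\graphname\sn \in \graphfamilysn{\graphname}$ with split vertices $\splitvert{\graphname\sn}$ such that $(V_1\perp^d V_2|V_3\cup\psvertset)_{\graphname\sn}$. The key observation is that $\graphname\sn$ is obtained by removing \emph{all} outgoing edges of each split vertex $\vertname\in\splitvert{\graphname\sn}$ and rerouting them through a single pre/post-selection pair. I would construct a corresponding $\graphtele\in\graphfamily{\graphname}$ by splitting \emph{each} such outgoing edge $(\vertname,\vertname')$ individually (i.e.\ taking $\splitedges{\graphtele} = \bigcup_{\vertname\in\splitvert{\graphname\sn}}\outedges{\vertname}$), which is a legitimate choice in $\graphfamily{\graphname}$ since the resulting subgraph $\graphname'$ is the same acyclic subgraph used for $\graphname\sn$. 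The two graphs differ only locally: where $\graphname\sn$ routes all children of $\vertname$ through one shared vertex $\prevertname_\vertname$ fed by a single $\postvertname_\vertname$, the graph $\graphtele$ uses a separate $\prevertname_i,\postvertname_i$ for each child-edge. I would then argue that any (undirected) path and its blocking status in $\graphname\sn$ has a matching path in $\graphtele$ with the same blocking status under conditioning on $V_3$ together with the respective post-selection vertices, because the post-selection vertices $\postvertname_\vertname$ in $\graphname\sn$ are colliders whose conditioning behaviour is replicated by the collection $\{\postvertname_i\}$ in $\graphtele$. Hence $d$-separation transfers, giving the witness for $p$-separation in $\graphfamily{\graphname}$.

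For the converse $(V_1\perp^{p} V_2|V_3)_{\graphname} \implies (V_1\perp^{p,c} V_2|V_3)_{\graphname}$, the situation is more delicate because a witness $\graphtele\in\graphfamily{\graphname}$ may arise from an acyclic subgraph $\graphname'$ (such as $\graphname_{\textup{eg}}''$ in the example) that removes a proper, non-vertex-aligned subset of a vertex's outgoing edges, which is \emph{not} a legal subgraph for $\graphfamilysn{\graphname}$. The plan here is to show that whenever such a witness exists, one can find another witness in $\graphfamily{\graphname}$ whose split edges \emph{are} of the form $\bigcup_{\vertname\in S}\outedges{\vertname}$ for some vertex set $S$ (i.e.\ vertex-aligned), which then directly corresponds to a graph $\graphname\sn\in\graphfamilysn{\graphname}$ via the edge-merging argument of the first direction. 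To do this I would observe that if conditioning on the post-selection vertices already establishes $(V_1\perp^d V_2|V_3\cup\psvertset)$, then adding \emph{more} post-selection vertices (splitting additional edges) can only cut more paths or create additional conditioned colliders; one must check that the additional conditioned colliders do not \emph{unblock} previously blocked paths in a way that destroys the separation.

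The main obstacle is precisely this last point: controlling the effect of over-splitting edges. Splitting extra edges introduces new conditioned post-selection vertices, and since a conditioned collider can \emph{create} $d$-connections, I must verify that passing from a non-vertex-aligned witness to the vertex-aligned (maximally split per relevant vertex) witness does not open a new unblocked path between $V_1$ and $V_2$. The resolution I anticipate is that each inserted post-selection vertex $\postvertname_i$ sits on a degree-two ``relay'' with its pre-selection partner $\prevertname_i$, so conditioning on it reconnects exactly the two endpoints of the original edge $(\vertname_i,\vertname_i')$ — reproducing the adjacency of that directed edge — and never creates connectivity that was absent in the original graph $\graphname$. Making this ``splitting an edge and conditioning on its post-selection vertex faithfully simulates the original directed edge for path-connectivity purposes'' argument precise, uniformly across both families, is the technical crux; once it is in hand, both implications follow by the structural correspondence between the two families sketched above.
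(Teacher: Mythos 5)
Your overall architecture matches the paper's proof: the easy direction converts a classical witness $\graphname\sn$ into the quantum witness with $\splitedges{\graphtele}=\bigcup_{\vertname\in\splitvert{\graphname\sn}}\outedges{\vertname}$, and the hard direction passes from an arbitrary witness $\graphtele$ to the vertex-aligned over-splitting $\graphname'_{\textsc{tp}}$ with $\splitedges{\graphname'_{\textsc{tp}}}=\bigcup_{\vertname\in\splitvert{}}\outedges{\vertname}$, where $\splitvert{}$ collects the tails of the split edges of $\graphtele$. You also correctly identify where the difficulty lies: over-splitting adds conditioned post-selection vertices, and conditioning on colliders can create $d$-connections. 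However, your anticipated resolution of that difficulty does not work. The claim that each gadget, once its post-selection vertex is conditioned, ``reproduces the adjacency of the original edge and never creates connectivity that was absent in the original graph $\graphname$'' is false in general: a post-selection vertex $\postvertname$ is a \emph{descendant} of its tail vertex and hence of every ancestor thereof, so conditioning on $\postvertname$ can unblock a collider $U$ lying upstream, opening a path that was blocked in $\graphname$ under conditioning on $V_3$ alone. Indeed, if your claim were literally true, $p$-separation would coincide with $d$-separation on $\graphname$, contradicting the central examples of the paper (e.g., the two-cycle of \cref{eq: dsep_cycle_example}, where $\vertname_3\perp^d\vertname_4$ but $\vertname_3\not\perp^p\vertname_4$). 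The relay argument only controls path segments that pass \emph{through} a gadget; it says nothing about the collider-with-descendant mechanism, which is exactly the case that must be handled.

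The correct resolution, which the paper supplies, exploits the specific structure of the vertex-aligned over-splitting rather than a generic property of the gadget. Suppose a path in $\graphtele$ is blocked at an unconditioned collider $U$ having no descendant in $V_3\cup\psvertset$. Any \emph{new} post-selection vertex $W\in\psvertset'\setminus\psvertset$ of $\graphname'_{\textsc{tp}}$ is a child of some vertex $\vertname$ that, by construction of $\splitvert{}$, already possessed a split outgoing edge in $\graphtele$ and hence already had an old post-selection child $W'\in\psvertset$. Since directed paths cannot traverse a teleportation gadget (post-selection vertices are childless, pre-selection vertices are parentless), a directed path $U\to\cdots\to\vertname$ in $\graphname'_{\textsc{tp}}$ uses only unsplit edges, all present in $\graphtele$; so if $W$ were a descendant of $U$ in $\graphname'_{\textsc{tp}}$, then $W'$ would already be a descendant of $U$ in $\graphtele$, contradicting the assumption on $U$. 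This is the step your sketch is missing, and without it the passage from a non-vertex-aligned witness to a vertex-aligned one is not justified. (Your first direction is essentially the paper's and is fine, modulo spelling out that a $\graphtele$-path visiting two gadgets hanging off the same split vertex $\vertname$ collapses in $\graphname\sn$ to the shorter unblocked fork $\vertname'\leftarrow\prevertname_\vertname\to\vertname''$.)
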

\begin{proof}

\noindent {\textbf{(}$\perp^{p,c}\implies\perp^{p}$\textbf{)}}: By \cref{def: cl_p-separation}, $(V_1\perp^{p,c} V_2|V_3)_{\graphname}$ implies that $\exists$ $\graphname\sn\in \graphfamilysn{\graphname}$ (\cref{def:sn_graph_family}) such that $(V_1\perp^d V_2|V_3\cup\psvertset)_{\graphname\sn}$, where $\psvertset$ is the set of all post-selection vertices in $\graphname\sn$. We now define a teleportation graph $\graphtele\in\graphfamily{\graphname}$ (\cref{def: graph_family_v3}) by choosing 
$\splitedges{\graphtele}:=\outedges{\splitvert{\graphname\sn}}$.
Then the graphs $\graphtele$ and $\graphname\sn$ are equivalent
up to the following replacement (see equation 20 in~\cite{Sister_paper}) 
\begin{equation}
    \label{eq: q-c_telegraphs}
    \centertikz{
    \begin{scope}[xscale=1.2]
        \node[rotate=-20] at (0.55, 1.2) {\small$\dots$};
    \end{scope}
    \node[onode] (vi) at (0.25*1.2,0.5) {$\vertname$};
    \begin{scope}[xscale=1.2,shift={(0,2)}, rotate=30]
        \node[prenode] (pre) at (1,0) {};
        \node[psnode] (post) at (0.5,1) {};
        \draw[cleg] (pre) -- (post);
        \draw[cleg] (pre) -- (1.5,1);
    \end{scope}
    \begin{scope}[xscale=1.2]
         \draw[cleg] (vi) -- (post);
    \end{scope}
    \begin{scope}[xscale=1.2,shift={(0.95,0.5)}, rotate=0]
        \node[prenode] (pre) at (1,0) {};
        \node[psnode] (post) at (0.5,1) {};
        \draw[cleg] (pre) -- (post);
        \draw[cleg] (pre) -- (1.5,1);
    \end{scope}
    \begin{scope}[xscale=1.2]
         \draw[cleg] (vi) -- (post);
    \end{scope}
    } = \centertikz{
    \begin{scope}[xscale=1.2]
        \node[prenode] (pre) at (1,0) {};
        \node[psnode] (post) at (0.5,1) {};
        \draw[cleg] (pre) -- (post);
        \node[onode] (vi) at (-0.5,-1) {$\vertname$};
        \draw[cleg] (vi) -- (post);
        \draw[cleg] (pre) -- (1,1);
        \node[rotate=-15] at (1.3, 0.9) {\small$\dots$};
        \draw[cleg] (pre) -- (1.6,0.8);
    \end{scope}
    }.
\end{equation}
The left hand side represents the structure of $\graphtele$ and the right hand side, $\graphname\sn$, and where the red vertices correspond to the post-selection vertices in both cases.
Then it is easy to see that $(V_1\perp^d V_2|V_3\cup\psvertset)_{\graphname\sn}$ implies $(V_1\perp^d V_2|V_3\cup\psvertset')_{\graphtele}$, where $\psvertset'$ is the set of all post-selection vertices in $\graphtele$. By \cref{def: p-separation} of $p$-separation, $(V_1\perp^d V_2|V_3\cup\psvertset')_{\graphtele}$ implies $(V_1\perp^p V_2|V_3)_{\graphname}$ as required.\\

\noindent {\textbf{(}$\perp^{p}\implies\perp^{p,c}$\textbf{)}}: If $(V_1\perp^p V_2|V_3)_{\graphname}$, then there exists $\graphtele\in\graphfamily{\graphname}$ such that $(V_1\perp^d V_2|V_3\cup\psvertset)_{\graphtele}$ where $\psvertset$ denotes the set of all post-selection vertices in $\graphtele$ (\cref{def: p-separation}). Let $\splitedges{\graphtele}$ be the set of split edges of  $\graphtele$ and define 
\begin{equation}
    \splitvert{\graphname\sn}:=\{\vertname\in\vertset \text{ s.t. }\exists \edgename = (\vertname,\vertname')\in\splitedges{\graphtele}\}.
\end{equation}
Consider the graph $\graphname'_{\textsc{tp}}\in\graphfamily{\graphname}$ such that $\splitedges{\graphname'_{\textsc{tp}}}=\bigcup_{\vertname\in\splitvert{\graphname\sn}} \outedges{\vertname}$. We want to prove that  $(V_1\perp^d V_2|V_3\cup\psvertset')_{\graphname'_{\textsc{tp}}}$, where $\psvertset'$ denotes the set of post-selection vertices in $\graphname'_{\textsc{tp}}$. 

Let us consider $\vertname_1\in\vertset_1$ and $\vertname_2\in\vertset_2$, if these are $d$-separated conditioned on $\vertname_3,\psvertset$ in $\graphtele$, then one of the following conditions is satisfied: 
\begin{itemize}
    \item There is no path (even undirected) between $\vertname_1$ and $\vertname_2$ in $\graphtele$. $\graphname'_{\textsc{tp}}$ can be obtained through substituting some edges of $\graphtele$ with a pre- and post-selection edges and vertices. Such substitution does not create new paths, hence there is also no path between $\vertname_1$ and $\vertname_2$ in $\graphname'_{\textsc{tp}}$.
    \item All paths between $\vertname_1$ and $\vertname_2$ are blocked by $\vertset_3\cup\psvertset$. Hence, for each path in $\graphtele$, one of the following holds:
    \begin{itemize}
        \item[$\circ$] There exists $W\in\vertset_3\cup\psvertset$ such that $A\leftarrow W \rightarrow B$ for some $A$ and $B$ in the path. Adding pre and post-selection edges and vertices does not change the path, hence, the difference between $\graphtele$ and $\graphname'_{\textsc{tp}}$ is that some edges of the path might be substituted with such vertices and edges. If this substitution does not involve the subgraph $A\leftarrow W \rightarrow B$, then the same $W$ blocks the path in $\graphname'_{\textsc{tp}}$. If the edge $A\leftarrow W$ is substituted, we have $A\leftarrow Q \rightarrow P \leftarrow W \rightarrow B$. Since, the path between $\vertname_1$ and $\vertname_2$ contains $P \leftarrow W \rightarrow B$ and $W\in\vertset_3\cup\psvertset'$, then the path is blocked in $\graphname'_{\textsc{tp}}$. The same considerations hold if the edge $W\rightarrow B$ is substituted.
        \item[$\circ$] There exists $W\in\vertset_3\cup\psvertset$ such that $A\rightarrow W \rightarrow B$ for some $A$ and $B$ in the path. Adding pre and post-selection edges and vertices does not change the path, hence, the difference between $\graphtele$ and $\graphname'_{\textsc{tp}}$ is that some edges of the path might be substituted with a teleportation subgraph. If this substitution does not involve the subgraph $A\rightarrow W \rightarrow B$, then the same vertex $W$ blocks the path also in $\graphname'_{\textsc{tp}}$. If the edge $A\rightarrow W$ is substituted, we have $A\rightarrow P \leftarrow Q \rightarrow W \rightarrow B$. Since the path between $\vertname_1$ and $\vertname_2$ contains $Q \rightarrow W \rightarrow B$ and $W\in
        \vertset_3\cup\psvertset'$, then the path is blocked in $\graphname'_{\textsc{tp}}$. The same considerations hold if the edge $W\rightarrow B$ is substituted.
        \item[$\circ$] There exists $U$ such that $A\rightarrow U \leftarrow B$ for some $A$ and $B$ in the path and $U$ nor any descendent of $U$ are in $\vertset_3\cup\psvertset$. Assume first that the substitution of edges with pre and post-selection vertices and edges does not involve the subgraph $A\rightarrow U \leftarrow B$, but that there is a descendent of $U$ in the conditioning set, i.e., $W\in\vertset_3\cup\psvertset'$ such that 
        \begin{equation}
        \centertikz{
            \node (U) at (0,0) {$U$};
            \node (A) at (-1,0) {$A$};
            \node (B) at (1,0) {$B$};
            \node (ds) at (0,-0.75) {$\dots$};
            \node (W) at (0,-1.5) {$W$};
            \draw[->] (A)-- (U);
            \draw[->] (B)-- (U);
            \draw[->] (U)-- (ds);
            \draw[->] (ds)-- (W);
        }.
        \end{equation}
        Since, $U$ had no descendents in the set $W\in\vertset_3\cup\psvertset$, $W$ is a post-selection vertex that was not in $\graphtele$, i.e., $W\in\psvertset'\setminus\psvertset$. However, by construction of $\splitedges{\graphname'_{\textsc{tp}}}$, this means that $\vertname\in\inedges{W}$ is also a split vertex, $\vertname\in\splitvert{\graphname\sn}$. Therefore, there exists $\edgename=(\vertname,\vertname')\in\splitedges{\graphtele}$. In $\graphname'_{\textsc{tp}}$ we have
         \begin{equation}
        \centertikz{
            \node (U) at (0,0) {$U$};
            \node (A) at (-1,0) {$A$};
            \node (B) at (1,0) {$B$};
            \node (ds) at (0,-0.75) {$\dots$};
            \node (ds) at (0,-0.75) {$\dots$};
            \node (v) at (0,-1.5) {$\vertname$};
            \node (W) at (0,-2.25) {$W$};
            \node (Wp) at (1,-1.5) {$W'$};
            \draw[->] (A)-- (U);
            \draw[->] (B)-- (U);
            \draw[->] (U)-- (ds);
            \draw[->] (ds)-- (v);
            \draw[->] (v)-- (W);
            \draw[->] (v)-- (Wp);
        } \textup{ and in $\graphtele$ } \centertikz{
            \node (U) at (0,0) {$U$};
            \node (A) at (-1,0) {$A$};
            \node (B) at (1,0) {$B$};
            \node (ds) at (0,-0.75) {$\dots$};
            \node (ds) at (0,-0.75) {$\dots$};
            \node (v) at (0,-1.5) {$\vertname$};
            \node (W) at (0,-2.25) {$\dots$};
            \node (Wp) at (1,-1.5) {$W'$};
            \draw[->] (A)-- (U);
            \draw[->] (B)-- (U);
            \draw[->] (U)-- (ds);
            \draw[->] (ds)-- (v);
            \draw[->] (v)-- (W);
            \draw[->] (v)-- (Wp);
        }.
        \end{equation}
        Hence, $W'\in\vertset_3\cup\psvertset$ and it is a descendent of $U$. This is not possible by hypothesis, meaning that even in $\graphname'_{\textsc{tp}}$, $U$ has no descendents in the conditioning set. If the edge $A\rightarrow U$ is substituted, we have $A\rightarrow P \leftarrow Q \rightarrow U \leftarrow B$. The path between $\vertname_1$ and $\vertname_2$ contains $Q \rightarrow U \leftarrow B$, hence the path is blocked in $\graphname'_{\textsc{tp}}$. The same considerations hold if the edge $U\leftarrow B$ is substituted.
    \end{itemize}
\end{itemize}
This proves that if $(\vertset_1\perp^d\vertset_2|\vertset_3\cup\psvertset)_{\graphtele}$, then $(\vertset_1\perp^d\vertset_2|\vertset_3\cup\psvertset')_{\graphname'_{\textsc{tp}}}$. Using \cref{eq: q-c_telegraphs} to relate $\graphname'_{\textsc{tp}}$ and $\graphname\sn$, it holds that $(\vertset_1\perp^d\vertset_2|\vertset_3\cup\psvertset')_{\graphname'_{\textsc{tp}}}$ implies $(\vertset_1\perp^d\vertset_2|\vertset_3\cup\psvertset^{\textup{c}})_{\graphname\sn}$.
\end{proof}

Thus, $p$-separation relations of a given graph $\graphname$ can be equivalently deduced using~\cref{def: p-separation} or~\cref{def: cl_p-separation}.

\subsection{Equivalence of cyclic probabilities under the mapping in~\cref{def:CM of a fCM}}
\label{sec:equivalence_of_defs_fCM}

In~\cref{sec:fCM_to_CM}, we showed that applying~\cref{def:CM of a fCM} to map an acyclic functional model to an acyclic causal model preserves the probability distribution over observed events.
In this section, we consider probabilities over cyclic functional models and show that the probability rule given in definition 12 of~\cite{Sister_paper} is equivalent to applying the probability rule of~\cref{def: probability distribution v3} to the image causal models through~\cref{def:CM of a fCM}. 

\begin{lemma}
    Consider a functional model on a directed graph $\graphname=\graphexpl$, $\fcm_{\graphname}$, and its image under the mapping of~\cref{def:CM of a fCM}, $\cm(\fcm)_{\graphname}$. Let $\outcome=\{\outcome_{\vertname}\in\outcomemaparg{\vertname}\}_{\vertname\in\vertset}$ be a joint observed event, then we have:
    \begin{equation}
        \probf(\outcome)_{\graphname} = \prob(\outcome)_{\graphname}
    \end{equation}
    where $\probf_{\graphname}$ is the probability distribution of $\fcm_{\graphname}$ evaluated with definition 12 of~\cite{Sister_paper} and $\prob_{\graphname}$ is the probability distribution given by~\cref{def: probability distribution v3} of $\cm(\fcm)_{\graphname}$.
\end{lemma}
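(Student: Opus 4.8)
The plan is to evaluate the quantum cyclic probability rule on the embedded model $\cm(\fcm)_{\graphname}$ in closed form and match it term-by-term with the classical rule of definition 12 of~\cite{Sister_paper}. The key simplification is that, by \cref{corollary:probs indep of tele implementation v3}, the probabilities $\prob(\outcome)_{\graphname}$ are independent of the choice of teleportation graph in $\graphfamily{\graphname}$ and of the post-selected teleportation implementation, so I am free to pick the most convenient representation. Using the maximal teleportation graph (\cref{def:max tele graph}) together with \cref{prop:probs as self cycles_v3}, and recalling that the embedding of \cref{def:CM of a fCM} produces a causal model in which all vertices are observed, I would write
\[
    \prob(\outcome)_{\graphname} = \frac{\selfcycle(\etot_\outcome)}{\sum_{\outcome'}\selfcycle(\etot_{\outcome'})}, \qquad \etot_\outcome = \bigotimes_{\vertname\in\vertset}\measmaparg{\outcome_\vertname}{\vertname},
\]
where $\measmaparg{\outcome_\vertname}{\vertname}$ are the CP maps that \cref{def:CM of a fCM} assigns to the embedded functional model and $\selfcycle$ is the self-cycle composition of \cref{def:selfcycle_v3} that links the output space of each edge to the input space of its child.

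The central computation is then the explicit evaluation of $\selfcycle(\etot_\outcome)$. I would use the fact, already established in \cref{eq:action of cp maps}, that each embedded map acts on a computational-basis input as
\[
    \measmaparg{\outcome_\vertname}{\vertname}\big(\ketbra{\outcome_{\parnodes{\vertname}}}\big) = \Big(\sum_{u_\vertname\in\errormaparg{\vertname}}\probex{\vertname}(u_\vertname)\,\delta_{\outcome_\vertname,\funcarg{\vertname}(\outcome_{\parnodes{\vertname}},u_\vertname)}\Big)\bigotimes_{\edgename\in\outedges{\vertname}}\ketbra{\outcome_\vertname}_{\hilmaparg{\edgename}},
\]
and, crucially, that the associated POVM $\povmarg{\funcarg{\vertname}}$ is diagonal in the computational basis, so that every such map sends \emph{any} basis operator (diagonal or off-diagonal) to an operator that is diagonal on its output edges. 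Feeding this into the self-cycle sum of \cref{def:selfcycle_v3}, I would argue that all off-diagonal cross terms vanish and only globally consistent basis assignments survive, collapsing the expression to the self-consistency weight
\[
    \selfcycle(\etot_\outcome) = \sum_{u}\prod_{\vertname\in\vertset}\probex{\vertname}(u_\vertname)\,\delta_{\outcome_\vertname,\funcarg{\vertname}(\outcome_{\parnodes{\vertname}},u_\vertname)},
\]
where now $\outcome_{\parnodes{\vertname}}$ may depend cyclically on $\outcome$. This is exactly the (unnormalized) weight that definition 12 of~\cite{Sister_paper} assigns to the global event $\outcome$ via classical post-selected teleportation, and dividing by its sum over $\outcome'$ reproduces $\probf(\outcome)_{\graphname}$; the inconsistency criterion $\successprob = 0$ likewise coincides on both sides, since it holds precisely when the total weight $\sum_{\outcome'}\selfcycle(\etot_{\outcome'})$ vanishes.

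The main obstacle is the self-cycle evaluation in the second step: one must show carefully that the self-cycle composition of these basis-diagonal, broadcasting CP maps collapses to a single sum over consistent global valuations with the correct error-variable weights, with all cross terms eliminated, and then reconcile this bookkeeping with the precise form of the weight in definition 12 of~\cite{Sister_paper} (including exogenous vertices and the treatment of the looped edges). An alternative route, which avoids recomputing the self-cycle, is to choose corresponding acyclic representatives $\graphname\sn\in\graphfamilysn{\graphname}$ and $\graphtele\in\graphfamily{\graphname}$ related by \cref{eq: q-c_telegraphs} (as in \cref{lemma: pseparation between families}), express both rules as acyclic probabilities conditioned on post-selection success, and apply the acyclic equivalence of \cref{lem: equivalence acyclic probabilities fcm and cm(fcm)} to numerator and denominator; the subtlety there is that the embedded classical teleportation satisfies the teleportation condition of \cref{def:ps teleportation} only on the classical (diagonal) subspace, so one must either restrict attention to the classical inputs actually occurring in the embedded model or again invoke \cref{corollary:probs indep of tele implementation v3} to substitute a genuine quantum protocol.
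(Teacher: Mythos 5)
Your primary route is correct but genuinely different from the paper's proof. The paper proves this lemma by a six-step chain of intermediate models: it takes the classical teleportation model $\fcm_{\graphname\sn}$ defining $\probf$, converts the broadcast (per-split-vertex) classical teleportation into per-edge copies on a graph $\graphname\sn'$, embeds that acyclic functional model via \cref{def:CM of a fCM}, and separately builds the quantum teleportation model $\cm_{\graphtele}$ with Bell pre/post-selections on the split edges $\bigcup_{\vertname\in\splitvert{\graphname\sn}}\outedges{\vertname}$; the two acyclic quantum models are then shown to have equal conditional probabilities because the embedded maps are basis-diagonal, so decoherence channels can be inserted on every split edge (the paper's \cref{lem: decoherence in fcm} and \cref{lem:trace and decoherence}), turning the Bell state into the uniform classical pre-selection and the Bell projector into the delta-function post-selection up to a factor $|\outcomemaparg{\vertname}|^{-1}$ that cancels upon conditioning. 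This is essentially your ``alternative route,'' and the paper resolves exactly the subtlety you flag (the classical protocol only teleports diagonal states) via those decoherence lemmas rather than via \cref{corollary:probs indep of tele implementation v3}. Your primary route --- evaluating $\selfcycle(\etot_\outcome)$ directly on the maximal teleportation graph --- is shorter and sound: since each $\povmelarg{\outcome}{\funcarg{\vertname}}$ is diagonal, $\Tr[\povmelarg{\outcome}{\funcarg{\vertname}}(\ketbraa{k}{l}\otimes\sigma^\vertname)]\propto\delta_{k,l}$, so off-diagonal terms in the self-cycle sum vanish and the sum collapses to a sum over consistent global valuations weighted by $\prod_\vertname\sum_{u_\vertname}\probex{\vertname}(u_\vertname)\delta_{\outcome_\vertname,\funcarg{\vertname}(\outcome_{\parnodes{\vertname}},u_\vertname)}$. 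What the paper's longer route buys is that its intermediate correspondences (broadcast versus per-edge teleportation, decohered Bell states as classical teleportation) are reused elsewhere, notably in \cref{lemma: pseparation between families}.

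The one genuine gap in your primary route is its final step: $\probf(\outcome)_\graphname$ is \emph{defined} in the sister paper as a conditional probability in a classical teleportation model, not as the normalized consistency weight $\sum_{u}\prod_{\vertname}\probex{\vertname}(u_\vertname)\delta_{\outcome_\vertname,\funcarg{\vertname}(\outcome_{\parnodes{\vertname}},u_\vertname)}\big/Z$. You assert that these coincide (``this is exactly the weight that definition 12 assigns'') but do not derive it; one must still unfold the classical acyclic probability of \cref{def: distribution_functional_cm} on $\graphname\sn$, condition on all delta-function post-selections succeeding, and check that the uniform-prior factors cancel --- including for exogenous and split vertices --- and that $\successprob=0$ on the quantum side occurs precisely when the classical post-selection probability vanishes. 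This is routine but it is where the actual content of definition 12 of \cite{Sister_paper} enters, and without it the identification of the two sides is circular.
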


\begin{proof}
In the proof the following two results are used.
\begin{lemma}
\label{lem: decoherence in fcm}
    Consider a functional model on a directed graph $\graphname=\graphexpl$ with functional dependencies $\funcarg{\vertname}:\outcomemaparg{\parnodes{\vertname}}\times \errormaparg{\vertname}\mapsto\outcomemaparg{\vertname}$, associated to each vertex $\vertname\in\vertset$. The maps $\measmaparg{\outcome}{\funcarg{\vertname}}$ obtained though~\cref{def:CM of a fCM}, satisfy
    \begin{equation}
        \mathcal{D}_{\edgeset_{\textup{out}}}\circ \measmaparg{\outcome}{\funcarg{\vertname}} \circ  \mathcal{D}_{\edgeset_{\textup{in}}} = \measmaparg{\outcome}{\funcarg{\vertname}}  
    \end{equation}
    for any subset of outgoing and ingoing edges to $\vertname$, $\edgeset_{\textup{out}}\subseteq \outedges{\vertname}$ and $\edgeset_{\textup{in}}\subseteq \inedges{\vertname}$. The decohering channel $\mathcal{D}_{\edgeset'}: \linops(\hilmaparg{\edgeset'}) \mapsto \linops(\hilmaparg{\edgeset'})$ is defined for any subset of edges $\edgeset'\subseteq \edgeset$ as $\mathcal{D}_{\edgeset'} = \bigotimes_{\edgename\in\edgeset} \mathcal{D}_\edgename$ acting as $\mathcal{D}_\edgename(\rho) = \sum_{\outcome \in \outcomemaparg\edgename} \ketbra{\outcome} \rho \ketbra{\outcome}$ for any $\rho\in\linops(\hilmaparg{\edgename})$.
\end{lemma}
\begin{proof}
    Let us prove that 
    \begin{equation}
        \mathcal{D}_{\edgename}\circ \measmaparg{\outcome}{\funcarg{\vertname}}  = \measmaparg{\outcome}{\funcarg{\vertname}} =  \measmaparg{\outcome}{\funcarg{\vertname}} \circ \mathcal{D}_{\edgename'} 
    \end{equation}
    for any $\edgename\in\outedges{\vertname}$ and $\edgename'\in\inedges{\vertname}$. Once this is proven the general results follows from iterative application of this equality.
    The first equality follows immediately from the form of $\measmaparg{\outcome}{\vertname}$, since for all $\rho\in\linops(\hilmaparg{\inedges{\vertname}})$, we have
    \begin{equation}
        \measmaparg{\outcome}{\vertname}(\rho) = \Tr\left[\povmelarg{\outcome}{\funcarg{\vertname}}(\rho\otimes\sigma^\vertname)\right] \bigotimes_{\edgename\in\outedges{\vertname}} \ketbra{\outcome}_{\hilmaparg{\edgename}},
    \end{equation}
    which is already coherence-free.
    The second follows from the form of $\povmelarg{\outcome}{\funcarg{\vertname}}$ and of the state $\sigma^\vertname$ (see~\cref{eq:error state}), indeed
    \begin{equation}
    \begin{split}
        \Tr\left( \povmelarg{\outcome}{\funcarg{\vertname}} (\mathcal{D}_{\edgename'}(\rho)\otimes \sigma^\vertname)\right)&=\Tr\left( \povmelarg{\outcome}{\funcarg{\vertname}} \mathcal{D}_{\edgename'\cup U_\vertname}(\rho\otimes \sigma^\vertname)\right)\\
        &= \Tr\left( \sum_{\outcomealt} \delta_{\outcome, \funcarg{\vertname}(\outcomealt)} \ketbra{\outcomealt} \mathcal{D}_{\tilde{\edgename}}(\tilde{\rho})\right)\\
        &= \sum_{\outcomealt} \sum_{z} \delta_{\outcome, \funcarg{\vertname}(\outcomealt)} \Tr\left(\ketbra{\outcomealt}  \ketbra{z} \tilde{\rho} \ketbra{z}\right)\\
         &= \sum_{\outcomealt} \delta_{\outcome, \funcarg{\vertname}(\outcomealt)} \Tr\left(\ketbra{\outcomealt}  (\rho \otimes \sigma^\vertname)\right) = \Tr\left( \povmelarg{\outcome}{\funcarg{\vertname}} (\rho \otimes \sigma^\vertname)\right).
    \end{split}
    \end{equation} 
    where we denoted $\tilde{e}=e'\cup U_\vertname$ and $\tilde{\rho} = \rho\otimes\sigma^\vertname$ for brevity, and the sums run over $\outcomealt = \{\outcomealt_{\vertname'}\in\outcomemaparg{\vertname'}\}_{\vertname'\in\parnodes{\vertname}}\cup \{u_\vertname\in\errormaparg{\vertname}\}$ and $z = \{z_{\vertname'}\in\outcomemaparg{\vertname'}\}_{\vertname'\in\parnodes{\vertname}}\cup \{u'_\vertname\in\errormaparg{\vertname}\}$.
\end{proof}

\begin{lemma}
\label{lem:trace and decoherence}
    Consider a Hilbert space $\hilmaparg{\inedges{\vertname}} = \bigotimes_{\edgename\in\inedges{\vertname}} \hilmaparg{\edgename}$\footnote{The lemma holds for any Hilbert space with such partition. We call the space $\hilmaparg{\inedges{\vertname}}$ since the lemma will be applied to subsystems in a causal model.}. For any $\povmel{\outcome},\rho\in\linops\left(\hilmaparg{\inedges{\vertname}}\right)$ and $\edgeset_{\textup{in}}\subseteq \inedges{\vertname}$, it holds
    \begin{gather}
    \Tr\left[\povmel{\outcome}(\mathcal{D}_{\edgeset_{\textup{in}}}(\rho))\right] = \Tr\left[(\mathcal{D}_{\edgeset_{\textup{in}}}(\povmel{\outcome}))\rho\right].
\end{gather}
\end{lemma}
\begin{proof}
    The proof follows from the cyclicity of the trace
    \begin{gather}
        \Tr\left[\povmel{\outcome}(\mathcal{D}_{\edgeset_{\textup{in}}}(\rho)\right] = \sum_{y}\Tr\left[ \povmel{\outcome}(\ketbra{y}_{\edgeset_{\textup{in}}}\otimes \id)\rho(\ketbra{y}_{\edgeset_{\textup{in}}}\otimes \id)\right]\\
        =\sum_{y}\Tr\left[(\ketbra{y}_{\edgeset_{\textup{in}}}\otimes \id) \povmel{\outcome}(\ketbra{y}_{\edgeset_{\textup{in}}}\otimes \id)\rho\right]= \Tr\left[(\mathcal{D}_{\edgeset_{\textup{in}}}(\povmel{\outcome}))\rho\right].
    \end{gather}
\end{proof}
Given these two results, we can proceed with the proof.\\

\textbf{Overview of the proof and relevant notation.}
The goal of this proof is to connect the probabilities of a functional causal model on a possibly cyclic graph $\graphname$, namely $\fcm_{\graphname}$, and those of its image causal model through the embedding~\cref{def:CM of a fCM}, namely $\cm(\fcm_{\graphname})$ (which, for brevity, we will simply denote as $\cm_{\graphname}$ in what follows). To achieve this, we define auxiliary functional models (\cref{def:functional_CM}) and causal models (\cref{def:causal model}). Although the idea behind the proof is straightforward, given our formalism, it involves relating a number of different causal graphs and causal models. We therefore first sketch the main steps and notation of the proof, before detailing it.
\begin{itemize}
    \item[\textbf{Step 0:}] We start considering a functional causal model on $\graphname$: $\fcm_{\graphname}$ (\cref{def:functional_CM}).
    
    The cyclic probability distribution for the observed event $\outcome$ relative to this functional model is $\probf(\outcome)_{\graphname}$ (definition 12 in~\cite{Sister_paper}).
    \item[\textbf{Step 1:}] We define a classical teleportation causal model on a teleportation graph $\graphname\sn$ induced by $\fcm_{\graphname}$: $\fcm_{\graphname\sn}$ (definition 9 of~\cite{Sister_paper}). 
    
    Let $\psvertset$ be the set of post-selection vertices in $\graphname\sn$, which is acyclic by construction, and $\splitvert{\graphname\sn}$ the set of split vertices of $\graphname\sn$.
    The probability distribution of the event $\outcome$ and successful post-selections $\{\postoutcome_\vertname=1\}_{\postvertname_\vertname\in\psvertset}$ relative to this acyclic 
functional model is $\probfacyc(\outcome,\{\postoutcome_\vertname=1\}_{\postvertname_\vertname\in\psvertset})_{\graphname\sn}$ (\cref{def: distribution_functional_cm}).
    
    By definition, the probability distribution of the cyclic model $\fcm_\graphname$ equals the conditional distribution of the acyclic model $\fcm_{\graphname\sn}$, i.e.,
    \begin{equation}
        \probf(\outcome)_{\graphname}=\probfacyc(\outcome|\{\postoutcome_\vertname=1\}_{\postvertname_\vertname\in\psvertset})_{\graphname\sn}
    \end{equation}
    for all observed events $x:=\{\outcome_\vertname\in\outcomemaparg{\vertname}\}_{\vertname\in\vertset}$.
    \item[\textbf{Step 2:}] We define an intermediate functional model on a graph $\graphname\sn'$ which is constructed from the classical teleportation graph $\graphname\sn$: $\fcm_{\graphname\sn'}$. The idea is to use the correspondence between a single broadcasted classical teleportation protocol and multiple copies of regular classical teleportation, in going from $\graphname\sn$ to $\graphname\sn'$ (\cref{eq: q-c_telegraphs}).

    Let $\psvertset'$ be the set of post-selection vertices in $\graphname\sn'$.
    The acyclic probability distribution of the event $\outcome$ and successful post-selections $\{\postoutcome^\edgename_\vertname=1\}_{\postvertname^\edgename_\vertname\in\psvertset'}$ relative to this functional model is $\probfacyc(\outcome,\{\postoutcome^\edgename_\vertname=1\}_{\postvertname^\edgename_\vertname\in\psvertset})_{\graphname\sn'}$ (\cref{def: distribution_functional_cm}).
    
    We prove that the conditional probabilities in $\fcm_{\graphname\sn}$ (from \textbf{Step 1}) and $\fcm_{\graphname\sn'}$ are equal, i.e.,
    \begin{equation}
        \probfacyc(\outcome|\{\postoutcome^\edgename_\vertname=1\}_{\postvertname^\edgename_\vertname\in\psvertset'})_{\graphname\sn'}=\probfacyc(\outcome|\{\postoutcome_\vertname=1\}_{\postvertname_\vertname\in\psvertset})_{\graphname\sn}.
    \end{equation}

    \item[\textbf{Step 3:}] We map the acyclic functional model $\fcm_{\graphname\sn'}$ to a causal model (\cref{def:CM of a fCM}): $\cm_{\graphname\sn'}$.

    The acyclic probability distribution of the event $\outcome$ and successful post-selections $\{\postoutcome^\edgename_\vertname=1\}_{\postvertname^\edgename_\vertname\in\psvertset'}$ relative to this functional model is $\probacyc(\outcome|\{\postoutcome^\edgename_\vertname=1\}_{\postvertname^\edgename_\vertname\in\psvertset'})_{\graphname\sn'}$.

    As a consequence of~\cref{lem: equivalence acyclic probabilities fcm and cm(fcm)}, the acyclic probability rule is preserved through the mapping, i.e.,
    \begin{equation}
        \probfacyc(\outcome,\{\postoutcome^\edgename_\vertname=1\}_{\postvertname^\edgename_\vertname\in\psvertset'})_{\graphname\sn'}=\probacyc(\outcome,\{\postoutcome^\edgename_\vertname=1\}_{\postvertname^\edgename_\vertname\in\psvertset'})_{\graphname\sn'}.
    \end{equation}
    \end{itemize}
Before proceeding, let us sketch the results of the first three steps. The following diagram summarises the (functional) causal models that we introduced so far and their conditional probabilities. Each step establishes equivalence between the (conditional) distributions of these causal models.
\begingroup
\crefname{equation}{eq.}{eqs.}
\crefname{definition}{def.}{Def.}
\crefname{lemma}{lem.}{Lem.}

\begin{equation}
\label{eq:steps123}
    \centertikz{
    \node (fCM) at (-3,0) {$\fcm_\graphname$};
    \node (snCM) at (0,0) {$\fcm_{\graphname\sn}$};
    \node (snCMnb) at (3,0) {$\fcm_{\graphname\sn'}$};
    \node (CMsnCMnb) at (6,0) {$\cm_{\graphname\sn'}$};
    
    \node (fCMp) at (-3,-1) {\small$\probf(x)$};
    \node (snCMp) at (0,-1) {\small$\probfacyc(x|t)$};
    \node (snCMnbp) at (3,-1) {\small$\probfacyc(x|\{t^\edgename\}_\edgename)$};
    \node (CMfCMnbp) at (6,-1) {\small$\probacyc(x|\{t^e\}_e)$};
    
    \draw[thick, ->] (fCM) -- node [midway,above] {\small\textbf{Step 1}} (snCM);
    \draw[thick, ->] (snCM) -- node [midway,above] {\small\textbf{Step 2}} (snCMnb);
    \draw[thick, ->] (snCMnb) -- node [midway, above]   {\small\textbf{Step 3}} (CMsnCMnb);    
    
    \node[color=red] at (-1.75,-1) {:=} ;
    \node[color=red] at (1.25,-1) {=} ;
    \node [color=red] at (4.5,-1) {=};
}
\end{equation} 
\endgroup
    \begin{itemize}
    \item[\textbf{Step 4:}] We map the initial functional model $\fcm_{\graphname}$ (from \textbf{Step 0}) to a causal model~(\cref{def:causal model}): $\cm_{\graphname}$ (as specified by the mapping in~\cref{def:CM of a fCM}).

    The cyclic probability distribution for the observed event $\outcome$ relative to this causal model is $\prob(\outcome)_{\graphname}$ (\cref{def: probability distribution v3}).

    \item[\textbf{Step 5:}] We define a teleportation causal model on a specific teleportation graph $\graphtele$ induced by $\cm_{\graphname}$: $\cm_{\graphtele}$ (\cref{def: graph_family_v3} and \cref{def:causal model of graphfamily_v3}).
    Specifically, we choose $\graphtele$ with split edges correspond to the outgoing edges of split vertices of $\graphname\sn$ (see \textbf{Step 1}), i.e., $\splitedges{\graphtele}=\cup_{\vertname\in\splitvert{\graphname\sn}} \outedges{\vertname}$.

    Let us denote the set of post-selection vertices of $\graphtele$ as $\tilde{V}_{\text{post}}$.
    The acyclic probability distribution of the event $\outcome$ and successful post-selections $\{\postoutcome^\edgename_\vertname=\checkmark\}_{\postvertname^\edgename_\vertname\in\tilde{V}_{\text{post}}}$ relative to this causal model is $\probacyc\left(\outcome,\{\postoutcome^\edgename_\vertname=\checkmark\}_{\postvertname^\edgename_\vertname\in\tilde{V}_{\text{post}}}\right)_{\graphtele}$ (\cref{def: acyclic probability}). 
    
    By definition, the cyclic probability of $\cm_{\graphname}$ equals the acyclic conditional distribution of $\cm_{\graphtele}$, i.e.,
    \begin{equation}
        \prob(\outcome)_{\graphname}=\probacyc\left(\outcome|\{\postoutcome^\edgename_\vertname=\checkmark\}_{\postvertname^\edgename_\vertname\in\tilde{V}_{\text{post}}}\right)_{\graphtele}.
    \end{equation}

\end{itemize}
Let us sketch what the previous two steps achieve.
The diagram summarises the causal models introduced in \textbf{Step 4-5} together with the original functional causal model introduced in \textbf{Step 0}, and their (conditional) probabilities. \textbf{Step 5} established the equivalence between the two probabilities of the newly introduced causal models. The equivalence between $\probf(\outcome)_\graphname$ and $\prob(\outcome)_{\graphname}$ is what we aim to prove.
\begingroup
\crefname{equation}{eq.}{eqs.}
\crefname{definition}{def.}{Def.}
\crefname{lemma}{lem.}{Lem.}

\begin{equation}
\label{eq:steps45}
    \centertikz{
    \node (fCM) at (-6,-2) {$\fcm_\graphname$};
    \node (CMfCM) at (-3,-2) {$\cm_{\graphname}$};
    \node (CMt) at (0,-2) {$\cm_{\graphtele}$};
    
    \node (fCMp) at (-6,-3) {\small$\probf(x)$};
    \node (CMfCMp) at (-3,-3) {\small$\prob(x)$}; 
     \node (CMfCMp) at (0,-3) {\small$\probacyc(x|t)$}; 
     
    \draw[thick, ->] (fCM) -- node [midway,above] {\small\textbf{Step 4}} (CMfCM);  
    \draw[thick,->] (CMfCM) -- node [midway,above] {\small\textbf{Step 5}} (CMt);
    \node[color=red] at (-1.75,-3) {:=};
}
\end{equation}
\endgroup
\begin{itemize}
    \item[\textbf{Step 6:}] We prove that the conditional probability distribution of the causal models $\cm_{\graphname\sn'}$ (from \textbf{Step 3}) and $\cm_{\graphtele}$ (from \textbf{Step 5}) are equal, i.e.,
    \begin{equation}
        \probacyc\left(\outcome|\{\postoutcome^\edgename_\vertname=\checkmark\}_{\postvertname^\edgename_\vertname\in\tilde{\vertset}_{\textup{post}}}\right)_{\graphtele}=\probacyc(\outcome|\{\postoutcome^\edgename_\vertname=1\}_{\postvertname^\edgename_\vertname\in\psvertset'})_{\graphname\sn'}.
    \end{equation}
\end{itemize}
Combining together all the equalities established concludes the proof. The following diagram combines those of~\cref{eq:steps123,eq:steps45} through \textbf{Step 6}.
\begingroup
\crefname{equation}{eq.}{eqs.}
\crefname{definition}{def.}{Def.}
\crefname{lemma}{lem.}{Lem.}

\begin{equation}
\centertikz{
    \node (fCM) at (-3,0) {$\fcm_\graphname$};
    \node (snCM) at (0,0) {$\fcm_{\graphname\sn}$};
    \node (snCMnb) at (3,0) {$\fcm_{\graphname\sn'}$};
    \node (CMsnCMnb) at (6,0) {$\cm_{\graphname\sn'}$};
    
    \node (fCMp) at (-3,1) {\small$\probf(x)$};
    \node (snCMp) at (0,1) {\small$\probfacyc(x|t)$};
    \node (snCMnbp) at (3,1) {\small$\probfacyc(x|\{t^\edgename\}_\edgename)$};
    \node (CMfCMnbp) at (7,1) {\small$\probacyc(x|\{t^e\}_e)$};
    
    \draw[thick, ->] (fCM) -- node [midway,above] {\small\textbf{Step 1}} (snCM);
    \draw[thick, ->] (snCM) -- node [midway,above] {\small\textbf{Step 2}} (snCMnb);
    \draw[thick, ->] (snCMnb) -- node [midway, above]   {\small\textbf{Step 3}} (CMsnCMnb);    
    \node[color=red] at (-1.75,1) {:=} ;
    \node[color=red] at (1.25,1) {=} ;
    \node [color=red] at (5,1) {=};
    \begin{scope}[shift = {(6,0)}]
        \node (CMfCM) at (-3,-2) {$\cm_{\graphname}$};
        \node (CMt) at (0,-2) {$\cm_{\graphtele}$};
        \node (CMfCMp) at (-3,-3) {\small$\prob(x)$}; 
        \node (CMfCMp) at (1,-3) {\small$\probacyc(x|t)$}; 
         
        \draw[thick, ->] (fCM) -- node [midway,below] {\small\textbf{Step 4}} (CMfCM);  
        \draw[thick,->] (CMfCM) -- node [midway,above] {\small\textbf{Step 5}} (CMt);
        \node[color=red] at (-1,-3) {:=};
    \end{scope} 
    \draw[thick] (CMsnCMnb) -- node [midway,left] {\small\textbf{Step 6}} (CMt);
    \draw[color=red] (CMfCMnbp) -- node [midway,fill=white] {\small\textbf{=}} (CMfCMp);
}
\end{equation}
\endgroup

\textbf{Details of the proof.} In what follows, we prove in detail the results of \textbf{Steps 1-6}.

\begin{itemize}
    \item[\textbf{Step 1:}] We define a classical teleportation causal model on a teleportation graph $\graphname\sn\in\graphfamilysn{\graphname}$ with split vertices $\splitvert{\graphname\sn}\subseteq\vertset$ (see~\cref{def:sn_graph_family}), as prescribed in definition 9 of~\cite{Sister_paper}.  
    For completeness, let us briefly remind what this amounts to. 
    The classical teleportation graph is obtained through performing the following substitution for all $\vertname\in\splitvert{\graphname\sn}$:
    \begin{equation}
    \label{eq: sub classical tele}
\centertikz{
        \node[onode] (v) {$\vertname$};
        \draw[cleg] ($(v)-(0.5,0.8)$) -- (v.240);
        \draw[cleg] ($(v)-(-0.5,0.8)$) -- (v.300);
        \draw[cleg] (v.60) -- ($(v)+(0.5,0.8)$);
        \draw[cleg] (v.120) -- ($(v)+(-0.5,0.8)$);
        \node [above = 0.4*\chanvspace of v] {\indexstyle{\dots}};
        \node [below = 0.4*\chanvspace of v] {\indexstyle{\dots}};
        \draw [thick, decoration={brace},decorate]($(v)-(-0.5,1)$) -- ($(v)-(0.5,1)$) node [pos=0.5,anchor=north] {\indexstyle{\parnodes{\vertname}}};
        \draw [thick, decoration={brace},decorate]($(v)+(-0.5,1)$) -- ($(v)+(0.5,1)$) node [pos=0.5,anchor=south] {\indexstyle{\childnodes{\vertname}}}; 
    } \mapsto
    \centertikz{
        \node[onode] (vin) at (0,0) {$\vertname$};
        \draw[cleg] ($(vin)-(0.5,0.8)$) -- (vin.240);
        \draw[cleg] ($(vin)-(-0.5,0.8)$) -- (vin.300);
        \node[prenode] (vout) at (2.5,0) {$\prevertname_\vertname$};
        \draw[cleg] (vout.40) -- ($(vout)+(0.7,0.8)$);
        \draw[cleg] (vout.100) -- ($(vout)+(-0.3,0.8)$);
        \node [above = 0.4*\chanvspace of vout.60] {\indexstyle{\dots}};
        \node [below = 0.4*\chanvspace of vin] {\indexstyle{\dots}};
        \draw [thick, decoration={brace},decorate]($(vin)-(-0.5,1)$) -- ($(vin)-(0.5,1)$) node [pos=0.5,anchor=north] {\indexstyle{\parnodes{\vertname}_{\graphname}}};
        \draw [thick, decoration={brace},decorate]($(vout)+(-0.3,1)$) -- ($(vout)+(0.7,1)$) node [pos=0.5,anchor=south] {\indexstyle{\childnodes{\vertname}_{\graphname}}}; 
        \node[psnode] (ps) at (1,2) {$\postvertname_\vertname$};
        \draw[cleg] (vout.140) -- (ps.300);
        \draw[cleg] (vin.90) -- (ps.240);
    }.
\end{equation}
Then, we define a classical teleportation functional model on $\graphname\sn$ through definition 9 of~\cite{Sister_paper}. In particular, this construction prescribes that the same finite sets, functions and probabilities of the original functional model are associated to vertices and edges that are preserved from $\graphname$ to $\graphname\sn$. To pre- and post-selection vertices we associate uniform prior classical teleportation protocols\footnote{Any other choice of classical post-selected teleportation protocol would yield the same probability as this canonical choice, as shown in \cite{Sister_paper} paper.}, i.e., for all $\vertname\in\splitvert{\graphname\sn}$ we associate to the corresponding pre-selection vertex $\prevertname_\vertname$ the finite set $\outcomemaparg{\prevertname_\vertname}=\outcomemaparg{\vertname}$, a uniform distribution to its error variable $\probex{\prevertname_\vertname}(u)=|\outcomemaparg{\vertname}|^{-1}$ for all $u\in\outcomemaparg{\vertname}$ and a function $\funcarg{\prevertname_\vertname}(u)=u$. Notice that this amounts to having the variable associated to pre-selection vertices $\prevertname_\vertname$ distributed as $\probex{\prevertname_\vertname}$ (for more details see the remark in section 2 of~\cite{Sister_paper}), thus in what follows we will refer to $\prevertname_\vertname$ and its distribution.
We associate to the corresponding post-selection vertex $\postvertname_\vertname$ the finite set $\outcomemaparg{\postvertname_\vertname} =\{0,1\}$ and a (deterministic) delta function $\funcarg{\postvertname_\vertname}(x,y)=\delta_{x,y}$ for all $x,y\in\outcomemaparg{\vertname}$. 

By definition 12 of~\cite{Sister_paper}, given an observed event $\outcome:=\{\outcome_{\vertname}\in\outcomemaparg{\vertname}\}_{\vertname\in\vertset}$, the probability of $\fcm_{\graphname}$ is defined as
\begin{equation}
    \probf(\outcome)_{\graphname}=\probfacyc\left(\outcome|\{\postoutcome_\postvertname=1\}_{\postvertname\in\psvertset}\right)_{\graphname\sn},
\end{equation}
where $\psvertset$ denotes the set of post-selection vertices of $\graphname\sn$.
\item[\textbf{Step 2:}] We define an intermediate functional causal model over a graph $\graphname\sn'$, $\fcm_{\graphname\sn}$. The graph $\graphname\sn'$ is constructed from $\graphname\sn$ through performing the following substitution of edges and vertices for $\vertname\in\splitvert{\graphname\sn}$ 
    \begin{equation}
    \label{eq: q-c_tele}
     \centertikz{
    \begin{scope}[xscale=1.2]
        \node[prenode] (pre) at (1,0) {$\prevertname_\vertname$};
        \node[psnode] (post) at (0.5,1) {$\postvertname_\vertname$};
        \draw[cleg] (pre) -- (post);
        \node[onode] (vi) at (-0.5,-1) {$\vertname$};
        \draw[cleg] (vi) -- (post);
        \draw[cleg] (pre) -- (1,1);
        \node[rotate=-15] at (1.3, 0.9) {\small$\dots$};
        \draw[cleg] (pre) -- (1.6,0.8);
    \end{scope}
    }\mapsto\centertikz{
    \begin{scope}[xscale=1.2]
        \node[rotate=-20] at (0.55, 1.2) {\small$\dots$};
    \end{scope}
    \node[onode] (vi) at (0.25*1.2,0.5) {$\vertname$};
    \begin{scope}[xscale=1.2,shift={(0,2)}, rotate=30]
        \node[prenode] (pre) at (1,0) {\small$\prevertname_\vertname^\edgename$};
        \node[psnode] (post) at (0.75,1.5) {\small$\postvertname_\vertname^\edgename$};
        \draw[cleg] (pre) -- (post);
        \draw[cleg] (pre) -- (1.5,1);
    \end{scope}
    \begin{scope}[xscale=1.2]
         \draw[cleg] (vi) -- (post);
    \end{scope}
    \begin{scope}[xscale=1.2,shift={(0.95,0.5)}, rotate=0]
        \node[prenode] (pre) at (1,0) {\small$\prevertname_\vertname^{\edgename'}$};
        \node[psnode] (post) at (0.5,1) {\small$\postvertname_\vertname^{\edgename'}$};
        \draw[cleg] (pre) -- (post);
        \draw[cleg] (pre) -- (1.5,1);
    \end{scope}
    \begin{scope}[xscale=1.2]
         \draw[cleg] (vi) -- (post);
    \end{scope}
    }.
\end{equation}
This substitution is motivated by equation 20 of~\cite{Sister_paper} which proves that $n$ copies of a classical teleportation protocol (on the right) are equivalent to one classical teleportation protocol which is broadcast to $n$ parties (on the left). For each split vertex $\vertname\in\splitvert{\graphname\sn}$ and each outgoing edge $\edgename\in\outedges{\vertname}$, let us denote with $\prevertname_\vertname^\edgename$ and $\postvertname_\vertname^\edgename$ the pre- and post-selection vertices in $\graphname\sn'$ associated to the outgoing edge $\edgename$ of $\vertname$ (see~\cref{eq: q-c_tele}). Let us denote with $\psvertset'$ the set of post-selection vertices of $\graphname\sn'$.

We define a functional causal model over $\graphname\sn'$, $\fcm_{\graphname\sn'}$, as follows:
\begin{enumerate}
    \item To each vertex present in both $\graphname\sn$ and $\graphname\sn'$ assign the same finite set, functional dependency (for endogenous vertices) or probability (for exogenous vertices), of the causal model $\fcm_{\graphname\sn}$\footnote{By construction, these are also the same association of the original causal model $\fcm_{\graphname}$ on $\graphname$.}.
    \item For all $\vertname\in\splitvert{\graphname\sn}$ and $\edgename\in\outedges{\vertname}$, associate to $\prevertname_\vertname^\edgename$ the finite set $\outcomemaparg{\prevertname_\vertname^\edgename}=\outcomemaparg{\vertname}$ and a uniform distribution to its error variable $\probex{\prevertname^\edgename_\vertname}(u)=|\outcomemaparg{\vertname}|^{-1}$ for all $u\in\outcomemaparg{\vertname}$ and a function $\funcarg{\prevertname^\edgename_\vertname}(u)=u$. Notice that this amounts to having the variable associated to pre-selection vertices $\prevertname^\edgename_\vertname$ uniformly distributed (for more details see the remark in section 2 of~\cite{Sister_paper}).
    \item For all $\vertname\in\splitvert{\graphname\sn}$ and $\edgename\in\outedges{\vertname}$, associate to $\postvertname_\vertname^\edgename$ the finite set $\outcomemaparg{\postvertname_\vertname^\edgename}=\{0,1\}$ and a (deterministic) delta function $\funcarg{\postvertname_\vertname^\edgename}(x,y) = \delta_{x,y}$ for all $x,y\in\outcomemaparg{\vertname}$.
\end{enumerate}
Hence, each outgoing edge from a vertex $\vertname\in\splitvert{\graphname\sn}$ in $\graphname$ has been replaced with a uniform prior classical teleportation protocol (definition 6 in~\cite{Sister_paper}).

Since this model is acyclic, we can use the acyclic probability rule~(\cref{def: distribution_functional_cm}) to evaluate probabilities. The equivalence between broadcasting a classical teleportation protocol to $n$ parties and $n$ copies of the same classical teleportation protocol (established in equation 20 of~\cite{Sister_paper}) implies that 
\begin{equation}
        \probfacyc(\outcome|\{\postoutcome^{\edgename}_\vertname=1\}_{\postvertname^{\edgename}_\vertname\in\psvertset'})_{\graphname\sn'}=\probfacyc(\outcome|\{\postoutcome_\vertname=1\}_{\postvertname_\vertname\in\psvertset})_{\graphname\sn},
\end{equation}
i.e., probabilities in $\fcm_{\graphname\sn}$ and $\fcm_{\graphname\sn'}$, conditioned on all post-selections being successful, are equal. Notice that the variables associated to pre-selection vertices are considered unobserved.
\item[\textbf{Step 3:}] We map the acyclic functional causal model $\fcm_{\graphname\sn'}$ to a causal model using~\cref{def:CM of a fCM}, $\cm_{\graphname\sn'}$.
As prescribed by the mapping, $\graphname\sn'$ is decorated to form a causal graph whose edges are all classical. In addition, we consider the pre-selection vertices $\prevertset'$ to be unobserved\footnote{In~\cref{def:functional causal graph}, all vertices are considered observed. In what follows, pre-selection vertices are always marginalised, hence unobserved. In the remark of~\cref{Remark:unobserved in fcm}, we showed that in this case we can consider such vertices unobserved and modify the associated maps and states.} and the remaining ones, $\vertset\cup\psvertset'$ to be observed. The associations prescribed by the mapping in~\cref{def:CM of a fCM} together with the specific form of uniform prior teleportation protocols, give the following assignments:
\begin{enumerate}
    \item To each vertex that is present in both $\graphname$ and $\graphname\sn'$, i.e., $\vertname\in\vertset$, we associate the set of CP maps acting on $\rho\in\linops(\hilmaparg{\inedges{\vertname}})$ 
    \begin{equation}
    \label{eq: map assiciated to preserved nodes}
        \measmaparg{\outcome}{\vertname}(\rho) = \Tr\left[\povmelarg{\outcome}{\funcarg{\vertname}}(\rho\otimes\sigma^\vertname)\right] \bigotimes_{\edgename\in\outedges{\vertname}} \ketbra{\outcome}_{\hilmaparg{\edgename}}
    \end{equation}
    where $\{\povmelarg{\outcome}{\funcarg{\vertname}}\}_{\outcome\in\outcomemaparg{\vertname}}$ is the POVM obtained through applying~\cref{def:POVM from functions} to the function $\funcarg{\vertname}$ associated to $\vertname$ in the original functional model $\fcm_{\graphname}$ and $\sigma^\vertname$ is the state associated to the error variable (see~\cref{eq:error state}).
    \item For all $\vertname\in\splitvert{\graphname\sn}$ and $\edgename\in\outedges{\vertname}$, associate to $\postvertname_\vertname^\edgename$ and outcome $\postoutcome^\edgename_\vertname=1$ the map acting on $\rho\in\linops\left(\hilmaparg{\inedges{\postvertname_\vertname^\edgename}}\right)\cong\linops\left(\hilmaparg{(\vertname,\postvertname_\vertname^\edgename)}\otimes\hilmaparg{(\prevertname_\vertname^\edgename,\postvertname_\vertname^\edgename)}\right)$
    \begin{equation}
        \label{eq:postsel_delta_new}
        \measmaparg{1}{\postvertname_\vertname^\edgename}(\rho) = \Tr\left[\povmelarg{\postoutcome=1}{\funcarg{\postvertname_\vertname^\edgename}}\rho\right]= \Tr\left[\sum_{\outcome\in\outcomemaparg{\vertname}}\left(\ketbra{\outcome}_{\hilmaparg{(\vertname,\postvertname^\edgename_\vertname)}}\otimes\ketbra{\outcome}_{\hilmaparg{(\prevertname^\edgename_\vertname,\postvertname^\edgename_\vertname)}}\right)\rho\right],
    \end{equation}
    where $\povmelarg{\postoutcome=1}{\funcarg{\postvertname_\vertname^\edgename}}$ is the POVM element obtained through applying~\cref{def:POVM from functions} to a delta function and outcome $1$, i.e., $\funcarg{\postvertname_\vertname^\edgename}(x,y) =\delta_{x,y}=1$.
    \item For all $\vertname\in\splitvert{\graphname\sn}$ and $\edgename=(\vertname,\vertname')\in\outedges{\vertname}$, associate to $\prevertname_\vertname^\edgename$, which is considered unobserved, the state
    \begin{equation}
    \label{eq:postsel_uniform_new}
    \sigma^{\prevertname_\vertname^\edgename} = \sum_{\outcome\in\outcomemaparg{\vertname}}\frac{1}{|\outcomemaparg{\vertname}|} \bigotimes_{\edgename\in\outedges{\prevertname_\vertname^\edgename}} \ketbra{\outcome}_{\hilmaparg{\edgename}} = \sum_{\outcome\in\outcomemaparg{\vertname}}\frac{1}{|\outcomemaparg{\vertname}|} \ketbra{\outcome}_{\hilmaparg{(\prevertname_\vertname^\edgename,\postvertname_\vertname^\edgename)}} \otimes \ketbra{\outcome}_{\hilmaparg{(\prevertname_\vertname^\edgename,v')}}.
    \end{equation}
\end{enumerate}
\Cref{lem: equivalence acyclic probabilities fcm and cm(fcm)} establishes that probabilities are preserved through mapping an acyclic functional model to an acyclic causal model through~\cref{def:CM of a fCM}.
    Since $\graphname\sn'$ is acyclic, this implies that
    \begin{equation}
        \probfacyc(\outcome,\{\postoutcome^\edgename_\vertname=1\}_{\postvertname^\edgename_\vertname\in\psvertset'})_{\graphname\sn'}=\probacyc(\outcome,\{\postoutcome^\edgename_\vertname=1\}_{\postvertname^\edgename_\vertname\in\psvertset'})_{\graphname\sn'}.
    \end{equation}
    \item[\textbf{Step 4:}] We map the original functional model $\fcm_{\graphname}$ to a causal model (\cref{def:causal model}), $\cm_{\graphname}$, using~\cref{def:CM of a fCM}. 
    As prescribed by the mapping, the graph $\graphname$ is decorated to form a causal graph such that all edges are classical and all vertices are observed. The causal model is then defined through the associations of~\cref{def:CM of a fCM}. In particular, to each vertex $\vertname\in\vertset$ we associate the set of CP maps acting on $\rho\in\linops(\hilmaparg{\inedges{\vertname}})$ 
    \begin{equation}
    \label{eq:tilde pers nodes map}
        \tilde{\mathcal{M}}_{\outcome}^{\vertname}(\rho) = \Tr\left[\povmelarg{\outcome}{\funcarg{\vertname}}(\rho\otimes\sigma^\vertname)\right] \bigotimes_{\edgename\in\outedges{\vertname}} \ketbra{\outcome}_{\hilmaparg{\edgename}}
    \end{equation}
    where $\{\povmelarg{\outcome}{\funcarg{\vertname}}\}_{\outcome\in\outcomemaparg{\vertname}}$ is the POVM obtained through applying~\cref{def:POVM from functions} to the function $\funcarg{\vertname}$ associated to $\vertname$ in the functional model $\fcm_{\graphname}$.
    \item[\textbf{Step 5:}] We define a teleportation causal model induced by $\cm_{\graphname}$. Firstly, let us consider the graph $\graphtele$ which is constructed using~\cref{def: graph_family_v3} and has split edges
    \begin{equation}
        \splitedges{\graphtele}=\bigcup_{\vertname\in\splitvert{\graphname\sn}} \outedges{\vertname},
    \end{equation}
    i.e., the split edges of $\graphtele$ correspond to the outgoing edges of split vertices in $\graphname\sn$ (see \textbf{Step 2}). As given by the definition, this amounts to adding pre and post-selection vertices for each split edge. For $\vertname\in\splitvert{\graphname\sn}$ and $\edgename\in\outedges{\vertname}$ let us denote the pre and post-selection vertices respectively as $\prevertname_\vertname^\edgename$ and $\postvertname_\vertname^\edgename$.
    Thus, the graph $\graphtele$ is obtained from $\graphname$ through performing the following substitution for all $\vertname\in\splitvert{\graphname\sn}$
    \begin{equation}
    \label{eq:sub qtele}
\centertikz{
        \node[onode] (v) {$\vertname$};
        \draw[cleg] ($(v)-(0.5,0.8)$) -- (v.240);
        \draw[cleg] ($(v)-(-0.5,0.8)$) -- (v.300);
        \draw[cleg] (v.60) -- ($(v)+(0.5,0.8)$);
        \draw[cleg] (v.120) -- ($(v)+(-0.5,0.8)$);
        \node [above = 0.4*\chanvspace of v] {\indexstyle{\dots}};
        \node [below = 0.4*\chanvspace of v] {\indexstyle{\dots}};
        \draw [thick, decoration={brace},decorate]($(v)-(-0.5,1)$) -- ($(v)-(0.5,1)$) node [pos=0.5,anchor=north] {\indexstyle{\parnodes{\vertname}}};
        \draw [thick, decoration={brace},decorate]($(v)+(-0.5,1)$) -- ($(v)+(0.5,1)$) node [pos=0.5,anchor=south] {\indexstyle{\childnodes{\vertname}}}; 
    } \mapsto\centertikz{
    \begin{scope}[xscale=1.2]

        \node[rotate=-20] at (0.55, 1.2) {\small$\dots$};
       
    \end{scope}
    \node[onode] (vi) at (0.25*1.2,0.5) {$\vertname$};
    \begin{scope}[xscale=1.2,shift={(0,2)}, rotate=30]
        \node[prenode] (pre) at (1,0) {\small$\prevertname_\vertname^\edgename$};
        \node[psnode] (post) at (0.75,1.5) {\small$\postvertname_\vertname^\edgename$};
        \draw[qleg] (pre) -- (post);
        \draw[qleg] (pre) -- (1.5,1);
    \end{scope}
    \begin{scope}[xscale=1.2]
         \draw[cleg] (vi) -- (post);
    \end{scope}
    \begin{scope}[xscale=1.2,shift={(0.95,0.5)}, rotate=0]
        \node[prenode] (pre) at (1,0) {\small$\prevertname_\vertname^{\edgename'}$};
        \node[psnode] (post) at (0.5,1) {\small$\postvertname_\vertname^{\edgename'}$};
        \draw[qleg] (pre) -- (post);
        \draw[qleg] (pre) -- (1.5,1);
    \end{scope}
    \begin{scope}[xscale=1.2]
         \draw[cleg] (vi) -- (post);
    \end{scope}
    }.
\end{equation}
Let us denote with $\tilde{\vertset}_{\textup{post}}$ the set of post-selection vertices of $\graphtele$.

Then, we define a teleportation causal model on $\graphtele$ through~\cref{def:causal model of graphfamily_v3}. In particular, this construction prescribes that the same maps and Hilbert spaces are associated to vertices and edges that are preserved from $\graphname$ to $\graphtele$. To pre- and post-selection vertices we associate (without loss of generality, \cref{corollary:probs indep of tele implementation v3}) Bell teleportation protocols (\cref{def:bell tele}), i.e., for all $\vertname\in\splitvert{\graphname\sn}$ and $\edgename=(\vertname,\vertname')\in\outedges{\vertname}$ we associate to the corresponding pre-selection vertex $\prevertname_\vertname^\edgename$ the Bell state
\begin{equation}
    \tilde{\sigma}^{\prevertname_\vertname^\edgename} =\ketbra{\bellstate}_{\outedges{\prevertname_\vertname^\edgename}}=\frac{1}{|\outcomemaparg{\vertname}|} \sum_{x,y\in\outcomemaparg{\vertname}} \ketbraa{x}{y}_{(\prevertname_\vertname^\edgename,\postvertname_\vertname^\edgename)}\otimes \ketbraa{x}{y}_{(\prevertname_\vertname^\edgename,\vertname')}
\end{equation}
and to the corresponding post-selection vertex $\postvertname_\vertname^\edgename$ and outcome $\postoutcome=\checkmark$ the CP map acting on all $\rho\in\linops\left(\hilmaparg{\inedges{\postvertname_\vertname^\edgename}}\right)\cong\linops\left(\hilmaparg{(\vertname,\postvertname_\vertname^\edgename)}\otimes\hilmaparg{(\prevertname_\vertname^\edgename,\postvertname_\vertname^\edgename)}\right)$ as
\begin{equation}
\begin{split}
    \tilde{\mathcal{M}}^{\postvertname_\vertname^\edgename}_{\checkmark}(\rho)&=\Tr\left(\ketbra{\bellstate}_{\inedges{\postvertname_\vertname^{\edgename}}}\rho\right) \\&=\frac{1}{|\outcomemaparg{\vertname}|} \Tr\left[\sum_{\outcome\in\outcomemaparg{\vertname}}\left(\ketbraa{\outcome}{y}_{\hilmaparg{(\vertname,\postvertname^\edgename_\vertname)}}\otimes\ketbraa{\outcome}{y}_{\hilmaparg{(\prevertname^\edgename_\vertname,\postvertname^\edgename_\vertname)}}\right)\rho\right].
\end{split}  
\end{equation}

By~\cref{def: probability distribution v3}, given an observed event $\outcome:=\{\outcome_{\vertname}\in\outcomemaparg{\vertname}\}_{\vertname\in\vertset}$, the probability of $\cm_{\graphname}$ (see \textbf{Step 4}) is defined as 
\begin{equation}
    \prob(\outcome)_{\graphname}=\probacyc\left(\outcome|\{\postoutcome^\edgename_\vertname=\checkmark\}_{\postvertname^\edgename_\vertname\in\tilde{\vertset}_{\textup{post}}}\right)_{\graphtele},
\end{equation}
where $\tilde{\vertset}_{\textup{post}}$ denotes the set of post-selection vertices of $\graphtele$.
\item[\textbf{Step 6:}] Finally, we show that the causal models $\cm_{\graphname\sn'}$ (introduced in~\textbf{Step 3}) and $\cm_{\graphtele}$ (introduced in~\textbf{Step 5}) are equivalent, in the sense that for all observed events $\outcome:=\{\outcome_{\vertname}\in\outcomemaparg{\vertname}\}_{\vertname\in\vertset}$ it holds
\begin{equation}
        \probacyc\left(\outcome|\{\postoutcome^\edgename_\vertname=\checkmark\}_{\postvertname^\edgename_\vertname\in\tilde{\vertset}_{\textup{post}}}\right)_{\graphtele}=\probacyc(\outcome|\{\postoutcome^\edgename_\vertname=1\}_{\postvertname^\edgename_\vertname\in\psvertset'})_{\graphname\sn'}.
    \end{equation}

Firstly, we notice that the construction of $\graphname\sn'$ and $\graphtele$ give the same causal graphs, indeed, the subsequent substitutions in \cref{eq: sub classical tele} and \cref{eq: q-c_tele} are equivalent to \cref{eq:sub qtele}. Let us show that the maps associated to each vertex are equivalent in the two causal models, $\cm_{\graphname\sn'}$ and $\cm_{\graphtele}$. We have:
\begin{enumerate}
    \item For all vertices which are preserved from $\graphname$ to $\graphname\sn'$ (and $\graphtele$), $\vertname\in\vertset$ the maps associated in $\cm_{\graphtele}$ and $\cm_{\graphname\sn'}$ are equal (see~\cref{eq: map assiciated to preserved nodes,eq:tilde pers nodes map}), i.e., for all $\rho\in\linops(\hilmaparg{\inedges{\vertname}})$ it holds
    \begin{equation}
        \tilde{\mathcal{M}}_{\outcome}^{\vertname}(\rho) = \measmaparg{\outcome}{\vertname}(\rho)
    \end{equation}
    where we recall that $\tilde{\mathcal{M}}_{\outcome}^{\vertname}$ is the map associated to $\vertname$ in $\cm_{\graphtele}$ and $\measmaparg{\outcome}{\vertname}$ is the map associated to $\vertname$ in $\cm_{\graphname\sn'}$. 
    \item For all post-selection vertices, i.e., for all $\vertname\in\splitvert{\graphname\sn}$ and $\edgename\in\outedges{\vertname}$, we have (by the decoherence condition for observed vertices in~\cref{def:causal model})
    \begin{equation}
        \label{eq:composition and decoherence}\tilde{\mathcal{M}}^{\postvertname_\vertname^\edgename}_{\checkmark} \circ \tilde{\mathcal{M}}_{\outcome}^{\vertname} = \tilde{\mathcal{M}}^{\postvertname_\vertname^\edgename}_{\checkmark} \circ \mathcal{D}_{(\vertname,\postvertname_\vertname^\edgename)}\circ\tilde{\mathcal{M}}_{\outcome}^{\vertname}
    \end{equation}
    where all maps $\tilde{\mathcal{M}}$ are in $\cm_{\graphtele}$ and $\mathcal{D}_{(\vertname,\postvertname_\vertname^\edgename)}$ is a decohering channel acting as $\mathcal{D}_\edgename(\rho) = \sum_{\outcome \in \outcomemaparg\edgename} \ketbra{\outcome} \rho \ketbra{\outcome}$ for any $\rho\in\linops(\hilmaparg{\edgename})$.
    In order to evaluate probabilities, the maps of $\cm_{\graphtele}$ are composed as described in~\cref{def: acyclic probability}. In particular, the map $\tilde{\mathcal{M}}^{\postvertname_\vertname^\edgename}_{\checkmark}$ is always composed with $\tilde{\mathcal{M}}_{\outcome}^{\vertname}$ as in~\cref{eq:composition and decoherence}, hence, it can be replaced with $ \tilde{\mathcal{M}}^{\postvertname_\vertname^\edgename}_{\checkmark} \circ \mathcal{D}_{(\vertname,\postvertname_\vertname^\edgename)}$ without affecting probabilities. Thus, we have for all $\rho\in\linops(\hilmaparg{\inedges{\postvertname_\vertname^\edgename}})$ 
    \begingroup
      \crefname{lemma}{lem.}{Lem.}
        \begin{equation}
        \begin{split}
            \tilde{\mathcal{M}}^{\postvertname_\vertname^\edgename}_{\checkmark} \circ \mathcal{D}_{(\vertname,\postvertname_\vertname^\edgename)}(\rho)= &\Tr\left(\ketbra{\bellstate}_{\inedges{\postvertname_\vertname^{\edgename}}}\mathcal{D}_{(\vertname,\postvertname_\vertname^\edgename)}(\rho)\right)\\\stackrel{\textup{\cref{lem:trace and decoherence}}}{=}&\Tr\left[\mathcal{D}_{(\vertname,\postvertname_\vertname^\edgename)}\left(\ketbra{\bellstate}_{\inedges{\postvertname_\vertname^{\edgename}}}\right)\rho\right]
        \end{split}
        \end{equation}
        and 
        \begin{equation}
            \begin{split}
            \label{eq:decohered_bellstate_new}
                \mathcal{D}_{(\vertname,\postvertname_\vertname^\edgename)}\left(\ketbra{\bellstate}_{\inedges{\postvertname^{\edgename}_\vertname}}\right) &= \frac{1}{|\outcomemaparg{\vertname}|}\sum_{\outcome \in \outcomemaparg\vertname} \ketbra{x}_{{(\vertname,\postvertname_\vertname^\edgename)}}\otimes \ketbra{x}_{(\prevertname_\vertname^\edgename,\postvertname_\vertname^\edgename)}.
            \end{split}
        \end{equation}
    \endgroup
    Thus we can compare this map, associated to the post-selection vertex $\postvertname^\edgename_\vertname$ in $\cm_{\graphtele}$, to the map $\measmaparg{1}{\postvertname_\vertname^\edgename}$ associated to the same post-selection vertex in $\cm_{\graphname\sn'}$~(see \cref{eq:postsel_delta_new}). It holds
    \begin{equation}
        \label{eq:equiv maps}\tilde{\mathcal{M}}^{\postvertname_\vertname^\edgename}_{\checkmark} \circ \mathcal{D}_{(\vertname,\postvertname_\vertname^\edgename)} = \frac{1}{|\outcomemaparg{\vertname}|} \measmaparg{1}{\postvertname_\vertname^\edgename}.
    \end{equation}
    
    \item For all pre-selection vertices, i.e., for all $\vertname\in\splitvert{\graphname\sn}$ and $\edgename = (\vertname,\vertname')\in\outedges{\vertname}$ we have
    \begingroup
      \crefname{lemma}{lem.}{Lem.}
      \crefname{equation}{eq.}{Eq.}
    \begin{equation}
    \label{eq:composition and decoherence2}
        \tilde{\mathcal{M}}_{\outcome'}^{\vertname'}\circ \tilde{\sigma}^{\prevertname_\vertname^\edgename} \stackrel{\textup{\cref{lem: decoherence in fcm}}}{=} \tilde{\mathcal{M}}_{\outcome'}^{\vertname'}\circ\mathcal{D}_{(\prevertname_\vertname^\edgename,\vertname')}\circ \tilde{\sigma}^{\prevertname_\vertname^\edgename}  
    \end{equation}
    where $\tilde{}$ denotes associations in $\cm_{\graphtele}$ and $\mathcal{D}_{(\prevertname_\vertname^\edgename,\vertname')}$ is a decohering channel. In order to evaluate probabilities, the maps of $\cm_{\graphtele}$ are composed as described in~\cref{def: acyclic probability}. In particular, the state $\tilde{\sigma}^{\prevertname_\vertname^\edgename}$ is always composed with the map $\tilde{\mathcal{M}}_{\outcome'}^{\vertname'}$ as in~\cref{eq:composition and decoherence2}, hence, the state $\tilde{\sigma}^{\prevertname_\vertname^\edgename}$ can be replaced with $ \mathcal{D}_{(\prevertname_\vertname^\edgename,\vertname')}\circ \tilde{\sigma}^{\prevertname_\vertname^\edgename}$ without affecting probabilities. Thus, we can compare the decohered state $ \mathcal{D}_{(\prevertname_\vertname^\edgename,\vertname')}\circ \tilde{\sigma}^{\prevertname_\vertname^\edgename}$, associated to the pre-selection vertex $\prevertname_\vertname^\edgename$ in $\cm_{\graphtele}$, to the state $\sigma^{\prevertname_\vertname^\edgename}$ associated to the same pre-selection vertex in $\cm_{\graphname\sn'}$ (see \cref{eq:postsel_uniform_new}). It holds:
    \begin{equation}
        \mathcal{D}_{(\prevertname_\vertname^\edgename,\vertname')}\circ \tilde{\sigma}^{\prevertname_\vertname^\edgename} =  \mathcal{D}_{(\prevertname_\vertname^\edgename,\vertname')}\left(\ketbra{\bellstate}_{\outedges{\prevertname_\vertname^\edgename}}\right) \stackrel{\textup{\cref{eq:decohered_bellstate_new}}}{=} \sigma^{\prevertname_\vertname^\edgename}.
    \end{equation}
    \endgroup
    Using the probability rule in~\cref{def: acyclic probability}, we can easily see that the conditional probabilities in $\cm_{\graphtele}$ and $\cm_{\graphname\sn'}$ satisfy
    \begin{equation}
        \probacyc\left(\outcome|\{\postoutcome^\edgename_\vertname=\checkmark\}_{\postvertname^\edgename_\vertname\in\tilde{\vertset}_{\text{post}}}\right)_{\graphtele}=\probacyc(\outcome|\{\postoutcome^\edgename_\vertname=1\}_{\postvertname^\edgename_\vertname\in\psvertset'})_{\graphname\sn'},
    \end{equation}
    notice that the scaling factor between the two maps in~\cref{eq:equiv maps} drops out once we condition on successful post-selections.
\end{enumerate}
\end{itemize}
Finally, we can combine the results of \textbf{Steps 1-6} to complete the proof. Indeed, we have
\begin{equation}
    \begin{split}
        \probf(\outcome)_{\graphname}\stackrel{\textbf{S1}}{=}&\probfacyc\left(\outcome|\{\postoutcome_\vertname=1\}_{\postvertname_\vertname\in\psvertset}\right)_{\graphname\sn}\\
        \stackrel{\textbf{S2}}{=}&\probfacyc(\outcome|\{\postoutcome^{\edgename}_\vertname=1\}_{\postvertname^{\edgename}_\vertname\in\psvertset'})_{\graphname\sn'}\\
        \stackrel{\textbf{S3}}{=}&\probacyc(\outcome,\{\postoutcome^\edgename_\vertname=1\}_{\postvertname^\edgename_\vertname\in\psvertset'})_{\graphname\sn'}\\        
        \stackrel{\textbf{S6}}{=}&\probacyc\left(\outcome|\{\postoutcome^\edgename_\vertname=\checkmark\}_{\postvertname^\edgename_\vertname\in\tilde{\vertset}_{\text{post}}}\right)_{\graphtele}\\
        \stackrel{\textbf{S5}}{=}&\prob\left(\outcome\right)_{\graphname}\\
    \end{split}
\end{equation}
where $\probf_{\graphname}$ is the probability in $\fcm_{\graphname}$ (\textbf{Step 0}), $\probfacyc_{\graphname\sn}$ in $\fcm_{\graphname\sn}$ (\textbf{Step 1}), $\probfacyc_{\graphname\sn'}$ in $\fcm_{\graphname\sn'}$ (\textbf{Step 2}), $\probacyc_{\graphname\sn'}$ in $\cm_{\graphname\sn'}$ (\textbf{Step 3}), $\prob_{\graphname}$ in $\cm_{\graphname}$ (\textbf{Step 4}) and $\probacyc_{\graphtele}$ in $\cm_{\graphtele}$ (\textbf{Step 5}).

\end{proof}
\section{Proofs of~\cref{sec: cyclic_to_acyc_v3}}
\label{app:proofs_map}
\acyclicitytele*
\begin{proof}
    In the first step of \cref{def: graph_family_v3}, the graph $\graphname'$ is acyclic.
    The second step introduces pre- and post-selection vertices, $\prevertname_i$ and $\postvertname_i$, and associated edges according to \cref{eq: acyc_graph_split_edge_v3}. 
    To show that this step preserves acyclicity, notice that a directed graph is acyclic if and only if it can be drawn on a page with all directed edges oriented from bottom to top of the page\footnote{This representation is equivalent to a Hasse diagram for partially ordered sets, recognising the fact that every directed acyclic graph induces a partial order and vice versa.}. 
    Since $\graphname'$ is acyclic, we can represent it on a page through such a diagram. 
    We now add the pre- and post-selection vertices to this diagram by drawing all the $\prevertname_i$ at the bottom of the page, below all other vertices, and all the $\postvertname_i$ at the top of the page, above all the other vertices. 
    Since all pre-selection vertices $\prevertname_i$ only have outgoing edges and all post-selection vertices $\postvertname_i$ only have incoming arrows, both of which will be oriented from bottom to top in our diagram, it follows that $\graphtele$ is acyclic.
\end{proof}

\equivprobabilitiestele*
\begin{proof}
    \underline{Proof in a special case.}
    Consider the special case where $\graphtele_{,1}$ and $\graphtele_{,2}$ were constructed by choosing
    \begin{align}
        \splitedges{\graphtele_{,2}} = \splitedges{\graphtele_{,1}} \cup \{\edgename_0\}, 
    \end{align}
    i.e., $\graphtele_{,2}$ splits the same edges as $\graphtele_{,1}$ plus the edge $\edgename_0$. Let $\psvertset^1$ denote the set of post-selection vertices of $\graphtele_{,1}$ and $\postoutcome:=\{\postoutcome_{i}\in \{\ok,\fail\}\}_{\postvertname_i\in\psvertset^1}$. Then, taking $\postoutcome_0$ to be the outcome of the post-selection vertex associated to the edge $\edgename_0$, we have $\postoutcome \cup \postoutcome_0:=\{\postoutcome_{i}\in \{\ok,\fail\}\}_{\postvertname_i\in\psvertset^2}$.
    
    Consider the probability $\prob(\outcome,\postoutcome)_{\graphtele_{,1}}$ associated to the causal model $\cm_{\graphtele_{,1}}$ on the causal graph underlying the teleportation graph $\graphtele_{,1}$. As this is an acyclic causal model by construction (c.f., \cref{lemma: teleportation_graph_acyclic}), this probability is immediately given by applying the acyclic probability rule of \cref{def: acyclic probability} to this causal model. Here, $\outcome = \{\outcome_\vertname\in\outcomemaparg\vertname\}_{\vertname\in\overtset}$ denotes an observed event on $\graphtele_{,1}$ (which is also an event on $\graphtele_{,2}$, since these graphs have the same observed vertices). We can write
    \begin{align}
        \label{eq:proof graphfamily 1}
      \prob(\outcome,\postoutcome)_{\graphtele_{,1}} = \Tr[F^{(\outcome,\postoutcome)} \rho^{(\outcome,\postoutcome)}],
    \end{align}
    where we defined the subnormalized density matrix and POVM element $\rho^{(\outcome,\postoutcome)}, F^{(\outcome,\postoutcome)} \in \linops(\hilmaparg{\edgename_0})$ obtained from composing all the channels of the causal model $\cm_{\graphtele_{,1}}$, but leaving the composition along the $\edgename_0$ edge as the last step.\footnote{This decomposition is not unique for teleportation graphs with disconnected components, in which case we can pick any such decomposition.}
    We then consider the post-selected teleportation protocol that the causal model $\cm_{\graphtele_{,2}}$ implements in place of the edge $\edgename_0 =: \edgearg{\vertname_0}{\vertname_0'}$, with associated pre- and post-selection vertices $\prevertname_0$ and $\postvertname_0$.
    The associated teleportation probability is denoted $\teleprob\in(0,1]$.
    For convenience, we label as $A$, $B$ and $C$ the Hilbert spaces involve in this teleportation protocol:
    \begin{align}
        \hilmaparg A = \hilmaparg{\edgearg{\vertname_0}{\postvertname_0}}, \qquad
        \hilmaparg B = \hilmaparg{\edgearg{\prevertname_0}{\postvertname_0}}, \qquad
        \hilmaparg C = \hilmaparg{\edgearg{\prevertname_0}{\vertname_0'}}.
    \end{align}
    This allows us to denote the post-selected teleportation implementation as $(\telepovm_{AB},\telestate_{BC})$.
    Visually, the situation is as follows:
    \begin{align*}
        \graphtele_{,1}\st \centertikz{
            \node (a) at (0,0) {$\rho^{(\outcome,\postoutcome)}$};
            \node (b) at (0,2) {$F^{(\outcome,\postoutcome)}$};
            \draw[qleg] (a) -- node[right] {\small$\edgename_0$} (b);
        } \qquad
        \graphtele_{,2}\st \centertikz{
            \node (in) at (0,0) {$\rho^{(\outcome,\postoutcome)}$};
            \node (out) at (3,2) {$F^{(\outcome,\postoutcome)}$};
            \node[psnode] (ps) at (1,2) {$\telepovm_{AB}$};
            \node[prenode] (pre) at (2,0) {$\telestate_{AB}$};
            \draw[qleg] (in) -- node[right] {\small$A$} (ps);
            \draw[qleg] (pre) -- node[right] {\small$B$} (ps);
            \draw[qleg] (pre) -- node[right] {\small$C$} (out);
        }
    \end{align*}
    Thanks to the Hilbert space identities of \cref{eq:hilbert space identities_v3}, we have $\hilmaparg{\edgename_0} = \hilmaparg A = \hilmaparg C$.
    We can thus rewrite \cref{eq:proof graphfamily 1} as
    \begin{align}
       \prob(\outcome,\postoutcome)_{\graphtele_{,1}} &= \Tr_A[F^{(\outcome,\postoutcome)}_A \rho^{(\outcome,\postoutcome)}_A] \nonumber\\
        &= \frac{1}{\teleprob} \Tr_{AB}\left[(\telepovm_{AB} \otimes F^{(\outcome,\postoutcome)}_C)(\rho^{(\outcome,\postoutcome)}_A \otimes \telestate_{BC})\right] \nonumber\\
        &= \frac{1}{\teleprob} \prob(\outcome,\postoutcome,\postoutcome_0 = \ok)_{\graphtele_{,2}},
    \end{align}
    where $\prob(\outcome,\postoutcome,\postoutcome_0)_{\graphtele_{,2}}$ denotes the probabilities associated to the causal model $\cm_{\graphtele_{,2}}$ on the causal graph underlying the teleportation graph $\graphtele_{,2}$.
    In the last equation, we used that $\cm_{\graphtele_{,1}}$ and $\cm_{\graphtele_{,2}}$ can be chosen to have the same associated $\rho^{(\outcome,\postoutcome)}$ and $F^{(\outcome,\postoutcome)}$ (to see this, one may for simplicity restrict all the post-selected teleportation protocols to be implemented as in \cref{def:bell tele} --- if we establish \cref{lemma: acyclic_prob_same_v3} in this case, the general result follows from \cref{corollary:probs indep of tele implementation v3}).
    We can now relate the success probability $\successprob^{(i)}$ of the causal model $\cm_{\graphtele_{,i}}$ on the teleportation graph $\graphtele_{,i}$ for $i\in\{1,2\}$.
    By \cref{def:success_prob_causal_model}, we have
    \begin{align}
        \successprob^{(1)}
        &= \sum_\outcome \prob(\outcome,\postoutcome=\ok)_{\graphtele_{,1}} \\
        &= \frac{1}{\teleprob} \sum_\outcome \prob(\outcome,\postoutcome=\ok,\postoutcome_0=\ok)_{\graphtele_{,2}} \\
        &= \frac{1}{\teleprob} \successprob^{(2)},
    \end{align}
    where the sum $\sum_\outcome$ runs over all $\outcome = \{\outcome_\vertname\in\outcomemaparg\vertname\}_{\vertname\in\overtset}$, and where we denote with $\postoutcome=\ok$ the event $\{\postoutcome_{i} = \ok\}_{\postvertname_i\in \psvertset^1}$.
    We see that the two success probabilities differ by a multiplicative constant (the success probability $\teleprob$ of the post-selected teleportation protocol defined by $\cm_{\graphtele_{,2}}$ for the split edge $\edgename_0$), and hence, $\successprob^{(1)}$ is zero if and only if $\successprob^{(2)}$ is zero.
    Hence, the probabilities $\prob_{\graphtele_{,1}}$ are defined if and only if the probabilities $\prob_{\graphtele_{,2}}$ are defined.
    Furthermore, if they are defined, we can relate them as follows (again appealing to \cref{def:success_prob_causal_model}):
    \begin{align}
        \prob(\outcome)_{\graphtele_{,1}} 
        &= \frac{\prob(\outcome,\postoutcome=\ok)_{\graphtele_{,1}}}{\successprob^{(1)}}
        = \frac{\teleprob}{\teleprob} \frac{\prob(\outcome,\postoutcome=\ok,\postoutcome_0=\ok)_{\graphtele_{,2}}}{\successprob^{(2)}} 
        = \prob(\outcome)_{\graphtele_{,2}},
    \end{align}
    which completes the proof in this case.

    \noindent\underline{Proof in the general case.}
    Consider two general elements of $\graphfamily\graphname$, $\graphtele_{,1}, \graphtele_{,2}\in\graphfamily\graphname$.
    We can prove the general statement by first noticing that the repeated application of the above argument can be used to prove that the probabilities of both $\graphtele_{,1}$ and $\graphtele_{,2}$ are equivalent to those of $\graphtele_{,0}$, where the latter corresponds to the graph where all the edges have been split (i.e., $\splitedges{\graphtele_{,0}} = \edgeset$, where $\graphname = (\vertset,\edgeset)$).
    The result follows by transitivity.
\end{proof}

\selfcycleprob*
\begin{proof}
    We start by considering a teleportation graph $\graphtele\in \graphfamily\graphname$ associated with our given graph $\graphname$, and  the causal model $\cmtele$ on the teleportation graph $\graphtele$, associated with a set $\psvertset:=\{\postvertname_i\}_{i=1}^k$ of post-selection vertices.
    
    The associated probability distribution $\prob(\outcome,\{\postoutcome_{i}=\ok\}_{\postvertname_i\in\psvertset})_{\graphtele}$ is then given by \cref{def: acyclic probability} since $\cmtele$ is an acyclic causal model by construction (as $\graphtele$ is a directed acyclic graph, \cref{lemma: teleportation_graph_acyclic}). 
    \begin{align}
    \label{eq:proof cycle probs 1}
       \prob(\outcome,\{\postoutcome_{i}=\ok\}_{\postvertname_i\in\psvertset})_{\graphtele}
        = \Tr\left[ \bigotimes_{i=1}^k \telepovm^{(i)} \, \etot_\outcome\!\left(\bigotimes_{i=1}^k \telestate^{(i)}\right)\right],
    \end{align}
    where the state $\bigotimes_i \telestate^{(i)}$ acts on $\bigotimes_i \hilmaparg{\edgearg{\prevertname_i}{\postvertname_i}} \otimes \hilmaparg{\edgename_i'}$, and consists of the tensor product of the pre-selection states of each post-selected teleportation protocol.
    The POVM element $\bigotimes \telepovm^{(i)}$ acts on $\bigotimes_i \hilmaparg{\edgearg{\prevertname_i}{\postvertname_i}} \otimes \hilmaparg{\edgename_i}$ and consists of the tensor product of the post-selection POVM elements of each post-selected teleportation protocol.
    Recall that the map $\etot_\outcome$ takes inputs in $\linops(\bigotimes_i \hilmaparg{\edgename_i'})$ and outputs in $\linops(\bigotimes_i \hilmaparg{\edgename_i})$.
    Let us define the following Hilbert spaces:
    \begin{align}
        \hilmaparg A = \bigotimes_{i=1}^k \hilmaparg{\edgename_i}, \qquad
        \hilmaparg B = \bigotimes_{i=1}^k \hilmaparg{\edgearg{\prevertname_i}{\postvertname_i}}, \qquad
        \hilmaparg C = \bigotimes_{i=1}^k \hilmaparg{\edgename_i'}.
    \end{align}
    By construction (\cref{def:causal model of graphfamily_v3}), we have that $\hilmaparg{\edgename_i} = \hilmaparg{\edgename_i'}$, so that we also have $\hilmaparg A = \hilmaparg C$.
    We denote $\telepovm_{AB} = \bigotimes_{i=1}^k \telepovm^{(i)}$ and $\telestate_{BC} = \bigotimes_{i=1}^k \telestate^{(i)}$, and we denote the map $\etot_\outcome$ as $(\etot_\outcome)_{A|C}$.
    The probabilities of \cref{eq:proof cycle probs 1} can be rewritten as
    \begin{align}
    \label{eq:proof cycle probs 2}
     \prob(\outcome,\{\postoutcome_{i}=\ok\}_{\postvertname_i\in\psvertset})_{\graphtele}
        =
        \Tr_{AB}[\telepovm_{AB} (\etot_\outcome)_{A|C}(\telestate_{BC})].
    \end{align}
    It is easy to prove that $(\telepovm_{AB},\telestate_{BC})$ implements a post-selected teleportation protocol.
    Indeed, this follows from the linearity of the post-selected teleportation condition \cref{eq:ps teleportation condition}, which allows to successfully teleport part of a multipartite state (see also \cref{lemma:teleprob indep input}), together with the fact that $(\telepovm_{AB},\telestate_{BC})$ corresponds to $k$ parallel independent post-selected teleportation protocols.
    Furthermore, it is easy to see that the teleportation success probability of $(\telepovm_{AB},\telestate_{BC})$ is the product of the individual teleportation probabilities.
    Using \cref{lemma:cyclic indep of tele implementation}, we see that the probabilities of \cref{eq:proof cycle probs 2} can be rewritten as
    \begin{align}
    \label{eq: self_cycle_proof}
      \prob(\outcome,\{\postoutcome_{i}=\ok\}_{\postvertname_i\in\psvertset})_{\graphtele}
        =
        \left(\prod_{i=1}^k\teleprob^{(i)}\right) \selfcycle(\etot_\outcome).
    \end{align}
 Finally, by recalling \cref{def:success_prob_causal_model} of the post-selection success probability $\successprob$ and \cref{def: probability distribution v3} of the probability $\prob(\outcome)_{\graphname}$ of the observed outcomes of the original causal model $\cm_{\graphname}$ in terms of the probability $ \prob(\outcome,\{\postoutcome_{i}=\ok\}_{\postvertname_i\in\psvertset})_{\graphtele}$ of the teleportation causal model $\cmtele$ (on the graph $\graphtele$), we see that \cref{eq: self_cycle_proof} immediately implies the required \cref{eq:cyclic probs_v2},
\end{proof}
\section{Proofs of~\cref{sec: introducing pseparation}}
\label{app:pseparation}
\pseptheorem*
\begin{proof}
\noindent\textbf{(Soundness)}
From \cref{def: p-separation}, we have $(V_1\perp^p V_2|V_3)_{\graphname} \Leftrightarrow \exists  \graphtele\in \graphfamily\graphname$, such that $(V_1\perp^d V_2|V_3\cup\psvertset)_{\graphtele}$, where $\psvertset$ is the set of all post-selection vertices in the chosen teleportation graph $\graphtele$. From \cref{theorem: dsep theorem}, it then follows that the conditional independence $(X_1\indep X_2|X_3\cup\psvertset)_{\probacyc_{\graphtele}}$ holds since $\graphtele$ is a directed acyclic graph (\cref{lemma: teleportation_graph_acyclic}). Finally, by \cref{def: probability distribution v3}, the probability distribution associated with a causal model on the possibly cyclic causal graph $\graphname$ is given by the corresponding probability computed in any representative teleportation graph $\graphtele\in \graphfamily\graphname$ with an additional conditioning on the associated post-selection vertices $\psvertset$. Recall from \cref{lemma: acyclic_prob_same_v3} that this distribution is the same, independently of which representative $\graphtele\in \graphfamily\graphname$ is chosen.  Therefore, using \cref{def: probability distribution v3}, we know that $(X_1\indep X_2|X_3\cup\psvertset)_{\probacyc_{\graphtele}}$ is equivalent to $(X_1\indep X_2|X_3)_{\prob_{\graphname}}$ by \cref{def: p-separation}, which completes the proof.\\

\noindent\textbf{(Completeness)}
Suppose we have a $p$-connection $ (V_1\not\perp^p V_2|V_3)_{\graphname}$ in the given directed graph $\graphname$ according to \cref{def: p-separation}. Then by \cref{lemma: pseparation between families} this implies $p$-connection $ (V_1\not\perp^{p,c} V_2|V_3)_{\graphname}$ according to the alternative definition, \cref{def: cl_p-separation}, of this notion introduced in \cite{Sister_paper}. By the completeness result for $\perp^{p,c}$ proven in \cite{Sister_paper} (Theorem 20)\footnote{We note that the notion of \cref{def: cl_p-separation} is referred to as $\perp^p$ in \cite{Sister_paper}, but we use $\perp^{p,c}$ here to distinguish it from the  \cref{def: p-separation} introduced here.}, it follows that there exists a functional causal model (\cref{def:functional_CM}) $\fcm_\graphname$ on $\graphname$ where the conditional dependence $(X_1\not\indep X_2|X_3)_{\probf_{\graphname}}$ holds. By \cref{lem: equivalence acyclic probabilities fcm and cm(fcm)}, $\fcm_\graphname$ induces a causal model $\cm(\fcm)_{\graphname}$ on $\graphname$ with equivalent probabilities, and hence the same conditional dependence holds in this causal model of $\graphname$, $(X_1\not\indep X_2|X_3)_{\prob_{\graphname}}$ as required. This completes the proof. 

\end{proof}
\end{document}